\pdfoutput=1
\documentclass[10pt]{article}
\usepackage[margin=1in]{geometry}
\usepackage{amsthm,amsfonts,amssymb,amsmath,comment,slashed,mathtools}
\usepackage[english]{babel}
\usepackage[T1]{fontenc}  
\usepackage[utf8]{inputenc}
\usepackage{bm}
\usepackage[shortlabels]{enumitem}
\usepackage[affil-it]{authblk}  
\usepackage{csquotes}
\usepackage{array}
\usepackage[dvipsnames]{xcolor}
\usepackage{graphicx}  
\usepackage{transparent}
\usepackage{subcaption}

\newcommand{\tv}{\tilde{v}}

\usepackage{epstopdf} 
\usepackage{xargs}          
\usepackage{float}

\usepackage{hyperref}
\hypersetup{colorlinks,
	linkcolor={green!30!black},
	citecolor={blue!40!black},
	urlcolor={black}}
\usepackage{bbold}
\usepackage[section]{placeins}

\usepackage[backend=biber,giveninits=true,doi=false,isbn=false,url=false,sorting=nyt,maxbibnames=99,date=year]{biblatex} 
\renewbibmacro{in:}{}
\bibliography{references.bib}
\newcommand{\ls}{\lesssim}
\newcommand{\omer}{\omega_{\textup{res}}}

\newcommand{\sbr}{\sigma_{br}}
\newcommand{\Rs}{\mathcal{R}}
\newcommand{\Sl}{\mathcal{SL}}

\newcommand{\HH}{\mathcal{H}^+}
\newcommand{\NN}{\mathcal{N}}
\newcommand{\EB}{\mathcal{EB}}
\newcommand{\LB}{\mathcal{LB}}

\newcommand{\phiH}{\phi_{\mathcal{H}^+}}
\newcommand{\phiHL}{(\phi'_{\mathcal{L}})_{|\mathcal{H}^{+}}}

\newcommand{\OO}{\mathcal{O}}
\newcommand{\NO}{\mathcal{NO}}
\newcommand{\OOp}{\mathcal{O}'}
\newcommand{\OOpp}{\mathcal{O}''}

\newcommand{\CH}{\mathcal{CH}_{i^+}}

\newcommand{\OmegaRN}{\Omega_{RN}}

\usepackage[capitalize]{cleveref}

\hyphenation{Schr\"o-dinger Schwarz-schild Reiss-ner Nord-str\"om Edding-ton}
\theoremstyle{plain}

\newtheorem{defn}{Definition} 
\newtheorem{prop}[defn]{Proposition}
\newtheorem{lem}[defn]{Lemma}
\newtheorem{thm}{Theorem}

\newtheorem*{thm*}{Theorem} 
\newtheorem{conj}{Conjecture}
\newtheorem{cor}[defn]{Corollary}
\newtheorem{rmk}[defn]{Remark}
 
\newtheorem*{exm*}{Example}
\newtheorem*{nonexm}{Non-example} 
\renewcommand{\d}{\mathrm{d}}

\newtheorem{theoa}{Theorem}
\newtheorem{theodeux}{Theorem}
\newtheorem{theob}{Theorem}
\newtheorem{theoc}{Theorem}
\newtheorem{theod}{Theorem}
\newtheorem{theol}{Theorem}

\newcommand{\Hp}{\mathcal{H}}

\newcommand{\Ch}{\mathcal{CH}}

\newcommand{\uhr}{u_{\Hp_R}}
\newcommand{\uhl}{u_{\Hp_L}}
\newcommand{\uchr}{u_{\Ch_R}}
\newcommand{\uchl}{u_{\Ch_L}}

\numberwithin{equation}{section}
\numberwithin{defn}{section}

\newcommand{\RR}{\mathbb{R}}

\newcommand{\phil}{\phi_{\mathcal{L}}'}
\newcommand{\phiNl}{\phi_{\mathcal{L}}}

\newcommand{\psil}{\psi_{\mathcal{L}}}
\newcommand{\dphi}{\delta \phi}

\setcounter{tocdepth}{2}

\title{Strong Cosmic Censorship in the presence of matter: the~decisive effect~of~horizon~oscillations~on~the~black~hole~interior~geometry}

\author[1]{Christoph~Kehle\thanks{christoph.kehle@eth-its.ethz.ch}}
\author[2]{Maxime~Van~de~Moortel\thanks{mmoortel@princeton.edu}}
\affil[1]{\small  Institute for Theoretical Studies, ETH Zürich, Clausiusstrasse~47,~8092~Zürich,~Switzerland \vskip.1pc \  }
\affil[2]{\small  Department of Mathematics, Princeton University, 
	Washington~Road,~Princeton~NJ~08544,~United~States~of~America \vskip.1pc \  }

\date{December 21, 2022}
\begin{document}
	\maketitle
	\thispagestyle{empty}
	
	\begin{abstract}
		\noindent Motivated by the Strong Cosmic Censorship Conjecture in the presence of matter, we study the Einstein equations coupled with a charged/massive scalar field with spherically symmetric characteristic data relaxing to a Reissner--Nordström event horizon.
		Contrary to the vacuum case, the relaxation rate is conjectured to be \emph{slow} (non-integrable), opening the possibility that the matter fields  and the metric coefficients \emph{blow up in amplitude} at the Cauchy horizon, not just in energy. 
		We show that whether this blow-up in amplitude occurs or not depends on a novel \emph{oscillation condition} on the event horizon which determines whether or not a resonance is excited dynamically: 
		\begin{itemize}
			\item If the oscillation condition is satisfied, then the resonance is not excited and we show  boundedness and continuous extendibility of the  matter fields and the metric across the  Cauchy horizon.
			\item   If the oscillation condition is violated, then by the \emph{combined effect of slow decay and the resonance being excited}, we show that the massive uncharged scalar field blows up in amplitude. 
			
			In our companion paper  \cite{MoiChristoph2}, we show that in that case a novel \textit{null contraction singularity} forms at the Cauchy horizon, across which the metric is not continuously extendible in the usual sense. 
		\end{itemize}
		Heuristic arguments in the physics literature  indicate that the oscillation condition should be satisfied generically on the event horizon. If these heuristics are true, then \emph{our result falsifies the $C^0$-formulation of Strong Cosmic Censorship by means of oscillation}. 
	\end{abstract}
	
	\newpage
	{\hypersetup{hidelinks}
		\tableofcontents
		\thispagestyle{empty}
	}
	\newpage
	
	\section{Introduction}
	
	Is General Relativity a deterministic theory? This fundamental question can only be addressed in the context of the initial value problem for the Einstein equations (see already \eqref{E1}) which govern the dynamics of spacetime in General Relativity. 
	Well-posedness for the initial value problem was   established in 1969 by Choquet-Bruhat and Geroch \cite{MGHD,GHD} proving that any suitably regular Cauchy data admit a unique maximal future development, the so-called \emph{Maximal Globally Hyperbolic Development} (MGHD). With this dynamical formulation at hand,  General Relativity can be considered deterministic  if the MGHD of \emph{generic} 
	Cauchy data for the Einstein equations is inextendible. The genericity stipulation is clearly necessary because the MGHD of Kerr \cite{MR0156674} Cauchy data (rotating black holes) and of Reissner--Nordstr\"{o}m \cite{reissner1916eigengravitation,nordstrom1918energy} Cauchy data (their charged analogs) admit a future boundary, the Cauchy horizon, across which the metric is smoothly  extendible.
	Heuristics of Penrose \cite{Penroseblue} however suggest the instability of the Kerr/Reissner--Nordstr\"{o}m Cauchy horizons and these led him to his famous \emph{Strong Cosmic Censorship Conjecture} \cite{penrose1974gravitational} supporting the idea of determinism in General Relativity. 
	The most definitive and perhaps most desirable formulation of Penrose's Strong Cosmic Censorship is the conjecture that the metric coefficients cannot be extended  as continuous functions, namely:
	\begin{conj}[$C^0$-formulation of  Strong Cosmic Censorship] \label{C0SCC}
		The MGHD of generic  asymptotically flat Cauchy data is inextendible as a continuous Lorentzian metric (we say the metric is $C^0$-inextendible).
	\end{conj}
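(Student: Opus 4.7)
The plan is to attempt Conjecture \ref{C0SCC} within the Einstein--Maxwell--Klein--Gordon matter model set up in this paper, leaving the vacuum conjecture as a separate open problem. Inside this model, the approach is the three-step scheme pioneered in the Dafermos--Luk program: first, prove nonlinear exterior stability so that the MGHD contains a black hole region with a regular event horizon $\Ho$ converging to a subextremal Reissner--Nordström solution; second, propagate the resulting scalar-field and metric profile along $\Ho$ into the black hole interior up to a putative Cauchy horizon $\Ch$; third, exhibit on a generic subset of Cauchy data a geometric obstruction along $\Ch$ preventing any continuous Lorentzian extension.

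Concretely, I would first use a scattering argument to reduce the conjecture to a statement about \emph{event-horizon data} rather than Cauchy data. The slow, non-integrable decay of the scalar field conjectured in the abstract makes this reduction delicate but places us exactly in the regime of the present paper. Next, I would invoke the first bullet of the main result to identify the exceptional locus of event-horizon data: those for which the oscillation condition is satisfied, and along which the paper already proves $C^0$-extendibility across $\Ch$. Then I would invoke the second bullet together with the null contraction singularity constructed in the companion paper \cite{MoiChristoph2} to conclude that whenever the oscillation condition is violated, the metric is genuinely $C^0$-inextendible. With those two halves in place, the conjecture reduces to the single statement that the set of Cauchy data whose induced event-horizon profile violates the oscillation condition is generic — e.g.\ open and dense — in a suitable topology on asymptotically flat Cauchy data.

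The decisive step, and the step I expect to be the main obstacle, is precisely this last genericity claim. The oscillation condition is a nonlinear condition on the asymptotic \emph{phase} of the scalar field along $\Ho$, so relating it to genericity on a Cauchy slice demands quantitative control over the full nonlinear scattering map between Cauchy and event horizon data. Moreover, the abstract explicitly flags that the prevailing heuristics in the physics literature predict the oscillation condition holds \emph{generically} on $\Ho$; if those heuristics are correct, they falsify Conjecture \ref{C0SCC} in this model rather than support it. To push through a proof of the conjecture one would therefore either have to overturn those heuristics or restrict to a narrower, physically motivated notion of genericity (one-parameter families, prevalence, or transversality in a restricted polarized class) and then establish that within that notion the phase of the horizon oscillation varies non-trivially enough for the \emph{violation} set to be dense. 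Making this phase-transversality statement quantitative within the coupled Einstein--Maxwell--Klein--Gordon system is, in my view, the real mathematical difficulty and the most plausible place for the argument to break.
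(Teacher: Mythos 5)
There is a fundamental mismatch here: the statement you are trying to prove is a \emph{conjecture} that the paper never proves --- on the contrary, the paper's main contribution (\cref{corollary.conj}) is a conditional \emph{falsification} of \cref{C0SCC} for the EMKG model in spherical symmetry. Your plan hinges on the final genericity claim: that the set of Cauchy data whose event-horizon profile \emph{violates} the oscillation condition is generic. But this is precisely the opposite of what the paper conjectures and what the cited heuristic/numerical literature supports. \cref{decay.conj} asserts that \emph{generic} Cauchy data produce the explicit late-time tails \eqref{conj.decay.1.eq}--\eqref{conj.decay.3.eq}, and the paper verifies (\cref{prop.specific.profiles}) that these profiles lie in the quantitative oscillation class $\OOpp$; hence, assuming \cref{decay.conj}, generic data yield $C^0$-extendible metrics and \cref{C0SCC} is false. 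Violation of the oscillation condition is relegated to \emph{non-generic} data, whose very existence rests on the separate and open scattering conjecture (\cref{scat.conj}). So the step you yourself flag as the main obstacle is not merely difficult --- it runs directly against the paper's \cref{decay.conj}, and any proof along your lines would require overturning it, which no result in the paper supports.

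Two further steps in your scheme also overreach what the paper provides. First, the blow-up half of your dichotomy (oscillation violated $\Rightarrow$ amplitude blow-up) is only proved in the \emph{uncharged} case: \cref{main.theorem.intro2} requires $q_0=0$ and generic mass $m^2 \notin D(M,e)$; in the charged case the Maxwell backreaction phase $\sbr$ prevents the paper from exhibiting any blow-up examples at all (see the discussion at the end of \cref{C0.ext.method.intro}). Second, even where blow-up holds, \cref{null.contraction.theorem} yields inextendibility only within the spherically symmetric $C^0$-admissible class of \cref{C0admissible}, not the full $C^0$-inextendibility demanded by \cref{C0SCC} --- the paper explicitly notes that upgrading to an unrestricted $C^0$-inextendibility statement remains open (cf.\ \cite{JanC0}). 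So even granting your genericity claim, the conclusion you would reach is strictly weaker than the conjecture as stated. The honest summary is that within this matter model the paper's results, combined with \cref{decay.conj}, resolve \cref{C0SCC} \emph{negatively}, and your proposal cannot be repaired without contradicting that conditional resolution.
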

	\noindent 
	\cref{C0SCC} is related to the expectation that physical observers approaching the boundary of the MGHD of generic Cauchy data are destroyed. If \cref{C0SCC} is false, then one may still be able to prove a weaker version of inextendibility, but this would correspond to a weaker version of determinism.

	\paragraph{\texorpdfstring{\cref{C0SCC}}{Conjecture 1} is false in the absence of matter.}
	In the celebrated work \cite{KerrStab}, Dafermos--Luk proved 
	that, in vacuum, small perturbations of Kerr still admit a Cauchy horizon across which the spacetime is $C^0$-extendible---thus falsifying \cref{C0SCC} in the absence of matter. The key ingredient to their proof is an \emph{integrable} inverse polynomial rate assumption for the decay of perturbations along the event horizon. Note however, that a weaker $H^1$-formulation is still expected to hold  \cite{ChristoSCC,KerrStab,Moi4}. If true, this would restore determinism at least in a weaker sense.
	
	\paragraph{Can \texorpdfstring{\cref{C0SCC}}{Conjecture 1} be salvaged  in the presence of matter?} 	In the present paper, we consider a	non-vacuum model: the Einstein--Maxwell--Klein--Gordon \eqref{E1}--\eqref{E5} system in spherical symmetry governing the dynamics of gravitation coupled to a charged/massive scalar field.
	Arguments in the physics literature \cite{HodPiran1,KoyamaTomimatsu,KonoplyaZhidenko,BurkoKhanna,OrenPiran} suggest that perturbations of the exterior of Reissner--Nordström in this model settle down merely at a slow, \emph{non-integrable} rate  (at least for massive and/or strongly charged perturbations), which is in stark contrast to the perturbations of Kerr in the vacuum case.  
	As such, the methods of  \cite{KerrStab} manifestly do not apply and the slow decay of perturbations may even raise hopes that for generic Cauchy data the metric is $C^0$-inextendible and thus, \cref{C0SCC} would be true after all for this  matter model.
	
	\paragraph{The question of $C^0$-extendibility across a  future null boundary $\CH$.} At first, it may appear that the slow decay in the above matter model in fact opens the possibility of a more drastic scenario where the singularity is everywhere spacelike inside the black hole. Notwithstanding, it was proven in \cite{Moi} that for this model, black holes  are bound to the future by a null boundary $\CH\neq \emptyset$ as depicted in  \cref{fig:cauchyhorizonexists}.  We will continue using the term ``Cauchy horizon'' for  $\CH$  by analogy with the Cauchy horizon of Reissner--Nordström, although the spacetime may or may not be $C^0$-extendible across the null boundary  $\CH$. Therefore, although the future boundary is null and in particular not spacelike, the question of $C^0$-extendibility of the spacetime across $\CH$, i.e.\ \cref{C0SCC}, remains  open. This is the question that we shall now address. 
	
	\begin{figure}
		\begin{center}
			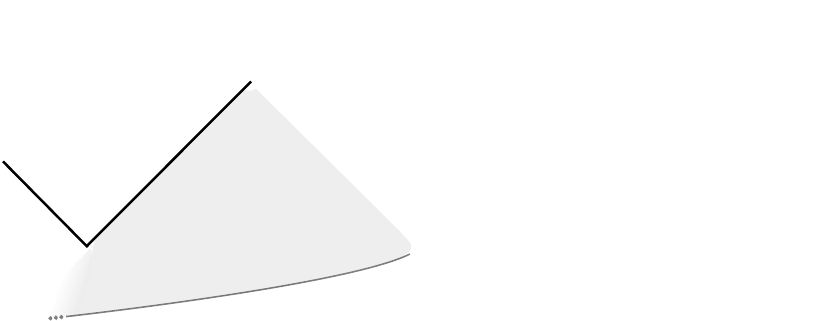	 
		\end{center}
		\caption{A Cauchy horizon $\CH$ exists for slowly decaying perturbations $\phiH$ as proven in \cite{Moi}, see already  \cref{Moi.intro}.}
		\label{fig:cauchyhorizonexists}
	\end{figure}

	\paragraph{Summary of our results.} 
	As we will show, the question of \cref{C0SCC} becomes unexpectedly subtle: In addition to the decay rates of perturbations 
	on the exterior, it turns out that the validity of \cref{C0SCC}  depends crucially on Fourier support properties of late time perturbations due to a scattering resonance associated to the Cauchy horizon $\CH$.  
	In our main   \cref{main.theorem.intro}   we identify an \emph{oscillation condition} on perturbations along the event horizon $\HH$:
	\emph{If the oscillation condition is satisfied by the perturbation, we show boundedness and continuous extendibility of the matter fields and the metric across the Cauchy horizon $\CH$} despite the obstruction created by slow decay.  
	On the other hand, in  \cref{main.theorem.intro2} we show that \emph{if the oscillation condition is violated on the event horizon $\HH$, the resonance is excited and  the uncharged scalar field blows up in amplitude} namely $|\phi|\to +\infty$  at the Cauchy horizon $\CH$. 
	
	Heuristic and numerical arguments in the physics literature \cite{HodPiran1,KoyamaTomimatsu,KonoplyaZhidenko,BurkoKhanna,OrenPiran} suggest that the oscillation condition is indeed satisfied on $\HH$ for generic perturbations of the black hole exterior. Assuming this, our result \cref{corollary.conj} \emph{falsifies the $C^0$-formulation of Strong Cosmic Censorship by means of oscillation}.

	In \cref{W11.thm.intro}, we show that for both oscillating and non-oscillating perturbations\footnote{up to a genericity condition in the charged scalar field case, which we can get rid of in the uncharged case, see Theorem~\ref{W11.thm.intro}.}, \emph{the  scalar field  blows up  in the $W^{1,1}_{loc}$-norm at the Cauchy horizon $\CH$}, i.e.\ $\int |D_v \phi | \d v= + \infty$ schematically.  This  $W^{1,1}$ blow-up  is in contrast to the vacuum case where the analogous statement is false \cite{KerrStab}. 
	This shows that for both oscillating and non-oscillating perturbations, the Cauchy horizon $\CH$ is more singular in the presence of matter than  in  vacuum. Moreover, the blow-up of the scalar field in $W^{1,1}$ indicates that our result cannot be captured using only physical space techniques which have been used previously.

	Finally,  in our companion paper \cite{MoiChristoph2} we will prove \cref{null.contraction.theorem} which shows that blow-up in amplitude of the scalar field indeed gives rise to a  $C^0$-inextendibility statement on the metric within a spherically symmetric class. \cref{null.contraction.theorem}, in conjunction with  \cref{main.theorem.intro2}, provides the first example of a dynamically formed singularity leading to a $C^0$-inextendibility statement of the metric across a null spacetime boundary (albeit within a restricted spherically symmetric class).	Whether this statement can be upgraded to the full $C^0$-inextendibility  of the spacetime remains open.\footnote{Unrestricted $C^0$-inextendibility results (even for spacelike singularities) are known to be notoriously difficult to show, see e.g.\ \cite{JanC0} for the proof of $C^0$-inextendibility of the Schwarzschild solution across the spacelike singularity $\{r=0\}$.}

	\paragraph{Similarities with the $\Lambda <0$ case.}
	In the asymptotically AdS case ($\Lambda < 0$),   solutions to the linear wave equation on AdS black holes also decay  at  a slow, non-integrable rate \cite{adslowerbound}. It turns out that in this context, oscillations also play a crucial role \cite{Kehle2019,Kehle2020}  to address of the question of the validity of  the linear analog of \cref{C0SCC}. The slow inverse logarithmic decay in the $\Lambda<0$ case however arises from the superposition of infinitely many high $\ell$ angular modes. This is different from the present problem for $\Lambda=0$, where the slow decay is inverse-polynomial (see already \cref{sec:intro1}) and already occurs in spherical symmetry.

	\paragraph{Outline of the Introduction.}
	In \cref{sec:intro1}  we introduce the Einstein--Maxwell--Klein--Gordon system and give a more detailed overview of our new results addressing the issue of Strong Cosmic Censorship within this matter model in spherical symmetry. Further, we present a  first version  of our main theorems. 
	In \cref{CH.intro} we outline the important differences  between the EMKG model and other models regarding the existence of a Cauchy horizon and the continuous extendibility of the metric. In \cref{Weaknullsection} we mention previous results on the dynamical formation of weak null singularities at the Cauchy horizon, which we compare to the new singularities that dynamically form in our setting. In \cref{lin.scat.sec} we present previous results on scattering inside Reissner--Nordstr\"{o}m black holes which are important for our proof. In \cref{sec:SCC.non.zero} we elaborate on the interior of black holes with $\Lambda<0$, in which oscillations turn out to  play an important role as well.
	In  \cref{Strategy.section}  we briefly discuss the strategy of the proof. 
	\subsection{Main results: First versions}\label{sec:intro1}
	\subsubsection{The EMKG system and existence of a Cauchy horizon for slowly decaying scalar fields} \label{relax}
	\paragraph{The EMKG model in spherical symmetry.} We  study the Einstein equations coupled to a charged massive scalar field: the Einstein--Maxwell--Klein--Gordon (EMKG) model in spherical symmetry
	\begin{align}
	\label{E1} & Ric_{\mu \nu}(g)- \frac{1}{2}R(g)g_{\mu \nu}= \mathbb{T}^{EM}_{\mu \nu}+  \mathbb{T}^{KG}_{\mu \nu}, \\
	\label{E2} &\mathbb{T}^{EM}_{\mu \nu}=2\left(g^{\alpha \beta}F _{\alpha \nu}F_{\beta \mu }-\frac{1}{4}F^{\alpha \beta}F_{\alpha \beta}g_{\mu \nu}\right),\\
	\label{E3} & \mathbb{T}^{KG}_{\mu \nu}= 2\left( \Re(D _{\mu}\phi \overline{D _{\nu}\phi}) -\frac{1}{2}(g^{\alpha \beta} D _{\alpha}\phi \overline{D _{\beta}\phi} + m ^{2}|\phi|^2  )g_{\mu \nu} \right),\\ \label{E4} & \nabla^{\mu} F_{\mu \nu}= \frac{ q_{0} }{2}i (\phi \overline{D_{\nu}\phi} -\overline{\phi} D_{\nu}\phi) , \; F=\d A ,\\
	\label{E5} & g^{\mu \nu} D_{\mu} D_{\nu}\phi = m ^{2} \phi ,  \; D_{\mu}= \nabla_{\mu}+i q_0 A_{\mu}	\end{align} for a quintuplet $(\mathcal M,g,F,A,\phi)$,  where $(\mathcal M,g)$ is a  3+1-dimensional Lorentzian manifold, $\phi$ is a   complex-valued scalar field, $A$ is a   real-valued 1-form, and $F$ is a   real-valued 2-form.	Here $q_0 \in \RR$ and   $m\geq 0$  are fixed constants representing respectively the charge and the mass of the scalar field.
	The EMKG model describes self-gravitating matter and provides a setting for studying spherical gravitational collapse of charged and massive matter if $q_0 \neq 0$ and $m^2\neq 0$ (see the discussion in \cref{weak.grav.col}). This model has attracted much attention in the literature \cite{An.charged.SF,Boson,scc1lambda>0,Kommemi,gajicluk,Moi,Moi4, Moi2,MoiThesis}, see also    \cite{YangYu,LindbladKG,KlainermanMachedon,Krieger2,TataruOh,RodTao} for work on the flat Minkowski background.
	
	\paragraph{Setting of the problem.} Consider the Maximal Globally Hyperbolic Development of suitably regular spherically symmetric Cauchy data prescribed on an asymptotically flat initial hypersurface $\Sigma$ as depicted in \cref{fig:cauchyhorizonexists}. General results for the EMKG model in spherical symmetry \cite{Kommemi} allow to define null infinity $\mathcal{I}^+$---a conformal boundary where idealized far away observers live, and the black hole interior region as the complement of the causal past of $\mathcal I^+$. If the black hole interior is non-empty, we also define the event horizon $\HH$ as the past boundary of the black hole interior which separates the black hole interior from the black hole exterior. 
	
	In the current paper we will only be interested in the dynamics of the black hole interior. In particular, instead of studying the Cauchy problem with  data on $\Sigma$, we will prescribe the scalar field $\phi$ and the metric on an ingoing cone $\underline{C}_{in}$ and on an outgoing cone $\HH$ emulating the event horizon of an already-formed black hole. This setting corresponds to a characteristic initial value problem with data imposed on $\HH\cup \underline{C}_{in}$, see \cref{fig:cauchyhorizonexists}.  Our study of this characteristic initial value problem will be entirely self-contained. We will however continue to depict $\Sigma$ on \cref{fig:cauchyhorizonexists} and subsequent figures for completeness. 
	Our assumptions on the characteristic initial data on $\HH\cup \underline{C}_{in}$ will be made in accordance with the conjectured late-time tails on the event horizon $\HH$ arising from generic Cauchy data   on   asymptotically flat $\Sigma$, see the discussion below.

	\paragraph{Conjectured late-time asymptotics on the event horizon $\HH$ and contrast with the vacuum case.}  Heuristic arguments regarding the black hole exterior in the physics literature (see \cite{HodPiran1,KoyamaTomimatsu,KonoplyaZhidenko,BurkoKhanna,OrenPiran}) indicate  that (spherically symmetric)   dynamical black holes arising from Cauchy data on $\Sigma$ for the EMKG model relax to Reissner--Nordstr\"{o}m along the event horizon $\HH$ at a slow\footnote{ Precisely, these slow rates hold conjecturally for a massive ($m^2\neq 0$) scalar field and/or  strongly charged ($|q_0 e|\geq \frac{1}{2}$) one.}, \emph{non-integrable} rate  $v^{-s}$, $s \in (\frac{1}{2},1]$ for large $v$, in a standard Eddington--Finkelstein coordinate $v$.  This is in contrast to the faster and integrable rate $s>1$  proved in the uncharged massless case $m^2=q_0=0$ \cite{PriceLaw}, or assumed in vacuum by Dafermos--Luk  \cite{KerrStab}  (see already \eqref{Kerr.decay}). This fast, integrable rate $v^{-s},\ s >1$ in vacuum is indeed sufficient to prove the existence of a Cauchy horizon $\CH$, across which the spacetime is continuously extendible: this led to a \emph{falsification of  \cref{C0SCC} in vacuum without symmetry assumptions}   \cite{KerrStab} (or for spherically symmetric models as in \cite{MihalisPHD,Moi}), see already \cref{downfall.section}.
	
	\paragraph{Existence of a Cauchy horizon $\CH$ for slowly decaying scalar fields.}
	Returning to the EMKG model, the first step in addressing \cref{C0SCC}  is  to understand whether for slowly decaying characteristic data on the event horizon $\HH$, the future boundary inside the black hole is null (a Cauchy horizon) or spacelike. In view of the slow decay on the event horizon $\HH$, the spacelike singularity scenario is plausible and indeed desirable (if it was true, then \cref{C0SCC} would likely be valid). Despite the obstruction created by the slow decay of event horizon perturbations, it turns out however that the black hole future boundary has a non-empty null component $\CH\neq \emptyset$ emanating from $i^+$, see \cref{fig:cauchyhorizonexists}, and is not everywhere spacelike as one might have hoped:
	\begin{thm}[M.VdM \cite{Moi}]\label{Moi.intro} \textup{[Rough version; precise version recalled in {\cref{existence.CH.main}}]}
		Consider spherically symmetric characteristic initial data  for \eqref{E1}--\eqref{E5} 
		on the event horizon $\HH$  (and on an ingoing cone).  Assume the following \textbf{slow decay} upper bound on  the scalar field   $\phiH$ 
		on the event horizon $\HH=[v_0,+\infty)$ as
		\begin{align} 
		|\phiH(v)|\leq C_0 v^{-s},\ |D_v \phiH|\leq C_0 v^{-s} 
		\end{align} for all $v\geq v_0$ in a standard Eddington--Finkelstein type $v$-coordinate on $\HH=[v_0,+\infty)$, for some $C_0>0$ and some  decay rate $s>\frac{1}{2}$.
		
		Then the spacetime is bound to the future by an ingoing null boundary $\CH \neq \emptyset$  (the Cauchy horizon) foliated by spheres of  positive radius and emanating from $i^+$, and the Penrose diagram is given by the dark gray region in \cref{fig:cauchyhorizonexists}.
	\end{thm}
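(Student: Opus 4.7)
The plan is to work in double null coordinates $(u,v)$, writing the spherically symmetric metric as $g = -\Omega^2\, du\, dv + r^2 d\sigma^2$, so that the EMKG system \eqref{E1}--\eqref{E5} reduces to transport equations for $\lambda := \partial_v r$, $\nu := \partial_u r$, the Hawking mass $\rho$, and the charge $Q$, coupled to a nonlinear wave equation for $\phi$ (cf.\ \cite{Kommemi}). I would normalise coordinates so that $\HH = \{u=0\}$ and $\underline{C}_{in} = \{v=v_0\}$, with $v$ parametrising $\HH$. The goal is to construct the solution on a rectangular domain $\{0 \leq u \leq U,\ v_0 \leq v < +\infty\}$ on which $r(u,v) \geq r_\star > 0$ uniformly; the null hypersurface $\{v=+\infty\}$ then yields $\CH$ with area radius bounded below by $r_\star$, emanating from $i^+$.

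The argument proceeds by bootstrap over three subregions. First, in the redshift region $\mathcal{R} = \{0 \leq u \leq u_1\}$ with $u_1$ small, the event-horizon redshift propagates the slow pointwise decay $|\phiH|, |D_v\phiH| \ls v^{-s}$ inwards. Although $v^{-s}$ is not integrable, the hypothesis $s > 1/2$ guarantees finite $L^2$-flux $\int v^{-2s}\, dv < +\infty$, which combined with a redshift multiplier for the wave equation and the exponential factor $e^{-cuv}$ from the surface gravity of $\HH$ yields $|\phi|, |D_v\phi| \ls v^{-s}$ throughout $\mathcal{R}$, together with $\rho, Q$ close to their Reissner--Nordström values; in particular $r$ remains in a small neighbourhood of $r_+$.

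Next, one extends the estimates into a no-shift region $\NN$ where $r \in [r_- + \delta, r_+ - \delta]$ via a continuity argument. The finite $L^2$-flux of $\phi$ entering through $\{u = u_1\}$ and the regular dependence of the equations on $r$ in this range allow a Gr\"onwall argument to control the Hawking mass. Integrating the Raychaudhuri equations then yields pointwise bounds on $|\lambda|, |\nu|$, and most importantly
\begin{equation*}
\int_{v_0}^{\infty} |\lambda(u,v)|\, dv < +\infty \quad \text{uniformly in } u \in [0,U],
\end{equation*}
which by the fundamental theorem of calculus forces $r(u,v) \geq r_\star > 0$. Kommemi's extension principle \cite{Kommemi} then guarantees that the solution exists up to $v = +\infty$, and $\CH = \{v = +\infty\}$ is the resulting null boundary with positive area radius.

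The hardest step will be closing the estimates in $\mathcal{R}$ against the non-integrable rate: naive redshift schemes close when $v^{-s} \in L^1$, but here one must carefully balance the $L^2$-integrability of the data against the exponential redshift gain $e^{-cuv}$ from small $u$ to absorb the non-integrable pointwise loss. Propagating past $\NN$ further relies on the stabilising effect of the subextremal charge $Q$, which prevents the Hawking mass from growing enough to crush $r$ to zero; this is the key structural feature that makes the argument work for Reissner--Nordström-like geometries and would fail for Schwarzschild.
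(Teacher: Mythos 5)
Your skeleton --- double null gauge, a bootstrap over spacetime regions anchored at the event horizon, the observation that $s>\tfrac12$ gives square-integrable data and ultimately $\partial_v r\in L^1_v$, hence a lower bound on $r$ and, via the extension principle of \cite{Kommemi}, a null boundary $\CH$ of positive radius --- matches the architecture of the proof in \cite{Moi} that this paper recalls. But there is a genuine gap: your region decomposition stops too early. The no-shift region is necessarily confined to a \emph{bounded} range of $u+v$ (equivalently, $r$ bounded away from $r_-$), because the Gr\"onwall constants there grow exponentially in the size of the region. For fixed $u$, the limit $v\to+\infty$ is \emph{not} reached inside the no-shift region: one inevitably enters the regime where $r$ approaches $r_-$ and the surface gravity $2K$ approaches $2K_-<0$, i.e.\ the blue-shift regime, and this is precisely where the slow decay $v^{-s}$ with $s\leq 1$ threatens to destroy the estimates. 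Your claim that a continuity/Gr\"onwall argument in $\NN$ yields $\int_{v_0}^{\infty}|\lambda(u,v)|\,\d v<+\infty$ uniformly in $u$ therefore does not close. The proof in \cite{Moi} (whose output is recalled in \cref{prop.EB.estimates.Moi} and \cref{prop.LB.estimates.Moi}) handles this by inserting two further regions: an early blue-shift region $\EB$ of width $O(\log v)$ in $u+v$, chosen logarithmic precisely so that the exponential blue-shift amplification $e^{2|K_-|(u+v)}$ only costs a power of $v$ that the polynomial decay of the data can absorb, and a late blue-shift region $\LB$ in which the degeneration $\Omega^2\lesssim v^{-2s}$ is itself the estimate that renders $|\lambda|\lesssim v^{-2s}$ integrable and keeps $r$ bounded below.

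A secondary but substantive error is your proposed mechanism for the lower bound on $r$: the subextremal charge does \emph{not} prevent the Hawking mass from growing --- on the contrary, the renormalized mass $\varpi$ generically blows up at $\CH$ (mass inflation), and this is compatible with $r\to r_{CH}(u)>0$. The lower bound on $r$ comes instead from the monotonicity built into the Raychaudhuri equations \eqref{RaychU}--\eqref{RaychV} (which give signs for $\lambda/\Omega^2$ and $\nu/\Omega^2$) combined with the quantitative $v^{-2s}$ bounds on $\lambda$ propagated through the blue-shift regions; the role of the charge is only to produce an inner horizon $r_->0$ for the background geometry toward which $r$ relaxes. Without the blue-shift analysis, neither the integrability of $\lambda$ nor the positivity of $r$ at $\CH$ follows.
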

	Since  by \cref{Moi.intro} the black hole future boundary is not everywhere spacelike and has a null component  $\CH \neq \emptyset$, one may at first expect continuous extendibility across $\CH$. It turns out however that the spacetime of \cref{Moi.intro} may or may not be continuously extendible across $\CH$. This is perhaps unexpected, since all previous instances of black hole spacetimes with a null future boundary component are  at least continuously extendible across that component \cite{KerrStab,MihalisPHD,JonathanStab}.
	Thus, \cref{Moi.intro} is not sufficient to fully address \cref{C0SCC} and the question of continuous extendibility across the null boundary $\CH$ has remained open.
	
	The slow rate $s> \frac{1}{2}$ assumed in \cref{Moi.intro} is indeed  \emph{too slow} to prove   the $C^0$-extendibility of spacetime across the Cauchy horizon $\CH$ using the same method as Dafermos--Luk \cite{KerrStab} in vacuum. The method of \cite{KerrStab}  requires the faster integrable decay assumption $s>1$ and does not extend to the  non-integrable case $s\leq 1$, a failure that may even raise the attractive possibility that \cref{C0SCC} is true after all for the EMKG matter model. \textbf{\emph{This could mean that determinism is in better shape in the presence of matter!}}

	\subsubsection{Theorem I: event horizon oscillations are decisive for the  $C^0$ extendibility of the metric}
	Our main result  however shows that the situation is more subtle than one may first think: assuming that the scalar field $\phi$ oscillates sufficiently on the event horizon $\HH$, we show in \cref{main.theorem.intro} that \emph{$\phi$ is uniformly bounded in the black hole interior and the metric is continuously extendible}. The event horizon oscillation assumption  is sharp in the following sense: conversely assuming that the scalar field $\phi$  does not oscillate sufficiently on the event horizon $\HH$, we show in \cref{main.theorem.intro2} that  \emph{$\phi$ blows up in amplitude at the Cauchy horizon $\CH$}. It turns out that the oscillation condition on the event horizon $\HH$, i.e.\ the main assumption  of \cref{main.theorem.intro}, is conjecturally satisfied for generic Cauchy data on an asymptotically flat $\Sigma$, and thus, the \emph{hope that determinism is in better shape in the presence of matter in the end does not come true!} (See already \cref{TheoremB.intro.section}.)
	
	\cref{main.theorem.intro} and \cref{main.theorem.intro2} show that uniform boundedness or blow-up of the matter fields unexpectedly relies on fine properties of the scalar field $\phi$ on the event horizon $\HH$ in both physical and Fourier space. At the heart of our novel oscillation condition lies the following resonant frequency
	\begin{align}  
	\label{omer}
	\omer(M,e,q_0) := \omega_- (M,e,q_0) - \omega_+(M,e,q_0) , 
	\end{align}
	where $\omega_- = \omega_-  (M,e,q_0) := \frac{ q_0 e }{r_-(M,e)}$, 
	$\omega_+=\omega_+  (M,e,q_0) := \frac{ q_0 e }{r_+(M,e)}$ for asymptotic black hole parameters  $0< |e|< M$. 
	
	In what follows we will give rough versions of \cref{main.theorem.intro} and \cref{main.theorem.intro2}. For the precise versions we refer the reader to \cref{sec:preciseA1} and \cref{sec:preciseA2}.

	\begin{theoa}[\textbf{Boundedness}]\textup{[Rough version; precise version in {\cref{sec:preciseA1}}]} \label{main.theorem.intro}
		Consider spherically symmetric characteristic initial data  for \eqref{E1}--\eqref{E5} 
		on the event horizon $\HH$ (and on an ingoing cone).   Assume the following \textbf{slow decay} upper bound on the scalar field   $\phiH$ 
		on the event horizon on the event horizon $\HH=[v_0,+\infty)$ as
		\begin{align} \label{decay.s}
		|\phiH(v)|\leq C v^{-s},\ |D_v \phiH|\leq C v^{-s} 
		\end{align}
		for all $v\geq v_0$ in a standard Eddington--Finkelstein type $v$-coordinate on $\HH=[v_0,+\infty)$, for $v_0>1$ sufficiently large and for some $C>0$ and some (non-integrable) decay rate
		\begin{align} \label{s.def}
		\frac 34 < s \leq 1.
		\end{align}
		
		By \cref{Moi.intro}, the spacetime, i.e. the dark gray region in \cref{fig:cauchyhorizonexists},  is bound to the future by a null boundary $\CH \neq \emptyset$ (the Cauchy horizon). Then, in the gauge $A_v=0$, the following holds true.
		\begin{itemize}
			\item \label{phibounded}If $\phiH$ satisfies the \textbf{qualitative oscillation condition} 	on $\HH=[v_0,+\infty)$, i.e.\ if for all $O(v^{1-2s})$ functions 
			\begin{align} \label{osc.cond.qual}
			\limsup_{\tilde{v}\to + \infty} \left| \int_{v_0}^{\tilde{v}} \phiH(v) e^{i \omer v ( 1 + O(v^{1-2s})) } \d v \right| < +\infty;
			\end{align} Then, the scalar field $\phi$  is  uniformly bounded in amplitude up to  and including  the Cauchy horizon $\CH$. 
			\item If $\phiH$ satisfies the \textbf{strong qualitative oscillation condition} 
			on $\HH=[v_0,+\infty)$, i.e. if for all $O(v^{1-2s})$ functions 
			\begin{align} \label{osc.cond.Squal}
			\lim_{\tilde{v}\to + \infty} \left| \int_{v_0}^{\tilde{v}} \phiH(v) e^{i \omer v (1+O(v^{1-2s})) } \d v \right| \text{ exists and is finite;}\end{align}Then,  additionally the metric $g$ and the scalar field $\phi$ are continuously extendible across the Cauchy horizon $\CH$.
			\item If $\phiH$ satisfies the \textbf{quantitative oscillation condition} on $\HH=[v_0,+\infty)$, i.e.\ if there exist $E>0$, $\epsilon > 1-s$ such that for all $O(v^{1-2s})$ functions 
			\begin{align} \label{osc.cond.quant}
			\lim_{\tilde{v}\to + \infty} \left| \int_{v_1}^{\tilde{v}} \phiH(v) e^{i \omer v(1  +O(v^{1-2s})) } \d v \right| \leq E  v_1^{-\epsilon} \text{ for all $v_1\geq v_0$;}
			\end{align}
			Then, additionally the Maxwell field contraction $F_{\mu \nu} F^{\mu \nu}$ is uniformly bounded in amplitude and continuously extendible across the Cauchy horizon $\CH$.
		\end{itemize}
	\end{theoa}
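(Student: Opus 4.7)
The plan is to analyze the characteristic initial value problem by propagating the event horizon data inward through a decomposition of the black hole interior into regions adapted to the geometric dynamics. First I would partition the interior, using the setup of \cite{Moi} invoked in \cref{Moi.intro}, into a redshift region near $\HH$, a ``no-shift'' region in which the geometry is uniformly close to a Reissner--Nordström interior with parameters $(M,e)$, and a blueshift region adjacent to $\CH$ where $r$ is close to $r_-$ and, in the gauge $A_v = 0$ (with $A_u|_{\HH}$ normalized to $-\omega_+$), $A_u$ approaches $-\omega_-$. The redshift region transports the bound $|\phiH| \lesssim v^{-s}$ into the no-shift region with only controlled losses, and the no-shift region is essentially linear; the entire difficulty therefore concentrates in the blueshift region, where the wave equation \eqref{E5} develops the resonance whose frequency is exactly $\omer$.

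The central step is to obtain a quasi-explicit representation of $\phi$ on outgoing null cones approaching $\CH$. Integrating \eqref{E5} first in $u$ along ingoing cones produces an ODE in $v$ for an appropriate gauge-invariant derivative of $\phi$ with phase factor $\exp\bigl(i q_0 \int A_u\, du\bigr)$; as $u$ ranges from $\HH$ to $\CH$ the phase accumulates to $e^{i (\omega_- - \omega_+) v}(1+o(1)) = e^{i \omer v(1+O(v^{1-2s}))}$, where the correction is precisely the nonlinear backreaction of the slowly decaying $\phi$ onto $A_u$ and $r$ through \eqref{E2}--\eqref{E4}. A further integration in $v$ against the horizon data yields a formula of schematic form
\begin{equation*}
\phi(u,v) = \alpha(u)\int_{v_0}^{v} \phiH(v')\, e^{i\omer v'(1 + O((v')^{1-2s}))}\,\d v' \; + \; \text{remainder},
\end{equation*}
with $\alpha(u)$ uniformly bounded and nonvanishing up to $\CH$, and the remainder controlled by the bootstrap assumptions and the decay \eqref{decay.s}.

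With this representation in hand, each of the three conclusions is obtained by reading the matching oscillation condition in the appropriate topology on this integral. The qualitative condition \eqref{osc.cond.qual} is precisely an $L^\infty_v$ bound on the leading term, giving uniform boundedness of $\phi$ up to $\CH$. The strong condition \eqref{osc.cond.Squal} upgrades this to convergence as $v \to +\infty$, which in the double-null chart is exactly the statement that $\phi$ extends continuously across $\CH$; the continuous extendibility of the metric then follows by feeding this information into the spherically symmetric Raychaudhuri, Hawking mass, and constraint equations of \eqref{E1}--\eqref{E5}, which yield continuous limits for $r$ and $\Omega^2$ in a regular gauge. Finally, the quantitative condition \eqref{osc.cond.quant} with $\epsilon > 1-s$ provides the additional polynomial smallness needed to integrate $|D_v\phi|^2$ against the blueshift weight and, via \eqref{E4}, to obtain a uniform bound on the charge, and hence on $F_{\mu\nu}F^{\mu\nu}$, up to $\CH$.

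The hard part will be closing the bootstrap in the blueshift region at the borderline non-integrable rate $s \in (3/4,1]$. The crux is that the amplitude one hopes to bound does not decay, so the phase $e^{i\omer v}$ must be identified to near-optimal precision: any uncontrolled phase error of size $v^{1-s}$ would destroy the cancellation on which \eqref{osc.cond.qual}--\eqref{osc.cond.quant} rely. One must therefore track the $O(v^{1-2s})$ perturbation of $A_u$ and of $r$ coming from the nonlinear self-interaction of $\phi$, and show that the residual error is summable against the weights used in the oscillatory integral. This forces a delicate matching in which the gain $2s-1$ from each linear iteration must dominate the loss $1-s$ demanded by the bootstrap, explaining the threshold $s > 3/4$ after additional logarithmic losses from the nonlinearity are accounted for. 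Reconciling this linear scattering picture with the nonlinear error control, exactly at the borderline, is the technical heart of the argument.
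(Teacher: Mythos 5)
Your region decomposition, the identification of the backreaction phase $\sigma_{br}=O(v^{2-2s})$, and the numerology ($\epsilon>1-s$ for the charge, the threshold $s>\tfrac34$) all match the paper. But the central step --- deriving the representation formula $\phi(u,v)=\alpha(u)\int_{v_0}^v\phiH(v')e^{i\omer v'(1+O((v')^{1-2s}))}\d v'+\mathrm{remainder}$ by ``integrating \eqref{E5} first in $u$ along ingoing cones'' and then in $v$ --- does not go through in physical space, and this is precisely the obstruction the paper is built around. Characteristic integration of \eqref{Field5} yields only the amplitude bounds $|D_u\psi|\lesssim|u|^{-s}$ and $|D_v\psi|\lesssim v^{-s}$ (\cref{Dupsi.prop}, \cref{prop.LB.estimates.Moi}); since $\int^{+\infty}v^{-s}\d v=+\infty$ and indeed $\int|D_v\phi|\,\d v=+\infty$ generically (\cref{W11.thm.intro}), no estimate on $|D_v\psi|$ can close, and what is needed is the leading-order \emph{profile} of $D_v\psi$ at $\CH$, namely $D_v\psi\approx\mathfrak t(\omer)\,e^{i\omer(u+v)}\phiH(v)+(\text{terms with bounded $v$-antiderivative})$. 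The coefficient $\mathfrak t(\omer)=-\mathfrak r(\omer)$ is a scattering datum of the radial o.d.e.\ \eqref{eq:radialode} --- a Wronskian of solutions normalized at the two horizons --- and the decomposition of the remainder rests on $\mathfrak t(\omega)-\mathfrak t(\omer)-\mathfrak t^1(\omega-\omer)$ having an $L^1$ inverse Fourier transform (\cref{prop:representation}, Parts C--D). None of this is visible along characteristics: your $\alpha(u)$ is not produced by the phase accumulation of $\int A_u\,\d u$ (which is a pure gauge effect), and the fact that $\alpha$ \emph{vanishes} for $q_0=0$, $m^2\in D(M,e)$ (so that boundedness holds with no oscillation condition at all) shows that any transport-type derivation giving a universal formula must be wrong. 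The paper's actual route is a hybrid: the Fourier representation \eqref{eq:representationformulainintro} for the linear solution $\phiNl$ on exact Reissner--Nordstr\"om, plus physical-space \emph{difference} estimates for $\phi-\phiNl$ (\cref{LB.prop}), where the right-hand sides are quadratically small and hence integrable.

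There is a second gap in your deduction of metric extendibility from continuity of $|\phi|$. Feeding the scalar field into the constraint and Raychaudhuri equations does \emph{not} yield a continuous limit for $\Omega^2$, because in the non-integrable regime the sharp bounds are $|\partial_u\log\Omega^2|\lesssim|u|^{1-2s}$ and $|\partial_v\log\Omega^2|\lesssim v^{1-2s}$, which cannot be integrated up to $\CH$ for $s\le1$. The paper's resolution is algebraic: the non-integrable source $-2\Re(D_u\phi\,\overline{\partial_v\phi})$ in \eqref{Omega} is rewritten as a total derivative $-\partial_u\partial_v(|\phi|^2)$ up to integrable errors, leading to the renormalized quantity $\Upsilon$ of \eqref{Delta.def}, which is shown to extend continuously unconditionally (\cref{coordinatesphiboundedprop}); continuity of $\log\Omega^2_{CH}$ is then \emph{equivalent} to continuity of $|\phi|^2$. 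Without this cancellation your argument for the second bullet point stalls at the same non-integrability that blocks the naive boundedness proof for $\phi$.
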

	We refer to  \cref{fig:theorema}   for an illustration of \cref{main.theorem.intro}.
	\begin{figure}[H]
		\centering
		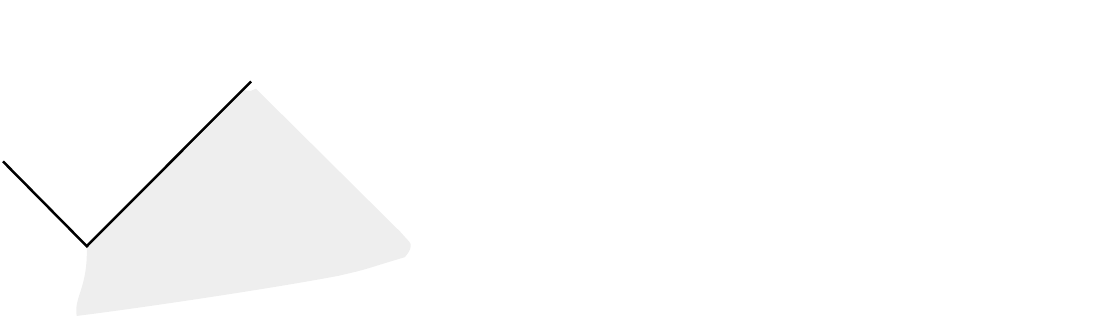
		\caption{\cref{main.theorem.intro}: If the strong qualitative oscillation condition is satisfied, then the spacetime is $C^0$-extendible across the Cauchy horizon $\CH$.\vspace{.6cm}}
		\label{fig:theorema}
	\end{figure}

	In the uncharged case $q_0 = 0$, where $\omer=0$, we show that the qualitative oscillation condition \eqref{osc.cond.qual} is sharp to obtain boundedness.

	\begin{theodeux}[\textbf{Blow-up}]\textup{[Rough version; precise version in {\cref{sec:preciseA2}}]} \label{main.theorem.intro2}
		Consider spherically symmetric characteristic initial data  for \eqref{E1}--\eqref{E5} 
		on the event horizon $\HH$ (and on an ingoing cone).   Assume the following \textbf{slow decay} upper bound on  the scalar field   on the event horizon $\HH$ (i.e\ $\phiH  $ satisfies \eqref{decay.s} where $s$ satisfies \eqref{s.def}).	Assume  additionally $q_0=0$ and let $m^2 > 0 $ be generic.  
		
		Then, $\phi$ blows up in amplitude at every point on the Cauchy horizon  $\CH$
		\begin{align}\limsup_{(u,v) \to \CH} |\phi(u,v)|= +\infty \label{eq:blowupphi} \end{align} if and only if
		\begin{align}
		\limsup_{\tilde{v}\to + \infty} \left| \int_{v_0}^{\tilde{v}} \phiH(v)  \d v \right| = +\infty,
		\end{align}
		i.e.\ if and only if  $\phiH$ violates the  qualitative oscillation condition \eqref{osc.cond.qual}.

		Further, in the case where the scalar field $\phi$ blows up at the Cauchy horizon $\CH$ as in \eqref{eq:blowupphi}, a null contraction singularity forms at the Cauchy horizon $\CH$ as stated in \cref{null.contraction.theorem}   and proved in \cite{MoiChristoph2}. 
	\end{theodeux}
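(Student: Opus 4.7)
The ``only if'' direction is immediate from \cref{main.theorem.intro}: since $q_0=0$ forces $\omer=0$, the qualitative oscillation condition \eqref{osc.cond.qual} reduces exactly to the boundedness of $\limsup_{\tilde v \to \infty}|\int_{v_0}^{\tilde v}\phiH(v)\,\d v|$, and \cref{main.theorem.intro} then yields uniform boundedness of $\phi$ up to $\CH$. The genuinely new content is the ``if'' direction: violation of the oscillation condition must force blow-up of $|\phi|$ at \emph{every} point of $\CH$.

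The plan is to establish a leading-order expansion at $\CH$ of the schematic form
\begin{align*}
	\phi(u,v) \;=\; \mathfrak{T}(u)\int_{v_0}^{v}\phiH(v')\,\d v' \;+\; R(u,v),
\end{align*}
where $\mathfrak{T}(u)\neq 0$ for every $u$ corresponding to a point of $\CH$ and $R$ remains uniformly bounded. Given such an identity, unboundedness of $|\int_{v_0}^{v}\phiH\,\d v'|$ immediately forces $\limsup|\phi|=+\infty$ at every point of $\CH$. To derive it, I first invoke the interior stability estimates of \cite{Moi} underlying \cref{Moi.intro} to view $\phi$ as a solution of a massive Klein--Gordon equation on a small perturbation of the Reissner--Nordstr\"om interior, with perturbative smallness quantitatively controlled by the rate $s>3/4$. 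On the exact Reissner--Nordstr\"om background I then perform a mode decomposition in the Killing direction, representing $\phi$ by a Fourier transform in $v$ of an ingoing-to-Cauchy-horizon transmission coefficient. The zero mode $\omega=0$ coincides precisely with the uncharged resonance $\omer=\omega_--\omega_+=0$, and is the \emph{unique} mode whose transmitted amplitude at $\CH$ does not decay.

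The coupling constant $\mathfrak{T}(u)$ is then identified with the $\omega=0$ transmission amplitude, which can be computed from a Wronskian of the two distinguished bounded solutions to the radial ODE for the massive uncharged Klein--Gordon operator on Reissner--Nordstr\"om. Non-vanishing of this Wronskian is equivalent to the absence of a zero-energy bound state for the associated radial Schr\"odinger operator, a condition that fails on at most a discrete set of ``resonant masses''; the genericity assumption on $m^2>0$ in the theorem serves precisely to exclude this countable set. This step builds on the scattering theory in the Reissner--Nordstr\"om interior discussed in \cref{lin.scat.sec}.

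The main obstacle is to transfer the linear expansion into the true nonlinear EMKG evolution without contaminating the leading-order identity. The condition $s>3/4$ appears to be sharp for our method, in that quadratic back-reaction errors weighted by the blueshift factors inherent to the Reissner--Nordstr\"om interior remain controllable only above this threshold; below it, the error $R(u,v)$ would itself compete with the putative leading $\int\phiH$. Once the expansion is established at one point of $\CH$, blow-up along the entire null generator follows by propagating it in the outgoing direction via the $\partial_u$ transport equation for $\phi$, which preserves the divergent part $\int_{v_0}^{v}\phiH(v')\,\d v'$ up to a nonzero $u$-dependent factor. The final statement about the null contraction singularity is then imported from \cite{MoiChristoph2} via \cref{null.contraction.theorem}.
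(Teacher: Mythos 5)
Your proposal is correct and follows essentially the same route as the paper: the $t$-Fourier/radial-o.d.e.\ representation formula whose $\omega=\omer=0$ pole produces the term $\mathfrak r(0)\int\phiH\,\d v$ with $\mathfrak r(0)\neq 0$ precisely for $m^2$ outside the discrete set $D(M,e)$ of \cite{Kehle2018}, the nonlinear difference estimates $\phi-\phiNl$ (bounded exactly because $s>\tfrac34>\tfrac23$), and propagation of the blow-up to every point of $\CH$ in the $u$-direction. The only refinement worth noting is that the $u$-propagation must be run on $\psi=r\phi$ rather than $\phi$ (since $D_u\phi$ may itself blow up, only $D_u\psi$ is uniformly bounded), which is exactly where your ``nonzero $u$-dependent factor'' $r_{CH}(u_0)/r_{CH}(u)$ comes from.
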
 We refer to \cref{fig:theorema'} for an illustration of \cref{main.theorem.intro2}.
	
	\begin{figure}[ht]
		\centering
		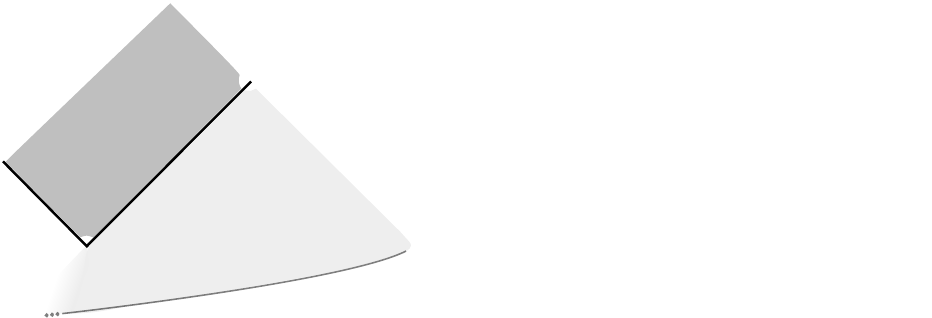	 
		\caption{\cref{main.theorem.intro2}: If the oscillation condition is violated in the uncharged case, then a novel null contraction singularity forms at the Cauchy horizon $\CH$ and the metric is $C^0$-singular at $\CH$.}   \label{fig:theorema'}
	\end{figure} 
	
	\cref{main.theorem.intro2} also shows that it is impossible to prove boundedness of the scalar field $\phi$ only under the assumptions of \cref{Moi.intro}. This motivates a posteriori the introduction of the oscillation conditions \eqref{osc.cond.qual}, \eqref{osc.cond.Squal}, \eqref{osc.cond.quant} which are thus necessary to obtain boundedness and $C^0$ extendibility as claimed in \cref{main.theorem.intro}. Anticipating \cref{null.contraction.section}, we note that it is also impossible to prove the continuous extendibility of the metric in the usual sense only under the assumptions of \cref{Moi.intro}, by \cref{null.contraction.theorem}.

	For concreteness, we will now give explicit examples of profiles $\phiH$ which satisfy (respectively violate) the oscillation condition \eqref{osc.cond.qual}, \eqref{osc.cond.Squal}, \eqref{osc.cond.quant} from above.

	\begin{exm*} 
		For any fixed $\omega  \neq \omer$   the profile $\phiH := e^{-i\omega v}   v^{-s}$  satisfies the quantitative oscillation condition \eqref{osc.cond.quant}.
	\end{exm*}
	
	\begin{nonexm} 
		The profile $\phiH:= e^{-i\omer v} v^{-s}$ violates the oscillation condition \eqref{osc.cond.qual}.
	\end{nonexm}

	\subsubsection{{\cref{corollary.conj}}: the $C^0$-formulation of Strong Cosmic Censorship is false}
	\label{TheoremB.intro.section}

	\paragraph{Slow decay on $\HH$ for generic Cauchy data on $\Sigma$.} We now return to  \cref{C0SCC}, which is formulated in terms of generic Cauchy data on an asymptotically flat $\Sigma$. First, the scalar field $\phi$ on the event horizon $\HH$ is indeed expected to decay slowly for generic Cauchy data on  $\Sigma$, i.e.\ $\phiH$ satisfies \eqref{decay.s} only for $s\leq 1$, at least for almost every parameters $(m^2,q_0)$, see already \cref{decay.conj}. This slow decay makes \cref{main.theorem.intro} and \cref{main.theorem.intro2} decisive to the study of Cauchy data on $\Sigma$ as above, since the validity of \cref{C0SCC} now  crucially depends on whether generic Cauchy data on $\Sigma$ give rise to solutions for  which the (slowly decaying) scalar field $\phi$ on the event horizon $\HH$  satisfies or violates the oscillation condition \eqref{osc.cond.qual} (or \eqref{osc.cond.Squal}, its stronger analogue).

	\paragraph{Oscillations on $\HH$ for generic Cauchy data on $\Sigma$.} As it turns out, $\phiH$ is expected to satisfy the  (even stronger) quantitative oscillation condition \eqref{osc.cond.quant} for generic regular Cauchy data on $\Sigma$. This expectation is based on works in the physics literature relying on heuristic analysis \cite{HodPiran1,KonoplyaZhidenko,KoyamaTomimatsu,KoyamaTomimatsu2} or numerics \cite{BurkoKhanna,OrenPiran} giving precise asymptotic tails on  the event horizon $\HH$. We formulate this as the following conjecture, where $\phiH$ is the scalar field $\phi$ restricted to the event horizon $\HH = [v_0,+\infty)$, $v$ is an Eddington--Finkelstein type coordinate (see the gauge choice later defined in \eqref{gauge1}), and electromagnetic gauge $A_v =0$ (see \eqref{GaugeAv}):
	\begin{conj} \label{decay.conj}
		Let $(\mathcal M,g,F,A,\phi)$ be a black hole solution of the system \eqref{E1}--\eqref{E5} arising from generic, spherically symmetric smooth Cauchy data on an asymptotically flat $\Sigma$. Then, the black hole exterior settles down to a Reissner--Nordström exterior with asymptotic mass $M$ and asymptotic charge $e$ satisfying $0<|e|<M$. Moreover, the scalar field   has the following late-time asymptotics on the event horizon $\HH=[v_0,+\infty)$: 
		\begin{enumerate}
			\item \label{conj.decay.1} In the massive uncharged case, i.e.\ $m^2 > 0$, $q_0=0$, \begin{equation} \label{conj.decay.1.eq}
			\phiH(v) = C(m\cdot M, D)  \sin( m v + \omega_{err}(v)) \cdot v^{-\frac{5}{6}} + \phi_{err},
			\end{equation} for fast decaying $\phi_{err}$ (i.e.\ $\phi_{err}$ satisfies \eqref{decay.s} for $s>1$), a constant $C(m\cdot  M,D) \neq 0$ depending on $m\cdot M$ and the initial data $D$, and a sublinear growing phase $\omega_{err}(v)= -\frac{3m}{2} (2\pi M)^{\frac{2}{3}}  v^{\frac{1}{3}} + \omega(m \cdot M) $.
			\item \label{conj.decay.2} In the massless charged case, i.e.\ $m^2 = 0$, $q_0 \neq 0$,
			\begin{equation}  \label{conj.decay.2.eq}
			\phiH(v) = C_H(q_0e,D) \cdot e^{ \frac{iq_0 e}{r_+} v} \cdot v^{-1-\delta}+ \phi_{err},
			\end{equation} where $C_H(q_0e,D)\neq 0$ is a constant depending on $q_0 e$ and the initial data $D$, $\delta(q_0 e):=\sqrt{1-4(q_0 e)^2}\in \mathbb{C}$, and  $\phi_{err}$ is fast decaying (i.e.\ $\phi_{err}$ satisfies \eqref{decay.s} for $s>1$).
			\item  \label{conj.decay.3} In the massive charged case, i.e.\ $m^2 > 0$, $q_0 \neq 0$,
			\begin{align} \label{conj.decay.3.eq}
			\phiH(v) =  C(M \cdot m,D) \cdot  e^{ \frac{iq_0 e}{r_+} v} \cdot \sin( m v + \omega_{err}(v)) \cdot v^{-\frac{5}{6}} + \phi_{err},
			\end{align} 
			where all the quantities are as above and generically, $|q_0 e| \neq r_- |m|$.
		\end{enumerate}
	\end{conj}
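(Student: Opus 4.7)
} The plan is to proceed in two stages: first establish nonlinear orbital stability of subextremal Reissner--Nordström in the black hole exterior for generic data on $\Sigma$, then perform a sharp late-time tail analysis by reducing to a linear scattering problem on the fixed RN background and carefully identifying the dominant contribution near the resonant/threshold frequencies. In the first stage, I would adapt the spherically symmetric decay framework (redshift at $\HH$, integrated local energy decay away from trapping, $r^p$-hierarchies towards $\mathcal I^+$, and new vector fields treating the Maxwell--matter coupling) to this charged/massive model. This must yield quantitative, uniform-in-time decay of $\phiH$ and of the deviation of the metric and Maxwell field from Reissner--Nordström with mass $M$ and charge $e$ satisfying $0<|e|<M$, so that on $\HH$ the equation for $\phi$ linearises around a Reissner--Nordström background with errors of integrable type.

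In the second stage I would analyse the model linear equation $g_{RN}^{\mu\nu} D_\mu D_\nu \phi = m^2 \phi$ on subextremal Reissner--Nordström in the gauge $A_v=0$, for each of the three subcases of the conjecture. Writing $\phi$ as the inverse Fourier/Mellin transform of its resolvent, the late-time tail is controlled by the behaviour of the scattering resolvent near the relevant spectral thresholds. In case \ref{conj.decay.2} ($m^2=0$, $q_0\neq 0$) the horizon gauge produces the oscillatory factor $e^{iq_0 ev/r_+}$, while the long-range Coulombic potential $-q_0^2 e^2/r^2 + \mathcal{O}(r^{-3})$ generated by the Maxwell field yields the indicial roots $\frac12 \pm \tfrac12\sqrt{1-4q_0^2 e^2}$ at the horizon frequency, producing the asymptotic $e^{iq_0ev/r_+} v^{-1-\delta}$ via a resolvent expansion at $\omega = \omega_+$ of the type pioneered by Hintz, Angelopoulos--Aretakis--Gajic, and Van de Moortel. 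In case \ref{conj.decay.1} ($m^2>0$, $q_0=0$) the branch points at $\omega = \pm m$ in the resolvent dominate; a stationary phase computation on the corresponding contour integral, combined with the modification induced by the long-range Newtonian potential $-2M/r$ (which on a massive Klein--Gordon field produces a Coulomb-type phase shift proportional to $M m^2 / |\omega^2-m^2|^{1/2}$), yields the $v^{-5/6}$ amplitude and the sub-linear phase $-\tfrac{3m}{2}(2\pi M)^{2/3}v^{1/3}$, reproducing the Koyama--Tomimatsu heuristic. Case \ref{conj.decay.3} follows by combining the horizon oscillation from the charged sector with the mass-threshold branch-point analysis, leading to the product structure in \eqref{conj.decay.3.eq} provided the generic non-resonance condition $|q_0 e|\neq r_- |m|$ holds (otherwise the two scattering thresholds coincide and a different exponent would appear).

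\paragraph{Main obstacles.} The decisive difficulty is the nonlinear stage. Even dispensing with symmetry-breaking, the orbital stability of Reissner--Nordström for EMKG in spherical symmetry is itself open, and the slow, non-integrable decay \eqref{decay.s} means that the nonlinear error terms in the reduced wave equation on $\HH$ are \emph{borderline} with respect to the tail one is trying to extract: a naive perturbative argument loses precisely the $v^{-5/6}$ or $v^{-1-\delta}$ structure in the nonlinear corrections. One therefore needs a resonance-adapted nonlinear iteration that tracks the oscillatory phase $e^{iq_0ev/r_+}$ or $e^{imv}$ through the Einstein--Maxwell--Klein--Gordon nonlinearities, a charge/mass-sensitive refinement of Price's law, and the identification of the conjecturally nonzero universal constants $C(mM)$, $C_H(q_0e)$ in terms of initial data on $\Sigma$. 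The genericity stipulation (excluding pathological data for which these constants vanish) and the identification of the exceptional parameter set in the massive charged case are further nontrivial issues that would likely require a careful spectral/resolvent analysis at the scattering thresholds $\omega = \pm m$ and $\omega = \omega_+$ simultaneously.
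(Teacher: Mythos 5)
The statement you are attempting to prove is \cref{decay.conj}, which the paper states explicitly as a \emph{conjecture} and does not prove. Its content — the precise late-time tails \eqref{conj.decay.1.eq}, \eqref{conj.decay.2.eq}, \eqref{conj.decay.3.eq} on the event horizon for generic Cauchy data — is imported from heuristic and numerical work in the physics literature (Hod--Piran, Koyama--Tomimatsu, Konoplya--Zhidenko, Burko--Khanna, Oren--Piran) and is used only as a hypothesis in \cref{corollary.conj}. There is therefore no proof in the paper to compare your argument against.

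Your proposal is, by your own admission in the ``Main obstacles'' paragraph, a research program rather than a proof. The decisive steps are open: nonlinear asymptotic stability of the Reissner--Nordstr\"om exterior for the EMKG system (even in spherical symmetry) is not established, and without it the reduction to a linear tail analysis on a fixed background is not justified — especially since the conjectured decay is non-integrable, so the nonlinear corrections are borderline with respect to the very tail you want to extract, exactly as you note. Moreover, even the purely linear statements you invoke (the $v^{-5/6}$ amplitude with sublinear phase $\sim v^{1/3}$ from the mass threshold, the $v^{-1-\delta}$ tail from the Coulombic indicial roots, and crucially the non-vanishing of the constants $C(mM)$ and $C_H(q_0e)$ for generic data) have not been proven rigorously for this model; they are precisely the heuristics the conjecture encodes. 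So while your two-stage outline (exterior stability, then resolvent/threshold analysis at $\omega=\pm m$ and $\omega=\omega_+$) is a sensible road map consistent with how such tails have been derived in simpler settings, it does not constitute a proof of \cref{decay.conj}, and the paper does not claim one either.
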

	\paragraph{Falsification of \cref{C0SCC} assuming \cref{decay.conj}.} We will show that the conjectured profiles in  \eqref{conj.decay.1.eq}, \eqref{conj.decay.2.eq} and \eqref{conj.decay.3.eq} indeed satisfy the quantitative oscillation \eqref{osc.cond.quant}. Thus, as a corollary  of our main result \cref{main.theorem.intro} we obtain a conditional, but otherwise definitive resolution of \cref{C0SCC}:
	
	\begin{theob}\textup{[Rough version; precise version in {\cref{sec:c0formulation}}]}\label{corollary.conj} 
		If $\phiH$ is as in \cref{decay.conj}, 
		then the metric $g$ and the scalar field  $\phi$ are continuously extendible across the Cauchy horizon $\CH$.
		
		In particular, if  \cref{decay.conj} is true, then \cref{C0SCC} is false for the Einstein--Maxwell--Klein--Gordon system in spherical symmetry. 
	\end{theob}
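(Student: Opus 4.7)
The plan is to deduce the continuous extendibility statement from \cref{main.theorem.intro}, specifically its second bullet, which yields $C^0$--extendibility of the metric and scalar field as soon as the \emph{strong qualitative oscillation condition} \eqref{osc.cond.Squal} is verified on $\HH$. Thus the task reduces to checking \eqref{osc.cond.Squal} for each of the three conjectured late-time profiles \eqref{conj.decay.1.eq}--\eqref{conj.decay.3.eq}. In fact, I would verify the stronger \emph{quantitative} condition \eqref{osc.cond.quant}: once the bound $\bigl|\int_{v_1}^{\tilde v} \phiH(v)\,e^{i\omer v(1+O(v^{1-2s}))}\,\d v\bigr|\leq E v_1^{-\epsilon}$ holds uniformly, the integrals over $[v_0,\tilde v]$ form a Cauchy family in $\tilde v$ and hence converge, giving \eqref{osc.cond.Squal}. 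The faster-decaying remainder $\phi_{err}$ (for which \eqref{decay.s} holds with some $s_{err}>1$) contributes $O(v_1^{1-s_{err}})$ by direct integration and so plays no role; the work is in the leading term.

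In the uncharged massive case \eqref{conj.decay.1.eq} one has $\omer=0$, reducing the task to estimating $\int_{v_1}^{\tilde v}\sin(mv+\omega_{err}(v))\,v^{-5/6}\,\d v$. Expanding $\sin$ into two exponentials, the phase derivatives are $\pm m\bigl(1-\tfrac{(2\pi M)^{2/3}}{2}v^{-2/3}\bigr)$, which are bounded away from zero for large $v$, so a single integration by parts against a non-stationary phase produces an $O(v_1^{-5/6})$ bound; with $s=5/6$ we get $\epsilon=5/6>1/6=1-s$. The massless charged case \eqref{conj.decay.2.eq} is similar: the combined phase is $\omega_+ v+\omer v(1+O(v^{1-2s}))=\omega_- v+O(v^{2-2s})$, and taking $s=1$ (compatible with $|\phiH|\lesssim v^{-1}$ when $\delta$ is purely imaginary) the $O(v^{2-2s})=O(1)$ correction is just a bounded oscillatory multiplier, whence integration by parts against the non-vanishing frequency $\omega_-$ yields $O(v_1^{-1})$, and any $\epsilon\in(0,1]$ is admissible.

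The delicate case is the massive charged profile \eqref{conj.decay.3.eq}: combining the Klein--Gordon oscillation $\sin(mv+\omega_{err}(v))$ with the carrier $e^{i\omega_+ v}$ and the resonance factor $e^{i\omer v(1+O)}$, the two composite phases take the form $\psi_\pm(v)=(\omega_-\pm m)\,v\pm\omega_{err}(v)+O(v^{2-2s})$. Here the genericity assumption $|q_0 e|\neq r_-|m|$, i.e.\ $|\omega_-|\neq|m|$, is indispensable, since it guarantees $\omega_-\pm m\neq 0$ and hence that $|\psi'_\pm(v)|$ stays bounded below by a positive constant for large $v$. Non-stationary-phase integration by parts then again produces an $O(v_1^{-5/6})$ bound, verifying \eqref{osc.cond.quant} with $\epsilon=5/6$. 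Applying the second bullet of \cref{main.theorem.intro} yields $C^0$--extendibility of $g$ and $\phi$ across $\CH$, and the falsification of \cref{C0SCC} under \cref{decay.conj} follows immediately. The main obstacle is precisely this genericity condition in the third case: at the resonant value $|\omega_-|=|m|$ one of the phases $\psi_\pm$ becomes asymptotically stationary, the non-stationary phase estimate degenerates, and a more delicate analysis (or the opposite, blow-up conclusion of \cref{main.theorem.intro2}) would be required.
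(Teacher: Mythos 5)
Your proposal is correct and follows essentially the same route as the paper: the paper's proof (\cref{lemma.example}, \cref{prop.specific.profiles} and \cref{cor:proofofthmb}) likewise reduces the statement to verifying the quantitative oscillation condition $\phiH\in\OOpp$ for the three conjectured profiles via a single non-stationary-phase integration by parts, with the genericity condition $|q_0e|\neq r_-(M,e)|m|$ used exactly as you describe to keep the composite phase derivative bounded away from zero in the massive charged case, before invoking \cref{main.theorem}. The only quibble is your claimed rate $\epsilon=5/6$: the integration-by-parts remainder containing $\sbr''=O(v^{1-2s})$ only yields $O(v_1^{2-3s})=O(v_1^{-1/2})$ for $s=5/6$, which is nevertheless more than enough since any $\epsilon>1-s=1/6$ suffices.
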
 
	We refer to \cref{sec:c0formulation} for the precise statement of \cref{corollary.conj}.

	The conjectured decay rates for $\phiH$ in \cref{decay.conj} are non-integrable, i.e.\ $\phiH$ satisfies \eqref{decay.s} with $s$ in the range \eqref{s.def}, except for the massless charged case with $|q_0 e|<\frac 12$. We also recall that non-integrable decay of $\phiH$ is insufficient to prove continuous extendibility for $g$ and $\phi$ by means of decay and indeed even leads to the blow-up of $|\phi|$ as shown in \cref{main.theorem.intro2} in the case where the oscillation condition \eqref{osc.cond.qual} is violated. In that sense, under the assumption of \cref{decay.conj}, \cref{corollary.conj} shows that  ${C^0}$-Strong Cosmic Censorship for the EMKG model is false \textbf{only} by virtue of the oscillations of the scalar field $\phi$ on the event horizon $\HH$.
	
	\paragraph{Lack of oscillations for non-generic Cauchy data on $\Sigma$.} Having addressed the generic case in \cref{decay.conj}, there remains still the possibility that there exist  (non-generic) Cauchy data for which the scalar field $\phiH$ on the event horizon $\HH$ does not satisfy the (qualitative) oscillation condition \eqref{osc.cond.qual}.
	Indeed, on  the basis of certain scattering arguments \cite{scatteringAAG,scatteringDRSR,Hamed} we conjecture\footnote{We also note that \cref{scat.conj} is not specific to the EMKG system in spherical symmetry: similar conjectures  can be made for a rather general class of models, see for instance \cite{scatteringAAG,scatteringDRSR}.}
	
	\begin{conj} \label{scat.conj}
		For any suitable finite-energy profile $\phiH$ there exist sufficiently regular Cauchy data on $\Sigma$ for the EMKG system in spherical symmetry giving rise to a dynamical black hole for which the scalar field along the event horizon is given by $\phiH$.
	\end{conj}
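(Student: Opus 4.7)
The plan is to attack \cref{scat.conj} via a \emph{backward (teleological) scattering construction}, proceeding in two stages: first at the linear level on a fixed sub-extremal Reissner--Nordstr\"om background, then nonlinearly via a perturbative fixed-point argument. First I would fix asymptotic parameters $(M,e,q_0,m^2)$ with $0<|e|<M$ and consider the Klein--Gordon equation $g^{\mu \nu} D_\mu D_\nu \phi = m^2 \phi$ on the exterior $\mathcal{R}_{M,e}$. The goal at this stage is to show that, given any suitable finite-energy profile $\phiH$ along $\HH$, there is a unique solution $\phi$ on $\mathcal{R}_{M,e}$ whose trace on $\HH$ equals $\phiH$ and whose radiation through $\mathcal I^+$ vanishes, and that this $\phi$ induces finite-energy data on a Cauchy hypersurface $\Sigma$. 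This is a standard application of backward scattering on sub-extremal Reissner--Nordstr\"om exteriors (in the spirit of Dafermos--Rodnianski--Shlapentokh-Rothman for Kerr, adapted to charged massive Klein--Gordon), with energy estimates run backwards from $\HH \cup \mathcal I^+$ to $\Sigma$ after suitable degenerate energy identities and a red-shift estimate at $\HH$.

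Second, I would upgrade this to the full nonlinear EMKG system by a perturbation argument. By rescaling $\phiH \to \varepsilon \phiH$ with $\varepsilon>0$ small, one places oneself in a regime in which the nonlinear orbital stability of sub-extremal Reissner--Nordstr\"om in spherical symmetry (cf.\ the stability results cited in \cref{sec:intro1} and references to the work of Van de Moortel and collaborators) guarantees that small spherically symmetric perturbations of Reissner--Nordstr\"om Cauchy data on $\Sigma$ give rise to dynamical black holes that settle down to a nearby Reissner--Nordstr\"om exterior. In this regime I would define a nonlinear \emph{scattering map}
\begin{equation*}
\mathcal{S}\colon \text{(Cauchy data on }\Sigma\text{)} \longrightarrow \text{(event-horizon profile)},
\end{equation*}
whose linearization at Reissner--Nordstr\"om data is precisely the linear map constructed above. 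Applying an implicit function theorem argument in a suitable weighted H\"older or energy space, and using that the linearized scattering map has dense image (or is onto after quotienting by the kernel), I would conclude that every sufficiently small, sufficiently regular profile $\varepsilon \phiH$ arises from some spherically symmetric Cauchy datum on $\Sigma$. A final rescaling argument, together with finite-in-time domain-of-dependence considerations, removes the smallness restriction at the level of the profile.

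The hard part will be twofold. First, the event horizon $\HH$ is not prescribed a priori but is determined dynamically by the solution, so the scattering map $\mathcal S$ involves simultaneously the metric (which fixes the location of $\HH$ in a double-null gauge) and the matter profile along it. One must carefully set up function spaces in which both the geometric normalization of $\HH$ and the trace $\phi_{|\HH}=\phiH$ make sense, and verify that $\mathcal S$ is Fr\'echet differentiable. Second, showing that the linear backward scattering map has sufficiently large image is delicate in precisely the frequency regimes relevant to the present paper: one must allow profiles oscillating near $\omega_+$ (the horizon frequency entering $\omer$), where superradiance and, in the massive case, the quasi-bound-state resonances of \cref{decay.conj} appear. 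A \emph{density argument} that realizes profiles violating the oscillation condition \eqref{osc.cond.qual} is essential for consistency with \cref{main.theorem.intro2}, and likely requires either a mode-by-mode construction using generalized eigenfunctions or a direct Hilbert space duality argument on a suitable radiation field space. Once these two obstacles are resolved, the remainder of the proof reduces to standard continuity and compactness in the stability framework.
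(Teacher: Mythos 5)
The statement you are trying to prove is \cref{scat.conj}, which the paper does \emph{not} prove: it is explicitly left open as a conjecture, motivated only by scattering heuristics from \cite{scatteringAAG,scatteringDRSR,Hamed}, and the authors themselves flag that ``finding the precise regularity of such Cauchy data on $\Sigma$ is also part of the resolution'' of it. So there is no proof in the paper to compare yours against; any complete argument here would be a new result, not a reconstruction.

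As a proposal, your outline identifies a natural framework (linear backward scattering plus a nonlinear perturbative step), but it has a genuine gap at its very first stage. When you run energy estimates \emph{backwards} from $\HH\cup\mathcal I^+$ to $\Sigma$, the red-shift effect at the event horizon reverses sign and acts as a blue-shift: a profile $\phiH$ that merely has finite energy and decays polynomially (as the profiles of interest here do, cf.\ \eqref{decay.s}) generically produces data on $\Sigma$ whose higher-order energies grow exponentially in the backward evolution, so the backward solution need not land in any fixed regularity class. This is precisely the phenomenon studied in \cite{dafermosSR} and is the reason the conjecture is worded in terms of ``sufficiently regular'' Cauchy data with the regularity question deferred. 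Your invocation of ``a red-shift estimate at $\HH$'' is the forward-in-time tool and does not help in the backward direction; without quantifying the loss you cannot assert that the induced data on $\Sigma$ are finite-energy, let alone regular enough to feed into the nonlinear stability machinery of your second step. Separately, the closing ``rescaling argument'' cannot remove the smallness restriction: the EMKG system with $m^2>0$ and $q_0\neq 0$ carries fixed scales, and replacing $\phiH$ by $\varepsilon\phiH$ changes the target profile rather than shrinking the data for a fixed profile, so an implicit-function-theorem argument would at best realize small profiles, not arbitrary finite-energy ones.
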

	In particular, if \cref{scat.conj} is true, this means that there exist Cauchy data on $\Sigma$ for which the scalar field $\phiH$ on the event horizon  $\HH$ obeys  \eqref{decay.s} for $s >\frac{3}{4}$, but violates the oscillation condition \eqref{osc.cond.qual}, thus by \cref{main.theorem.intro2}, the scalar field $\phi$ blows up in amplitude at the Cauchy horizon $\CH$ (if $q_0=0$). 
	Such (non-generic) Cauchy data will be important in \cref{null.contraction.section} as they will constitute examples of null contraction singularities at $\CH$, see \cref{null.contraction.theorem}.
	Finding the precise regularity (c.f.\ \cite{dafermosSR,rough}) of such Cauchy data on $\Sigma$ is also part of the resolution of \cref{scat.conj}.

	\subsubsection{{\cref{W11.thm.intro}}: $W^{1,1}$-blow-up along outgoing cones---a complete contrast with the vacuum case}\label{W11.intro.section.}
	
	We remarked before that the falsification of the $C^0$-formulation of Strong Cosmic Censorship in vacuum \cite{KerrStab} by Dafermos--Luk---the vacuum analog of \cref{corollary.conj} outside spherical symmetry---crucially relies on integrable decay along   the event horizon $\HH$ for perturbations and their derivatives (see \eqref{Kerr.decay}). Indeed, in their work, Dafermos--Luk propagate this integrable decay towards $i^+$ with   suitable weighted energy  estimates into the black hole interior. This integrable decay for outgoing derivatives is then used to  show that  the metric is actually $W^{1,1}$-extendible along  outgoing null cones, i.e.\ with locally integrable Christoffel symbols. Note that this $W^{1,1}$-extendibility result of the metric is \emph{strictly stronger} than the $C^0$-extendibility which subsequently follows by integrating.  Mutatis mutandis, this robust physical space method of showing the stronger $W^{1,1}$-extendiblity result as an intermediate step has been applied in various previous contexts to show $C^0$-extendibility, e.g.\  \cite{MihalisPHD,Mihalis1,JonathanStab,KerrStab}, exploiting the null structure of the Einstein equations: In fact, this was the only known method to prove $C^0$-extendibility so far. For the  EMKG model, however, only in the case $m^2 =0,\ |q_0 e|<\frac 12$, do perturbations along the event horizon $\HH$ decay at an \emph{integrable} rate. For such integrable rates, the analog of \cref{corollary.conj} was shown already \cite{Moi} using the aforementioned physical space method and proving  $W^{1,1}$-extendibility as an intermediate step (schematically $\int |\partial_v g| \d v<\infty$):
	
	\begin{thm*}[M.VdM.\ \cite{Moi}] \label{Moi.theorem.intro} 
		Consider spherically symmetric characteristic initial data  for \eqref{E1}--\eqref{E5} 
		on the event horizon $\HH$ (and on an ingoing cone). Let the scalar field $ \phiH$  decay fast on the event horizon $\HH$ (i.e.\ $\phiH$ satisfies \eqref{decay.s} for $s>1$). 	Then $\phi$ 
		is uniformly bounded in amplitude and in $W^{1,1}$ i.e.\
		\begin{align}\label{eq:boundedessofphi}
		& \sup_{(u,v)} |\phi|(u,v) < +\infty,\ \hskip 22 mm \sup_{u} \int_{v_0}^{+\infty} |D_v \phi|(u,v) \d v < +\infty.
		\end{align} 
		Moreover the metric $g$ admits a $W^{1,1}$ extension $\tilde{g}$ across the Cauchy horizon $\CH$ and $\tilde{g}$ is $C^0$-admissible (\cref{C0admissible}). In particular, $g$ is $C^0$-extendible. 
	\end{thm*}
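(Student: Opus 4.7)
The plan is to work in double null coordinates $(u,v)$ adapted to the event horizon $\HH$, with metric $g = -\Omega^{2}\, du\, dv + r^{2}\, d\sigma_{\mathbb{S}^{2}}$, in which the EMKG system in spherical symmetry reduces to wave equations for $\phi$ and $r$, Raychaudhuri equations for $\partial_{v}r/\Omega^{2}$ and $\partial_{u}r/\Omega^{2}$, a transport equation for $\log\Omega^{2}$, and transport equations for the renormalized charge $Q$. The crucial structural input supplied by the hypothesis $s>1$ is that the initial data on $\HH$ are $W^{1,1}$ in $v$, i.e.\ $\int_{v_{0}}^{\infty}(|\phiH| + |D_{v}\phiH|)\, dv < \infty$. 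The entire strategy is to \emph{propagate this integrable-in-$v$ control from $\HH$ all the way into the interior up to $\CH$}, and then extract both the pointwise bound on $\phi$ and the $W^{1,1}$ extendibility of the metric coefficients from it.

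Following the standard interior analysis on charged black holes, I would decompose the interior into a red-shift region $\{r \geq r_{+} - \epsilon\}$, a bounded no-shift region, and a blue-shift region $\{r \leq r_{-} + \delta\}$ reaching up to $\CH$. In the red-shift and no-shift regions, Dafermos's red-shift vector field estimate together with pointwise transport estimates along outgoing cones yields uniform bounds on $\phi$ and an $L^{1}_{v}$ bound on $D_{v}\phi$ on the inner boundary $\{r = r_{-} + \delta\}$, directly inherited from the integrable decay on $\HH$. In the blue-shift region I would set up a bootstrap argument with assumptions that $\phi$, $r - r_{-}$, $\log \Omega^{2}$, $Q - e$, and $\sup_{u}\int |D_{v}\phi|(u,v)\, dv$ are all uniformly bounded. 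Integrating the reduced wave equation for $\phi$ in $u$ from $\underline{C}_{in}$, whose coefficients are controlled under the bootstrap, propagates the $L^{1}_{v}$ bound on $D_{v}\phi$; a further integration in $v$ then produces the pointwise $C^{0}$ bound on $\phi$, and the Raychaudhuri and Maxwell transport equations---whose right-hand sides are quadratic in the scalar field derivatives---absorb this $L^{1}_{v}\cap L^{\infty}$ control into $W^{1,1}_{v}$ bounds on $\log\Omega^{2}$, $\partial_{v}r$, and $Q$, which in turn close the bootstrap and yield the $W^{1,1}$ extension of $g$, hence $C^{0}$-admissibility.

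The main obstacle is the mass-inflation mechanism potentially driving $\log\Omega^{2}$ (equivalently, the renormalized Hawking mass) to blow up at $\CH$, which would simultaneously destroy the metric extension and the control of the Maxwell-scalar coupling. This is precisely where the threshold $s > 1$ becomes decisive: the dangerous source in the transport equation for $\log\Omega^{2}$ is schematically $|D_{v}\phi|^{2}/(\partial_{v}r)$ plus safer terms, and under the bootstrap control of $\partial_{v}r$ this source is $L^{1}_{v}$ exactly when $D_{v}\phi \in L^{1}_{v}$, i.e.\ exactly under the integrable-decay hypothesis. For $s \leq 1$ this purely physical-space mechanism breaks down, which is consistent with \cref{main.theorem.intro} and \cref{main.theorem.intro2} of the present paper showing that in the non-integrable regime the very validity of a $C^{0}$ extension becomes a delicate Fourier-space question governed by the resonance $\omer$ and the oscillation condition on $\phiH$.
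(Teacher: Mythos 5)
Your outline follows essentially the same route as the cited proof in \cite{Moi}, whose estimates this paper recalls in \cref{recall.section}: a red-shift/no-shift/blue-shift decomposition with bootstraps propagating the integrable $v$-decay inward, the quadratic structure of the Einstein--Maxwell--scalar sources converting $L^1_v\cap L^\infty$ control of $D_v\phi$ together with $|D_u\phi|\lesssim |u|^{-s}$ into $W^{1,1}_v$ bounds on $\partial_v r$, $\log\Omega^2$ and $Q$, and hence a $C^0$-admissible extension. Two cosmetic slips only: the $u$-integration of the wave equation starts from $\HH$ (or the past boundary of the relevant region), not from the ingoing cone $\underline{C}_{in}$, and the dangerous source for $\partial_v\log\Omega^2$ is the cross term $\Re(D_u\phi\,\partial_v\bar{\phi})$ rather than $|D_v\phi|^2/\partial_v r$ (which drives the Hawking mass), though both are disposed of by the same integrability count at $s>1$.
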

	Note that the $W^{1,1}$-extendibility method provides a so-called $C^0$-admissible extension, which is a continuous extension also admitting null coordinates (a slightly stronger result than general $C^0$-extendibility).
	
	Apart from the massless case $m^2 =0$ with  $|q_0 e|< \frac 12$,   the scalar field $\phi$ on the event horizon $\HH$ is expected to be \emph{non-integrable} along the event horizon $\HH$ (\cref{decay.conj}) and as such, the robust physical space methods of \cite{KerrStab,JonathanStab,MihalisPHD, Moi} showing the intermediate and stronger $W^{1,1}_{loc}$-extendibility fail. 
	
	We show in \cref{W11.thm.intro} below that indeed for \emph{generic non-integrable} scalar field $\phiH$ on the event horizon $\HH$, the scalar field $\phi$ \emph{blows up} in $W^{1,1}$ (i.e.\ $\int |D_v \phi| \d v=\infty$) at the Cauchy horizon $\CH$.

	This is yet another manifestation of the fact that the $C^0$-extendibility result for the  non-integrable perturbations is unexpectedly subtle and crucially relies on the precise oscillations of the perturbation on the event horizon $\HH$. In this sense, our result cannot be captured solely in physical space---making our mixed physical space-Fourier space approach  seemingly necessary.

	We now give a rough version of \cref{W11.thm.intro} and refer to  \cref{sec:w11blowup} for the precise formulation.
	\begin{theoc} [$\bm{W^{1,1}}$\textbf{-blow-up} along outgoing cones] \textup{[Rough version; precise version in \cref{sec:w11blowup}]} \label{W11.thm.intro} 	Consider spherically symmetric characteristic initial data  for \eqref{E1}--\eqref{E5} 
		on the event horizon $\HH=[v_0,+\infty)$ (and on an ingoing cone). 	Then the following hold true.
		\begin{itemize}
			\item Consider arbitrary $q_0 \in \RR, m^2\geq 0$. 
			
			Then, for generic $\phiH$ satisfying \eqref{decay.s} and \eqref{s.def}, the scalar field $\phi$ blows up  in $W^{1,1}_{loc}$ at the Cauchy horizon $\CH$ i.e.\ for all $u$
			\begin{align} \label{eq:W11blowup}
			\int_{v_0}^{+\infty} |D_v \phi|(u,v) \d v = +\infty.
			\end{align}
			\item Consider either the small charge case (i.e.\ $0<|q_0 e|< \epsilon(M,e,m^2)$ for $\epsilon(M,e,m^2)>0$ sufficiently small, $m^2 \geq 0$) or the uncharged case $q_0 =0$ for almost every mass $m^2 \in \mathbb R_{>0}$. 
			
			Then, for all non-integrable  $\phiH \notin L^1$ satisfying \eqref{decay.s} and \eqref{s.def}, the scalar field $\phi$ blows up  in $W^{1,1}_{loc}$ along outgoing cones at the Cauchy horizon $\CH$: i.e.\ for all $u$	\begin{align}
			\int_{v_0}^{+\infty} |D_v \phi|(u,v) \d v = +\infty.
			\end{align}
			
		\end{itemize}
	\end{theoc}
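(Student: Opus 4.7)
The plan is to upgrade the analyses underlying \cref{main.theorem.intro} and \cref{main.theorem.intro2} to a pointwise asymptotic expansion for $D_v\phi$ along every outgoing cone approaching $\CH$. Concretely, by differentiating a Duhamel-type representation of $\phi$ in the interior with respect to $v$ (or by working directly with the transport equation for $D_v(r\phi)$ integrated along ingoing null cones from $\HH$), I expect to derive, as $v\to+\infty$ with $u=u_0$ fixed,
\begin{equation}\label{eq:Dv-asymp-plan}
D_v\phi(u_0,v)\;=\;C(u_0)\,\phiH(v)\,e^{i\omer v(1+O(v^{1-2s}))}+\mathrm{Err}(u_0,v),
\end{equation}
where $|C(u_0)|>0$ for every $u_0$ in the black hole interior and $\mathrm{Err}(u_0,\cdot)\in L^1_v$. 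Taking absolute values removes the oscillatory factor and yields the pointwise comparison $|D_v\phi|(u_0,v)\gtrsim|\phiH(v)|$ modulo an $L^1_v$ error---this is the key estimate driving both parts of the theorem.

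\textbf{First bullet (generic $\phiH$).} Under the sole upper bound \eqref{decay.s}, $\phiH$ could decay strictly faster than $v^{-s}$ or vanish beyond some $v$, in which case no $W^{1,1}$ blow-up occurs. The genericity condition---formalized, e.g., as the complement of a meager subset of the Banach space of profiles weighted by $v^{s}$---excludes this degenerate set, so that $\int_{v_0}^{+\infty}|\phiH(v)|\,\d v=+\infty$ (recall $s\leq 1$). Combining with \eqref{eq:Dv-asymp-plan} immediately produces \eqref{eq:W11blowup}. The essential point is that even if the oscillation condition \eqref{osc.cond.qual} holds and $\phi$ itself remains bounded, the cancellation inherent to the oscillatory phase is destroyed upon passing to $|D_v\phi|$, which therefore inherits non-integrability from $|\phiH|$.

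\textbf{Second bullet.} Here $C(u_0)\neq 0$ must be established unconditionally in $\phiH$. In the uncharged case $q_0=0$ one has $\omer=0$, so \eqref{eq:Dv-asymp-plan} reduces to a pointwise comparison $|D_v\phi|(u_0,v)\gtrsim|\phiH(v)|$ directly; genericity of $m^2$ enters only to exclude the measure-zero set of masses at which the transport coefficients for the massive wave equation happen to vanish. In the small-charge regime $0<|q_0e|<\epsilon(M,e,m^2)$, the non-vanishing of $C(u_0)$ then follows by a perturbative continuity argument from $q_0=0$, using that $\omer=q_0e(r_-^{-1}-r_+^{-1})$ is small. In both regimes, $\phiH\notin L^1$ combined with the pointwise comparison directly yields $\int|D_v\phi|\,\d v=+\infty$ on every outgoing cone, completing the proof.

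\textbf{Main obstacle.} The central difficulty is to establish \eqref{eq:Dv-asymp-plan} with a genuinely $L^1_v$-integrable---not merely decaying---error term, and to do so for the full nonlinear EMKG system rather than on a fixed Reissner--Nordstr\"om background. This requires simultaneously (i) commuting with $\partial_v$ through the red-shift region near $\HH$ to propagate the initial decay of $D_v\phi$ sharply, (ii) tracking the oscillatory phase generated by the electromagnetic gauge through the no-shift region of the interior, and (iii) controlling the blue-shift amplification of the nonlinear back-reaction precisely enough to preserve $L^1_v$-integrability of $\mathrm{Err}$. A secondary subtlety in the first bullet is to formalize ``generic'' compatibly with the nonlinear error terms (which themselves depend on $\phiH$); a Baire-category or measure-theoretic framework on the profile space appears most natural here.
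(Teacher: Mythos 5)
The reduction of the nonlinear statement to a linear one on exact Reissner--Nordstr\"om via difference estimates with integrable ($v^{1-3s}\in L^1_v$, $s>\tfrac34$) errors is correct and matches the paper (\cref{LB.prop}, \cref{W11.OO.prop}). But your key estimate --- the pointwise expansion $D_v\phi(u_0,v)=C(u_0)\,\phiH(v)\,e^{i\omer v(1+O(v^{1-2s}))}+\mathrm{Err}$ with $\mathrm{Err}(u_0,\cdot)\in L^1_v$ --- is false, and the argument collapses without it. The map $\phiH\mapsto \partial_v(re^{i\omer r^\ast}\phil)(u_0,\cdot)$ is, up to genuinely integrable errors, a Fourier multiplication operator with the \emph{non-constant} symbol $\mathfrak t_{\omer}(\omega)$ (Part~\textbf{C} of \cref{prop:representation}). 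Freezing the symbol at a single frequency, as your expansion implicitly does, produces a remainder $T_{\mathfrak t_{\omer}-\mathfrak t_{\omer}(0)}[\phiH]$ whose leading contribution is of the form $\mathfrak t^1_{\omer}\,\partial_v(e^{i\omer v}\phiH)$; since $|\partial_v\phiH|$ is only bounded by $v^{-s}$ with $s\le 1$, this ``error'' is of the \emph{same} non-integrable order as your main term (e.g.\ for $\phiH=e^{-i\omega_1 v}v^{-s}$ one has $|\partial_v\phiH|\sim v^{-s}$). There is therefore no pointwise comparison $|D_v\phi|\gtrsim|\phiH|$ modulo $L^1$; what is true, and what the paper proves, is an $L^1$-\emph{norm} inequality $\|\phiH\|_{L^1}\lesssim\|\partial_v(re^{i\omer r^\ast}\phil)\|_{L^1}+E_1[\phiH]$, obtained by inverting the multiplier: one shows $\mathcal F[\mathfrak t_{\omer}^{-1}]\in L^1$ and applies Young's convolution inequality (see \eqref{W11.intro.est} and Part~\textbf{E} of \cref{prop:representation}).

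This inversion is exactly where the structural hypotheses of the theorem enter, and your proposal does not engage with it. The obstruction to $\mathfrak t_{\omer}^{-1}$ being an $L^1$-bounded multiplier is the zero set $\mathcal Z_{\mathfrak t}$: from the o.d.e.\ energy identity $|\mathfrak t|^2=|\mathfrak r|^2+\omega(\omega-\omer)$, zeros are absent for $q_0=0$, $m^2\notin D(M,e)$ and for small $|q_0e|$ (your second bullet is salvageable along these lines, though by the multiplier argument rather than a pointwise one), but for general charge $\mathfrak t$ may vanish on $(0,\omer)$, and then data spectrally concentrated near $\mathcal Z_{\mathfrak t}$ can produce an \emph{integrable} $D_v\phi$ even though $\phiH\notin L^1$. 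Consequently the genericity in the first bullet is not, as you assert, merely the condition $\phiH\notin L^1$ (excluding profiles that decay faster than $v^{-s}$); it is the Fourier-support condition $P_\delta\phiH\in L^1$ relative to $\mathcal Z_{\mathfrak t}$, encoded in the exceptional subspace $H$ of \cref{linear.cor.W11}. Without identifying the multiplier structure and its zero set, neither bullet of the theorem can be established for the stated class of data.
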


	\cref{W11.thm.intro} shows that the Cauchy horizon $\CH$ is already more singular in the slowly decaying case (i.e.\ $\phiH $ obeys \eqref{decay.s} for $s\leq 1$) than in the fast decaying case (i.e.\ $\phiH $ obeys \eqref{decay.s} for $s> 1$) as the comparison  with \eqref{eq:boundedessofphi} illustrates.
	
	Assuming that \cref{decay.conj} is true, as part of  our novel \cref{W11.thm.intro}, we also show that the $W^{1,1}$ blow-up of $\phi$ given by \eqref{eq:W11blowup} also occurs for generic and regular Cauchy data (for almost all parameters $(q_0,m^2)$).
	
	Further \cref{W11.thm.intro} strongly suggests that generically the metric itself is also   $W^{1,1}$-inextendible, i.e.\ does not admit locally integrable Christoffel symbols in any coordinate system. If true, this statement would be in dramatic contrast with  the vacuum perturbations of Kerr considered in  \cite{KerrStab} and the weak null singularities from \cite{JonathanWeakNull} (both enjoying the analog of fast decay on the event horizon $\HH$, see \cref{downfall.section}) in which the metric is shown to be $W^{1,1}$-extendible across the Cauchy horizon $\CH$. Extending \cref{W11.main.thm} to a full $W^{1,1}$-inextendibility result on the metric is however a difficult (albeit very interesting) open problem due to the geometric nature of such a statement, see \cite{KerrStab,JonathanWeakNull,JanC0,JanC1,MoiChristoph2} for related discussions.

	\subsubsection{{\cref{null.contraction.theorem}}: the null contraction singularity at the Cauchy horizon $\CH$ for perturbations violating the oscillation condition}\label{null.contraction.section}

	By \cref{main.theorem.intro2}, if $q_0=0$, then any scalar field $\phiH$ that violates on oscillation condition \eqref{osc.cond.qual} on the event horizon $\HH$ gives rise to $\phi$ that  blows up in amplitude at the Cauchy horizon $\CH$. A natural question then emerges: How does this blow up of the matter field translate geometrically, i.e.\ does the metric admit a singularity? 
	
	This question is answered in the affirmative in our companion paper \cite{MoiChristoph2}: We show that the metric admits a novel type of $C^0$-singularity at the Cauchy horizon $\CH$ that we call a \emph{null contraction singularity}. The main result of \cite{MoiChristoph2} is conditional: we show that the metric admits a null contraction singularity if $|\phi|$ blows up at the Cauchy horizon. Combining this result with \cref{main.theorem.intro2} (if $q_0=0$) shows that a null contraction singularity is formed dynamically for a scalar field $\phiH$ violating the oscillation condition \eqref{osc.cond.qual} on $\HH$.
	
	We emphasize that the null contraction singularity is a $C^0$-singularity and different (in particular stronger) from the usual blue-shift instability \cite{dafermosSR} for derivatives, which additionally occurs at the Cauchy horizon of dynamical EMKG black holes and triggers the blow up of curvature and of the Hawking mass (mass inflation), see \cite{Moi,Moi4} and the discussion in \cref{Weaknullsection}. Specifically, the null contraction singularity has the following novel characteristics.

	\begin{theod}[C.K.--M.VdM.\ \cite{MoiChristoph2}]  \label{null.contraction.theorem}
		Consider spherically symmetric characteristic initial data  for \eqref{E1}--\eqref{E5} 
		on the event horizon $\HH$  (and on an ingoing cone). Let the scalar field  $\phiH$  decay slowly on the event horizon $\HH$  (i.e.\ $\phiH  $ satisfies \eqref{decay.s}, \eqref{s.def}). 	Assume additionally that $\phi$ blows up in amplitude at the Cauchy horizon $\CH$ i.e.\ assume that $\underset{(u,v) \to \CH}{\limsup} |\phi|(u,v) = +\infty$. 
		
		\noindent Then the metric $g$ admits a null contraction singularity in the following sense:    \setlist{nolistsep}
		\begin{enumerate}[a)]
			\item \label{singA} The metric does not admit any $C^0$-admissible extension (as defined in   \cref{C0admissible}) across the Cauchy horizon $\CH$. 
			
			\item \label{singB}

			The affine parameter time on ingoing null geodesics  (with uniform but otherwise arbitrary normalization) between two radial causal curves with distinct endpoint at the Cauchy horizon $\CH$ tends to zero as the Cauchy horizon $\CH$ is approached.
			
			\item \label{singC}
			The angular tidal deformations of radial ingoing null geodesics (with uniform but otherwise arbitrary normalization) become arbitrarily large near the Cauchy horizon $\CH$.

		\end{enumerate}
		
	\end{theod}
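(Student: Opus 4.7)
The plan is to work in double-null coordinates $(u,v)$ adapted to the Cauchy horizon $\CH$, in which the spherically symmetric metric takes the form $g = -\Omega^2(u,v)\, du\, dv + r^2(u,v)\, d\sigma_{\mathbb{S}^2}^2$. The three conclusions (a)--(c) will all be reduced to the single structural claim that, under the hypothesis $\limsup_{(u,v)\to\CH}|\phi| = +\infty$ and the a priori geometric bounds from \cref{Moi.intro} (in particular $r$ bounded away from zero and the area radius gauge already fixed), the null lapse $\Omega^2$ must degenerate to zero as $\CH$ is approached, quantitatively and uniformly in $v$.

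First I would establish the degeneration $\Omega^2 \to 0$. The mechanism is the $uv$-component of the Einstein equations, which in spherical symmetry schematically reads
\begin{equation*}
\partial_u\partial_v \log \Omega^2 = R(r,\partial r,\Omega,F) - 2\,\Re(D_u\phi\, \overline{D_v\phi}) + \tfrac{m^2}{2}\Omega^2 |\phi|^2,
\end{equation*}
where $R$ is controlled by bounds carried over from \cref{Moi.intro}. Integrating twice from the event-horizon data, the source term $\tfrac{m^2}{2}\Omega^2|\phi|^2$ has a definite sign and, because $|\phi|\to+\infty$ pointwise on $\CH$, is what drives $\log \Omega^2 \to -\infty$. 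Feedback from the wave equation $g^{\mu\nu}D_\mu D_\nu\phi = m^2\phi$ has to be handled jointly with the Raychaudhuri equations for $\partial_u r$ and $\partial_v r$, but the estimates from \cref{Moi.intro} keep the remaining kinetic and Maxwell contributions sufficiently regular that they cannot cancel this mass-driven divergence.

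Given $\Omega^2 \to 0$, parts (b) and (c) follow quickly. For (b), an ingoing radial null geodesic is parametrized by $v$ at fixed $u$, and its affine-parameter increment between $v_1$ and $v_2$, with the normalization fixed at some reference hypersurface away from $\CH$, is proportional to $\int_{v_1}^{v_2} \Omega^2(u,v)\, dv$; this tends to $0$ by bounded convergence as $u\to u_{\CH}$ for any $v_1 < v_2$ with distinct endpoints on $\CH$. For (c), the relevant angular sectional curvatures scale like $\Omega^{-2}\partial_u\partial_v r / r$ and diverge once $\Omega^2 \to 0$ while $r$ stays bounded away from zero; inserting this into the Jacobi equation for an angular Jacobi field along the geodesic yields unbounded tidal deformation in the affine parameter.

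Part (a) is the main obstacle, because $C^0$-admissibility (in the sense of \cref{C0admissible}) is gauge-invariant and cannot be read off from $\Omega^2\to 0$ in a single chart. I would argue by contradiction: any admissible extension would yield new double-null coordinates $(\tilde u,\tilde v)$ in which the metric is continuous across $\CH$, and the resulting diffeomorphism $v \mapsto \tilde v$ would have to satisfy $\partial_v \tilde v \to 0$ at precisely the right rate to absorb the degeneration of $\Omega^2$; but then the affine-parameter collapse established in (b) would force two distinct points of $\CH$ to correspond to a single $\tilde v$-value, contradicting injectivity of a coordinate. Making this rigorous requires a quantitative lower bound on the rate at which $\Omega^2 \to 0$, which is in turn traceable to the quantitative blow-up of $|\phi|$ supplied by \cref{main.theorem.intro2}; this quantification is the technical heart of the proof and the place where the slow-decay and oscillation analysis re-enters the interior problem.
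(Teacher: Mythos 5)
First, note that this paper does not actually prove \cref{null.contraction.theorem}: the proof is deferred to the companion paper \cite{MoiChristoph2}, and what the present paper supplies is the key unconditional input, namely the quantity $\Upsilon$ of \eqref{Delta.def} and \cref{coordinatesphiboundedprop} showing that $\Upsilon=\log(\Omega^2_{CH})+|\phi|^2+\int_u^{u_s}\frac{|\nu|}{r}|\phi|^2\,\d u'$ is bounded and continuously extendible. Your proposed mechanism for the degeneration of the lapse is not the one that works, and I believe it fails. There is no source term $\tfrac{m^2}{2}\Omega^2|\phi|^2$ in the equation for $\partial_u\partial_v\log\Omega^2$ (see \eqref{Omega}; the mass term sits in the $r$-equation \eqref{Radius}), and in any case $\Omega^2|\phi|^2\lesssim v^{2-4s}$ is \emph{integrable} near $\CH$ for $s>\tfrac34$ by \eqref{phiblowupVLB} and \eqref{lambdaLB}, so it cannot drive $\log\Omega^2\to-\infty$. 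The dominant, non-integrable term is $-2\Re(D_u\phi\,\overline{\partial_v\phi})$, which has no definite sign; the paper's point is precisely that this term is a total derivative of $|\phi|^2$ up to controllable errors, which is what produces $\Upsilon$ and converts blow-up of $|\phi|^2$ into blow-down of $\log\Omega^2_{CH}$. Without this algebraic observation your "integrate twice" step does not close.

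The more serious structural problem is that your arguments for (b) and (c) prove too much: they use only that $\Omega^2\to 0$ in the given double-null chart, which is true even on exact Reissner--Nordström (indeed $\Omega^2\lesssim v^{-2s}$ in $\mathcal{LB}$ for \emph{all} data, bounded or not), yet the paper explicitly notes that statements a)--c) are \emph{false} on exact Reissner--Nordström and on the bounded-$\phi$ spacetimes of \cref{main.theorem}. The pointwise vanishing of $\Omega^2$ is pure gauge; the geometric content of (b) is that the \emph{normalized} affine length, i.e.\ essentially $\int \Omega^2(u,v)/\Omega^2(u_0,v)\,\d u$ along the ingoing cones $\{v=\mathrm{const}\}$ (you have also swapped the roles of $u$ and $v$: $\CH=\{v=+\infty\}$ is parametrized by $u$), tends to zero, and this requires $\log\Omega^2(u,v)-\log\Omega^2(u_0,v)\to-\infty$, which is exactly where the blow-up of $|\phi|$ enters through $\Upsilon$ and the comparison \eqref{asymptotics.phi.stupid}. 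The same objection applies to (c): $\Omega^{-2}\partial_u\partial_v r$ diverging in a degenerating coordinate frame is consistent with perfectly finite tidal forces, as on Reissner--Nordström itself. Finally, for (a) your contradiction is aimed at the wrong degeneration: admissible gauge changes factor as $\tilde u=U(u)$, $\tilde v=V(v)$, so $V'$ can absorb the $v$-decay of $\Omega^2$ at one fixed $u_0$, and the obstruction is that the ratio $\Omega^2(u_1,v)/\Omega^2(u_0,v)$ then still tends to zero for $u_1\neq u_0$ (forced by $\Upsilon$ bounded plus $|\phi|\to\infty$), so $\tilde\Omega^2$ cannot be positive and continuous on $\CH$; no quantitative blow-up rate from \cref{main.theorem.intro2} is needed, only the qualitative hypothesis $\limsup|\phi|=+\infty$ together with \cref{stupid.lemma}. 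Your "two points of $\CH$ mapped to one $\tilde v$-value" cannot occur, since $\CH$ is a single $\tilde v$-level set parametrized by $u$.
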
 For the precise definitions of the terms employed in the statement of   \cref{null.contraction.theorem} we refer the reader to \cite{MoiChristoph2}. Note that the null contraction singularity is named in reference to Statement \ref{singB}, the most emblematic: physically, it means that the (suitably renormalized) affine parameter time in the ingoing null direction between two observers tends to zero as both observers approach the Cauchy horizon $\CH$. 
	
	\cref{null.contraction.theorem} is the first instance of a \textit{null contraction singularity}: Statements \ref{singA}--\ref{singC} have only been shown to occur  in the context of matter fields blowing up at the Cauchy horizon $\CH$, as we prove in \cite{MoiChristoph2}. In particular, Statements \ref{singA}--\ref{singC} are all false on the exact Reissner--Nordstr\"{o}m interior or on the spacetimes of \cref{main.theorem} for which $\phi$ is bounded.  
	
	In view of \cref{main.theorem.intro2}, we note that there exists a large class of characteristic data on $\HH\cup \underline{C}_{in}$ giving rise to a null contraction singularity at $\CH$, see \cref{fig:theorema'}. Moreover, assuming  \cref{scat.conj}, we also note that there exist Cauchy data on asymptotically flat $\Sigma$  which give rise to a null contraction singularity at $\CH$.
	
	Finally, we note that statement \ref{singA} of \cref{null.contraction.theorem} is, to the best of the authors knowledge, the first $C^0$-inextendibility result across a null boundary (in our case the Cauchy horizon $\CH$). The geometric statement \ref{singA} strongly suggests that the oscillation condition \eqref{osc.cond.qual} is indeed crucial to falsify \cref{C0SCC}. Note however that \cref{null.contraction.theorem} only proves the impossibility to extend the metric in a spherically symmetric $C^0$-class  (also used in \cite{gmadmissible}), where $C^0$ double null coordinates exist. It would be interesting to investigate whether statement \ref{singA} can be promoted to a full $C^0$-inextendibility statement. However such statements are notoriously difficult to obtain: even in the more singular case where the black hole boundary is spacelike\footnote{A spacelike singularity is indeed  widely associated to $C^0$-inextendibility, and viewed as a stronger singularity than a Cauchy horizon, notably because of the blow-up of tidal deformations experienced on timelike geodesics \cite{KerrStab,JanC0,JanC1}.}, the $C^0$-extendibility of the metric has only been proved for the Schwarzschild black hole \cite{JanC0}.

	\subsection{Cauchy horizons in other models: a comparison with our results}
	\label{CH.intro}
	
	Having introduced our main results on the EMKG model \eqref{E1}--\eqref{E5} in \cref{sec:intro1}, we will now mention selected results on the existence/regularity of Cauchy horizons and \cref{C0SCC} for different models, which will appear to be in dramatic contrast with the previous \cref{Moi.intro} and  our new results given in \cref{main.theorem.intro}, \cref{main.theorem.intro2}, \cref{W11.thm.intro} and \cref{null.contraction.theorem} on the EMKG model in spherical symmetry.

	\subsubsection{Spherically symmetric models with no Maxwell field: absence of a Cauchy horizon} \label{brief}
	
	Before turning to models admitting Cauchy horizons emanating from $i^+$, it is useful to recall that there exist models for which such Cauchy horizons do not form. An example of such a model is given by the Einstein-scalar-field system (i.e.\ \eqref{E1}--\eqref{E5} with $F \equiv 0$, $m^2=0$) in spherical symmetry. This model was studied in the seminal series of Christodoulou \cite{Christo1,Christo2,Christo3} who showed that the MGHD of generic spherically symmetric data is bound to the future by a spacelike boundary $\mathcal{S}=\{r=0\}$ (in particular, there exists no null component of the boundary) and observers approaching $\mathcal{S}=\{r=0\}$ experience infinite tidal deformations.
	
	From Christodoulou's work \cite{Christo1}, it follows that \cref{C0SCC} is \emph{true} for the Einstein-scalar-field system in spherical symmetry in the sense that there exists no spherically symmetric $C^0$-extension of the metric.

	\subsubsection{Stability of the Cauchy horizon and the downfall of  \cref{C0SCC} for massless fields and in vacuum} \label{downfall.section}
	
	\paragraph{The Einstein--Maxwell-uncharged-scalar-field in spherical symmetry.} Christodoulou's spherically symmetric spacetimes however fail to capture the repulsive effect that angular momentum exerts on the geometry in non-spherical collapse. One way to model this repulsive effect while remaining in the realm of spherical symmetry is to add a Maxwell field to the Einstein-scalar-field equations: The electromagnetic force then plays the role of angular momentum in non-spherical collapse \cite{Dafermos.Wheeler}. The resulting Einstein--Maxwell-uncharged-scalar-field system, i.e.\ \eqref{E1}--\eqref{E5} with $m^2=q_0=0$, admits a (spherically symmetric) stationary charged black hole, the Reissner--Nordstr\"{o}m metric (for which $\phi \equiv 0$) whose MGHD is bound to the future by a smooth Cauchy horizon $\mathcal{CH}_{i^+} $, see \cref{Fig2}.

	\paragraph{Falsification of \cref{C0SCC} for the Einstein--Maxwell-uncharged-scalar-field model in spherical symmetry.} The interior dynamics\footnote{For a discussion of the dynamics far away from $i^+$ in the context of gravitational collapse, see \cref{weak.grav.col}.} near $i^+$ for the Einstein--Maxwell-uncharged-scalar-field model  were studied  in the pioneering work of Dafermos \cite{Mihalis1,MihalisPHD} who proved that the interior  of the black hole admits a Cauchy horizon $\CH$ across which the metric is continuously extendible, under the crucial assumption of integrable decay of the scalar field  on the event horizon $\HH$. 
	Integrable decay for the scalar field on the event horizon $\HH$ (i.e.\ $\phiH$ satisfies \eqref{decay.s} for $s>1$)  was later proved for sufficiently regular Cauchy data by Dafermos--Rodnianski \cite{PriceLaw},   therefore  \cref{C0SCC} is \emph{false} for the Einstein--Maxwell-uncharged-scalar-field model in spherical symmetry  \cite{Mihalis1, MihalisPHD, PriceLaw} \emph{by means of fast decay} $s>1$. 
	
	Moreover, for this spherically symmetric model, Dafermos characterized entirely  the black hole future boundary \cite{nospacelike} for any small, two-ended perturbation of Reissner--Nordstr\"{o}m. He indeed showed that the resulting dynamical black hole has no spacelike singularity: its Maximal Globally Hyperbolic Development is bound to the future by a null bifurcate Cauchy horizon $\CH$, and has the Penrose diagram of \cref{Fig2}.

	\paragraph{Falsification of \cref{C0SCC} for the vacuum Einstein equations without symmetry.}\label{vacuum.section} As we already mentioned in \cref{sec:intro1},  \cref{C0SCC}  was also falsified in vacuum with no symmetry assumption  in the celebrated work of Dafermos--Luk \cite{KerrStab}. In this case as well, the crucial assumption in \cite{KerrStab} is the fast decay of metric perturbations along the event horizon, i.e.\  schematically in a standard choice of $v$ coordinate \begin{equation}  \label{Kerr.decay}
	\|v^{s-\frac{1}{2}} (g-g_{K}) \|_{L^2(\HH)} \leq \epsilon, \text{ for some } s>1
	\end{equation} 	where $g_K$ is the Kerr metric and $\epsilon>0$ is small. Remark that \eqref{Kerr.decay} shows that $|g-g_{K}|(v)\ls v^{-s}$ (at least along a sequence) and in that sense \eqref{Kerr.decay} is indeed the analog for $g-g_K$ of fast decay of the scalar field i.e.\ \eqref{decay.s} for $s>1$.

	The linear analog  of \eqref{Kerr.decay} for the black hole exterior stability problem  around Kerr has been established in \cite{KerrDaf,SRTdC2020boundedness}, see also the recent nonlinear \cite{SchwarzschildStab}. If \eqref{Kerr.decay} (and  related estimates)  are shown for the full Einstein equations in a neighborhood of Kerr, then the result of \cite{KerrStab} \emph{unconditionally falsifies \cref{C0SCC} in vacuum}, by means of fast decay $s>1$.
	
	\begin{figure}		\begin{center}			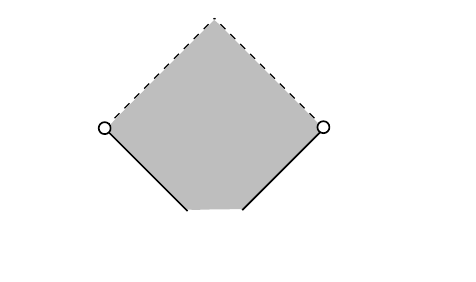		\end{center}		\caption{Penrose diagram of the subextremal Reissner--Nordstr\"{o}m spacetime.} \label{Fig2} 	\end{figure}

	\subsection{Weak null singularities at the Cauchy horizon and a weaker formulation of Strong Cosmic Censorship}
	\label{Weaknullsection}
	In this section, we mention briefly other types of singularities at the Cauchy horizon $\CH$, and how they compare with the new singularities at the Cauchy horizon $\CH$ from \cref{W11.thm.intro} and \cref{null.contraction.theorem}.
	
	\subsubsection{Weak null singularities and blue-shift instability} As discussed earlier, our new results exhibit the first examples  of  Cauchy horizons $\CH$ singular at the $C^0$ level (for non-oscillating scalar fields at $\HH$) and the $W^{1,1}$ level (for all slowly decaying scalar fields at $\HH$). This new singularity at the Cauchy horizon $\CH$ is very different from the well-known weak null singularity at $\CH$ \cite{JonathanWeakNull,r0,Ori2,Brady,PhysRevD.57.R7084,Ori3} which corresponds to blow-up in the energy class (i.e.\ $H^1$ norm) at $\CH$ due to the celebrated blue-shift instability \cite{Penroseblue, McNamara}. Blow-up in energy (i.e.\ $H^1$ norm in non-degenerate coordinate) at the Cauchy horizon  of Kerr and Reissner--Nordström has indeed been proven to occur for the linear wave equation in \cite{dafermosSR,KerrInstab,JonathanInstab}. Based on the blue-shift instability, Christodoulou  suggested an alternative formulation of Strong Cosmic Censorship that is weaker than \cref{C0SCC}. Specifically, he conjectured in \cite{ChristoSCC} that for generic asymptotically flat Cauchy data, the metric is $H^1$-inextendible i.e.\ admits no extension with square-integrable Christoffel symbols, see also \cite{Chrusciel,KerrStab}. 
	
	More generally, we say that the Cauchy horizon $\CH$ is a weak null singularity if already the metric is $C^2$-inextendible across $\CH$, a property which is generally obtained from the blow-up of some curvature component in an appropriate frame \cite{Moi4,JonathanStab,Kommemi}.

	\subsubsection{Dynamical formation of weak null singularities and known inextendibility results} While examples of weak null singularities have been constructed in vacuum \cite{JonathanWeakNull}, their dynamical formation from an ``open set'' of data with no symmetry assumption is still an open problem. Nevertheless, for the EMKG model in spherical symmetry, it was proven \cite{Moi,Moi4} that \emph{the Cauchy horizon $\CH$ of \cref{Moi.intro} is weakly singular}, i.e.\ the metric is $C^2$-inextendible across the Cauchy horizon $\CH$, under the assumptions of \cref{Moi.intro} and additional lower bounds on the scalar field consistent with \cref{decay.conj}.
	In the uncharged massless model $q_0=m^2=0$ of \cref{downfall.section}, the same result was previously proven unconditionally by Luk--Oh \cite{JonathanStab,JonathanStabExt} for generic asymptotically flat two-ended Cauchy data. Both for the EMKG and the $q_0=m^2=0$ model, the above $C^2$-inextendibility result was improved to a  $C^{0,1}$-inextendibility statement  in \cite{JanC1}.

	\subsubsection{Weak null singularities in gravitational collapse} \label{weak.grav.col}	We conclude this section by a brief discussion of the influence of a weak null singularity on the black hole geometry away from $i^+$. To study this question in the framework of gravitational collapse (i.e.\ one-ended spacetimes with a center $\Gamma$ as in \cref{Fig:spacelike}), we cannot study the Einstein--Maxwell-uncharged-scalar-field model of  \cref{downfall.section} because of a well-known \cite{Kommemi,r0} topological obstruction caused by the scalar field being uncharged, i.e.\ $q_0=0$, forcing the initial data $\Sigma$ to be two-ended \cite{nospacelike}. However in the EMKG model, where $q_0\neq0$, there is no such obstruction and one can study the one-ended global geometry of the black hole interior with a weak null singularity, even in spherical symmetry \cite{Kommemi}. The main known result in this context is that the \textit{weak null singularity $\CH$ breaks down} \cite{r0} before reaching the center: Consequently a so-called first singularity $b_{\Gamma}$ is formed at the center $\Gamma$, as depicted in  \cref{Fig:spacelike}. This is in complete contrast with the two-ended case where the future boundary is entirely null \cite{nospacelike} for a large class of spacetimes as we discussed in \cref{downfall.section}. In the conjecturally generic case where $b_{\Gamma}$ is not a so-called \emph{locally naked singularity} \cite{r0,Kommemi,MihalisSStrapped,Christo3}, then the breakdown  of the weak null singularity $\CH$  proven \cite{r0} implies that a stronger singularity $\mathcal{S}=\{r=0\}$ takes over and connects the weak null singularity $\CH$ to the center $\Gamma$ as depicted in \cref{Fig:spacelike}.
	
	\begin{figure}
		\begin{center}		
			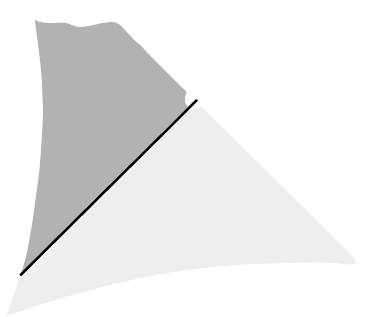
		\end{center}
		\caption{Conjectured Penrose diagram of a generic EMKG black hole with weakly singular $\CH$, \cite{r0}.} \label{Fig:spacelike} 
	\end{figure}

	\subsection{Scattering resonances associated to the   Reissner--Nordström   Cauchy horizon}\label{lin.scat.sec}
	
	We now turn to another result which is not directly concerned with the stability/instability of the Cauchy horizon but  turns out to be important for the proofs of our main theorems: the finite energy scattering theory for the linear wave equation on the interior of Reissner--Nordström developed in \cite{Kehle2018}. A key insight to the result in \cite{Kehle2018} was the absence  of scattering resonances associated to the Killing generator of the Cauchy horizon which is an exceptional feature of the massless and uncharged wave equation on exact Reissner--Nordström.  Indeed, for the massive wave equation with generic masses $m^2 \in \mathbb R_{>0} - D(M,e)$ or for the charged equation the scattering resonances are present and there does not exist an analogous scattering theory \cite{Kehle2018}. As we will show, these   scattering resonances are also the key sources of   blow-up in amplitude of $\phi$ at the Cauchy horizon if the scalar field along the event horizon is non-oscillating and slowly decaying and thus, sufficiently resonant. In view of this, for the blow-up statement of \cref{main.theorem.intro2}  these exceptional masses for which the scattering resonances are absent have   to be excluded. 
	Refer also to \cite{mokdad2021conformal,hafner2020scattering} for a scattering theory of the Dirac equation on the interior of Reissner--Nordström and to \cite{bachelot,dimock,scatteringDRSR,Hamed,alfrodf} for scattering theories on the exterior.

	\subsection{Connection to the linear analog of \texorpdfstring{\cref{C0SCC}}{Conjecture 1} for negative cosmological constant \texorpdfstring{$\Lambda < 0$}{lambda<0}} \label{sec:SCC.non.zero}
	In the discussion above we have studied the Einstein equations with cosmological constant $\Lambda =0$. Analogously, for $\Lambda \neq 0$, the Reissner--Nordström--(Anti-)de~Sitter and Kerr--(Anti-)de~Sitter spacetimes   admit a smooth Cauchy horizon and the issue of Strong Cosmic Censorship analogously arises   in this setting. 
	In particular, the case $\Lambda <0$ has some similarities with our case in the sense that linear perturbations also only decay at a non-integrable (inverse logarithmic for $\Lambda <0$) rate due to a stable trapping phenomenon \cite{decayads,adslowerbound,ads2}. A difference to our result is however that only perturbations consisting of a superposition of infinitely many high $\ell$ angular modes decay  slowly and thus,  the problem for $\Lambda <0$ cannot be reasonably studied in spherical symmetry. Nevertheless, as in our case,  this non-integrable rate of decay might raise hopes that in the case of negative cosmological constant $\Lambda <0$,  \cref{C0SCC} holds true. 
	
	On the one hand, for Reissner--Nordström--AdS, since stable trapping is a high-frequency phenomenon and uniform boundedness (on the linear level) is associated to zero-frequency scattering resonances of the Cauchy horizon,  it was shown in \cite{Kehle2019} that these difficulties decouple on Reissner--Nordström--AdS. (This decoupling can be seen as the analog of the fact that the oscillation condition of \eqref{osc.cond.qual} is satisfied.) As a consequence of this frequency decoupling, it is shown in \cite{Kehle2019} that, despite slow non-integrable decay on the exterior, linear perturbations remain uniformly bounded and extend continuously across the Reissner--Nordstr\"{o}m--AdS Cauchy horizon. This falsifies the linear analog of \cref{C0SCC} for Reissner--Nordström--AdS.
	
	On the other hand, for Kerr--AdS, in view of the rotation of the black hole, frequency mixing occurs and trapped high frequency perturbations on the exterior can at the same time be low frequency when frequency is measured with respect to the Killing generator of the Cauchy horizon. In \cite{Kehle2020,kehle2020diophantine} it is shown that this frequency mixing gives rise to a  resonance phenomenon and an associated small divisors problem. In particular, for a set of Baire-generic Kerr--AdS black hole parameters, which are associated to a Diophantine condition, it is shown that linear perturbations $\phi$ blow up in amplitude at the Cauchy horizon. This shows that the linear analog of \cref{C0SCC} holds true for Baire-generic Kerr--AdS black holes. 
	
	There is yet another possible scenario in which the exteriors of AdS black holes are nonlinearly unstable (cf.\ \cite{gm1,gm2,gmadmissible} and \cite{bizon}) and the question of Strong Cosmic Censorship would be thrown even more open.

	Let us finally also briefly mention the case of positive cosmological constant $\Lambda >0$, where perturbations on the exterior of Reissner--Nordström/Kerr--de~Sitter decay at an exponential rate as proved in \cite{dyatlovenergydecay,gergiosmkerr} for the linear wave equation and in \cite{Andras} for the vacuum Einstein equations. In view of this rapid decay, the theorem \cite{KerrStab} of Dafermos--Luk manifestly also applies and thus, \cref{C0SCC} is false for $\Lambda >0$. However, in view of this exponential decay, even weaker formulations such as the $H^1$-formulation of Strong Cosmic Censorship mentioned in \cref{Weaknullsection} may fail. We  refer to the works \cite{nospacelike,hintzvasyscc,scc1lambda>0,rough,sccinds,dafermosSR,costa1,Costa-Franzen} and to \cite{MR3969720,hollands,cardoso} for details.  
	
	\subsection{Summary of the strategy of the proof}\label{Strategy.section}
	
	We now turn to an outline of our proof and begin with the obstructions and difficulties encountered when attempting to prove boundedness of the scalar field at the Cauchy horizon $\CH$ and continuous extendibility of the metric.
	\begin{itemize}
		\item The physical space estimates used to show $\CH\neq \emptyset$ in the proof of \cref{Moi.intro}, under the assumption of a  slowly decaying $\phiH$ on $\HH$, i.e.\ obeying \eqref{decay.s} and \eqref{s.def}, are \emph{consistent with the blow-up of the scalar field  $\phi$} at the Cauchy horizon $\CH$ and the failure of $\partial_v \phi$ to be integrable in $v$. As our new result shows, these estimates from \cite{Moi} are sharp by \cref{W11.thm.intro} and blow-up in amplitude indeed occurs for some perturbations by \cref{main.theorem.intro2}.
		
		\item  The estimates of the proof of \cref{Moi.intro} however suggest that, \emph{if  $\partial_v\phi$  oscillates infinitely towards the Cauchy horizon} $\CH$ then \emph{$\phi$ is bounded} (see \cref{physical.space.intro.section}): the hope would be that,  although $\partial_v \phi$ is not Lebesgue-integrable  (i.e.\ $  \int^{+\infty}_{v_0} |\partial_v \phi| \d v =+\infty$), it has a semi-convergent Riemann integral  (i.e.\ $ \underset{\tilde v \to +\infty}{\lim} |\int_{v_0}^{\tilde v} \partial_v \phi \d v| <+\infty$ exists). A natural approach is then to attempt to propagate  the event horizon oscillations \eqref{osc.cond.qual} satisfied $\phiH$  towards the Cauchy horizon $\CH$ in a suitable sense and deduce the boundedness of $\phi$. However, this is not easy to show in physical space and prompts a Fourier space approach for the linearized equation.

		\item   A complete understanding of the linearized problem is however insufficient in itself to prove the boundedness of $\phi$  since   the \emph{nonlinear terms cannot be treated purely perturbatively} in view of the slow decay. Consequently the precise structure of these nonlinear terms has to be understood and plays an important role in the argument (in contrast to the fast decay case $s>1$) (see \cref{nonlinear.intro}).
		
		\item Even once $\phi$ is proven to be bounded in amplitude, there is no clear mechanism yielding the continuous extendibility of the metric, contrary to the fast decay case $s>1$ in which the mechanism is given by the integrability of  the Christoffel symbols \cite{KerrStab,JonathanStab} in a suitable sense (see \cref{C0.ext.method.intro} for a discussion).
	\end{itemize}

	\paragraph{Strategy.} To address and overcome these difficulties in order to prove our main theorems as stated in \cref{sec:intro1}, we proceed as follows: \begin{enumerate}
		\item \label{step1} We take advantage on the one hand of the previous result of \cref{Moi.intro}: the future black hole boundary is null, i.e.\ $\CH\neq \emptyset$ and the Penrose diagram is given by \cref{fig:cauchyhorizonexists}; and on the other hand of the nonlinear estimates  (see \cref{physical.space.intro.section}) that were already proven in \cite{Moi} for slowly decaying  $\phiH$.
		\item\label{Fourier} We consider the massive/charged  linear  wave equation $ g_{RN}^{\mu \nu} D^{RN}_{\mu} D^{RN}_{\nu} \phiNl= m^2 \phiNl$ on a fixed  Reissner--Nordstr\"{o}m background $g_{RN}$ which we view as the linearization of the EMKG system \eqref{E1}--\eqref{E5}.  Using Fourier methods and a scattering approach, we prove uniform boundedness (respectively blow-up in amplitude) of $\phiNl$ at the Cauchy horizon $\CH$ for an oscillating  scalar field $\phiH$ obeying \eqref{osc.cond.qual} at $\HH$ (respectively non-oscillating $\phiH$,  i.e.\ $\phiH$ violates \eqref{osc.cond.qual} at $\HH$), see \cref{linear.intro}. 
		
		\item \label{diff.step} Independently of Step \ref{Fourier}, we prove nonlinear difference estimates on $g-g_{RN}$. Although these estimates are, in a sense, weaker\footnote{In the sense that these estimates alone are insufficient to show that  $\CH \neq \emptyset$ as proven in \cite{Moi} (see  \cref{Moi.intro}).} than the nonlinear estimates of Step \ref{step1}, they are crucial in our proof that for all slowly decaying  $\phiH$, the linear solution $\phiNl$ is bounded if and only if  the nonlinear   $\phi$ is bounded (at least in the  $q_0=0$ case). 
		In the charged $q_0\neq 0$ case, we follow a similar logic but additional difficulties arise from the nonlinear backreaction of the Maxwell field. This step will be discussed in \cref{nonlinear.intro}. 
		
		\item \label{C0step} With the boundedness of $\phi$ at hand from the previous step, we prove the continuous extendibility of the metric for oscillating perturbations $\phiH$ satisfying \eqref{osc.cond.Squal}. For the proof, we introduce a crucial new quantity $\Upsilon$ (see already \eqref{Delta.def})  exploiting the exact algebraic\footnote{In contrast, when the decay is integrable as in vacuum,  the null structure of the Einstein equations is sufficient \cite{KerrStab}, \cite{JonathanStab}.} structure of the nonlinear terms in the Einstein equations, see \cref{C0.ext.method.intro}.
	\end{enumerate} 
	The proofs of \cref{main.theorem.intro} and \cref{main.theorem.intro2} are finally obtained	combining Step \ref{step1}, Step \ref{Fourier}, Step \ref{diff.step}, Step \ref{C0step}. \cref{corollary.conj} follows immediately. The proof of \cref{W11.main.thm} is  also derived from the strategy given by the same Step \ref{step1}, Step \ref{Fourier}, Step \ref{diff.step}, Step \ref{C0step}, see already the last paragraphs in \cref{C0.ext.method.intro}. We refer to \cref{Strategy.section.bulk} for a more detailed outline of the strategy of the proof.

	\subsection{Outline of the paper}
	
	In   \cref{preliminaries}, we set out notations, definitions and the geometric setting for the solutions of \eqref{E1}--\eqref{E5} under spherical symmetry. In \cref{coordinatechoice.section}, for any arbitrary slowly decaying scalar field $\phiH$, we construct and set up spherically symmetric characteristic data on the event horizon $\HH$ and an ingoing cone such that the scalar field is given by $\phiH$ on $\HH$. In \cref{precise.section}, we give the  precise formulations of our main results  \cref{main.theorem.intro}, \cref{main.theorem.intro2}, \cref{corollary.conj}, \cref{W11.thm.intro} and their assumptions. We end this section with a detailed outline of our proof in \cref{Strategy.section.bulk}. In   \cref{linearsection}, we develop the linear theory and show our main linear results in \cref{sec:main.lin}. 
	In \cref{nonlinearsection}, we develop the nonlinear theory and show the boundedness of the scalar field for the coupled \eqref{E1}-\eqref{E5} and the continuous extendibility of the metric. We first outline in \cref{recall.section} the estimates proved in \cite{Moi} which will be useful for the nonlinear EMKG system. Then in \cref{prelim.nonlin}, we establish the main estimates necessary for the continuous extendibility of the metric.  
	In \cref{difference.estimates.section}, we prove    difference estimates which we combine in   \cref{combining.section}  with the  linear estimates from \cref{linearsection}  to prove our main results \cref{main.theorem.intro}, \cref{main.theorem.intro2}, \cref{corollary.conj}, \cref{W11.thm.intro}. 
	
	\subsection{Acknowledgements}
	The authors would like to express their gratitude to Mihalis Dafermos and Jonathan Luk for their support and many valuable comments on the manuscript. They also thank Jan Sbierski for insightful conversations. M.VdM.\ is grateful to Andr\'as Vasy for interesting discussions in the early stage of this project. M.VdM. also wants to express gratitude to Amos Ori for very enlightening conversations related to this project. The authors are also grateful to an anonymous referee for several helpful comments and remarks to improve the manuscript.  C.K.\ is supported by Dr.\ Max Rössler, the Walter Haefner Foundation and the ETH Zürich Foundation.
	
	\section{Preliminaries} \label{preliminaries}
	
	\subsection{The Reissner--Nordstr\"{o}m interior} \label{RNsolution}
	Reissner--Nordstr\"{o}m black holes constitute a 2-parameter family of spherically symmetric   spacetimes indexed by charge and mass $(e,M)$, which satisfy the Einstein--Maxwell system (\eqref{E1}--\eqref{E5} with $\phi \equiv 0$) in spherical symmetry.
	We are interested in the interiors of {subextremal} Reissner--Nordstr\"{o}m black holes satisfying $0<|e|<M$.
	To define these  spacetimes, we first set 	\begin{equation} \label{OmegaRN}
	\OmegaRN^{2}(r_{RN}):=-\left(1-\frac{2M}{r_{RN}}+\frac{e^2}{r^2_{RN}}\right)
	\end{equation} 
	which is non-negative between the zeros given by $$ r_+(M,e) = M+ \sqrt{M^2-e^2}>0$$
	and
	$$ r_-(M,e) = M- \sqrt{M^2-e^2}>0.$$
	Now, we define the smooth manifold $\mathring{\mathcal{M}}_{{RN}}$ as a 4-dimensional smooth manifold diffeomorphic to $\mathbb R^2 \times \mathbb S^2$. Up to the well-known degeneracy of the spherical coordinates on $\mathbb S^2$, let $(r_{RN}, t, \theta, \varphi) \in (r_-, r_+) \times \mathbb R \times \mathbb S^2$ be a global chart. In that chart we define the smooth Lorentzian metric $g_{RN}$ and Maxwell 2-form $F_{RN}$  
	\begin{align}
	& g_{RN} := -  \OmegaRN^{-2}\d r_{RN}^2 +  \OmegaRN^{2} \d t^2 + r_{RN}^2 \left( \d \theta^2 + \sin^2\theta \d \varphi^2\right),\\ \label{F.RN} & F_{RN} = \d A^{RN} =\frac{e}{r^2} \d t \wedge \d r.
	\end{align}
	We time orient the Lorentzian manifold such that vector field $-\nabla r_{RN} $ is future-directed. 
	Further, we define the tortoise coordinate $r^\ast$ by $ \d r^\ast = -\Omega^{-2}_{RN} \d r_{RN}$ or more explicitly by 
	\begin{align}r^{*} = r^\ast (r_{RN}) =r_{RN}+ \frac{1}{4K_{+}} \log(r_{+}-r_{RN})+\frac{1}{4K_{-}}  \log(r_{RN}-r_{-}), \end{align}
	where $ K_+(M,e)$, $ K_-(M,e)$ are the surface gravities associated to the event/Cauchy horizon defined as
	\begin{align}K_+(M,e)= \frac{1}{ 2r_+^2}\left(M - \frac{e^2}{ r_+} \right)= \frac{ r_+- r_-}{ 4r_+^2}>0, \;\; K_-(M,e)= \frac{1}{2 r_-^2}\left(M - \frac{e^2}{ r_-}\right)= \frac{ r_- - r_+}{ 4 r_-^2}<0.\end{align}
	We further introduce the null coordinates $(u,v,\theta,\varphi) \in \mathbb{R}  \times  \mathbb{R}\times \mathbb S^2$ on  $\mathring{\mathcal{M}}_{{RN}}$ as
	\begin{align}\label{eq:coordinatesuvonRN}
	v= r^{*}(r)+t, \, u= r^{*}(r)-t, \, \theta=\theta, \,\varphi=\varphi. \end{align}
	In this coordinate system the metric $g_{RN}$ has the form 
	\begin{align}\label{RN} g_{RN} =- \frac{\OmegaRN^{2}}{2} (\d u  \otimes \d v+\d v  \otimes \d u)+r^{2}_{RN}[ \d\theta^{2}+\sin(\theta)^{2}\d \varphi^{2}].\end{align} Now, we  attach the (right) event horizon $ \mathcal{H}^{+}$, the past/future bifurcation sphere $\mathcal B_-$, $\mathcal B_+$, the left event horizon  $ \mathcal{H}^{+,L}$, the (right) Cauchy horizon $\mathcal{CH}_{i^+}$, and the left Cauchy horizon $   \mathcal{CH}_{i_L^+} $ to our manifold, formally defined as
	$\mathcal{H}^{+}=\{ u=  -\infty, v\in \mathbb{R} \}$, $\mathcal{H}^{+,L}=\{ v=  -\infty, u\in \mathbb{R} \}$, 
	$\mathcal{CH}_{i^+}=\{ v=  +\infty, u\in \mathbb{R} \}$, $\mathcal{CH}_{i_L^+}=\{ u=  +\infty, v\in \mathbb{R} \}$,
	$\mathcal{B}_- = \{ u =-\infty , v=-\infty\}$
	and 
	$\mathcal{B}_+ = \{ u = +\infty , v=+\infty \}$. 
	
	\textbf{A word of caution}: In the linear  theory of \cref{linearsection} we will indeed denote  with $\HH$  the Reissner--Nordström event horizon $\{ u=-\infty, v\in \mathbb R\}$. 	However, in the other parts of the paper we  denote with $\HH$ the dynamical event horizon $\{ u=-\infty , v \geq v_0\}$ in the nonlinear part of \cref{nonlinearsection} (see also the set-up of the characteristic data in \cref{coordinatechoice.section} and the main theorems stated in \cref{precise.section}). Similarly for the Cauchy horizon $\CH$. We also note that the left event and the left Cauchy horizon only play a minor role in the linear part of \cref{linearsection}   and we often omit ``right'' when referring to $\HH$ and $\CH$. 
	
	The metric $g_{RN}$ extends smoothly to the boundary and the resulting spacetime is a time-oriented Lorentzian manifold ($\mathcal{M}_{RN},g_{RN}$) with corners---the Reissner--Nordström interior. 
	We like to remark that \begin{align} \Omega^{2}_{RN} \underset{ r \rightarrow r_{+} }{\sim} C_{e,M} e^{4 K_{+}r^{*}}=C_{e,M} e^{2K_{+} (u+v)},\;\; \Omega^{2}_{RN} \underset{ r \rightarrow r_{-} }{\sim} C'_{e,M} e^{4 K_{-}r^{*}}= C'_{e,M} e^{2K_{-} (u+v)},\end{align}
	for some $C_{e,M}>0$, $C_{e,M}'>0$.
	Further, we introduce regular coordinates $(U,v)$ on $\mathring {\mathcal M}_{{RN}} \cup \HH$ as \begin{align}\d U = \frac 12 \Omega^2_{RN}(u,v_0) \d u,\ U(-\infty) = 0,\ v=v \label{eq:Ucoordinate} \end{align}  
	and note that $\HH = \{ U = 0\}$. Here $v_0=v_0(M,e,D_1,s)$ will be determined in \cref{prop:constraints} later. In these coordinates we have obtained a different lapse function $ (\Omega_{RN}^{2})_H = (\Omega_{RN}^{2})_H(U,v)= -2g_{RN}(\partial_U, \partial_v)$ and the metric reads \begin{align}\label{eq:metricincoordinates}
	g_{RN} =- \frac{(\Omega_{RN}^{2})_H}{2} (\d U  \otimes \d v+\d v  \otimes \d U)+r^{2}_{RN}[ \d\theta^{2}+\sin(\theta)^{2}\d \varphi^{2}].
	\end{align}
	Of course we can invert the   coordinate change \eqref{eq:Ucoordinate}   and obtain \begin{align} \label{defn:uintermsofUv} u = u(U),\  v = v.\end{align} 
	We also remark that $T:= \partial_t$ in $(r_{RN},t,\theta,\phi)$-coordinates is a Killing vector field which extends smoothly to ($\mathcal{M}_{RN},g_{RN}$).

	\subsection{Class of spacetimes, null coordinates, mass, charge} \label{Coordinates}
	
	\paragraph{Spherically symmetric solution to the EMGK system.}
	A smooth spherically symmetric solution of the EMKG system  is described by a quintuplet $(\mathcal M,g,F,A,\phi)$,  where $(\mathcal M,g)$ is a smooth 3+1-dimensional Lorentzian manifold, $\phi$ is a smooth complex-valued scalar field, $A$ is a smooth real-valued 1-form, and $F$ is a smooth real-valued 2-form satisfying \eqref{E1}--\eqref{E5} and admitting a free $SO(3)$ action on $(\mathcal M,g)$ which acts by isometry with spacelike 2-dimensional orbits (homeomorphic to $\mathbb S^2)$ and which additionally leaves  $F$, $A$ and $\phi$ invariant.\footnote{Note that we assume that the $SO(3)$ action is free, i.e.\ free of fixed points ``$r=0$'' as we are interested in the region near $i^+$, i.e.\ away from $r=0$. }
	In this case, the quotient $\mathcal{Q}=\mathcal M/SO(3)$ is a 2-dimensional manifold with projection	$\Pi : \mathcal M \rightarrow  \mathcal{Q} $    taking a point of $\mathcal M$ into its spherical orbit. As $SO(3)$ acts by isometry, $\mathcal{Q}$ inherits a natural metric, which we call $ g_{\mathcal{Q}}$. 
	The metric on $\mathcal M$ is then given by the warped product $g= g_{\mathcal{Q}}+ r^{2} \d \sigma_{\mathbb{S}^{2}}$,  where $r= \sqrt{\frac{Area(\Pi^{-1}(p))}{4 \pi}}$ for $ p \in \mathcal Q$ is the area radius of the orbit and  $\d\sigma_{\mathbb{S}^{2}}$ is the standard metric on the sphere.
	The Lorentzian metric 	$g_{\mathcal{Q}}$ over  the smooth  2-dimensional manifold $\mathcal Q$, can be written in null coordinates $(u,v)$ as a conformally flat metric
	\begin{align} g_{\mathcal{Q}}:= - \frac{\Omega^{2}}{2} (\d u  \otimes \d v+\d v  \otimes \d u),
	\label{defn:gQ}
	\end{align} 
	such that (in mild abuse of notation)  we have upstairs
	\begin{equation} \label{metric}
	g=- \frac{\Omega^{2}}{2} (\d u  \otimes \d v+\d v  \otimes \d u)+ r^2 \d \sigma_{\mathbb{S}^{2}}.
	\end{equation}
	On $(\mathcal Q, g_{\mathcal Q})$, we now define the Hawking mass as
	\begin{align} \rho := \frac{r}{2}(1- g_{\mathcal{Q}} (\nabla r, \nabla r ))\end{align}
	as well as  $\kappa$ and $\iota$ as 
	\begin{align} \label{kappa}
	\kappa := \frac{-\Omega^2}{2\partial_u r}\in \mathbb{R} \cup \{ \pm \infty\} ,
	\end{align}
	\begin{align} \label{iota}
	\iota := \frac{-\Omega^2}{2\partial_v r} \in  \mathbb{R} \cup \{ \pm \infty\}. 
	\end{align}
	
	\paragraph{Electromagnetic fields on $\mathcal Q$.}
	In what follows, we will abuse notation and denote by $F$ the 2-form over $\mathcal{Q}$ that is the push-forward by $\Pi$ of the electromagnetic 2-form originally on $\mathcal M$, and similarly for $A$ and $\phi$.
	If view of the $SO(3)$ symmetry of the potential $A$   we have (see \cite{Kommemi}) that $F$ has the  form
	\begin{align}\label{eq:Fnonlinear} F= \frac{Q}{2 r^{2}}  \Omega^{2} \d u  \wedge \d v ,\end{align}
	where $Q$ is a scalar function called the electric charge. From  $F=\d A$ we also obtain
	$$ [D_u,D_v] = iq_0 F_{u v}   =\frac{iq_0Q \Omega^2}{2r^2}  .$$
	Now we introduce the modified Hawking mass $\varpi$ that involves the charge $Q$:	\begin{align} \label{electromass}	\varpi := \rho + \frac{Q^2}{2r}.	\end{align} 
	An elementary computation relates  geometric quantities (on the left) to coordinate-dependent ones (on the right) gives:	\begin{align} \label{mu}	1 - \frac{2\rho}{r} = \frac{-4 \partial_u r \partial_v r}{\Omega^2} = \frac{-\Omega^2 }{ \iota \kappa}= 1- \frac{2 \varpi}{r}+ \frac{Q^2}{r^2}.	\end{align} 
	We also define the quantity \begin{align} \label{K.def}
	2K:=\frac{1}{r^2} \left(\varpi-\frac{Q^2}{r}\right),
	\end{align} and notice that, if $\varpi=M$ and $Q=e$, then $2K(r_{\pm})= 2K_{\pm}$.  Further, we introduce the following notation, first used by Christodoulou:
	$$ \lambda = \partial_v r ,$$
	$$ \nu = \partial_u r. $$

	Finally, note that \eqref{E4}--\eqref{E5} are invariant under electromagnetic gauge transformations (see \cref{{gaugechoice.section}}) and two solutions $(\phi, A)$ which differ by a gauge transformation represent the same physical behavior. 
	An equivalent formulation to express this gauge freedom is to consider electromagnetism as a $U(1)$ gauge theory with principal $U(1)$-bundle $\pi\colon P\to M$: the charged scalar field is a global section of the associated complex line bundle $P\times_\rho \mathbb C$ through the representation $\rho$ such that $\phi$ corresponds to an equivariant $\mathbb C$-valued map on $P$, i.e.\ $\phi(p g ) = \rho(g)^{-1} \phi$.  The representation $\rho: U(1) \to \textup{GL}(1, \mathbb C  )$ models the coupling of the scalar field and electromagnetic field. We refer to \cite[Section~1.1]{Kommemi} and stick to our equivalent and more concrete formulation of the EMKG system.

	\paragraph{$C^0$-admissible spacetimes and extensions.}
	Lastly, we define the notion of a $C^0$-admissible extension of the metric  (inspired from \cite[Definition~A.3]{gmadmissible}). For the sake of brevity and concreteness we will give neither the most geometric nor the most general formulation and we refer to \cite{gmadmissible} and \cite{MoiChristoph2} for further details.
	\begin{defn} \label{C0admissible}
		We call  $( \mathcal {M}, {g})$ an admissible $C^0$ spherically symmetric spacetime if the following holds.
		\begin{enumerate}
			\item $ \mathcal M$ is a $C^1$-manifold diffeomorphic to $\mathcal Q \times \mathbb S^2$ for an open domain $\mathcal Q \subset \mathbb R^2$,
			\item $g$ is an admissible $C^0$ spherically symmetric Lorentzian metric in the sense that   for a diffeomorphism $\Phi \colon \mathcal M \to \mathcal Q \times \mathbb S^2$ there exist $C^1$-coordinates $(u,v)$ on $\mathcal Q$ in which the metric $\Phi^\ast (g)$ on $\mathcal Q\times \mathbb S^2$ can be written as
			\begin{align}\label{eq:defnofg}
			\Phi^\ast (g) = - \frac{\Omega^2 }{2} (\d u\otimes \d v+ \d v\otimes \d u)  + r^2 g_{\mathbb S^2},
			\end{align}
			where  $g_{\mathbb S^2}$ is the standard round metric on $\mathbb S^2$ and $\Omega^2, r^2 \colon\mathcal Q \to (0,+\infty)$ are continuous.
			\item If $(\tilde u, \tilde v)$ is another $C^1$-coordinate system such that \eqref{eq:defnofg} holds with $\tilde \Omega^2$ in place of $\Omega^2$, then $\tilde u  = U(u) $ and $\tilde v = V(v)$ for some unique and strictly monotonic   $C^1$-functions $U, V$. 
		\end{enumerate}
	\end{defn}
	\begin{rmk} The pair $(u,v)$ as above is called a null coordinate system. In the case where the metric $g$ is locally Lipschitz such null coordinates always exist.
		Since we merely consider $C^0$ metrics, in our definition of admissible $C^0$ metric we additionally impose the existence and uniqueness (up to re-scaling) of such null coordinates.  
	\end{rmk}

	\begin{defn} 
		Let $( \mathcal M,g)$ and $(\tilde{\mathcal M},\tilde g)$ be time-oriented admissible $C^0$ spherically symmetric spacetimes. We say that $(\tilde{\mathcal M} ,\tilde g)$ is an admissible $C^0$ spherically symmetric future extension if
		\begin{enumerate}
			\item there exists a $C^1$ embedding $i: \mathcal M \rightarrow \tilde{\mathcal M}$ which is also a time-orientation-preserving isometry,
			\item there exists $p \in \tilde{\mathcal M}-i(\mathcal M)$ which is to the future of $i(\mathcal M)$.
		\end{enumerate}
	\end{defn}

	\subsection{Electromagnetic gauge choices}\label{gaugechoice.section}
	
	As remarked above, for a fixed metric $g$, the Maxwell--Klein--Gordon system of equations \eqref{E4}--\eqref{E5} is invariant under the following \textit{gauge} transform: 	
	\begin{equation} \label{gaugetransform1}
	\phi \rightarrow  \tilde{\phi}=e^{-i q_0 f } \phi,
	\end{equation}
	\begin{equation} \label{gaugetransform2}
	A \rightarrow  \tilde{A}=A+ \d f ,\end{equation}
	where $f$ is a smooth real-valued function.  Notice that for  $\tilde{D}:=\nabla+\tilde{A}$, we have 
	$$ \tilde{D} \tilde{\phi} = e^{-iq_0f} D\phi.$$
	Therefore the quantities $|\phi|$ and $|D\phi|$ are gauge invariant.
	In \cref{nonlinearsection}, we will use that these gauge-invariant quantities satisfy the following estimates which are an immediate consequence of the fundamental theorem of calculus, see e.g.\ \cite[Lemma~2.1]{gajicluk}.
	In any $(u,v)$-coordinate system and  for $u\geq u_1$ and $v \geq v_1$:
	\begin{align}
	&|f(u,v)| \leq  |f(u_1,v)| + \int_{u_1}^{u} |D_u f|(u',v) \d u,
	\\
	&|f(u,v)| \leq  |f(u,v_1)| + \int_{v_1}^{v} |D_v f|(u,v') \d v',\end{align} for any sufficiently regular function $f(u,v)$.

	Although we will mainly estimate gauge-invariant quantities, to set up the characteristic data it is useful to fix an electromagnetic gauge. For the analysis of the nonlinear system in  \cref{nonlinearsection} in double null coordinates $(u,v)$ we will impose \begin{equation} 
	\label{GaugeAv}
	A_v \equiv 0.
	\end{equation} 
	In this gauge, the condition $F=\d A$ from \eqref{E4} can be written (in any $(u,v)$ coordinate system) as
	\begin{align} 
	\partial_v A_u =- \frac{Q \Omega^2}{2r^2}.
	\end{align}  
	To estimate the dynamics of $A=A_u \d u$ in the coupled system it is useful to define a background electromagnetic field $A^{RN}$ which is governed by the fixed Maxwell form $F=F_{RN}$ as in \eqref{F.RN} on a fixed Reissner--Nordström background with mass and charge $(M,e)$. Using  coordinates $(u,v)$ as defined in \eqref{eq:coordinatesuvonRN} we impose the gauge 
	\begin{align}
	A_v^{RN} \equiv 0
	\end{align}
	such that $F_{RN} = \d A_{RN}$ becomes 
	\begin{align}\label{eq:backgroundA}
	\partial_v A_u^{RN} =- \frac{e \Omega^2_{RN}(u,v)}{2 r^2_{RN}(u,v)}.
	\end{align} 
	Moreover, we choose the normalization for $A^{RN}$ to obtain
	\begin{equation} \label{ARN}
	A^{RN}= \left(-\frac{e}{r_{RN}}+ \frac{e}{r_{+}} \right) \d u
	\end{equation} 
such that the $1$-form $A^{RN}$ extends smoothly to the right event horizon $\mathcal H^+$ on Reissner--Nordström.

	For the linear theory in \cref{linearsection} we will work with the $t$-Fourier transform. It that context it is useful to use a gauge which is \textbf{different} from \eqref{ARN} and which is given (see already \eqref{eq:gauge_lienartheory}) by 
	\begin{equation} \label{gauge_linear}
	A'_{RN}= \left(\frac{e}{r_{RN}}- \frac{e}{r_+}\right) \d t=  \left(\frac{e}{r_+}- \frac{e}{r_{RN}}\right) \frac{  \d u- \d v}{2}.
	\end{equation}

	\subsection{The Einstein--Maxwell--Klein--Gordon system in null coordinates} \label{eqcoord}
	
 We now express the EMKG system \eqref{E1}--\eqref{E5} in a double coordinate system $(u,v)$ on $\mathcal Q$  using the electromagnetic gauge \eqref{GaugeAv}. The unknown functions  ($r,\Omega^2,A_u,Q,\phi$) on $\mathcal Q$ are subject to the following system. 
	\begin{align}\label{Radius}
	&\partial_{u}\partial_{v}r =\frac{- \Omega^{2}}{4r}-\frac{\partial_{u}r\partial_{v}r}{r}
	+\frac{ \Omega^{2}}{4r^{3}} Q^2 +  \frac{m^{2}r }{4} \Omega^2 |\phi|^{2}  =-\frac{\Omega^2}{2}\cdot 2K +    \frac{m^{2}r }{4} \Omega^2 |\phi|^{2}, \\
	\label{Omega}
	&	\partial_{u}\partial_{v} \log(\Omega^2)=-2\Re(D_{u} \phi \partial_{v}\bar{\phi})+\frac{ \Omega^{2}}{2r^{2}}+\frac{2\partial_{u}r\partial_{v}r}{r^{2}}- \frac{ \Omega^{2}}{r^{4}} Q^2,
	\end{align}
	the Raychaudhuri equations: 
	\begin{align}\label{RaychU}&\partial_{u}\left(\frac {\partial_{u}r}{\Omega^{2}}\right)=\frac {-r}{\Omega^{2}}|  D_{u} \phi|^{2}, \\ \label{RaychV}
	&\partial_{v}\left(\frac {\partial_{v}r}{\Omega^{2}}\right)=\frac {-r}{\Omega^{2}}|\partial_{v}\phi|^{2},\end{align}
	the charged and massive Klein-Gordon equation: 
	\begin{align}\label{Field}
	\partial_{u}\partial_{v} \phi =-\frac{\partial_{u}\phi\partial_{v}r}{r}-\frac{\partial_{u}r \partial_{v}\phi}{r} +\frac{ q_{0}i \Omega^{2}}{4r^{2}}Q \phi
	-\frac{ m^{2}\Omega^{2}}{4}\phi- i q_{0} A_{u}\frac{\phi \partial_{v}r}{r}-i q_0 A_{u}\partial_{v}\phi,\end{align} 
	and the Maxwell equations: 
	\begin{align} \label{chargeUEinstein}
	&	\partial_u Q = -q_0 r^2 \Im( \phi \overline{ D_u \phi}),\\ \label{ChargeVEinstein}
	&	\partial_v Q = q_0 r^2 \Im( \phi \overline{\partial_v \phi}).
	\end{align}
	Finally, $F=\d A$ reads
	\begin{align} \label{eq:maxwellAu}
	\partial_v A_u = \frac{-Q \Omega^2}{2r^2}.
	\end{align}
	Note that \eqref{chargeUEinstein} and \eqref{ChargeVEinstein} can be equivalently formulated introducing the quantity $\psi:=r\phi$ as  	\begin{align} \label{chargeUEinstein2}
	&	\partial_u Q = -q_0 \Im( \psi \overline{ D_u \psi}) ,
	\\ \label{ChargeVEinstein2}
	&	\partial_v Q = q_0 \Im( \psi \overline{\partial_v \psi}). 	\end{align}
	Further, \eqref{Radius} is equivalent to 	\begin{align}\label{Radius2}\partial_{u}(r\partial_{v}r) =\frac{- \Omega^{2}}{4}
	+\frac{ \Omega^{2}}{4r^{2}} Q^2 +  \frac{m^{2}r^2 }{4} \Omega^2 |\phi|^{2}. \end{align}
	We can also rewrite \eqref{Field} to control $|\partial_{v} \phi|$ more easily: 
	\begin{align}\label{Field2}
	D_u \partial_v \phi=	e^{-iq_0 \int_{u_{0}}^{u}A_u} \partial_{u}( e^{iq_0 \int_{u_{0}}^{u}A_u}\partial_{v} \phi) =-\frac{\partial_{v}r D_u\phi}{r}-\frac{\partial_{u}r \partial_{v}\phi}{r} +\frac{ q_{0}i \Omega^{2}}{4r^2}Q \phi
	-\frac{ m^{2}  \Omega^{2}}{4}\phi. \end{align}
	We also have (recalling the notation $\psi =r\phi$):
	\begin{align}\nonumber
	e^{-iq_0 \int_{u_{0}}^{u}A_u}\partial_u (e^{i q_0 \int_{u_{0}}^{u}A_u}\partial_v \psi) &=D_{u}(\partial_{v} \psi) \\
	& =	\frac{- \Omega^{2} \phi}{4r}-\frac{\partial_{u}r\partial_{v}r \cdot \phi}{r}
	-\frac{ \Omega^{2} \phi }{4r^{3}} Q^2 +  \frac{m^{2}r }{4} \Omega^2  \phi |\phi|^{2}  -\frac{ m^{2}  \Omega^{2} r }{4}\phi-\frac{ q_{0}i \Omega^{2}}{4r}Q \phi, \label{Field5} \end{align}
	and
	\begin{align}\label{Field4}
	\partial_{v}(D_{u} \psi) =	\frac{- \Omega^{2} \phi}{4r}-\frac{\partial_{u}r\partial_{v}r \cdot \phi}{r}
	+\frac{ \Omega^{2} \phi }{4r^{3}} Q^2 +  \frac{m^{2}r }{4} \Omega^2  \phi |\phi|^{2}  -\frac{ m^{2}  \Omega^{2} r }{4}\phi-\frac{ q_{0}i \Omega^{2}}{4r}Q \phi. \end{align}
	\section{Setup of the characteristic data and the oscillation condition}  
	\label{coordinatechoice.section}
	We  first fix the following arbitrary quantities \begin{align} 
	\label{eq:defnem}
	&  \text{\textbullet} \text{ subextremal charge and mass parameters } 0<|e|<M,\\ \label{eq:defns}
	& \text{\textbullet}  \text{ a decay rate }\frac 34 < s \leq 1,\\ \label{eq:defnd}
	& \text{\textbullet} \text{ constants } D_1, D_2 >0.
	\end{align}
	These quantities will be kept fixed from now onward.
	\subsection{Characteristic cones \texorpdfstring{ $\underline{C}_{in}$, $\HH$}{Cin,H+} and underlying manifold \texorpdfstring{$\mathcal Q^+$}{Q+}}
	Our yet-to-be-constructed spacetime of study will be the future domain of dependence $\mathcal{Q}^+$ of the characteristic set $\underline C_{in}\cup_p \HH \subset \mathbb R^{1+1}$, where $\HH:=\{ U=0,\ v_0  \leq v < +\infty\}$ and $\underline C_{in}:=\{ 0 \leq U \leq U_s,\ v =v_0 \}$ which meet transversely at the common boundary point $p := \{ U = 0,v=v_0\}$.  Here, we use the convention that  $f\in C^1(\HH)$ means   that $f\in C^1((v_0,\infty))\cap C^0([v_0,\infty))$ with the property that that $\partial_v f$ extends continuously to $v_0 = \partial\HH$.  Analogously, we define  $C^1(\underline C_{in})$. Moreover, we  say that  $ f\in C^1(\underline C_{in}\cup_p \HH)$ if $f$ is continuous on $ \underline C_{in}\cup_p \HH$ and   $f|_{\HH} \in C^1(\HH), f|_{\underline C_{in}} \in C^1(\underline C_{in})$. In particular, note that if $f_1\in C^1(\HH)$ and $f_2\in C^1(\underline C_{in})$ satisfy $f_1(p) = f_2(p)$, then they define a function in $C^1(\underline C_{in}\cup_p \HH)$. Analogously, we define $C^k$ for $k\geq 2$.  We denote $\mathcal Q^+ := \{ 0\leq U \leq U_s,\ v_0 \leq v<+\infty \}$. Here $v_0  = v_0(M,e,s,D_1) \geq 1$ only depends on $M,e,s,D_1$ and  $U_s=U_s(M,e,s,D_2,D_1)$ only depends on     $M,e,s,D_2,D_1$---both of which  will be determined in \cref{prop:constraints} below.
	\paragraph{A new coordinate $u$.}
	We will make use of other coordinates $(u,v)$ on $\mathcal Q^+-\HH$ given by $u:=u(U), v=v$, where $u(U)$ is the function given through the condition \eqref{eq:Ucoordinate} and $(M,e)$ are as in \eqref{eq:defnem}. We also denote $u_s:= u(U_s)$. 
	\paragraph{An additional electromagnetic gauge freedom.}
	At this point we recall our global electromagnetic gauge choice $A_v\equiv 0$ in \cref{gaugechoice.section}. An additional electromagnetic gauge freedom we have is the specification of $A_U$ (or equivalently  $A_u$) on $\underline C_{in}$. We impose that $A_U$ on $\underline{C}_{in}=\{ 0 \leq U \leq U_s,\ v=v_0\}$ satisfies 
	\begin{align}\label{eq:gaugeforAU}
	A_U (U,v_0) =   \left(  -\frac{e}{r_{RN}(U,v_0)} + \frac{e}{r_+(e,M)}\right) \frac{\d u }{\d U} (U)= 2  \left(  -\frac{e}{r_{RN}(U,v_0)} + \frac{e}{r_+(e,M)}\right) \Omega^{-2}_{RN}(U,v_0)
	\end{align}
	where we used \eqref{eq:Ucoordinate} for the second identity and thus
	\begin{align}\label{gaugeAponctuelle}A_u(u,v_0) =    -\frac{e}{r_{RN}(u,v_0)} + \frac{e}{r_+(e,M)}.\end{align} Here, $r_{RN}$ is the $r$-value on Reissner--Nordström with parameters $(M,e)$ as given in \eqref{eq:defnem} and $r_+(M,e)=M^2+\sqrt{M^2-e^2}$.
	\subsection{Coordinate gauge conditions on \texorpdfstring{$\HH$ and $\underline{C}_{in}$}{H+ and Cin}} 
	
	On $\HH=\{ U=0,\ v_0 \leq v\}$ we will impose the gauge condition
	\begin{align} \label{gauge1}
	\frac{\partial_U r (0, v)}{ \Omega^2_H( 0, v)} = -\frac 12
	\end{align} 
	and on $\underline{C}_{in}=\{ 0 \leq U \leq U_s,\ v=v_0\}$ we will impose 
	\begin{align} \label{eq:gaugeforU}
	\partial_U r = -1. 
	\end{align} 
	
	\subsection{Free data \texorpdfstring{$\phi\in C^1(\underline{C}_{in} \cup_p \HH)$}{phi in C1(Cin cup H+)} with slow decay on \texorpdfstring{$\HH$}{H+} and construction of \texorpdfstring{$r,Q,\Omega_H^2$}{r,Q,OmegaH2}}
	Having set up the gauges we will now---additionally to the free prescription of   $0<|e|<M$ in \eqref{eq:defnem}---freely prescribe  data for $\phi$ on $  \underline{C}_{in}\cup_p \HH $. We recall \eqref{eq:defns} and \eqref{eq:defnd} and define the class of slowly decaying data $\Sl$ on the event horizon $\HH$ in the following. In order to  highlight that the definition does not depend on the gauge choice for the electromagnetic potential $A$ we formulate it  in  a gauge-invariant form (although we have already fixed the gauge $A_v\equiv0$ in \eqref{GaugeAv} and \eqref{gaugeAponctuelle}). 
	\begin{defn}[Set of slowly decaying data {$\Sl$}] \label{slowdecay.def} 
		We say that $\phiH\in C^1(\HH,\mathbb C)$ is \textbf{slowly decaying}, denoted $\phiH\in \Sl$, if  \begin{align} \label{polynomialdecay}
		|\phiH|(v)+ |D_v \phiH|(v) \leq D_1  v^{-s}
		\end{align} 
		for all	  $v \in  \HH $, where we recall $\frac{3}{4} < s \leq 1$ was introduced and fixed in \eqref{eq:defns}, and $D_1>0$ was introduced and fixed in  \eqref{eq:defnd}.
	\end{defn}

	\noindent   Similarly, on $\underline{C}_{in}$ we will also impose arbitrary (up to the corner condition) data  $\phi_{in}\in C^1(\underline{C}_{in})$ satisfying 
	\begin{align}\label{eq:estiamteonphiin}
	|D_U \phi_{in}|\leq  D_2. 
	\end{align}
	
	We will now finally conclude the setup of the initial data, where we recall that we freely prescribed subextremal $e,M$ and the scalar field $\phi$ on $\underline{C}_{in}\cup_p \HH$. In particular,  using standard results about o.d.e.'s  (recall that $s > \frac 34$; actually $s>\frac 12$ is sufficient to prove \cref{prop:constraints}) we obtain
	\begin{prop}\label{prop:constraints}
		There exist   $v_0(M,e,s,D_1)\geq 1$ sufficiently large and $U_s(M,e,s,D_2,D_1)>0$ sufficiently small such that the following holds true. Let   $\phiH\in \Sl$ and  $\phi_{in}\in C^1(\underline{C}_{in})$ satisfying \eqref{eq:estiamteonphiin} with $\phiH(p)=\phi_{in}(p)$ be arbitrary. Then, there exist unique solutions $r\in C^2 (\underline{C}_{in}\cup_p \HH )$, $\Omega_H \in C^1(\underline{C}_{in}\cup_p \HH)$ and $Q \in C^1(\underline{C}_{in} \cup_p \HH)$  of the o.d.e.\ system  consisting of the Raychaudhuri equation \eqref{RaychV}, equation \eqref{ChargeVEinstein}, the equation \eqref{Radius} using \eqref{gauge1} on $\HH$ and the o.d.e.\ system consisting of   \eqref{eq:gaugeforU}, \eqref{RaychU} and \eqref{chargeUEinstein} on $\underline{C}_{in}$ such that  
		\begin{align}
		& \lim_{v\to +\infty} r(0,v) = r_+(M,e)=M+\sqrt{M^2+e^2},\\
		& \lim_{v\to +\infty} Q(0,v) = e.
		\end{align}
		Moreover, $\HH$ is affine complete, i.e.\ $\int_{v_0}^{+\infty} \Omega^2_H(0,v) \d v = +\infty$. 
	\end{prop}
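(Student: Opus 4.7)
The plan is to construct the solution separately along $\HH$ and along $\underline{C}_{in}$, matching at the corner $p = \{U=0, v=v_0\}$. Along $\HH$ one must solve a boundary value problem in $v$ with asymptotics prescribed at $+\infty$, whereas along $\underline{C}_{in}$ one solves a standard initial value problem in $U$ starting from $p$.

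First I would reduce the system on $\HH$ to a closed first-order ODE in the four scalar unknowns $(r, \lambda, \Omega_H^2, Q)$, where $\lambda := \partial_v r(0,\cdot)$. Differentiating the gauge condition \eqref{gauge1} in $v$ and using $\partial_U \partial_v r = \partial_v \partial_U r$ gives $\partial_U \partial_v r|_{\HH} = -\tfrac{1}{2}\partial_v \Omega_H^2$. Equating this with \eqref{Radius} expressed in $(U,v)$-coordinates and substituting $\partial_U r = -\Omega_H^2/2$ from the gauge produces the evolution equation
\begin{equation*}
\partial_v \log \Omega_H^2 = \frac{1}{2r} - \frac{\lambda}{r} - \frac{Q^2}{2r^3} - \frac{m^2 r}{2}|\phi_{\HH}|^2,
\end{equation*}
whose Reissner--Nordström limit value (at $r_+, e, 0, 0$) is exactly the surface gravity $2K_+$. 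Combined with $\partial_v r = \lambda$, with \eqref{RaychV} (which, after eliminating $\partial_v \Omega_H^2$, becomes $\partial_v \lambda = \lambda\,\partial_v \log \Omega_H^2 - r|D_v\phi_{\HH}|^2$), and with \eqref{ChargeVEinstein}, this constitutes a closed ODE system driven by the free datum $\phi_{\HH}$.

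The key observation is that since $\phi_{\HH}\in \Sl$ with $s > 1/2$, the quadratic scalar-field sources behave like $v^{-2s}$ and are integrable on $[v_0,\infty)$. However, linearizing the $\lambda$-equation around the Reissner--Nordström values gives $\partial_v \tilde\lambda \approx 2K_+ \tilde\lambda + O(v^{-2s})$, which has an unstable Killing mode $e^{2K_+ v}$. Rather than posing an initial value problem at $v_0$, I would therefore reformulate the $\HH$-system as integral equations anchored at $+\infty$,
\begin{equation*}
    r(v) = r_+ - \int_v^{+\infty}\!\lambda(v')\,\d v', \qquad Q(v) = e - \int_v^{+\infty}\! q_0 r^2\,\Im(\phi_{\HH}\overline{\partial_v \phi_{\HH}})(v')\,\d v',
\end{equation*}
with $\lambda$ recovered from the quadratic source via an integrating-factor formula built from $\partial_v \log \Omega_H^2$, also anchored at $+\infty$, and $\Omega_H^2$ recovered by integrating $\partial_v \log \Omega_H^2$ forward from $v_0$. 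Picking $v_0(M,e,s,D_1)$ large enough that $\int_{v_0}^{+\infty} v^{-2s}\,\d v$ is suitably small, a contraction-mapping argument in a weighted norm on $[v_0,\infty)$ delivers the unique solution obeying $r \to r_+$, $Q \to e$, with $\Omega_H^2$ growing exponentially at rate $2K_+$ in $v$. In particular $\int_{v_0}^{+\infty} \Omega_H^2(0,v)\,\d v = +\infty$, which is the claimed affine completeness of $\HH$.

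On $\underline{C}_{in}$ the situation is far simpler: gauge \eqref{eq:gaugeforU} immediately yields $r(U,v_0) = r(0,v_0) - U$, and \eqref{RaychU} then reduces to $\partial_U \log \Omega_H^2 = -r|D_U\phi_{in}|^2$, which is directly integrated using the boundary value $\Omega_H^2(0,v_0)$ from the previous step and the hypothesis $|D_U \phi_{in}|\leq D_2$. Equation \eqref{chargeUEinstein} similarly yields $Q$ by direct integration from $Q(0,v_0)$. Choosing $U_s(M,e,s,D_1,D_2)$ small enough keeps $r$ bounded away from zero on $[0, U_s]$. The principal obstacle is thus the boundary value problem at infinity on $\HH$: the unstable Killing direction forces one to anchor $\lambda$ at $v = +\infty$ rather than at $v_0$, and it is precisely the integrability of the quadratic sources (i.e.\ $2s > 1$) that makes the contraction close and singles out the unique stable trajectory.
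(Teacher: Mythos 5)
Your construction is correct, and it in fact supplies more than the paper does: the paper dispenses with this proposition in one line, invoking ``standard results about o.d.e.'s'' (with the parenthetical remark that $s>\tfrac12$ suffices), so there is no written proof to compare against. Your argument is the expected one behind that remark, and you have correctly identified the only genuinely non-routine point: because the gauge \eqref{gauge1} turns \eqref{Radius} into $\partial_v\log\Omega_H^2=\tfrac{1}{2r}-\tfrac{\lambda}{r}-\tfrac{Q^2}{2r^3}-\tfrac{m^2r}{2}|\phiH|^2$, whose Reissner--Nordstr\"om value is $2K_+>0$, the linearized $\lambda$-equation carries the growing mode $e^{2K_+v}$, so the conditions $r\to r_+$, $Q\to e$ must be imposed as a boundary-value problem at $v=+\infty$ (equivalently, one selects the stable trajectory via $\lambda\,\Omega_H^{-2}\to 0$, which is forced by convergence of $r$), and it is precisely the integrability of the $O(v^{-2s})$ sources, i.e.\ $2s>1$, that lets the contraction close after taking $v_0$ large. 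The integration on $\underline{C}_{in}$ and the affine completeness ($\partial_v\log\Omega_H^2\geq K_+$ for $v_0$ large) are as routine as you say. One small point you should make explicit: integrating $\partial_v\log\Omega_H^2$ ``forward from $v_0$'' requires the value $\Omega_H^2(0,v_0)$, which is not free but is fixed by compatibility of the two gauges at the corner $p$ (namely $\partial_Ur(0,v_0)=-1$ from \eqref{eq:gaugeforU} together with \eqref{gauge1} pins down $\Omega_H^2(0,v_0)$); without this the uniqueness claim for $\Omega_H$ would fail by a multiplicative constant.
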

	This shows that our free data $(e,M,\phi)$ and the gauge conditions give rise to a full set of data $(r,Q,\Omega_H^2, \phi)$ on $\underline{C}_{in}\cup_p \HH$ satisfying the constraint equations.

	Further, note that \eqref{gauge1}  implies that  $$ \kappa_{|\mathcal{H}^+} \equiv 1$$ in view of \eqref{kappa}.
	
	\begin{rmk}We also  associate to $(u,v)$ a lapse function  $\Omega^2$ through 
		\begin{align}
		\Omega^2 := \Omega^2_{H} \frac{\d U}{\d u} \label{eq:relationomegas}
		\end{align}
		such that $\Omega^2 = -2 g(\partial_u,\partial_v)$ and $\Omega_H^2 = - 2g(\partial_U, \partial_v)$ once the spacetime is constructed.
	\end{rmk}
	\begin{rmk}
		In \cref{W11.main.thm} we will introduce generic properties of functions in $\Sl$. We  remark that $\Sl$ is the ball of size $D_1$ in the Banach space 
		\begin{align}\label{eq:defnsl0}
		\Sl_0 := \{ f\in C^1(\HH;\mathbb C): \underset{v\geq v_0}{\sup}(| v^s f| + |v^s D_vf | ) <+\infty \} .\end{align} In \cref{W11.main.thm} (more precisely in \cref{linear.cor.W11}) we identify a (exceptional) subspace $H_0 \subset \Sl_0$ of infinite co-dimension. We then call functions $\phiH\in\Sl $ generic if $\phiH \in \Sl-H$, where $H:= H_0 \cap \Sl$. \end{rmk}
	\subsection{Definitions of the oscillation spaces \texorpdfstring{$\OO$, $\OOp$, $\OOpp$}{O,O',O''}}
	\label{sec:oscillationspaces}
	We now define the subsets $\OO,\OOp,\OOpp\subset \Sl$ of slowly decaying data on the event horizon    describing the oscillation conditions. In order to  highlight that the definitions do not depend on the gauge choice for the electromagnetic potential $A$ we formulate them in  a gauge-invariant form (although we have already fixed the gauge $A_v\equiv0$ in \eqref{GaugeAv} and \eqref{gaugeAponctuelle}).

	\begin{defn}[Qualitative oscillation condition {$\mathcal O$}] \label{defn:qualoscillationcond} A function  $\phiH \in \Sl$   is said to satisfy the \textbf{qualitative oscillation condition}, denoted $\phiH \in \OO$, if the following qualitative 
		condition holds
		\begin{equation} \label{oscillation.condition}\limsup_{v \to +\infty} \left| \int_{v_0}^{v}  \phiH(v')  e^{i (\omer  v'+ q_0 \sbr(v'))} e^{iq_0\int_{v_0}^{v'} (A_v)_{|\mathcal{H}^+}(v'')\d v''}\d v' \right| < +\infty,
		\end{equation} 
		for all  $D_{br}>0$ and all functions $\sbr \in C^2([v_0,+\infty),\RR)$ satisfying  
		\begin{align} \label{sigma_err1}
		&|\sbr(v)| \leq D_{br} \cdot( v^{2-2s} 1_{s<1} + \log(1+v) 1_{s=1}),
		\\
		\label{sigma_err2}
		&|\sbr'(v) | +  |\sbr'' (v) | \leq D_{br}  v^{1-2s},
		\end{align} for  all $v\geq v_0$, where we recall that $v_0(M,e,s,D_1)>1$.
		
	\end{defn}
	
	We will also denote $\NO:= \Sl-\OO$ the space of $\phiH \in \Sl$ violating \eqref{oscillation.condition}.
	\begin{defn}[Strong qualitative oscillation condition {$\OOp$}] \label{defn:quantitativeosc}
		A function $\phiH \in \OO$   is said to satisfy the  \textbf{strong qualitative oscillation condition}, denoted    $\phiH\in \OOp$, if the limit \begin{equation} \label{eq:oscillation.condition}\lim_{v \to +\infty} \left| \int_{v_0}^{v}  \phiH(v')  e^{i (\omer  v'+ q_0 \sbr(v'))} e^{i q_0\int_{v_0}^{v'} (A_v)_{|\mathcal{H}^+}(v'')\d v''}\d v' \right|  			\end{equation}
		exists (and is finite) for all $D_{br}>0$ and all functions $\sbr\in C^2([v_0,+\infty),\RR) $ satisfying \eqref{sigma_err1} and \eqref{sigma_err2}. \end{defn}
	
	\begin{defn}[Quantitative oscillation condition   {$\OOpp$}]
		\label{defn:strongoscilcond} 
		A function  $\phiH\in \OOp$ is said to satisfy the \textbf{quantitative oscillation condition}, denoted $\phiH \in \OOpp$ if for all $D_{br}>0$, there exist $E_{\OOpp}(D_{br})>0$, $\eta_0(D_{br})>0$ such that\begin{equation} \label{eq:strong.oscillation.condition}
		\left|\int_{v}^{+\infty}e^{i (\omer \cdot v'+ q_0\sbr(v'))} e^{i q_0\int_{v_0}^{v'} (A_v)_{|\mathcal{H}^+}(v'')\d v''}\phiH(v') \d v' \right| \leq  E_{\OOpp} \cdot v^{s-1-\eta_0},
		\end{equation}  for all $v\geq v_0$  and  all functions $\sbr \in C^2([v_0,+\infty),\RR) $ satisfying \eqref{sigma_err1} and \eqref{sigma_err2}.  
	\end{defn}

	\begin{rmk} Note that we have by definition the    inclusions: $  \OOpp \subset \OOp \subset \OO \subset \Sl$. Moreover, note that $ \OOpp \not\subset L^1([v_0,+\infty))$; more generally, a generic  function of $\OOpp$ is not in $L^1([v_0,+\infty))$.
	\end{rmk}
	\begin{rmk} The condition \eqref{oscillation.condition} and its   stronger versions \eqref{eq:oscillation.condition}, \eqref{eq:strong.oscillation.condition}  guarantee sufficiently robust  non-resonant oscillations. These conditions are sufficient (our proof also suggests that they are necessary to some extent)   to avoid that the backreaction of the Maxwell field  (which, as we will show, creates unbounded but sublinear oscillations $\sbr$ obeying \eqref{sigma_err1}, \eqref{sigma_err2})   turns \textbf{linearly non-resonant} profiles into \textbf{nonlinearly resonant} profiles, see last paragraph of \cref{nonlinear.intro} for a discussion.
	\end{rmk}
	\begin{rmk} In the uncharged case $q_0=0$, the backreaction of the electric field is absent. In this case note that \eqref{oscillation.condition} simplifies to a ``finite average'' condition.
	\end{rmk}

	\section{Precise statements of the main theorems and outline of their proofs} \label{precise.section}

	\subsection{Existence of a Cauchy horizon \texorpdfstring{$\CH\neq \emptyset$}{CH=/=emptyset} and quantitative estimates in the black hole interior from \texorpdfstring{\cite{Moi}}{VdM '18}} \label{existence.CH.main}
	In \cite{Moi}, the second author proved (among other results) that spherically symmetric EMKG black holes converging to a subextremal Reissner--Nordstr\"{o}m admit a null boundary $\CH\neq \emptyset$ that we still call a Cauchy horizon. The proof of this main result in \cite{Moi} required many quantitative estimates that will be useful in the analysis of the current  paper.
	
	\setcounter{thm}{0}
	\begin{thm}[\cite{Moi}] \label{CH.stab.thm} Consider the characteristic data on $\underline C_{in} \cup_p \HH$ as described in \cref{coordinatechoice.section} and fix the electromagnetic gauge  \eqref{GaugeAv} as in \cref{gaugechoice.section}. 
		Let   $\phiH\in \Sl$ be arbitrary, and let  $\phi_{in}\in C^1(\underline{C}_{in})$ satisfying \eqref{eq:estiamteonphiin} with $\phi_{in}(p)=\phiH(p)$ be arbitrary.

		Then, by   choosing  $U_s(M,e,s,D_2,D_1)>0$ potentially   smaller, the characteristic data give rise to the unique  $C^1$ Maximal Globally Hyperbolic Development $(r,\Omega^2_H,A,Q,\phi)$   on  $\mathcal Q^+$   solving the EMKG system of \cref{eqcoord}.    In addition, an (ingoing) null boundary $\CH \neq \emptyset$ (the Cauchy horizon) can be attached to $\mathcal Q^+$ on which $r$ extends as a continuous function $r_{CH}$ which remains bounded away from zero, depicted in Penrose diagram \cref{fig:cauchyhorizonexists}. Note that $(r,\Omega^2_H,A,Q,\phi)$ on $\mathcal Q^+$  defines $(\mathcal M,g,A,F,\phi)$ which solves \eqref{E1}--\eqref{E5}.
		
		Moreover, all the quantitative estimates stated in   \cref{prop.HH.estimates.Moi},   \cref{prop.RS.estimates.Moi},   \cref{prop.NN.estimates.Moi},   \cref{prop.EB.estimates.Moi} and   \cref{prop.LB.estimates.Moi} are satisfied.
		
		If we additionally assume fast decay (i.e.\ $\phiH$ satisfies \eqref{polynomialdecay} for $s>1$), then $\phi$ is in $W^{1,1}_{loc} \cap L^{\infty}$ at the Cauchy horizon $\CH$ and extends as a continuous function across the Cauchy horizon $\CH$. Moreover, in this case, the metric admits a $C^0$-admissible extension $\tilde{g}$ across the Cauchy horizon $\CH$ in the sense of Definition \ref{C0admissible} and $\tilde{g}$ has locally integrable Christoffel symbols.
	\end{thm}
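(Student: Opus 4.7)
The plan is to treat this as the main result of \cite{Moi} and sketch the architecture of its proof, which is essentially a long bootstrap over several regions of the black hole interior. First, local well-posedness of the characteristic initial value problem for the EMKG system in spherical symmetry in the gauge \eqref{GaugeAv} (together with \eqref{eq:gaugeforAU}--\eqref{eq:gaugeforU}) is standard: the system \eqref{Radius}--\eqref{eq:maxwellAu} is a quasilinear first order system for $(r,\Omega^2_H,A_u,Q,\phi)$ for which a solution exists in a neighborhood of $\underline C_{in}\cup_p \HH$. The data constructed in \cref{prop:constraints} satisfy the constraints and the corner compatibility, so a unique $C^1$ local solution exists and can be continued as long as the quantitative estimates used in the bootstrap remain valid.

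The bulk of the work is to propagate the solution up to a null boundary at $r>0$ by a region-by-region bootstrap, governed by the value of the area radius $r$ along ingoing null cones. I would partition $\mathcal Q^+$ into a red-shift region $\Rs$ (with $r$ close to $r_+$), a no-shift region $\NN$ (with $r$ bounded strictly between $r_-$ and $r_+$), an early blue-shift region $\EB$ (with $r$ close to $r_-$ but with $\Omega^2$ still comparable to a Reissner--Nordstr\"om-like decay), and a late blue-shift region $\LB$ extending up to $\CH$. In each region the argument is:
\begin{itemize}
    \item Raychaudhuri \eqref{RaychU}--\eqref{RaychV} propagates the sign of $\nu=\partial_u r$ and controls $\lambda/\Omega^2, \nu/\Omega^2$ in terms of integrated $|D_u\phi|^2$ or $|\partial_v\phi|^2$.
    \item The equation \eqref{Radius2} for $\partial_u(r\partial_v r)$ transfers decay of $\phi$ and $Q-e$ into decay of $\lambda$ and hence bounds on $r-r_-$.
    \item The Klein--Gordon equation \eqref{Field} (or equivalently \eqref{Field2}) combined with the slow-decay hypothesis $|\phi|, |D_v\phi| \lesssim v^{-s}$ on $\HH$ propagates polynomial decay of $\phi$ and $D_u\phi$ along ingoing cones, where the borderline rate $s>\tfrac34$ enters via the use of $r$-weighted energy estimates to close the argument around $r\approx r_-$.
    \item The Maxwell equations \eqref{chargeUEinstein}--\eqref{ChargeVEinstein} close the bootstrap on $Q-e$, showing $|Q-e|\lesssim v^{-(2s-1)}$.
\end{itemize}
The outcome is that $r$ is uniformly bounded away from $0$ on $\mathcal Q^+$ (in fact stays close to $r_-(M,e)$ in $\LB$), so that attaching $\CH:=\{U\in[0,U_s],\,v=+\infty\}$ as an ingoing null boundary with continuous limit $r_{\CH}>0$ is justified by the monotone convergence of $r$ along ingoing cones. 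The precise quantitative bounds in each region are exactly those stated in \cref{prop.HH.estimates.Moi}--\cref{prop.LB.estimates.Moi}, which I would import wholesale from \cite{Moi}.

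The hardest step is the \emph{late blue-shift region} $\LB$: here $\Omega^2$ behaves like $e^{2K_-(u+v)}$ with $K_-<0$, so naive estimates amplify by an exponentially large blue-shift factor and there is no red-shift to absorb the nonlinearity. The way to overcome this is to combine (i) the good sign in \eqref{Radius2} that keeps $\lambda<0$ and $|\lambda|$ exponentially small, (ii) a $\lambda$-weighted version of \eqref{Field5} that controls $\partial_v\psi$ in $L^1_v$ via a Gr\"onwall argument against $\int |\nu|/r \,\d u<\infty$, and (iii) matching with the estimates coming out of $\EB$ that provide the initialization at a hypersurface $\{r=r_-+\delta\}$. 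Choosing $U_s(M,e,s,D_1,D_2)$ sufficiently small ensures the Gr\"onwall constants remain bounded.

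For the final statement under the fast decay assumption $s>1$, the decay rate $v^{-s}$ is integrable, so the same $\lambda$-weighted energy estimate now yields $\int_{v_0}^{+\infty}|D_v\phi|(u,v)\,\d v <\infty$ uniformly in $u$. Integration in $v$ then shows $\phi$ extends continuously to $\CH$ and that the Christoffel symbols $\Omega^{-2}\partial_v\Omega^2$ etc.\ are in $L^1_v$, which via \eqref{Radius}, \eqref{Omega}, \eqref{eq:maxwellAu} and \eqref{ChargeVEinstein} yields continuous extensions of $r,\log\Omega^2_H,Q,A_u$. The existence of $C^1$ null coordinates near $\CH$ and the uniqueness clause in \cref{C0admissible} then produce a $C^0$-admissible extension with locally integrable Christoffel symbols, as claimed. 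Since all of this is done in \cite{Moi}, the proof reduces to invoking that work and verifying that our gauge choices \eqref{gauge1}--\eqref{eq:gaugeforU} and data class $\Sl$ fit its hypotheses.
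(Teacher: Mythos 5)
Your proposal is essentially the paper's approach: the paper does not reprove this theorem but imports it wholesale from \cite{Moi}, whose proof is organized by exactly the region decomposition (red-shift $\Rs$, no-shift $\NN$, early blue-shift $\EB$, late blue-shift $\LB$) and bootstrap architecture you describe, with the quantitative outputs recalled verbatim in \cref{prop.HH.estimates.Moi}--\cref{prop.LB.estimates.Moi}. One small correction: you attribute the threshold $s>\tfrac34$ to closing the bootstrap near $r\approx r_-$, but the paper's remark following the theorem states explicitly that \cref{CH.stab.thm} and all the quantitative estimates hold already for $s>\tfrac12$; the stronger requirement $s>\tfrac34$ enters only in the \emph{new} extendibility arguments of the present paper (e.g.\ the boundedness of $\Upsilon$ in \eqref{Delta.def} and the difference estimates of \cref{difference.estimates.section}), not in the proof of this recalled result. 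Also note that the regions are defined by level sets of $u+v+h(v)$ rather than of $r$ directly, though the two are morally equivalent.
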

	\begin{rmk}
		We note that the above \cref{CH.stab.thm} showing $\CH \neq \emptyset$, together with all the quantitative estimates stated \cref{prop.HH.estimates.Moi},   \cref{prop.RS.estimates.Moi},   \cref{prop.NN.estimates.Moi},   \cref{prop.EB.estimates.Moi} and   \cref{prop.LB.estimates.Moi}, actually holds under the weaker assumption of decay rate $s>\frac 12$ as opposed to $s>\frac 34$, see \cite{Moi}. For the purpose of extendibility across the Cauchy horizon $\CH$ for oscillating data as stated in our main result below, the decay assumption $s>\frac 34  $ is needed and appears to be crucial, see the discussion in \cref{Strategy.section.bulk}.
	\end{rmk}
	
	\subsection{\texorpdfstring{\cref{main.theorem.intro}}{Theorem I(i)}: Scalar field boundedness and continuous extendibility for oscillating data}\label{sec:preciseA1}
	In this section we give the precise version of   \cref{main.theorem.intro} which is proved as \cref{boundedness.cor} in \cref{boundedness.combining.section}. 
	\begin{theoa}[\textbf{Boundedness}] \label{main.theorem} Let the assumptions of   \cref{CH.stab.thm} hold.  \begin{enumerate}
			\item If $\phiH$ satisfies the qualitative oscillation condition $\phiH \in \OO$ (see  \cref{defn:qualoscillationcond}), then
			\begin{align}\label{phi.bounded.main.thm}			   \sup_{(u,v) \in \mathcal Q^+} |\phi(u,v)| <+\infty.
			\end{align}

			\item If $\phiH$ satisfies the strong qualitative oscillation condition $\phiH \in \OOp$ (see \cref{defn:quantitativeosc}), then \eqref{phi.bounded.main.thm} is true and moreover $\phi$ admits a continuous extension to $\CH$ and $g$ admits a $C^0$-admissible extension to $\CH$ in the sense of Definition \ref{C0admissible}. In particular, $g$ is continuously extendible.
			
			\item If $\phiH$ satisfies the quantitative oscillation condition $\phiH \in \OOpp$ (see  \cref{defn:strongoscilcond}), then \eqref{phi.bounded.main.thm} is true, $\phi$ admits a continuous extension to $\CH$ and $g$ admits a $C^0$-admissible extension to $\CH$. Moreover, $Q$ is uniformly bounded on $\mathcal{Q}^+$ and admits a continuous extension to $\CH$. Further, there exists a constant $ \tilde C=\tilde C(D_1,D_2,E_{\OOpp},\eta_0,e,M,m^2,q_0,s) >0$ 
			such that for all $(u,v) \in \mathcal{LB} \subset  \mathcal{Q}^+$:\begin{align} \label{phi.decay.main.thm}
			|\phi|(u,v) & \leq \tilde C  \cdot |u|^{s-1-\eta_0},\\
			\label{Q.decay.main.thm}
			|Q-e|(u,v) 	& \leq \tilde  C \cdot |u|^{-\eta_0},
			\end{align} where $E_{\OOpp} = E_{\OOpp}(D_{br})>0$, $\eta_0 = \eta_0(D_{br})>0$ are as in \eqref{eq:strong.oscillation.condition} and  ${D_{br}:=D_{br}(D_1,D_2,e,M,m^2,q_0,s)>0}$ is defined in the proof of \cref{boundedness.prop}.   Here $\mathcal{LB}$ denotes the late blue-shift region (see \cref{Fig.regions}),  a neighborhood of the Cauchy horizon which is defined in \cref{recall.section}.
		\end{enumerate}
	\end{theoa}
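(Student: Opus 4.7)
The plan is to combine the pre-existing nonlinear physical-space estimates of \cref{CH.stab.thm} (from \cite{Moi})---which already supply $\CH\neq\emptyset$ and a quantitative understanding of the black-hole interior in each of the regions $\Rs, \NN, \EB, \LB$---with a \emph{linear Fourier-space analysis} on the fixed Reissner--Nordström interior designed to convert the oscillation hypothesis on $\HH$ into a boundedness statement at $\CH$. First I would linearize around the background $(g_{RN}, A'_{RN})$ with parameters $(M,e)$ matching the asymptotic ones, solving $g_{RN}^{\mu\nu} D^{RN}_\mu D^{RN}_\nu \phiNl = m^2 \phiNl$ with data $\phiNl|_{\HH}=\phiH$. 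Using the $t$-Fourier transform in the static region and a scattering-type analysis at the Cauchy horizon in the spirit of \cite{Kehle2018}, together with the fact that the resonant frequency of the Cauchy horizon (relative to $\HH$) is precisely $\omer = \omega_- - \omega_+$, I expect to show that $\phiNl$ at the linear Cauchy horizon is controlled by the oscillatory integral $\int_{v_0}^{v}\phiH(v')e^{i\omer v'}\,\d v'$: boundedness of this integral yields boundedness of $\phiNl|_\CH$, existence of its limit yields continuity, and a quantitative decay rate for the tail yields the pointwise decay \eqref{phi.decay.main.thm}.

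Next I would prove a \emph{nonlinear difference estimate} controlling $\delta\phi := \phi - \phiNl$ (and $g - g_{RN}$, $Q-e$) via the bounds of \cref{prop.HH.estimates.Moi}--\cref{prop.LB.estimates.Moi}. Because the decay rate $s\leq1$ is non-integrable, the nonlinearity cannot be treated purely perturbatively, and the algebraic structure of the error terms must be used. The Maxwell backreaction produces, on top of the linear equation, an effective phase modulation of the form $e^{iq_0\int (A_v)_{|\HH}}$ together with a \emph{sublinearly growing correction} $\sbr(v)$ obeying exactly the bounds \eqref{sigma_err1}--\eqref{sigma_err2}---this is precisely why the oscillation conditions \cref{defn:qualoscillationcond}--\cref{defn:strongoscilcond} are formulated with the twisted phase $e^{i(\omer v + q_0\sbr(v))}e^{iq_0\int (A_v)_{|\HH}}$ and quantified \emph{uniformly} over the class \eqref{sigma_err1}--\eqref{sigma_err2}. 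The hard part will be exactly this point: showing that \emph{linear} non-resonance of $\phiH$ is stable under the sublinear nonlinear twist, so that the uniform-in-$\sbr$ hypothesis propagates from $\HH$ through the regions $\Rs,\NN,\EB$ and into the late blue-shift region $\LB$ via a bootstrap that couples the transport of $\delta\phi$ with the transport of $Q-e$ and $\log(\Omega^2/\Omega_{RN}^2)$.

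The last step, continuous $C^0$-admissible extendibility of the metric in parts (2) and (3), is where a genuinely new idea beyond \cite{KerrStab,JonathanStab,Moi} is needed: the slow decay means that $\log(\Omega^2)$ is \emph{not} $W^{1,1}$ along outgoing cones, as \cref{W11.thm.intro} will confirm, so the usual strategy of integrating an $L^1$ Christoffel symbol is unavailable. Following Step \ref{C0step} of the strategy, I would introduce an auxiliary quantity $\Upsilon$ designed to exploit the algebraic cancellation between the $|D_v\phi|^2$ forcing in the Raychaudhuri equation \eqref{RaychV} and the $\Re(D_u\phi \partial_v\bar\phi)$ forcing in the lapse equation \eqref{Omega} (with its Maxwell counterpart), in such a way that $\Upsilon$ does admit an integrable derivative at $\CH$ under the strong oscillation hypothesis $\phiH\in\OOp$. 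The continuous limit of $\Upsilon$ at $\CH$ should then yield continuous limits for $\Omega_H^2$ (after an appropriate coordinate renormalization at the Cauchy horizon), producing the $C^0$-admissible extension of \cref{C0admissible}; the quantitative decay \eqref{phi.decay.main.thm}--\eqref{Q.decay.main.thm} in $\LB$ for part (3) then follows by inserting the quantitative tail \eqref{eq:strong.oscillation.condition} directly into the integral representations derived in Steps 1--2, with the constant $D_{br}$ determined by the nonlinear bounds on $(A_v)_{|\HH}$ and $Q-e$.
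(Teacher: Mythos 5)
Your proposal is correct and follows essentially the same route as the paper: the linear scattering/Fourier representation at the resonant frequency $\omer$, the region-by-region nonlinear difference estimates with the sublinear Maxwell phase $\sbr$ obeying \eqref{sigma_err1}--\eqref{sigma_err2} (which is exactly why the oscillation conditions are quantified uniformly over that class), and the auxiliary quantity $\Upsilon$ absorbing the slowly decaying bilinear term into a total derivative to obtain the $C^0$-admissible extension. The only inessential deviation is that the paper establishes the continuous extendibility of $\Upsilon$ \emph{unconditionally} for all $\phiH\in\Sl$ (Proposition \ref{coordinatesphiboundedprop}), with the oscillation hypothesis entering only through the boundedness and continuity of $|\phi|$, rather than being needed for $\Upsilon$ itself.
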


	\subsection{\texorpdfstring{\cref{main.theorem.intro2}}{Theorem I(ii)}: Blow-up in amplitude of the uncharged scalar field for non-oscillating data}
	\label{sec:preciseA2}
	In this section we give the precise version of   \cref{main.theorem.intro2} which is proved as \cref{cor:proofofthmaii} in \cref{blow.up.section}.
	\begin{theodeux}[\textbf{Blow-up}]  \label{main.theorem2}
		Let the  assumptions of \cref{CH.stab.thm} hold and let $q_0=0$ and $m^2 \in \mathbb R_{>0} -{D}(M,e)$, where $D(M,e)$ is the discrete set of exceptional non-resonant masses as defined in \cite[Theorem~7]{Kehle2018}. In addition, assume that      $\phiH$ violates the qualitative oscillation condition as in  \cref{slowdecay.def}, i.e.\ assume that  $\phiH \in \NO :=\Sl-\OO$.
		
		Then, for all $u \leq u_s$, the scalar field blows up in amplitude at the Cauchy horizon $\CH$:
		\begin{align}\limsup_{v\rightarrow+\infty} |\phi|(u,v)=+\infty.\end{align}
	\end{theodeux}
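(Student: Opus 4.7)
The plan is to compare the full nonlinear solution $\phi$ against the solution $\phil$ of the corresponding \emph{linear} massive wave equation $g_{RN}^{\mu\nu}\nabla_\mu \nabla_\nu \phil = m^2 \phil$ on the fixed Reissner--Nordström interior of parameters $(M,e)$, with initial data matched to those of $\phi$ on $\HH \cup \underline{C}_{in}$. The target is to establish two separate facts: (i) $\phil$ blows up in amplitude at the Cauchy horizon $\CH$ whenever $\phiH \in \NO$, and (ii) the difference $\delta\phi := \phi - \phil$ remains uniformly bounded on every $u$-slice of $\mathcal{Q}^+$. These two facts combined with $|\phi| \geq |\phil| - |\delta\phi|$ immediately yield the conclusion.

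For (i), I would perform a mode-by-mode analysis of $\phil$ using the $t$-Fourier transform, which is available because the background is stationary. Since $q_0 = 0$, the resonant frequency $\omer$ of \eqref{omer} is identically zero and the oscillation condition \eqref{oscillation.condition} degenerates to the finiteness of the plain Riemann integral $\int_{v_0}^{v}\phiH(v')\,\d v'$. The assumption $\phiH \in \NO$ therefore means that the \emph{zero-frequency} content of the event horizon data, as measured by the transmission amplitude toward the Cauchy horizon, diverges. Invoking the scattering analysis of \cite{Kehle2018}, the zero-frequency scattering resonance at $\CH$ is genuinely excited exactly when $m^2 \in \mathbb{R}_{>0} - D(M,e)$, which is precisely why this discrete exceptional set must be excluded from the hypothesis. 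Decomposing the representation formula for $\phil$ near $\CH$ into a bounded (high-frequency) piece and a resonant low-frequency piece then produces the pointwise bound $\limsup_{v \to \infty} |\phil(u,v)| = +\infty$ on every $u = \text{const}$ line with $u \leq u_s$.

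For (ii), I would couple a difference equation for $\delta\phi$ to difference estimates for $r - r_{RN}$, $\lambda - \lambda_{RN}$, $\Omega^2 - \Omega^2_{RN}$, and $Q - e$. The resulting transport equation for $\delta\phi$ has source terms schematically of the form $(\Omega^2 - \Omega^2_{RN})\,\phi$, $(\lambda - \lambda_{RN})\,\partial_u\phi$, plus a mass contribution $m^2(\Omega^2-\Omega^2_{RN})\phi$. The quantitative nonlinear estimates of \cite{Moi} recalled in \cref{prop.HH.estimates.Moi}--\cref{prop.LB.estimates.Moi} supply integrable $v$-decay for each of these geometric differences on each $u$-slice, thanks specifically to the strict inequality $s > \tfrac34$ (so that the quadratic contribution $|\phi|^2$ in \eqref{Radius} and \eqref{RaychV} decays at rate $v^{-2s}$ with a genuine margin over $v^{-1}$). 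A bootstrap for $\delta\phi$ against these weights then closes uniformly in $v$, \emph{even while} $|\phil|$ (and therefore $|\phi|$) is allowed to be unbounded, because the sources depend on $\phi$ only multiplicatively against already-integrable geometric differences.

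The main obstacle is (ii): the slow decay $s \leq 1$ prevents treating the nonlinearity perturbatively in the standard sense, and one cannot assume any a priori bound on $\phi$ while trying to prove precisely that $\phi$ blows up. The essential structural feature that rescues the argument is the uncharged hypothesis $q_0 = 0$, which eliminates the sublinear electromagnetic phase $\sbr$ obeying \eqref{sigma_err1}--\eqref{sigma_err2}. In the charged case, this phase would couple the growing modulus of $\phi$ back into a nontrivial oscillatory evolution that could resonantly drive $\delta\phi$; with $q_0 = 0$, this backreaction is absent and the decoupling between ``which piece grows'' ($\phil$) and ``which piece is controlled'' ($\delta\phi$) is clean. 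Quantifying this decoupling rigorously, and in particular identifying which nonlinear terms require the structure of \eqref{Radius2}--\eqref{Field4} rather than mere size bounds, is the heart of the argument.
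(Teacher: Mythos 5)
Your proposal is correct and follows essentially the same route as the paper: blow-up of the linear solution $\phil$ via the Fourier representation formula and the non-vanishing of the zero-frequency scattering coefficient for $m^2\notin D(M,e)$ (\cref{prop:representation}, \cref{linearcor}), combined with region-by-region difference estimates showing $\phi-\phiNl$ is bounded when $q_0=0$ (\cref{RS.diff.prop}--\cref{LB.prop}), and the reverse triangle inequality. Your identification of $q_0=0$ as the feature that removes the sublinear Maxwell phase $\sbr$, and of $s>\frac34$ as what makes the difference sources integrable, matches the paper's argument exactly.
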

	
	\subsection{\texorpdfstring{\cref{corollary.conj}}{Theorem II}: Falsification of \texorpdfstring{$C^0$-formulation}{C0-formulation} of Strong Cosmic Censorship if \texorpdfstring{\cref{decay.conj}}{Conjecture 2} is true}\label{sec:c0formulation}
	We now give the precise version of \cref{corollary.conj} which is proved as \cref{cor:proofofthmb} in \cref{computation.proof.section}.
	\begin{theob}
		Let the assumptions of \cref{CH.stab.thm} hold. Additionally assume that    \cref{decay.conj} is true, i.e.\ $\phiH$  is given by \eqref{conj.decay.1.eq} (if $q_0=0$, $m^2 > 0$), \eqref{conj.decay.2.eq} (if $q_0\neq0$, $m^2 =0$), or \eqref{conj.decay.3.eq} (if $q_0\neq 0$, $m^2 > 0$) in the $v$-coordinate defined by \eqref{gauge1} and that the generic condition  $ |q_0 e| \neq r_-(M,e) |m|$ holds.
		
		Then  $|\phi|$, $Q$ and the metric $g$ admit a continuous extension to $\CH$ and the extension of $g$ can be chosen to be $C^0$-admissible.

		In the above sense, assuming that \cref{decay.conj} is true, then \cref{C0SCC} is false  for the Einstein--Maxwell--Klein--Gordon system in spherical symmetry.
	\end{theob}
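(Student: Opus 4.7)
The strategy is to verify that each of the three conjectured profiles \eqref{conj.decay.1.eq}, \eqref{conj.decay.2.eq}, \eqref{conj.decay.3.eq} for $\phiH$ belongs to the class $\OOpp$ of \cref{defn:strongoscilcond}, so that the quantitative part of \cref{main.theorem} applies and produces continuous extensions of $|\phi|$, $Q$, and a $C^0$-admissible extension of $g$ across $\CH$. The one case of Conjecture 2 that is not covered by \eqref{s.def}, namely \eqref{conj.decay.2.eq} with $|q_0 e|<\tfrac12$ (where $\delta\in(0,1)$ is real and $\phiH$ decays at the integrable rate $1+\delta>1$), already falls in the fast-decay regime covered by the unnumbered theorem of \cite{Moi} recalled in \cref{W11.intro.section.}, and requires no further work.

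\paragraph{Reduction to a non-stationary phase integral.} Working in the gauge $A_v\equiv 0$ fixed in \eqref{GaugeAv}, the factor $e^{i q_0\int_{v_0}^{v'}(A_v)_{|\HH}\d v''}$ appearing in \eqref{eq:strong.oscillation.condition} is equal to $1$, so one has to estimate
\begin{equation*}
I(v):=\int_{v}^{+\infty}\phiH(v')\,e^{i(\omer v'+q_0\sbr(v'))}\d v'.
\end{equation*}
Using the identity $\omer+\omega_+=\omega_-$ to combine $e^{i\omer v'}$ with the Coulomb phase $e^{i q_0 e v'/r_+}=e^{i\omega_+ v'}$ present in \eqref{conj.decay.2.eq} and \eqref{conj.decay.3.eq}, and expanding the $\sin$ factor in \eqref{conj.decay.1.eq} and \eqref{conj.decay.3.eq} via $\sin\theta=\tfrac{1}{2i}(e^{i\theta}-e^{-i\theta})$, each case reduces to a finite sum of integrals of the form $\int_v^{+\infty} a(v')\,e^{i\Phi(v')}v'^{-s}\d v'$ with $a$ bounded and principal phase
\begin{align*}
\text{(1)}\quad & \Phi(v')=\pm(mv'+\omega_{err}(v')),\qquad s=\tfrac{5}{6},\\
\text{(2)}\quad & \Phi(v')=\omega_-v'-\Im(\delta)\log v'+q_0\sbr(v'),\qquad s=1,\\
\text{(3)}\quad & \Phi(v')=(\omega_-\pm m)v'+\omega_{err}(v')+q_0\sbr(v'),\qquad s=\tfrac{5}{6},
\end{align*}
plus fast-decaying remainders coming from $\phi_{err}$, which are trivially controlled.

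\paragraph{Non-stationary phase bound.} In each case the leading linear coefficient of $\Phi'(v')$ is nonzero: it is $\pm m$ in (1), it is $\omega_-=q_0 e/r_-\neq 0$ in (2), and it is $\omega_-\pm m$ in (3), which is nonzero precisely by the generic non-resonance hypothesis $|q_0e|\neq r_-|m|$ asserted in the theorem. Combined with $\omega_{err}'(v')=O(v'^{-2/3})$ and with the bounds $|\sbr'|+|\sbr''|\leq D_{br} v'^{1-2s}=o(1)$ from \eqref{sigma_err2}, this yields a constant $c=c(M,e,m^2,q_0,D_{br})>0$ with $|\Phi'(v')|\geq c$ and $|\Phi''(v')|=o(1)$ for $v'\geq v_0$ sufficiently large. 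A single integration by parts then gives
\begin{equation*}
\bigl|I(v)\bigr|\ \lesssim\ \frac{v^{-s}}{c}\ +\ \int_v^{+\infty}\Bigl(\frac{s\,v'^{-s-1}}{c}+\frac{|\Phi''(v')|\,v'^{-s}}{c^2}\Bigr)\d v'\ \lesssim\ v^{-s},
\end{equation*}
uniformly in $\sbr$ satisfying \eqref{sigma_err1}--\eqref{sigma_err2}. Since $-s\leq s-1-\eta_0$ for any $\eta_0\in(0,2s-1]$ and $2s-1>0$ for $s\in\{\tfrac{5}{6},1\}$, this establishes \eqref{eq:strong.oscillation.condition} with some positive $\eta_0(D_{br})$ and $E_{\OOpp}(D_{br})$, i.e.\ $\phiH\in\OOpp$. \cref{main.theorem}(iii) then concludes the proof, and the existence of a $C^0$-admissible extension of the Lorentzian metric contradicts \cref{C0SCC}.

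\paragraph{Main obstacle.} The core difficulty is not the decay estimate itself but the requirement that the bound \eqref{eq:strong.oscillation.condition} be uniform in the \emph{a priori unknown} sublinear phase correction $q_0\sbr(v')$ induced by the Maxwell backreaction, which is allowed to grow as $v^{2-2s}$ by \eqref{sigma_err1}. The resolution is that only $\sbr'$, not $\sbr$ itself, enters $\Phi'$, and \eqref{sigma_err2} forces $\sbr'=o(1)$; the leading nonzero frequency—automatic in (1)--(2) and secured by $|q_0e|\neq r_-|m|$ in (3)—therefore cannot be shifted into resonance by the backreaction. This is exactly the mechanism by which a \emph{linearly} non-resonant event-horizon profile remains \emph{nonlinearly} non-resonant, allowing Theorem A to close.
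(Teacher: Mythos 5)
Your approach is essentially the paper's: the paper also reduces \cref{corollary.conj} to showing that each conjectured profile lies in $\OOpp$ (via a general integration-by-parts lemma, \cref{lemma.example}, applied case by case in \cref{prop.specific.profiles}, with the leading frequency kept away from $\omer$ by $m\neq 0$, by $q_0e\neq 0$, and by $|q_0e|\neq r_-|m|$ respectively, and with the $|q_0e|<\tfrac12$ massless case dispatched by fast decay), and then invokes the quantitative part of \cref{main.theorem}. Your identification of the relevant frequencies and of the mechanism ($\sbr'=o(1)$ cannot shift a non-resonant frequency into resonance) matches the paper exactly.

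One step as written is not correct, though it is repairable with ingredients you already have on the page. In your final display you bound the remainder of the integration by parts by
$\int_v^{+\infty}|\Phi''(v')|\,v'^{-s}\,\d v'\lesssim v^{-s}$
using only $|\Phi''|=o(1)$; with that information alone the integral need not even converge for $s\leq 1$. You must insert the quantitative bounds $|\sbr''|\leq D_{br}\,v'^{1-2s}$ from \eqref{sigma_err2} and $|\omega_{err}''|\lesssim v'^{-5/3}$, which give $\int_v^{+\infty}v'^{1-3s}\,\d v'\sim v^{2-3s}$ (finite since $s>\tfrac23$). For $s=\tfrac56$ this is $v^{-1/2}$, which is \emph{worse} than $v^{-s}$, so the asserted intermediate bound $|I(v)|\lesssim v^{-s}$ is false; what survives is $|I(v)|\lesssim v^{-s}+v^{2-3s}\leq v^{s-1-\eta_0}$ for any $0<\eta_0\leq 4s-3$, which is positive precisely because $s>\tfrac34$. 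This is exactly the exponent bookkeeping carried out in the paper's \cref{lemma.example}, and it is where the hypothesis $s>\tfrac34$ genuinely enters; with that correction your argument closes and yields $\phiH\in\OOpp$, hence the conclusion via \cref{main.theorem}.
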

	\subsection{\texorpdfstring{\cref{W11.thm.intro}}{Theorem III}: \texorpdfstring{$W^{1,1}$ blow-up}{W11 blow-up} of the scalar field for non-integrable data}
	\label{sec:w11blowup}
	In this section we give the precise version of  \cref{W11.thm.intro} which is proved  in \cref{W11.blowup.section}. 
	To state the theorem we first define the set \begin{align}  \label{eq:defnofz}\mathcal{Z}_{ \mathfrak{t}}(M,e,q_0,m^2):=\{\omega \in \RR \colon \mathfrak{t}(\omega,M,e,q_0,m^2)=0\} \subset \mathbb R \end{align} which is the zero set of the renormalized transmission coefficient $\mathfrak{t}(\omega)$  defined in \eqref{def.t}. At this point we already note that $\mathcal Z_{\mathfrak t}(M,e,q_0,m^2)$ is discrete and, depending on the parameters $(M,e,q_0,m^2)$, possibly empty. For small $\delta>0 $ we also define the smeared out set $\mathcal Z^\delta_{\mathfrak t}(M,e,q_0,m^2) \subset \mathbb R $ as the set of all $\omega \in \mathbb R $ with $\operatorname{dist}(\omega,\mathcal Z_{\mathfrak t}(M,e,q_0,m^2)) < \delta $. We remark that $\mathcal Z^\delta_{\mathfrak t}(M,e,q_0,m^2) = \emptyset$ if $\mathcal Z_{\mathfrak t}(M,e,q_0,m^2)=\emptyset$.
	
	Associated to $\mathcal Z_{\mathfrak t}^\delta(M,e,q_0,m^2)$ we now define a family (parametrized by $\delta>0$) of Fourier projection operators $P_\delta\colon   f \in L^2([v_0,+\infty)) \mapsto \mathcal{F}^{-1}[  \chi_\delta  \mathcal F[\tilde{f}]] \in L^2(\RR)$, where $\tilde{f} \in L^2(\RR)$ is the extension of $f$ by the zero function for $v<v_0$. Here, $ \chi_\delta(\omega)$ is a family (parametrized by $\delta>0$) of smooth functions which are positive on  $  \mathcal Z^\delta_{\mathfrak t}(M,e,q_0,m^2) $ and vanish otherwise. In the case where $\mathcal Z^\delta_{\mathfrak t}(M,e,q_0,m^2)=\emptyset$, also  $ \chi_\delta \equiv 0$. Further, for the Fourier transform, we use the convention $\mathcal F[\tilde f](\omega) = \frac{1}{\sqrt{2\pi}}\int_{\mathbb R} \tilde f(v) e^{i \omega v} \d v$. 
	Finally, we are in the position to state \cref{W11.main.thm} which is proved in \cref{W11.blowup.section}. The first part is shown as \cref{W11.OO.cor}, the second part is shown as \cref{lem:w11blowupofconjrate}.
	\begin{theoc} \label{W11.main.thm} Let the assumptions of \cref{CH.stab.thm} hold.
		
		\textbf{Part 1.} Let $\phiH \in \Sl - L^1([v_0,+\infty))$ and let at least one of the follow assumptions hold: 
		\begin{enumerate}[a)]
			\item  $P_{\delta} \phiH \in L^1(\RR)$ for some $\delta >0$,
			\item or $0 <  |q_0 e|\leq \epsilon(M,e,m^2)$ for some $ \epsilon(M,e,m^2)>0$ sufficiently small or $q_0=0,\ m^2\notin D(M,e)$.
		\end{enumerate} 
		Then, the scalar field $\phi$ blows up in $W^{1,1}$ along outgoing cones at the Cauchy horizon $\CH$ in the sense that for all $u \leq u_s$  \begin{equation} \label{blowup}
		\int_{v_0}^{+\infty} |D_v \phi|(u,v) \d v = +\infty.
		\end{equation} 
		In particular, for any $q_0 \in \RR$ and $m^2 \geq 0$, the set $H$ of data $\phiH \in \Sl$ for which \eqref{blowup} is not  satisfied for all $u \leq u_s$ is exceptional in the sense that $H= H_0 \cap \Sl$, where $ H_0\subset \Sl_0 $ is a subspace of infinite co-dimension within $\Sl_0$ (recall the definition of $\Sl_0$  from \eqref{eq:defnsl0}). In the above sense, $\Sl-H$ is a generic set and thus $W^{1,1}$-blow-up of the scalar field at the Cauchy horizon $\CH$ is a generic property of the data $\phiH \in \Sl$.
		
		\textbf{Part 2.} Assume that $\phiH$  is given by \eqref{conj.decay.1.eq} (if $q_0=0$, $m^2 > 0$), \eqref{conj.decay.2.eq} (if $q_0\neq0$, $m^2 =0$), or \eqref{conj.decay.3.eq} (if $q_0\neq 0$, $m^2 > 0$) in the $v$-coordinate defined by \eqref{gauge1}.  Assume the conditions  \begin{align}\label{set.condition}
		    \mathcal Z_{\mathfrak t}\cap \varTheta  = \emptyset
		    \end{align} 
		    and
		    \begin{align}(m^2,|q_0 e|)\notin \{0\} \times [0,\frac 12), \end{align}
		where  $\mathcal Z_{\mathfrak t}(M,e,q_0,m^2)$ is defined in \eqref{eq:defnofz} and where
		\begin{align*}
		\varTheta(M,e,q_0,m^2): =\begin{cases}\{ -m, +m \} & \text{if }  q_0=0, m^2\neq 0,\\
		\{ - \frac{q_0 e }{r_+}\}& \text{if } |q_0 e| \geq \frac 12,  m^2=0 ,\\
		\{ -m - \frac{q_0 e}{r_+},  m - \frac{q_0 e}{r_+} \} & \text{if } q_0\neq 0, m^2\neq 0.
		\end{cases} 
		\end{align*}
		Then, the scalar field $\phi$ blows up in $W^{1,1}$ along outgoing cones at the Cauchy horizon $\CH$ i.e.\ \eqref{blowup} holds for all $u \leq u_s$.
		
		Moreover, \eqref{set.condition} is satisfied generically in the sense that for given parameters $m^2\geq 0 ,\ q_0\in \mathbb R$, with   $m^2\neq q_0^2$ , the condition \eqref{set.condition} is satisfied for     $$(M,e)\in \{ (M,e) \in \RR^2,\ 0<|e|<M\}-E_{m^2,q_0}, $$  where $E_{m^2,q_0}\subset \RR^2$ is the zero set of an analytic function.

		In particular, for fixed  $m^2\geq 0 ,q_0\in \mathbb R$  with $ m^2 \neq q_0^2$ and $(m^2,|q_0 e|)\notin \{0\} \times [0,\frac 12)$ and for almost all parameters  $$(M,e)\in \{ (M,e) \in \RR^2,\ 0<|e|<M\},$$  assuming  $\phiH $ is as above, then \eqref{blowup}  holds for all $u \leq u_s$. 
		
	\end{theoc}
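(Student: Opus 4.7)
The plan is to reduce the $W^{1,1}$ question to an analogous statement for the linear wave equation on the fixed Reissner--Nordstr\"om interior, and to extract blow-up through a Fourier/scattering analysis centered on the transmission coefficient $\mathfrak t(\omega)$. By \cref{CH.stab.thm} the spacetime is known to exist with $\CH\neq\emptyset$ on $\QQ^+$, together with the quantitative nonlinear decay estimates of \cref{prop.HH.estimates.Moi}--\cref{prop.LB.estimates.Moi}. First I would couple these bounds with the nonlinear difference estimates developed in \cref{difference.estimates.section} to compare $\phi$ with the linear solution $\phiNl$ of $g_{RN}^{\mu\nu}D^{RN}_\mu D^{RN}_\nu\phiNl=m^2\phiNl$ on the fixed Reissner--Nordstr\"om background launched from the \emph{same} event-horizon data $\phiH$. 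The strict inequality $s>\tfrac 34$ is exactly what lets the cubic Klein--Gordon term and the sublinear Maxwell backreaction phase $\sigma_{br}$ produce $L^1_v$-integrable corrections, so that up to a bounded electromagnetic phase one has $|D_v\phi|(u,\cdot)-|\partial_v\phiNl|(u,\cdot)\in L^1_v$ on every outgoing cone $\{u=\mathrm{const}\}\subset\LB$. Hence \eqref{blowup} for the nonlinear $\phi$ is equivalent to the same statement for the linear $\phiNl$.

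Next I would perform the linear analysis on Reissner--Nordstr\"om using the Killing field $T=\partial_t$ and the $t$-Fourier transform developed in \cref{linearsection}. Extending $\phiH$ by zero for $v<v_0$, the scattering relation expresses the trace of $\partial_v\phiNl$ on any outgoing cone near $\CH$ as an inverse Fourier transform schematically of the form
\begin{equation*}
\partial_v\phiNl(u,v)\;=\;\frac{1}{\sqrt{2\pi}}\int_{\RR} \mathfrak t(\omega)\,\widehat{\phiH}(\omega)\,e^{-i\omega v}\,K_u(\omega)\,\d\omega\;+\;R(u,v),
\end{equation*}
where $K_u$ is a smooth bounded multiplier and $R(u,\cdot)\in L^1_v$. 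If $\partial_v\phiNl(u,\cdot)\in L^1_v$ for some $u$, then by Riemann--Lebesgue the product $\mathfrak t(\omega)\widehat{\phiH}(\omega)$ is continuous and vanishes at infinity. Combined with the hypothesis $\phiH\notin L^1$, which forces $\widehat{\phiH}$ to be genuinely rough, this is what will produce a contradiction in each case below.

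For Part 1 case (a) the assumption $P_\delta\phiH\in L^1$ absorbs precisely the dangerous frequencies lying in the $\delta$-neighborhood of the discrete set $\mathcal Z_{\mathfrak t}$. Outside $\mathcal Z_{\mathfrak t}^\delta$ the transmission coefficient $\mathfrak t$ is bounded from below, so the regularity extracted from the Riemann--Lebesgue argument above would give $\widehat{\phiH}\in L^\infty(\RR\setminus\mathcal Z_{\mathfrak t}^\delta)$ with tame decay and eventually $\phiH\in L^1$, contradicting the hypothesis. For Part 1 case (b), the scattering results of \cite{Kehle2018} yield $\mathcal Z_{\mathfrak t}=\emptyset$ either for $|q_0e|$ small or for $q_0=0$ with $m^2\notin D(M,e)$; the spectral cutoff is vacuous and the previous argument applies without any projection. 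The genericity statement follows because $H_0$ is (a continuous linear preimage of) the space of functions whose Fourier transform is supported on the discrete set $\mathcal Z_{\mathfrak t}$, which is of infinite codimension in $\Sl_0$.

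For Part 2 I would first verify by stationary phase / repeated integration by parts that the conjectural profiles \eqref{conj.decay.1.eq}--\eqref{conj.decay.3.eq} have $\widehat{\phiH}$ concentrated near the explicit finite set $\varTheta$ and lie in $L^1$ after Fourier projection onto any region separated from $\varTheta$; under $\mathcal Z_{\mathfrak t}\cap\varTheta=\emptyset$ this yields $P_\delta\phiH\in L^1$ for $\delta$ small, so Part 1(a) applies. For the measure-theoretic claim I would argue that $(M,e,q_0,m^2,\omega)\mapsto \mathfrak t(\omega,M,e,q_0,m^2)$ is jointly real-analytic, so its evaluation at the finitely many prescribed frequencies of $\varTheta$ is a nontrivial real-analytic function on $\{0<|e|<M\}\times\RR\times\RR_{\geq 0}$, and its zero set $E$ therefore has Lebesgue measure zero. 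The hard part will be \emph{not} the linear analysis or the genericity count, but the reduction in Step 1: in the non-integrable regime $s\in(\tfrac34,1]$ the Klein--Gordon self-interaction and the modified electromagnetic gauge produced by $\sigma_{br}$ are exactly borderline, and only the sharp algebraic structure of the EMKG nonlinearity---captured via the auxiliary quantity $\Upsilon$ of \cref{C0.ext.method.intro}---allows these contributions to be controlled in $W^{1,1}_v$ without degrading into $\phiNl$ itself. This is precisely where the mixed physical-space/Fourier-space approach of the paper becomes indispensable.
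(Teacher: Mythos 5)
Your overall architecture---reduce the nonlinear $W^{1,1}$ blow-up to the linear one via the difference estimates, represent $\partial_v(r\phiNl)$ near $\CH$ as a Fourier multiplier with symbol $\mathfrak t(\omega)$ acting on $\phiH$, and handle the zero set $\mathcal Z_{\mathfrak t}$ with the projection $P_\delta$---is the same as the paper's. But the mechanism by which you extract a contradiction from $\phiH\notin L^1$ has a genuine gap. You argue: if $\partial_v\phiNl(u,\cdot)\in L^1_v$, then by Riemann--Lebesgue $\mathfrak t(\omega)\mathcal F[\phiH](\omega)$ is continuous and vanishes at infinity; since $\mathfrak t$ is bounded below off $\mathcal Z^\delta_{\mathfrak t}$, this gives $\mathcal F[\phiH]\in L^\infty$ ``with tame decay and eventually $\phiH\in L^1$.'' That last implication is false: membership of $\mathcal F[\phiH]$ in $C_0$ does not imply $\phiH\in L^1$, and conversely $\phiH\notin L^1$ does not force $\mathcal F[\phiH]$ to be ``genuinely rough'' (indeed $\phiH\in\Sl\subset L^2$, so its Fourier transform is a perfectly ordinary $L^2$ function). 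Likewise, a symbol that is merely bounded and bounded below need not be invertible as an $L^1$-Fourier multiplier (consider $\textup{sgn}(\omega)$, whose associated operator, the Hilbert transform, is unbounded on $L^1$). The step you are missing is the one the paper carries out in Part \textbf{E} of \cref{prop:representation}: one must show that $\mathcal F^{-1}[1/\mathfrak t_{\omer}]\in L^1$, which follows from the quantitative bounds $|1/\mathfrak t_{\omer}|\lesssim\langle\omega\rangle^{-1}$ together with the derivative estimates of \cref{lem:estimatesonderiativesofrandt} (giving $1/\mathfrak t_{\omer}$ and $\partial_\omega(1/\mathfrak t_{\omer})$ in $L^2$), and then apply Young's convolution inequality to obtain $\|\phiH\|_{L^1}\lesssim \|T_{\mathfrak t}\phiH\|_{L^1}+E_1[\phiH]$; near the zeros one splits with the cutoff $\chi_\delta$ and uses the inverse triangle inequality. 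Your Riemann--Lebesgue argument must be replaced by this (or an equivalent) multiplier estimate; the rest of Part 1 and the treatment of Part 2 via $P_\delta\phiH\in L^1$ and real-analyticity in the parameters then go through essentially as in the paper.

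A secondary misattribution: your closing claim that the reduction to the linear problem is the hard step and hinges on the quantity $\Upsilon$ is not accurate. $\Upsilon$ enters only in the proof of continuous extendibility of the metric (\cref{coordinatesphiboundedprop}) and plays no role in \cref{W11.main.thm}; the reduction actually used is \cref{W11.OO.prop} together with \eqref{diff4.LB}, whose right-hand side $v^{1-3s}$ is integrable as soon as $s>\frac 23$, so for $s>\frac 34$ this step is not borderline. The delicate part of the theorem is precisely the linear multiplier analysis described above.
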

	\begin{rmk}
		Note that \eqref{blowup} also implies the blow-up of the spacetime $W^{1,1}$ norm in $(u,v)$ coordinates, i.e.\ for  all $u_1< u_2 \leq u_s$ 
		\begin{align*} \int_{u_1}^{u_2} \int_{v_0}^{+\infty} |D_v \phi|(u,v) \d v \d u = +\infty.
		\end{align*}
	\end{rmk}
	~

	The precise formulation and the proof of \cref{null.contraction.theorem} will be given in our companion paper \cite{MoiChristoph2}.
	\subsection{Outline of the proofs}\label{Strategy.section.bulk}
	
	In this section, we elaborate on Step \ref{step1}, Step \ref{Fourier}, Step \ref{diff.step}, Step \ref{C0step} originally presented in  \cref{Strategy.section}. The reader may wish to come back to the current section while consulting the proofs given in \cref{linearsection} and \cref{nonlinearsection}. For convenience, we will conclude this section with a guide for the reader, see \cref{guide.section}.
	\subsubsection{A first approach in physical space and the difficulties associated to slow decay (Step \ref{step1})} \label{physical.space.intro.section}
	\paragraph{Physical space estimates for the nonlinear problem.} \cref{CH.stab.thm} proving  $\CH \neq \emptyset$ also comes with many quantitative stability estimates (see   \cref{recall.section}) for the nonlinear problem \eqref{E1}-\eqref{E5} under the assumption of slowly decaying $\phiH$ satisfying \eqref{decay.s} on $\HH$ (not only for  $s>\frac{3}{4}$ but also $s>\frac{1}{2}$). These estimates  already proven in \cite{Moi} will be our starting point in \cref{nonlinearsection}. Although these estimates are sharp, they are however not sufficient to prove the boundedness of $\phi$ in amplitude, in view of the slow decay obstruction if $s \leq 1$ as we shall explain below.
	To illustrate our point, we start with one of the main estimates\footnote{The main difficulty in obtaining \eqref{Dvbounded.intro} is  nonlinear in nature: its proof in \cite{Moi} exploits the structure of the Einstein equations to address the delicate issue of controlling the metric for a slow rate $s\leq1$. In contrast, the null condition suffices if $s>1$.} obtained by physical space methods in \cite{Moi}: \begin{equation} \label{Dvbounded.intro}
	|D_v \phi|(u,v) \ls v^{-s}.
	\end{equation}
	\paragraph{Boundedness/continuous extendibility in the integrable case.} In the integrable case $s>1$, integrating \eqref{Dvbounded.intro} gives immediately boundedness  \begin{equation}\label{Dvbounded.intro2}
	\| \phi \|_{L^{\infty}} \lesssim data+  \sup_{u}\| D_v\phi(u,\cdot ) \|_{L^{1}_v} \lesssim   data+   \| \langle v \rangle^{-s} \|_{L^{1}_v} < +\infty,
	\end{equation}
	and also gives the $W^{1,1}$-extendibility of the metric (i.e.\ locally integrable Christoffel symbols). From the estimates giving the $W^{1,1}$-extendibility of the metric, one can immediately deduce the continuous extendibility of the metric (see the discussion in \cref{C0.ext.method.intro}). All the  known previous proofs of continuous extendibility of the metric indeed proceed via this method \cite{JonathanStab,MihalisPHD,KerrStab}.

	\paragraph{Slow decay obstruction in the non-integrable case.} 
	In present paper we however have to deal with the non-integrable case $ s\leq 1$, where we note that the above method fails as $\|  \langle v \rangle ^{-s}  \|_{L^{1}_v}$ (the RHS of \eqref{Dvbounded.intro2}) is infinite, even suggesting that the LHS $\| D_v\phi(u,\cdot ) \|_{L^{1}_v}$ could be infinite as well. Indeed, we  prove   \emph{blow-up} of $\| D_v\phi(u,\cdot ) \|_{L^{1}_v}$ (the so-called $W^{1,1}$ norm on outgoing cones)  for generic data $\phiH \in \Sl$  (\cref{W11.thm.intro}, see \cref{C0.ext.method.intro} for a description of its proof), which illustrates the obstruction to proving boundedness by the standard method previously used in the $s>1$ case.
	
	\paragraph{Summary of the rate numerology.} To summarize, square-integrable decay (i.e.\ \eqref{polynomialdecay} with $s>\frac{1}{2}$) is sufficient to show that the black hole boundary admits a null component $\CH$ (the Cauchy horizon) by \cref{CH.stab.thm}, but is in general insufficient for $W^{1,1}$ extendibility and boundedness of the matter fields and metric coefficients (for which integrable decay, i.e.\  \eqref{polynomialdecay} with $s>1$, is sufficient).  In the rest of the section, we explain how to deal with the broader range $\frac{3}{4} < s \leq 1$ ($s>\frac{3}{4}$ is important for the new nonlinear estimates, see already \cref{C0.ext.method.intro} and \cref{s>3/4}).

	\paragraph{An ingoing derivative estimate.} Yet another particularity of the non-integrable case $s \leq 1$ is that $| D_u \phi|$ \textit{may potentially} blow up in amplitude at the Cauchy horizon \cite{Moi} (there are indeed known examples for which $|D_u\phi|$ blows up, see \cite{Moi4}).
	Nevertheless, assuming $s>\frac{3}{4}$, we show that $D_u(r\phi)$ is \textit{uniformly bounded} (\cref{Dupsi.prop}), although not integrable  i.e.\ we prove that for all $\phiH$ satisfying \eqref{polynomialdecay}: \begin{equation} \label{Dupsi.est.intro}
	|D_u(r\phi)|(u,v) \ls |u|^{-s}.
	\end{equation} Note that, consistently with our result that $|\phi|$ blows up for some data, \eqref{Dupsi.est.intro} cannot be integrated in $u$.
	
	\paragraph{Compensate the failure of integrability with oscillations.} Slow decay of the data, as we explained, leads to a lack of integrability of the metric and fields {derivatives} which are roughly of the form, for $ \frac 34 < s\leq 1$ \begin{equation} \label{non.int.est.intro}
	|D_v \phi| \approx v^{-s},
	\end{equation}  which is not integrable as $v\rightarrow+\infty$ (i.e.\ towards the Cauchy horizon $\CH$). Nevertheless, boundedness of $\phi$ could be obtained \textit{by means of the oscillations},   i.e.\ if we could propagate an estimate of the form   \begin{equation} \label{osc.est.intro}
	D_v \phi \approx e^{i \omega v}  \cdot v^{-s},
	\end{equation}  for some $\omega \in \RR-\{0\}$. However, the propagation of such oscillations, if present on the event horizon characteristic data $\phiH$ requires further estimates in Fourier space that we introduce in the following section.

	\subsubsection{The linear problem (Step \ref{Fourier})} \label{linear.intro}
	
	In this section, we discuss how to prove   boundedness or blow-up of $\phiNl$ solving the linearized equation. This step corresponds to the proof of our main linear result \cref{prop:representation} in \cref{linearsection}.
	\paragraph{Representation formula using the Fourier transform.} For the linear (charged massive) wave equation  $ g_{RN}^{\mu \nu} D^{RN}_{\mu} D^{RN}_{\nu} \phiNl= m^2 \phiNl$  on a fixed subextremal Reissner--Nordstr\"{o}m interior metric \eqref{RN}, the physical space estimates of \cref{physical.space.intro.section} also apply, but a Fourier approach is also possible, taking advantage of the Killing vector field $\partial_t$. 
	Taking the Fourier transform in $t$, the wave equation then reduces to the so-called radial o.d.e.\ (see already \eqref{eq:radialode}). Using this,  we will view aspects of the interior propagation from the event horizon to the Cauchy horizon as a scattering problem mapping data on the event horizon to their evolution restricted to the Cauchy horizon, c.f.\ \cite{Kehle2018,Kehle2020}. 
	Formally, we have, 	in a suitable regular electromagnetic gauge at the Cauchy horizon: 
	\begin{align} \nonumber
	\phiNl \restriction_{\CH} (u)  =  & \frac{r_+}{\sqrt{2\pi} r_-} \textup{p.v.} \int_{\mathbb R} \frac{\mathfrak r(\omega) }{\omega - \omer} \mathcal{F}[\phiH](\omega) e^{i (\omega-\omer) u} \d \omega \\ &+  \lim_{v\to\infty}  \frac{r_+}{\sqrt{2\pi} r_-} \textup{p.v.} \int_{\mathbb R} \frac{\mathfrak t(\omega) }{\omega - \omer} \mathcal{F}[\phiH](\omega) e^{-i (\omega-\omer) v} \d \omega  +  \text{Error},\label{eq:scatteringoperator}
	\end{align}
	where $\text{Error}$ is uniformly bounded by the energy of $\phiH$ along the event horizon $\HH$ and $\omer(M,e,q_0)$ is   as in \eqref{omer}.
	Here, $\mathfrak r(\omega)$ and $\mathfrak t(\omega)$ are  the (renormalized) scattering coefficients (see already \cref{defn:trans}).
	
	Using that  $\mathcal{F}\left[ \textup{p.v.}(\frac{1}{x}) \right] = i \pi \textup{sgn} $  and $\mathfrak t(  \omer) = -\mathfrak r ( \omer) $ (see already \eqref{eq:tomegromegr}) we formally obtain
	\begin{align}\label{eq:representationformulainintro}
	\phiNl \restriction_{\CH} (u) = \frac{\sqrt{2\pi} i r_+}{r_-} \mathfrak r ( \omer) \lim_{v\to\infty} \int_{-u}^v \phiH(\tilde v ) e^{i\omer \tilde v} \d \tilde v + \textup{Error}.	\end{align}
	Note that $\mathfrak r(\omega)$ is  real-analytic and in the charged case when $\omer \neq 0$, then always $\mathfrak r(\omega=\omer) \neq 0$. In this charged case, the formal scattering operator \eqref{eq:scatteringoperator} has a resonance at $\omega = \omer$.  However, in the uncharged case $q_0 = \omer = 0$, there exists a discrete set of non-resonant masses $m^2 \in D(M,e)$ (particularly $0 \in D(M,e) $), such that $\mathfrak  r(\omega  = \omer= 0) =0$   for $m^2 \in D(M,e)$ as shown in \cite{Kehle2018}.  
	In that case the scattering pole is absent and this can be seen as a key observation towards the $T$-energy scattering theory on the interior of Reissner--Nordström for the uncharged massless wave equation developed in \cite{Kehle2018}. However, it is shown in \cite{Kehle2018} that for generic masses $m^2 \in \mathbb R_{>0} - D(M,e) $, the resonance is present and scattering fails.
	\paragraph{A sharp condition for boundedness or blow-up at the Cauchy horizon.}	
	Restricting to parameters  $q_0  \neq 0$ or $q_0 =0,\ m^2 \in \mathbb R_{>0}- D(M,e)$, the resonance is present and from the formal computation and  \eqref{eq:representationformulainintro} we   read off that $|\phiNl| \leq C$ if the data $\phiH$ satisfy $\phiH \in L^1_v$ (in addition to having finite energy to control the error terms). Thus, in particular for fast decaying data (i.e.\ $\phiH$ satisfies \eqref{polynomialdecay} for $s>1$), we formally obtain uniform boundedness of $\phiNl$ at the Cauchy horizon.

	For general $\phiH \in \Sl-L^1$, the above reasoning does not hold, and blow-up in amplitude is possible. For concreteness, first consider the uncharged and massive case $q_0=0$, $m^2 \notin D(M,e)$. Then, $\omer =0$ and, as we will show,  $\phiNl$ is uniformly bounded at the Cauchy horizon if and only if $\phiH$ satisfies 
	\begin{equation}  \label{mean}
	\sup_{v\in [v_0,+\infty) }\left| \int_{v_0}^v \phiH(v') \d v' \right|<+\infty.
	\end{equation}   
	For instance, \eqref{mean} gives boundedness of $\phiNl$ for data  $\phiH$ of the form  \begin{equation}  \label{example.intro}
	\phiH(v) \approx e^{-i  \omega v}  \cdot v^{-s},
	\end{equation} where we recall $\frac{3}{4}<s\leq 1$,  provided $\omega \in \RR-\{0\} $: in this case, $\phiH$ obeys the quantitative oscillation condition $\phiH \in \OOpp$ as defined in \cref{defn:strongoscilcond}. If, however, $\omega=0$  then $\phiH$ violates the oscillation condition i.e.\  $\phiH \in \NO = \Sl - \OO$, and thus, $|\phiNl|$ blows up at the Cauchy horizon $\CH$ in view of  \eqref{mean} (still assuming $q_0=0$).  
	
	In the charged case $q_0 \neq 0$, the resonance is always present and  uniform boundedness of $\phiNl$ at the Cauchy horizon is true for profiles satisfying the oscillation condition $\phiH\in \OO$, e.g. profiles of the form
	\begin{equation}  \label{example.intro2}
	\phiH \approx e^{-i (\omega+\omer)\cdot v}  \cdot v^{-s}
	\end{equation}
	where $\frac{3}{4}<s\leq 1$,  provided $\omega \in \RR-\{0\} $. If however $\omega=0$ then $|\phiNl|$ blows up at the Cauchy horizon $\CH$. We refer to \cref{linearcor} for a precise statement of the results of this paragraph.

	\paragraph{Improved decay for $ \phiH \in \OOpp$ to obtain the boundedness of the Maxwell field.} Note that for the nonlinear EMKG system \eqref{E1}--\eqref{E5}, the charge $Q(u,v)$ from 
	\eqref{eq:Fnonlinear} is a dynamical quantity (assuming $q_0\neq 0$) that is nonlinearly coupled to $\phi$ and $g$, hence the boundedness of $Q$ is not guaranteed. Proving the boundedness of $Q$ in amplitude indeed requires to establish further decay estimates proved in  \cref{linearcor}, part~\ref{part3}, whose proof we now outline.
	In the case where $\phiH$ satisfies the   quantitative oscillation condition, i.e. $\phiH \in \OOpp$, the main term in \eqref{eq:representationformulainintro} enjoys  decay in $|u|$ as $u\to -\infty$ (corresponding to $i^+$ in \cref{fig:cauchyhorizonexists}). In particular, for $\phiH \in \OOpp$ we  will show (see already \cref{prop:representation} Part \textbf{B}) the quantitative control 
	\begin{equation} \label{u.decay.intro2}
	|\phiNl|(u,v)
	\ls |u|^{-1+s-\eta_0}\end{equation}
	for some $\eta_0 >0$. This (linear) quantitative estimate will  be later useful to the boundedness proof of $Q$ in the coupled case (see already \cref{C0.ext.method.intro}).
	
	\paragraph{Towards the  $W^{1,1}$-inextendibility.}
	To illustrate the obstruction caused by slow decay explained in  \cref{physical.space.intro.section}, we show in \cref{W11.main.thm} that $\phi$ does not have locally outgoing integrable derivatives near the Cauchy horizon, i.e.\ $\int |D_v \phi|(u,v) \d v =+ \infty$ for all $u$, consistently with the expectation given by \eqref{non.int.est.intro}. This blow-up in $W^{1,1}$ norm on outgoing cones justifies that, in the case where $\phi$ remains bounded, the reason is oscillation and not decay.
	
	To show the $W^{1,1}$ blow-up in linear theory (see \cref{linear.cor.W11}), we prove a representation formula for $\partial_v \phiNl (u_0,v)$ (see already \eqref{eq:formulaforpartialvphi}) and show that $\partial_v\phiNl(u_0,v) \notin L^1_v$ for fixed $u_0$. Expressed in a regular gauge on the Cauchy horizon and neglecting error terms, we formally have
	\begin{align}  
	\partial_v    	\phiNl(u_0,v)  \approx  -i \frac{r_+  e^{i \omer u_0 }  }{ \sqrt{2\pi}r_- } \int_{\mathbb R}  \mathcal F[\phiH    ] (\omega)   \; \mathfrak t (\omega)    e^{-i (\omega-\omer) v}  \d \omega \label{eq:fouriermultiplier} 
	\end{align}
	close to the Cauchy horizon. We interpret \eqref{eq:fouriermultiplier} as a formal Fourier multiplication operator with multiplier $\mathfrak t(\omega)$, i.e. $T_{\mathfrak t} \colon \phiH(v)\mapsto  \partial_v\phiNl(u_0,v) $.    Since our data $\phiH(v)$ are not  integrable ($\phiH \notin L^1$) along the event horizon $\HH$ and we aim to show that $\partial_v \phiNl(u_0,v)$ is not in $L^1_v$, it is natural to consider to inverse operator $T_{\mathfrak t}^{-1} = T_{\frac{1}{\mathfrak t}}$ with Fourier multiplier $\frac{1}{\mathfrak t(\omega)}$. Formally, by Young's convolution inequality we have 
	\begin{align} \label{W11.intro.est}\| \phiH\|_{L^1} = \| T_{\mathfrak t}^{-1}[ \partial_v\phiNl] \|_{L^1}  = \| \mathcal{F}[\mathfrak t^{-1} ]  \ast \partial_v\phiNl\|_{L^1} \leq  \| \mathcal{F}[\mathfrak t^{-1} ]  \|_{L^1}  \|    \partial_v\phiNl\|_{L^1}.\end{align}
	Since our data $\phiH$ are assumed to be non-integrable (i.e.\ $\phiH \notin L^1$), the above formal argument shows $W^{1,1}$ blow-up for $\phiNl(u_0,\cdot)$ if $\mathcal{F}[\mathfrak t^{-1} ]  \in L^1$. The above formal computation is made rigorous in the proof of  \cref{prop:representation}~Part~\textbf{E}. 
	Further, we will prove that the only obstruction to $\mathcal{F}[\mathfrak t^{-1} ]  \in L^1$ are potential zeros of $\mathfrak t(\omega)$. In the uncharged case $q_0=0$, however, the o.d.e.\ analog of the $T$-energy identity yields that 
	\begin{align}\label{eq:odeenergyidentity} 
	| \mathfrak t(\omega)|^2 = |\mathfrak r(\omega)|^2 + |\omega|^2.
	\end{align}
	Moreover, since we exclude non-resonant masses (i.e.\ $m^2 \in \mathbb R_{>0} - D(M,e)$), we have $\mathfrak t(0) \neq 0$ and as such, $\mathfrak t(\omega)$ is nowhere zero. As a result,  we show  $\mathcal{F}[\mathfrak t^{-1} ]  \in L^1$. For the uncharged case with resonant masses, this shows that all characteristic data $\phiH$ on the event horizon $\HH$ that are not integrable give rise to solutions which blow up in $W^{1,1}$ along outgoing cones at the Cauchy horizon $\CH$. 
	
	In the charged case, however, the analog of \eqref{eq:odeenergyidentity} becomes
	\begin{align}
	| \mathfrak t(\omega)|^2 = |\mathfrak r(\omega)|^2 +  \omega (\omega - \omer)
	\end{align}
	such that $\mathfrak t(\omega)$ may have zeros for $\omega \in (0, \omer)$ or $\omega\in (\omer,0)$. For small charges, a perturbation argument shows that $\mathfrak t(\omega) $ does not have zeros but for general charges, the set of zeros $\mathcal{Z}_t(M,e,q_0,m^2)=\{ \omega \in \RR \colon  \mathfrak t(\omega,M,e,q_0,m^2)=0  \}  \subset \{ 0 < |\omega| < |\omer|\}  $ could be (and in general will be)   non-empty. 
	In view of this, for non-integrable data (i.e.\ $\phiH\notin L^1$) which satisfy $P_{\delta} \phiH \in L^1$ (recall the definition of $P_{\delta}$ from \cref{sec:w11blowup}), we   show that the arising solution blows up in $W^{1,1}$ along outgoing cones. 
	It follows $\phiNl$ blows up in $W^{1,1}$ along outgoing cones  for all $\phiH \in \Sl-H$, where $H \subset \Sl$  is an exceptional subset first introduced in the statement of \cref{W11.thm.intro}.

	\subsubsection{The nonlinear problem I: Physical space estimates of the difference (Step \ref{diff.step})} \label{nonlinear.intro}
	As we explained, the physical space method  does not capture the oscillations of the field which are crucial to our proof. On the other hand, the (global) frequency analysis used for the linear equation Klein--Gordon equation on Reissner--Nordstr\"{o}m  (see already \eqref{eq:chargedKGlinear} and as explained above) relies on two key properties: the existence of the Killing vector field $\partial_t$ and the linearity of the equation---none of which extends to the coupled system \eqref{E1}--\eqref{E5}.

	In the present paper we overcome these limitations by controlling the difference between the nonlinear evolution and its linear counterpart in physical space (i.e.\ $g-g_{RN}$ and $\phi-\phiNl$, see below). In the uncharged case $q_0 =0$, this is exactly the strategy we adopt, see the first paragraph below. In the case $q_0 \neq 0$, unbounded backreaction oscillations of the Maxwell field however require a more sophisticated nonlinear scheme, see \cref{C0.ext.method.intro} and the second paragraph below. These unbounded backreaction oscillations motivate the precise definition of the oscillations spaces $\OO$, $\OOp$ and $\OOpp$ from \cref{sec:oscillationspaces}, see the third paragraph below.

	The proof of the nonlinear differences estimates will be carried out in \cref{difference.estimates.section} and follows the splitting of spacetime into four different regions depicted in \cref{Fig.regions}   used already in  \cite{Moi}, see \cref{Fig.regions_intro} (a similar splitting was first introduced by Dafermos \cite{MihalisPHD}  and subsequently used in \cite{Franzen,KerrStab,JonathanStab}).   More specifically we refer the reader to the four \cref{RS.diff.prop}, \cref{N.diff.proposition}, \cref{EB.diff.proposition}, \cref{LB.prop}.
	
		\begin{figure}
		\centering
		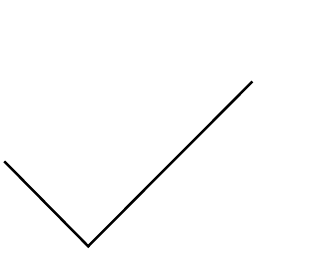			
		\caption{Division of a rectangular neighborhood of $i^+$ into  four spacetime regions.}
		\label{Fig.regions_intro}
	\end{figure}
	
	It is important to note that the difference estimates described in this section (and proved in \cref{difference.estimates.section}) are completely independent of the estimates of \cref{linearsection} (whose description was outlined in \cref{linear.intro}), with the notable exception of the final formula \eqref{Rep3} that uses the linear formula \eqref{eq:representationformulainintro} ``as a black box''.

	\paragraph{Difference estimates near $i^+$ for $q_0=0$.}
	Near the Cauchy horizon $\CH$ and close to $i^+$ as in   \cref{fig:cauchyhorizonexists} (i.e.\ for $u$ close to $-\infty$) we obtain difference estimates of the schematic form: \begin{align} \label{diff.intro1}
	&|\phi-\phiNl|(u,v)+ |u|^{-s} \cdot (|g-g_{RN}|+ |\partial_u(g-g_{RN})|)(u,v) \ls |u|^{1-3s},\\
	\label{diff.intro2}
	&\bigl|	\partial_v (\phi-\phiNl) \bigr|(u,v)+  v^{-s} \cdot |\partial_v (g-g_{RN})|(u,v) \ls v^{1-3s},
	\end{align} where $(g,F,A,\phi)$ solve \eqref{E1}--\eqref{E5} with data $\phiH \in \Sl$  and $\phiNl$ solves \eqref{E5}  with same data $\phiH \in \Sl$ on a fixed Reissner--Nordstr\"{o}m background \eqref{RN} (corresponding to the one $g$ is converging towards $i^+$). 
	The key point is that $\phi-\phiNl$, unlike $\phi$,  will turn out to be $\dot{W}^{1,1}$ along outgoing cones at $\CH$ namely \eqref{diff.intro2} gives  $$ \sup_{u,v} |\phi - \phiNl|(u,v) \ls \sup_{u} \int_{v_0}^{+\infty} |\partial_v (\phi - \phiNl)|(u,v) \d v \ls \int_{v_0}^{+\infty} v^{1-3s} \ls v_0^{2-3s} < \infty  $$ as $s>\frac{3}{4}>\frac{2}{3}$. Therefore \emph{$\phi-\phiNl$ is bounded}. In particular,  in the uncharged case $q_0=0$, uniform boundedness  of $\phi$ in the region of   \cref{fig:cauchyhorizonexists} is equivalent to that of $\phiNl$. As we will see below, this is no longer true if $q_0 \neq 0$.
	
	\paragraph{Difference estimates near $i^+$ for $q_0\neq0$.} \label{difference.intro.section}If $q_0\neq 0$, the metric differences are similar, but the scalar field difference is now impacted by the Maxwell backreaction. In particular, the first term of \eqref{diff.intro2} is replaced by an estimate of the schematic form  (in the gauge \eqref{GaugeAv} where $A_v =A_v^{RN}= 0$) 	\begin{equation} \label{diff.intro3}
	\bigl| e^{-i\sbr(u,v)}\partial_v \phi - \partial_v \phiNl\bigr|(u,v) \ls v^{1-3s},
	\end{equation} 
	\begin{equation} \label{diff.intro4}
	\sbr(u,v):= \int_{u_{\gamma}(v)}^u \left( (A_u)^{CH}(u')-  (A_u^{RN})^{CH}(u') \right)\d u',
	\end{equation}  where $u_{\gamma}(v) \sim -v$ and $(A_u)^{CH}(u')$,  $(A_u^{RN})^{CH}(u')$ are defined  as the extensions of $A_u(u,v)$,  $A_u^{RN}(u,v)$ to $\CH$, see \cref{LB.prop} for a precise statement.
	The difficulty is that $\sbr$ is unbounded in general; nevertheless, we prove sublinear growth estimates (in \cref{LB.prop} again): \begin{align} \label{diff.intro5}
	&	|\sbr(u,v)| \ls v^{2-2s} 1_{s<1}+ (1+\log(v))1_{s=1},\\ \label{diff.intro6}
	&	|\partial_v	\sbr(u,v)|+ |\partial_v^2	\sbr(u,v)| \ls v^{1-2s}. \end{align}
	Note that this is not a gauge issue: in fact, $\sbr$ is a gauge-independent quantity obtained by the expression \begin{equation}\label{sbr.exp}
	\sbr(u,v):= \iint_{[u_{\gamma}(v),u] \times [v_0,+\infty)} \left(\frac{\Omega^2 Q}{r^2}- \frac{\Omega^2_{RN} e}{r^2_{RN}} \right)\d u \d v,
	\end{equation} assuming \eqref{gaugeAponctuelle}. As a consequence, \emph{it is no longer true that $\phi-\phiNl$ is uniformly bounded}. Instead, the consequence of \eqref{diff.intro3} is that the following quantity is in $\dot{W}^{1,1}$ along outgoing cones and hence bounded: \begin{equation} \label{phi.diff.strategy}
	|\phi(u,v)- \int_{v_{\gamma}(u)}^v e^{i\sbr(u,v')} \partial_v \phiNl(u,v') \d v'| \lesssim |u|^{2-3s},
	\end{equation} where $v_{\gamma}(u)\sim -u$. Therefore,  boundedness of $\phi$ is now down to the boundedness of   $\int_{v_{\gamma}(u)}^v e^{i\sbr(u,v')} \partial_v \phiNl(u,v')  \d v'$. By our representation formula \eqref{eq:representationformulainintro}, this expression becomes, up to error, \emph{an explicit integral of the data}\begin{align} 
	\phi(u,v)= 	\int_{v_{\gamma}(u)}^v e^{i\sbr(u,v')+i\omer v'}   \phiH(v') \d v' +O(|u|^{2-3s}).
	\label{Rep3}
	\end{align} 
	Thus, the nonlinear representation formula \eqref{Rep3}  gives boundedness of $\phi$ up to  and including the Cauchy horizon $\CH$ for characteristic  event horizon data $\phiH \in \OO$, one of the main goals of \cref{main.theorem}  (see \cref{boundedness.combining.section}).

	Further, \eqref{Rep3} will  also show  blow-up of $\phi$ in amplitude at the Cauchy horizon $\CH$ for event horizon characteristic data $\phiH \notin \OO$. We postpone the related discussion to the last paragraph of \cref{C0.ext.method.intro}.

	\paragraph{The motivation to introduce $\sbr$ in the definition of the spaces $\OO$, $\OOp$, $\OOpp$.}  
	
	As explained above, the Maxwell field exerts a nontrivial backreaction with in general \emph{unbounded oscillation} $\sbr$ (recall \eqref{diff.intro5}). Recalling that $\phiNl$ is bounded if and only if the RHS of \eqref{eq:representationformulainintro} is finite (where $\sbr$ is as in \eqref{sbr.exp}), and that $\phi$  is bounded if and only if the RHS of \eqref{Rep3} is finite, it becomes clear that \textbf{the Maxwell backreaction may turn some linearly non-resonant profiles into nonlinearly resonant ones and vice versa} (a phenomenon which is absent in the uncharged case $q_0=0$ where the nonlinear estimates show that $\phi$ is bounded if and only if $\phiNl $ is bounded).

	Therefore, to ensure that our class of oscillating data $\phiH \in \OO$ (and analogously $\OOp$, $\OOpp$) gives rise to a bounded $\phi$ (and not only bounded $\phiNl$), we must define $\phiH\in \OO$ (and analogously $\OOp$, $\OOpp$) as a stronger condition than the RHS of \eqref{eq:representationformulainintro} being finite. This stronger condition is to impose \emph{sufficiently robust oscillations} that yield finiteness of the RHS of \eqref{Rep3} for {all} functions $\sbr$ satisfying \eqref{sigma_err1}, \eqref{sigma_err2}. 
	In particular, for $\sbr$ given by the formula \eqref{sbr.exp} (which obeys \eqref{sigma_err1}, \eqref{sigma_err2}, as we show, see \eqref{diff.intro5}, \eqref{diff.intro6}), the condition $\phiH \in \OO$ (and analogously $\OOp$, $\OOpp$) shows that the oscillations in the initial data are sufficiently robust to not be over-powered by the nonlinear backreaction of the Maxwell field in evolution.

	\subsubsection{The nonlinear problem II: boundedness/blow-up of matter fields and metric extendibility (Step \ref{C0step}).}  	\label{C0.ext.method.intro}
	Earlier we explained how the nonlinear difference estimates, culminating with \eqref{Rep3}, show that qualitatively oscillating $\phiH \in \OO$ on the event horizon $\HH$ give rise to uniformly bounded scalar field $\phi$ up to and including $\CH$. 
	In this section, we outline the proof of the following results that conclude the proof of our main theorems: \begin{itemize}
		\item $C^0$-extendibility of the metric (within a certain spherically symmetric class) is \emph{equivalent} to boundedness of $|\phi|$ in amplitude (first paragraph below, see also statements \ref{claimpart1} and \ref{claimpart2}). From the above equivalence given by \ref{claimpart1} and \ref{claimpart2}, we deduce the main statement of \cref{main.theorem}: the $C^0$-extendibility of the metric across $\CH$ holds under the strong qualitative oscillation condition $\phiH \in \OOp$ on the event horizon $\HH$ (see the proof in \cref{prelim.nonlin}). In our companion paper \cite{MoiChristoph2}, the implication \ref{claimpart2}  that  ''blow-up of $\phi$ implies $C^0$-inextendibility'' will be used to prove \cref{null.contraction.theorem}.
		
		\item The charge $Q(u,v)$ of the Maxwell Field is bounded for quantitatively oscillating $\phiH \in \OOpp$ on the event horizon $\HH$ (second paragraph below, proved in \cref{boundedness.combining.section}): one of the statements of \cref{main.theorem}.
		
		\item The scalar field $\phi$ blows up in $W^{1,1}$ i.e.\ $\int |D_v \phi|(u,v) \d v = \infty$ for generic slowly decaying  $\phiH \in \Sl$ on the event horizon $\HH$ (third paragraph below, proved in \cref{W11.blowup.section}): this is \cref{W11.main.thm}.
		
		\item The scalar field $\phi$ blows up in $L^{\infty}$ i.e.\ $\underset{(u,v)}{\sup}\ |\phi| (u,v)=\infty$ for non-oscillating $\phiH \in \NO=\Sl-\OO$ on the event horizon, assuming $q_0=0$ (fourth paragraph below, proved in \cref{blow.up.section}): this is \cref{main.theorem2}.

	\end{itemize} 
	
	\paragraph{Continuous extendibility of the metric as a consequence of scalar field boundedness.} We explained above how to prove boundedness/blow-up  of the scalar field depending on the data $\phiH$. Now we explain how to prove that $C^0$-extendibility on the metric is in a sense equivalent to the boundedness of $\phi$ up to  and including  $\CH$, as it turns out! Combining this novel conditional result with the previously discussed boundedness theorem for $\phi$ will give the main result of \cref{main.theorem} i.e.\ the $C^0$-extendibility of the metric for any characteristic data $\phiH \in \OOp$. The proof relies on a nonlinear scheme adapted to the slow decay of the solutions and taking advantage of the algebraic structure of the Einstein equations as explained below.

	We begin by recalling from \cite{Moi} that the following estimates for $\phiH \in \Sl$   hold true near the Cauchy horizon  $\CH$ and for some $\alpha>0$ (see \cref{recall.section} for details) \begin{align} 
	&\Omega^2(u,v) \ls e^{ -\alpha v},\\ \label{partialu.intro}
	&|\partial_u \log(\Omega^2)| \ls |u|^{1-2s},\\
	\label{partialv.intro}
	&	|\partial_v \log(\Omega^2)| \ls v^{1-2s},
	\\
	&|\partial_u r| \ls |u|^{-2s}, \label{r1}
	\\
	&	|\partial_v r| \ls v^{-2s}. \label{r2}
	\end{align}  
	The $r$ estimates \eqref{r1} \eqref{r2} being integrable, it can be shown that $r(u_n,v_n)$ is a Cauchy sequence for any  $u_n \rightarrow u$, $v_n \rightarrow +\infty$: Therefore, $r$ extends to a continuous function. In contrast, the  (conjecturally sharp) decay for $\log(\Omega^2)$  is too weak  to adopt the same reasoning since $s\leq 1$ (\eqref{partialu.intro}, \eqref{partialv.intro} are non-integrable).
	
	Nevertheless,  $\partial_u \partial_v \log(\Omega^2)+ 2 \Re( \overline{D_u \phi} D_v \phi)$ enjoys a better decay (see already \eqref{logomega1}) i.e.\ the weak decay from \eqref{partialu.intro}, \eqref{partialv.intro} comes from a $\Re( \overline{D_u \phi} D_v \phi)$ term in the Einstein equations. It was first noticed by the second author in \cite{MoiThesis} that it is useful to write the weakly decaying term  as a \emph{total derivative}, up to error: $$ 2 \Re( \overline{D_u \phi} D_v \phi) = \partial_u \partial_v ( |\phi|^2)+ ...$$
	
	Exploiting the ideas of \cite{MoiThesis}, we introduce the following new quantity $\Upsilon$, which is nonlinear and non-local: 	\begin{align} \label{Delta.def} 
	\Upsilon(u,V):=\log(\Omega^2)(u,V)+ |\phi|^2(u,V)+\int_{u}^{u_s}\frac{|\partial_u r|(u',V)}{r(u',V)}|\phi|^2(u',V) \d u',
	\end{align} 	where $\Omega^2:=-2g(\partial_u,\partial_V)$ for a suitably renormalized $(u,V)$ coordinate system. We then prove that $\Upsilon$ is bounded and admits a continuous extension (see \cref{coordinatesphibounded} for the proof).

	\begin{rmk} \label{s>3/4}
		To show that the RHS of \eqref{Delta.def} is bounded, we need the assumption $s>\frac{3}{4}$, which among other things, explains the numerology in the definition of $\Sl$ (\cref{slowdecay.def}), compare with \cref{Moi.intro}.
	\end{rmk}
	
	It turns out that the boundedness of $\Upsilon$ ultimately makes $C^0$-extendibility \emph{equivalent} to the boundedness of $\phi$ in the following sense (see \cite{MoiThesis}).
	\begin{enumerate}[A)]
		\item \label{claimpart1} If $|\phi|$ is bounded, then there exists a coordinate system $(u,V)$ such that $\log(\Omega^2)$  is bounded. 
		\item \label{claimpart2} Conversely, if $|\phi|$ blows up,  there \textbf{exists no coordinate system} $(u,V)$ such that $\log(\Omega^2)$ is bounded.
	\end{enumerate}
	
	Part \ref{claimpart1}  follows from the definition \eqref{Delta.def} and the (unconditional) boundedness of $\Upsilon$ (since $\frac{\partial_u r }{r}$ is also bounded). Moreover, because $\Upsilon$ is continuously extendible, if $|\phi|$ is continuously extendible, then  $\log(\Omega^2)$ is also continuously extendible (hence so is $\Omega^2$). In particular for  data $\phiH \in \OOp$, since we previously showed that $|\phi|$ is continuously extendible across $\CH$, we then obtain the continuous extendibility of $g$ (see \cref{sec:metri.ext.section} for the proof), and a slightly improved statement: the existence of a $C^0$-admissible extension (\cref{C0admissible}) i.e.\ a continuous extension admitting regular double null coordinates $(u,V)$ given by the above pair $(r,\Omega^2)$.

	Part  \ref{claimpart2}  is more delicate and is proven in \cite[Theorem~2.3.5]{MoiThesis}  (and used in \cite{MoiChristoph2} to prove \cref{null.contraction.theorem}): It implies that  if $|\phi|$ blows up, then $g$ does not admit any $C^0$-admissible extension.

	\paragraph{Boundedness of the Maxwell field $Q$.} We now outline the proof of the boundedness of the charge $Q(u,v)$ for $\phiH \in \OOpp$ given in \cref{boundedness.combining.section}. To prove boundedness of $Q$, we will actually need \emph{decay as $u\rightarrow -\infty$ for $\phi$} (in addition to its uniform boundedness already obtained assuming $\phiH \in \OO$): this motivates the introduction of the space $\OOpp   \subset  \OO$ from \cref{sec:oscillationspaces}. We start taking advantage of the structure of the Maxwell equation: $$ \partial_u Q = r^2 \Im ( \overline{\phi} D_u \phi)=  \Im ( \overline{r\phi} D_u (r\phi) ).$$ Moreover, we use \eqref{Dupsi.est.intro} to obtain the estimate (using also the boundedness of $r$): $$ |\partial_u Q| \lesssim |\phi| \cdot  |D_u(r\phi)| \lesssim  |\phi| \cdot |u|^{-s}.$$
	To obtain boundedness, we integrate in $u$. For this, we   take advantage of the quantitative $|u|$ decay of $|\phi|$ which is   true if $\phiH$ satisfies the quantitative oscillation condition $\phiH \in \OOpp$.  
	Combining both the linear estimate \eqref{u.decay.intro2} on $\phiNl$ and the nonlinear estimate  \eqref{diff.intro1} on $\phi -\phiNl$,  we obtain $|\phi| \ls |u|^{s-1-\eta_0}$ and thus $$ |\partial_u Q| \lesssim  |u|^{-1-\eta_0},$$ which is integrable and thus sufficient to conclude the boundedness   and continuous extendibility of $Q$.
	
	\paragraph{$W^{1,1}$ blow-up of the scalar field.} We now turn to the proof of $W^{1,1}$ blow-up on outgoing cones of $\phi$ for generic $\phiH \in \Sl - L^1$(proof in  \cref{W11.blowup.section}). One of our nonlinear difference estimates gives near the Cauchy horizon $\CH$ and uniformly in $u$: $$ \bigl | |D_v \phi| (u,v)- |D_v^{RN} \phiNl| (u,v) \bigr| \ls v^{1-3s},$$ which is integrable, since $s>\frac{3}{4}>\frac{2}{3}$. Therefore, $\| D_v \phi(u, \cdot) \|_{L^1} =+\infty$ if and only if $\| D_v^{RN}\phiNl(u, \cdot) \|_{L^1} =+\infty$.
	For $|q_0 e|$ small enough, \eqref{W11.intro.est} gives blow up of $\| D_v \phi(u, \cdot) \|_{L^1}$ for any $\phiH \in \Sl-L^1$ (and for any  $\phiH \in \Sl-H$ in the case $q_0 \neq 0$, what we call the generic case, recalling the discussion at the end of   \cref{linear.intro}).
	
	\paragraph{Blow-up in amplitude of the scalar field $\phi$ if $\phiH \notin \OO$.} We now explain how the nonlinear representation formula \eqref{Rep3} can be used to prove the blow-up in amplitude of $\phiH$ for $\phiH \in \NO=\Sl-\OO$ (see \cref{blow.up.section} for the proof). Recall indeed that \eqref{Rep3} formally states that the uniform boundedness of $\phi$ up to  and including  the Cauchy horizon $\CH$ is equivalent to the finiteness of the characteristic data integral on the event horizon $\HH$, i.e.\ for all $|u| \geq v_0$  \begin{equation} \label{int.eq}
	\sup_{v} |\phi|(u,v) = \infty \iff   \sup_v \left|\int_{-u}^{v} e^{i\sbr(u,v')+i\omer v'}  \phiH(v') \d v'\right|=\infty
	\end{equation} for  $\sbr$ defined by \eqref{diff.intro4} and in the gauge \eqref{GaugeAv}. If for given characteristic data $\phiH \in \Sl -\OO$ on the event horizon $\HH$, the upper bounds \eqref{diff.intro5}, \eqref{diff.intro6} also hold as \emph{lower bounds} up to the Cauchy horizon $\CH$, \eqref{int.eq} shows that $\phi$  \emph{blows up at the Cauchy horizon $\CH$}: for instance, one can check that for $\frac{2}{3}<s<1$ $$ \text{ for the choice } \phiH(v)= e^{-iq_0 \omega_{res}v} v^{-s}, \hskip 5 mm \limsup_{v \rightarrow+\infty}\ \bigl| \int_{v_0}^v  e^{iq_0 (v')^{2-2s} }(v')^{-s}dv'\bigr|=+\infty.$$ Unfortunately, while we conjecture that such lower bounds are true\footnote{The identity \eqref{sbr.exp} indeed suggests that $\sigma_{br}$ is comparable schematically to $|g-g_{RN}|$ which is formally of order $\alpha\  v^{1-2s}+o(v^{1-2s})$ for some $\alpha\in \mathbb{R}$. The case $\alpha=0$ is presumably non-generic but leads to faster decay for $\sigma_{br}'$ and $\sigma_{br}''$ notably.} for \textit{most solutions}, it seems that fine-tuned ones could violate them.  When these lower bounds are violated and $\sigma_{br}'$ or $\sigma_{br}''$ decay faster, we have a  linearly resonant profile ($\phi_{\mathcal{H}^+} \notin \mathcal{O}$) become nonlinearly non-resonant (meaning $\phi$ is bounded at the Cauchy horizon)  (for instance: if $\sbr''$ decays faster,  say $\sbr''(v)=O(v^{-5s+3})$,  then the RHS of \eqref{int.eq} is finite for  the choice $\phiH(v)= e^{-iq_0 \omega_{res}v} v^{-s}$). To sum up: the difficulty to control precisely these backreaction oscillations explains the absence of blowing-up examples for $q_0\neq 0$ in the present paper, but not their plausibility!

	In the case $q_0 = 0$, and for   $m^2 \notin D(M,e)$, we obtain blow-up for all data $\phiH \in \Sl-\OO$.
	As mentioned before, the restriction of the mass parameter $m^2$ is due to ``exceptional'' so-called \emph{non-resonant masses} (see \cite{Kehle2018}) for which boundedness of the linearized $\phiNl$ (hence of the EMKG-coupled scalar field $\phi$, by our result) is true, even though $\phiH \notin \OO$. Nevertheless, the set of non-resonant masses $D(M,e)$ is the zero set of a nontrivial analytic function as proved in \cite{Kehle2018}, and as such, it is   discrete and of zero Lebesgue measure.

	\subsubsection{Guide to the reader} \label{guide.section}
	We conclude this section with a short guide to help the reader read through the proofs of \cref{linearsection} and \cref{nonlinearsection}. While the above outline of the proof was organized thematically to highlight the resolution of various difficulties, for technical reasons the rest of the paper is organized slightly differently as follows: \begin{enumerate}
		\item In \cref{linearsection} we study the solution $\phiNl$ of the linear charged and massive Klein--Gordon equation $ g_{RN}^{\mu \nu} D^{RN}_{\mu} D^{RN}_{\nu} \phiNl= m^2 \phiNl$  on a fixed Reissner--Nordström metric     with slowly decaying  characteristic data $\phiH \in \Sl$ on the event horizon $\HH$. The approach is mostly focused on Fourier analysis, capturing the oscillations of $\phiNl$ towards the Cauchy horizon $\CH$.
		\begin{enumerate}
			\item In \cref{sec:separationofvariables}, we set up the radial ODE satisfied by   the Fourier transform of $\phiNl$ associated to the timelike Killing vector field $\partial_t$ on \eqref{RN}.
			\item In \cref{sec:radialode2}, we first show the existence of a scattering resonance (i.e.\ a pole at the  resonant frequency $\omega=\omer$). Moreover, we show suitable resolvent estimates associated  to the radial o.d.e. This allows us to prove   properties of the (renormalized) scattering  coefficients $\mathfrak{r}(\omega)$, $\mathfrak{t}(\omega)$. 
			
			\item In \cref{Representation.section}, we show a first representation formula involving $\mathfrak{r}(\omega)$ and $\mathfrak{t}(\omega)$ for $\phiNl$ in terms of the event horizon data $\phiH$.
			\item In \cref{sec:main.lin}, we take the limit of the  representation formula   to the Cauchy horizon of Reissner--Nordström which eventually yields our main linear result \cref{prop:representation}. 
		\end{enumerate}
		\item In \cref{nonlinearsection} we estimate the solution $(g,F,A,\phi)$ of the nonlinear Einstein--Maxwell--Klein--Gordon system \eqref{E1}--\eqref{E5} with slowly decaying  characteristic data $\phiH \in \Sl$ on the event horizon $\HH$. The approach is mostly focused on physical space estimates, capturing the effect of $\phi$ on the metric $g$.
		
		\begin{enumerate}
			\item 	In \cref{recall.section} we recall the nonlinear estimates from \cite{Moi}. They are essential to the analysis, both to show the continuous extendibility of $g$ and for the nonlinear difference estimates, see below.
			\item In \cref{prelim.nonlin}, we show that: assuming $\phi$ is uniformly bounded, then the metric $g$ is continuously extendible. The proof exploits the special structure of the nonlinearity in the Einstein equations.
			\item In  \cref{difference.estimates.section}, we estimate together the differences $g- g_{RN}$  and $\phi- \phiNl$. If $ q_0=0$ this shows that boundedness of $\phi$ is equivalent to boundedness of $\phiNl$. If $q_0\neq0$, we have \eqref{phi.diff.strategy} as a substitute.
			\item In \cref{combining.section}, we combine the results of \cref{linearsection} and \cref{difference.estimates.section} to obtain the nonlinear representation formula \eqref{Rep3}. From \eqref{Rep3} we can read off boundedness/blow-up of $\phi$ from the event horizon data $\phiH$. Combining with \cref{prelim.nonlin} gives the $C^0$-extendibility of $g$ for oscillating event horizon data $\phiH \in \OOp$ (\cref{main.theorem}). The other results follow from similar considerations.

		\end{enumerate}
	\end{enumerate}

	\section{Linear theory: the charged/massive Klein--Gordon equation on the Reissner--Nordström interior} \label{linearsection}

	We begin by  studying the charged and massive scalar fields on the \emph{fixed} subextremal Reissner--Nordström interior \eqref{RN} with the subextremal parameters $0<|e|<M$ from \eqref{eq:defnem}.    In this section, the connection $\nabla$ and the metric $g_{RN}$ are the Reissner--Nordström connection and metric, respectively. As mentioned in \cref{gaugechoice.section}, we also use the electromagnetic gauge condition 	\begin{align}\label{eq:gauge_lienartheory}
	A'_{RN} = \left( \frac{e}{r} - \frac{e}{r_+} \right) \d t =  \frac 12\left( \frac{e}{r} - \frac{e}{r_+} \right) \d v  - \frac 12  \left( \frac{e}{r} - \frac{e}{r_+} \right)  \d u
	\end{align} which satisfies  $F_{RN}=\d A'_{RN}$ for
	\begin{align}\label{eq:linearversionofF}
	F_{RN} = \frac{e}{2 r^2} \OmegaRN^2 \d u \wedge \d v. 
	\end{align}
	Note that $F_{RN}$ satisfies the homogeneous Maxwell equations  $\d\ast F_{RN} =0, \;  \d F_{RN}=0$ and remark that \eqref{eq:linearversionofF} is the corresponding linear version of \eqref{eq:Fnonlinear}.
	
	We now consider solutions $\phil$ of the charged Klein--Gordon equation \eqref{E5} which reads
	\begin{align}
	\label{eq:chargedKGlinear}
	(\nabla_\mu +i q_0 (A'_{RN})_\mu)(\nabla^\mu + iq_0 (A'_{RN})^\mu) \phil  - m^2 \phil  =0,
	\end{align}
	where $q_0 \in \mathbb R  $, $m^2\geq 0 $ are the charge and mass parameters of the field.
	We also recall  \begin{align} \omega_r  = \frac{q_0 e}{r},\omega_+  = \frac{q_0 e }{r_+},  \omega_-  = \frac{q_0 e }{r_-}, \omer  = \omega_- - \omega_+.\end{align}
	Note that in the gauge \eqref{eq:gauge_lienartheory}, we have  \begin{align}
	D_v^{RN} = \partial_v + i q_0 (A'_{RN})_v = \partial_v + \frac i 2  (\omega_r - \omega_ +)\\
	D_u^{RN}  = \partial_u + i q_0 (A'_{RN})_u = \partial_u - \frac i 2  (\omega_r - \omega_ +)
	\end{align}
	such that for any $C^1$ function we have
	\begin{align}\label{eq:gaugederivative}
	e^{- i \omer r^\ast} \partial_v (e^{i \omer r^\ast} f ) & =  \partial_v f +    i (\omega_- - \omega_+) (\partial_v r^\ast) f  =   D_v^{RN} f + \frac{i}{2}(\omega_- - \omega_r) f  
	\end{align}
	and similarly for $D_u^{RN}$. 
	For $q_0 = m^2 =0$, the field is uncharged and massless, and equation \eqref{eq:chargedKGlinear} reduces to the well-known wave equation
	\begin{align}
	\Box_{g_{RN}} \phil =0.
	\end{align}
	For $q_0 \neq 0$, $m^2=0$, the field is charged and massless and is governed by 
	\begin{align}
	(\nabla_\mu + i q_0 (A'_{RN})_\mu)(\nabla^\mu + i q_0 (A'_{RN})^\mu) \phil  =0. 
	\end{align}
	Finally, for $q_0 =0$, $m^2\neq 0$, the field is uncharged and massive and governed by the Klein--Gordon equation
	\begin{align}
	\Box_{g_{RN}} \phil - m^2 \phil =0.
	\end{align}
	
	\noindent \textbf{Notation.} Throughout \cref{linearsection} we will use the following notation.  If $X$ and $Y$ are two (typically non-negative) quantities, we use $X \lesssim Y$ or $Y\lesssim X$ to denote that
	$ X \leq C (M,e,m^2,q_0,s) Y$ for some constant $ C (M,e,m^2,q_0,s) $ depending   on the  
	parameters $ (M,e,m^2,q_0,s)$. If $C$ depends on an additional parameter $p$, we also use the notation $\lesssim_p$, $\gtrsim_p$.  We also use  $ X = O(Y )$ for $|X| \lesssim Y$.  We
	use $X\sim Y$  for $X \lesssim Y \lesssim X$. 	We also recall that throughout \cref{linearsection} we use the convention that  $\HH= \HH_R = \{ u=-\infty, v\in \mathbb R\}$ as stated in \cref{RNsolution}.

	\subsection{Separation of variables and radial o.d.e.} \label{sec:separationofvariables}
	Since $T=\partial_t$  is a Killing field of the Reissner--Nordström spacetime and in view of the specific choice of electromagnetic gauge $A'_{RN}$,   equation~\eqref{eq:chargedKGlinear} admits a separation of variables. Formally, let $\phil = \phil(t,r)$ be a solution to  \eqref{eq:chargedKGlinear}. Then, we define the $t$-Fourier transform
	\begin{align}
	\mathcal{F}[ \phil](r,\omega) = \hat{\phil}  =  \frac{1}{\sqrt{2\pi}} \int_{\mathbb R} \phil (r,t) e^{i \omega t } \d t.
	\end{align}
	Formally, since $\phil$ solves \eqref{eq:chargedKGlinear}, 
	we have that 
	\begin{align}
	u(r^\ast)= u (\omega,r^\ast) := r(r^\ast)\mathcal{F}[ \phil ](r(r^\ast),\omega)
	\end{align}
	solves
	\begin{align}
	-u'' - (\omega - ( \omega_r - \omega_+) )^2u + V u =0,\label{eq:radialode}
	\end{align}
	where  
	\begin{align}\label{eq:potential}
	V = - \OmegaRN^2 (r_\ast)\left( \frac{2M}{r^3} - \frac{2e^2}{r^4} + m^2 \right).
	\end{align}
	The radial o.d.e.\ \eqref{eq:radialode} admits the following fundamental pairs of solution associated to the event horizon ($r^\ast \to -\infty$) and the Cauchy horizon ($r^\ast \to +\infty$). 
	\begin{defn}
		\label{defn:ua}
		Let $\uhr$, $\uhl$, $\uchr$ and $\uchl$ be the unique smooth solutions to \eqref{eq:radialode} satisfying 
		\begin{align}
		&\uhr (r^\ast) = e^{-i\omega r^\ast}  + O(\OmegaRN^2) \text{ as } r^\ast \to -\infty \\
		&\uhl (r^\ast) = e^{i \omega r^\ast} + O(\OmegaRN^2)  \text{ as } r^\ast \to -\infty\\
		&\uchr (r^\ast) = e^{i(\omega -\omer ) r^\ast}  + O(\OmegaRN^2) \text{ as } r^\ast \to +\infty \\
		&\uchl (r^\ast) = e^{- i (\omega - \omer  )  r^\ast} + O(\OmegaRN^2)  \text{ as } r^\ast \to +\infty
		\end{align}
		for $ \omega \in \mathbb R$. The pairs $(\uhr, \uhl)$ and ($\uchr, \uchl$)  span the solution space of \eqref{eq:radialode} for $\omega \in \mathbb R - \{ 0 \}$ and $\omega \in \mathbb R - \{ \omer \}$, respectively.
	\end{defn}
	Using the fact that the Wronskian
	\begin{align}
	\mathfrak{W} (f,g) := f g' - f' g
	\end{align}
	of two solution of \eqref{eq:radialode} is independent of $r^\ast$, we define transmission and reflection coefficients $\mathfrak{T} (\omega)$ and $\mathfrak{R} (\omega)$ as follows.
	\begin{defn}\label{defn:trans}
		For $\omega \in \mathbb R - \{ \omer \} $, we define the transmission and reflection coefficients $\mathfrak T$ and $\mathfrak R$ as 
		\begin{align}
		&\mathfrak{T}(\omega) := \frac{\mathfrak{W}(\uhr, \uchr)}{\mathfrak{W}(\uchl, \uchr)} =  \frac{\mathfrak{W}(\uhr, \uchr)}{2i (\omega - \omer )}\\
		&\mathfrak{R}(\omega) := \frac{\mathfrak{W}(\uhr, \uchl)}{\mathfrak W (\uchr, \uchl)} =  \frac{\mathfrak W(\uhr, \uchl )}{-2i (\omega - \omer )} ,
		\end{align}
		where $\uhr$, $\uhl$, $\uchr$ and $\uchl$ are defined in \cref{defn:ua}. Indeed, this allows us to write 
		\begin{align}
		\uhr = \mathfrak T \uchl + \mathfrak R  \uchr
		\end{align} 
		for $\omega \in \mathbb R - \{ \omer \} $.
		Moreover, we define the  normalized transmission and reflection coefficients as
		\begin{align}
		\label{def.t}	& \mathfrak t(\omega)  = ( \omega - \omer)  \mathfrak T(\omega) = \frac{\mathfrak W(\uhr, \uchr )}{2i},\\
		&   \mathfrak r(\omega) = ( \omega - \omer) \mathfrak R (\omega)=  \frac{\mathfrak{W}(\uhr, \uchl)}{-2i}
		\end{align}
		which manifestly satisfy
		\begin{align}\label{eq:tomegromegr}
		\mathfrak t(\omer) = - \mathfrak r(\omer).
		\end{align}
	\end{defn}
	\begin{rmk}
		Note that the radial o.d.e.\ \eqref{eq:radialode} depends analytically on $\omega$. Thus, $\uhr$, $\uhl$, $\uchr$ and $\uchl$ are real-analytic functions for $\omega$ for fixed $r^\ast$. In particular, this means that the Wronskians $\mathfrak{W}(\uhr, \uchr)$, $\mathfrak W (\uchr, \uchl) $ etc.\ are real-analytic functions for $\omega \in \mathbb R$ which can be extended holomorphically to a neighborhood of the real line. 
	\end{rmk}
	We will also define the re-normalized functions
	\begin{defn}\label{defn:tildeuhretc}
		We define
		\begin{align}
		&\tilde \uhr (r^\ast,\omega)  :=e^{i\omega r^\ast} \uhr (r^\ast,\omega)  ,  \\
		&\tilde \uhl (r^\ast,\omega)  :=e^{-i\omega r^\ast} \uhl (r^\ast,\omega)  , \\
		&\tilde \uchr  (r^\ast,\omega) :=e^{-i (\omega - \omer ) r^\ast} \uchr (r^\ast,\omega) ,   \\
		&\tilde \uchl (r^\ast,\omega)  :=e^{i (\omega - \omer )  r^\ast} \uchl (r^\ast,\omega)    .
		\end{align}
	\end{defn}

	\subsection{Analysis for the radial o.d.e.}\label{sec:radialode2}
	\begin{prop}\label{prop:tr1}
		Let either of the two assumptions hold true. 
		\begin{itemize}
			\item Either $ q_0  \neq 0$. 
			\item or $q_0 =0$ but $m^2 \notin  D(M,e)$, where $ D(M,e)$ is the discrete set of \cite[Theorem~7]{Kehle2018}. 
		\end{itemize} Then, the transition and reflection coefficients $\mathfrak T(\omega)$ and $\mathfrak R(\omega)$ as defined in \cref{defn:trans} have (non-removable) poles of first order at $\omega = \omer$.
	\end{prop}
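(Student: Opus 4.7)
The first step is to reduce the proposition to showing $\mathfrak t(\omer)\neq 0$. By \cref{defn:trans} one has the factorizations $\mathfrak T(\omega)=\mathfrak t(\omega)/(\omega-\omer)$ and $\mathfrak R(\omega)=\mathfrak r(\omega)/(\omega-\omer)$, where $\mathfrak t$ and $\mathfrak r$ are real-analytic in $\omega$ (being Wronskians of real-analytic families of solutions of the radial o.d.e.\ \eqref{eq:radialode}). Moreover, \eqref{eq:tomegromegr} gives $\mathfrak t(\omer)=-\mathfrak r(\omer)$. Hence both $\mathfrak T$ and $\mathfrak R$ are meromorphic near $\omer$ with at worst a simple pole, and this pole is non-removable precisely when $\mathfrak t(\omer)\neq 0$.

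The main tool will be an algebraic identity linking $|\mathfrak t|$ and $|\mathfrak r|$, obtained from conservation of the Wronskian $\mathfrak W(\uhr,\overline{\uhr})$ along $r^\ast$. Since the coefficients of \eqref{eq:radialode} are real for real $\omega$, the uniqueness of the asymptotic fundamental pairs in \cref{defn:ua} gives $\overline{\uhr}=\uhl$ and $\overline{\uchl}=\uchr$. Evaluating $\mathfrak W(\uhr,\uhl)$ as $r^\ast\to-\infty$ from the leading horizon asymptotics produces the value $2i\omega$, while evaluating it as $r^\ast\to+\infty$ via the decomposition $\uhr=\mathfrak T\uchl+\mathfrak R\uchr$, $\uhl=\bar{\mathfrak T}\uchr+\bar{\mathfrak R}\uchl$ together with $\mathfrak W(\uchl,\uchr)=2i(\omega-\omer)$ yields $2i(\omega-\omer)(|\mathfrak T|^2-|\mathfrak R|^2)$. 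One must check that the $O(\OmegaRN^2)$ correction terms of \cref{defn:ua} together with their first derivatives decay exponentially, so that they do not contribute to these boundary Wronskians. Equating the two sides and multiplying by $(\omega-\omer)^2$ yields the o.d.e.\ energy identity
\begin{align*}
|\mathfrak t(\omega)|^2-|\mathfrak r(\omega)|^2 \;=\; \omega(\omega-\omer) \qquad \text{for all } \omega\in\mathbb R.
\end{align*}

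In the charged case $q_0\neq 0$, one has $\omer=q_0 e(r_+-r_-)/(r_+ r_-)\neq 0$ since $0<|e|<M$. If $\mathfrak t(\omer)=0$, then also $\mathfrak r(\omer)=0$ by \eqref{eq:tomegromegr}, so by real-analyticity $|\mathfrak t|^2-|\mathfrak r|^2=O((\omega-\omer)^2)$ near $\omer$. But the right-hand side of the identity expands as $\omer(\omega-\omer)+O((\omega-\omer)^2)$ with nonzero leading coefficient, a contradiction. Thus $\mathfrak t(\omer)\neq 0$ and the reduction step concludes the charged case. In the uncharged case $q_0=0$ one has $\omer=0$ and the identity degenerates to $|\mathfrak t(0)|=|\mathfrak r(0)|$, which is not by itself strong enough; here I will directly invoke \cite[Theorem~7]{Kehle2018}, which characterizes $D(M,e)$ as the discrete zero set of a nontrivial real-analytic function of $m^2$ describing precisely those masses for which the zero-frequency scattering resonance is absent, i.e.\ $\mathfrak t(0)\neq 0$ if and only if $m^2\notin D(M,e)$.

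The main potential obstacle is bookkeeping rather than conceptual: matching signs, conjugation conventions, and powers of $(\omega-\omer)$ when passing between $(\mathfrak T,\mathfrak R)$ and the renormalized $(\mathfrak t,\mathfrak r)$, and verifying that the $O(\OmegaRN^2)$ correction terms truly drop out of the boundary Wronskian (not merely of the functions themselves). No new analytic input is required for the uncharged case beyond \cite{Kehle2018}, where the analyticity structure and the definition of $D(M,e)$ are already available.
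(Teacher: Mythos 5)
Your proposal is correct and follows essentially the same route as the paper: the key ingredient in both is the o.d.e.\ energy identity $|\mathfrak T|^2-|\mathfrak R|^2=\omega/(\omega-\omer)$ (equivalently $|\mathfrak t|^2-|\mathfrak r|^2=\omega(\omega-\omer)$), obtained from conservation of $\operatorname{Im}(u'\bar u)$ along the radial o.d.e., together with the citation of \cite[Theorem~7]{Kehle2018} for the uncharged case. The only difference is cosmetic: you extract the contradiction at $\omega=\omer$ by Taylor-expanding the renormalized identity using $\mathfrak t(\omer)=-\mathfrak r(\omer)$, whereas the paper sends $\omega\to\omer$ from the side where $\omega/(\omega-\omer)>0$ to force $|\mathfrak T|\to\infty$ and then invokes meromorphy.
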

	\begin{proof}
		First, note that $(\operatorname{Im}(u' \bar u) )' = 0$ holds true for any $C^1$ solution of \eqref{eq:radialode}. Applying this to $\uhr$ and expanding $\uhr $  as $\uhr = \mathfrak T \uchl + \mathfrak R \uchr$, we conclude the \textit{o.d.e.\ energy identity}
		\begin{align}
		|\mathfrak{T}|^2 - |\mathfrak{R}|^2 = \frac{\omega}{\omega - \omer}.
		\end{align}
		If $q_0\neq 0$ and thus, $\omer \neq 0$,   we have $ |\mathfrak{T}|^2 \geq   \frac{\omega}{\omega - \omer}  $ for $|\omega| > \omer$. Sending $\omega \to\omer $, we conclude that $\mathfrak T$ blows up and since $\mathfrak T$ is meromorphic in a complex neighborhood of $\omer$, the claim follows. In particular, we have that $\mathfrak W (\uhr, \uchr) (\omega = \omer) \neq 0$ and $\mathfrak W (\uhr, \uchl) (\omega = \omer) \neq 0$. For $q_0 = 0$ and $m^2\notin D(M,e)$, the claim follows from \cite[Theorem~7]{Kehle2018}.
	\end{proof}
	\begin{prop}\label{prop:tr2}
		The solutions $\uhr$, $\uchl$, $\uchr$ and the renormalized functions  $\tilde\uhr$, $\tilde\uchl$, $\tilde \uchr$ as defined in \cref{defn:ua} and \cref{defn:tildeuhretc}, respectively, satisfy for $\omega \in \mathbb R$
		\begin{align}\label{eq:431}
		& \sup_{r^\ast \in (-\infty,r^\ast_0]} |\uhr (\omega, r^\ast)|\lesssim_{r^\ast_0} 1, \\
		& \sup_{r^\ast \in (-\infty,r^\ast_0]} |\uhr^\prime (\omega, r^\ast)|\lesssim_{r^\ast_0} |\omega|
		\end{align}
		for  any fixed  $r^\ast_0 \in \mathbb R$ and 
		\begin{align}
		&  |\tilde \uhr (\omega, r^\ast) -1 |  \lesssim_{r_0^\ast} |\OmegaRN^2(r^\ast) | , \\
		&  | {\tilde\uhr}^\prime (\omega, r^\ast)|\lesssim_{r^\ast_0} |\OmegaRN^2(r^\ast)| \label{eq:434}
		\end{align}
		uniformly for $r^\ast \leq r_0^\ast$. Moreover,  for $\omega \in \mathbb R$ and any fixed $r^\ast_0 \in \mathbb R$
		\begin{align}
		& \sup_{r^\ast \in [r^\ast_0,+\infty)} | \uchl (\omega, r^\ast)|\lesssim_{r^\ast_0} 1, \\
		& \sup_{r^\ast \in [r^\ast_0 ,+\infty)} |{ \uchl}^\prime (\omega, r^\ast)|\lesssim_{r^\ast_0}  |\omega|,
		\end{align}
		\begin{align}
		& \sup_{r^\ast \in [r^\ast_0,+\infty)} | \uchr (\omega, r^\ast)|\lesssim_{r^\ast_0} 1, \\
		& \sup_{r^\ast \in [r^\ast_0 ,+\infty)} |{ \uchr}^\prime (\omega, r^\ast)|\lesssim_{r^\ast_0}  |\omega|,
		\end{align}
		and uniformly for $r^\ast \geq r_0^\ast$,
		\begin{align}
		&   | \tilde \uchl (\omega, r^\ast) -1 |\lesssim_{r_0^\ast} |\OmegaRN^2(r^\ast)| , \\
		&   |{\tilde \uchl}^\prime (\omega, r^\ast)| \lesssim_{r_0^\ast} |\OmegaRN^2|,
		\end{align} 
		\begin{align}
		&  | \tilde \uchr (\omega, r^\ast) -1 |\lesssim_{r_0^\ast} | \OmegaRN^2(r^\ast) | , \\
		&   |{\tilde \uchr}^\prime (\omega, r^\ast)|\lesssim_{r_0^\ast} |\OmegaRN^2 (r^\ast) |.
		\end{align}
		The transition and reflection coefficients as defined in \cref{defn:trans} satisfy
		\begin{align}\sup_{|\omega - \omer | \geq 1} \left( |\mathfrak T(\omega)| + |\mathfrak R(\omega)| \right) \lesssim 1. \label{eq:boundednessofTandR}
		\end{align}
		\begin{proof} It suffices to show the results for $\uhr$ and $\tilde \uhr$ as the other cases follow completely analogously. We will consider the cases $|\omega|\leq \omega_0 := |\omer| + 1$ and $|\omega | >\omega_0$ independently. First, for $|\omega|\leq \omega_0$, we note that $\uhr$ is the unique solution to the Volterra equation 
			\begin{align}
			\uhr(r^\ast,\omega) = e^{-i\omega r^\ast} + \int_{-\infty}^{r^\ast}  \frac{\sin(\omega (r^\ast - y)) }{\omega} \left(  2 \omega(\omega_r - \omega_+) -(\omega_r -\omega_+)^2  + V(y) \right) \uhr(y,\omega) \d y.
			\end{align}
			For $\omega=0$, we mean $\frac{\sin(\omega (r^\ast -y))}{\omega } = r^\ast -y$. Now, since 
			\begin{align}
			& \int_{-\infty}^{r_0^\ast}  \sup_{y \leq r^\ast<r_0^\ast} |K(r^\ast,y)| \d y  \lesssim \OmegaRN^2(r^\ast_0) , 
			\end{align}
			where \begin{align}
			K(r^\ast,y) = \frac{\sin(\omega (r^\ast - y)) }{\omega} \left( 2 \omega(\omega_r - \omega_+) -(\omega_r -\omega_+)^2  + V(y) \right) ,
			\end{align}
			we have by standard estimates on Volterra equations (e.g. \cite[Proposition 2.3]{Kehle2018} or \cite[\S10]{olver})  that for $|\omega|\leq \omega_0$, 
			\begin{align}\| \uhr \|_{L^\infty(-\infty, r_0^\ast) } \lesssim_{r_0^\ast} 1\end{align}
			as well as 
			\begin{align}
			| \uhr - e^{-i\omega r^\ast} |\lesssim  |\OmegaRN^2(r^\ast)|
			\end{align}
			uniformly for $r^\ast \leq 0$.
			Similarly, we obtain 
			\begin{align}
			\| \uhr' \|_{L^\infty(-\infty, r_0^\ast) } \lesssim_{r_0^\ast} 1+|\omega|\lesssim_{r_0^\ast} 1.
			\end{align}
			Note that this also shows that 	for $|\omega|\leq \omega_0$, we have 
			\begin{align}    
			&\| \tilde \uhr' \|_{L^\infty(-\infty, r_0^\ast) } \lesssim_{r_0^\ast} 1
			\end{align}
			and 
			\begin{align}
			| \tilde  \uhr - 1|  \lesssim |\OmegaRN^2(r^\ast)|
			\end{align}
			uniformly for $r^\ast \leq 0$. 
			
			Now, we consider the case $|\omega|\geq \omega_0$. Note that in this frequency regime, the frequency dependent potential \begin{align}W:= - (\omega - ( \omega_r - \omega_+ ) )^2  \end{align} satisfies 
			\begin{align} \label{eq:estimatesonW1}
			&  - W \gtrsim  \omega^2 \\
			&  \left|\frac{W'}{W}\right|\lesssim \frac{\OmegaRN^2}{|\omega|} \label{eq:estimatesonW2}\\
			&  \left|\frac{W''}{W}\right|\lesssim \frac{\OmegaRN^2}{|\omega|} \label{eq:estimatesonW3}
			\end{align} 
			and the radial potential $V$ satisfies
			\begin{align}
			|V|, |V'|, |V''|\lesssim \OmegaRN^2
			\end{align}
			uniformly on $r^\ast \in \mathbb R$.
			
			Now we will use a WKB approximation for $\uhr$. First, we will estimate the total variation $\mathcal V_{-\infty,+\infty}$ associated to the error-control function 
			\begin{align}
			F_{\uhr} (r^\ast, \omega) := \int_{-\infty}^{r^\ast} \frac{1}{|W|^{\frac 14}}  \frac{\d^2}{\d x^2 } |W|^{- \frac 14} - \frac{V}{|W|^{\frac 12} } \d y. 
			\end{align}
			In view of \eqref{eq:estimatesonW1}--\eqref{eq:estimatesonW3}, we   estimate 
			\begin{align}
			\mathcal V_{-\infty, + \infty}(F_{\uhr } ) = \int_{-\infty}^{+\infty} \left| \frac{1}{|W|^{\frac 14}}  \frac{\d^2}{\d x^2 } |W|^{- \frac 14} - \frac{V}{|W|^{\frac 12} } \right| \d y\lesssim  \frac{1}{|\omega|}.
			\end{align} 
			Thus, applying \cite[Theorem~2.2, p.196]{olver}   we obtain
			\begin{align}
			\uhr (r^\ast, \omega)  & =   \frac{|\omega|^{\frac 12} }{|W(r^\ast,\omega)|^{\frac 14}} e^{- i    \omega r^\ast  +i \int_{-\infty}^{r^\ast}  \omega_r  - \omega_+ \d y } \left( 1+ \eta_{\uhr}\right),
			\end{align}
			where the error function $\eta_{\uhr}$ satisfies  
			\begin{align}
			& |   \eta_{\uhr} (r^\ast,\omega) |  \lesssim \frac{1}{|\omega|},\\
			& |   \eta'_{\uhr} (r^\ast, \omega) |  \lesssim |W(r^\ast,\omega)|^{\frac 12}  \frac{1}{|\omega|} \lesssim 1
			\end{align}
			uniformly for $r^\ast \in \mathbb R$ and $|\omega|\geq \omega_0$ as well as 
			\begin{align}
			& |   \eta_{\uhr} (r^\ast,\omega) |  \lesssim \frac{\OmegaRN^2}{|\omega|},\\
			& |   \eta'_{\uhr} (r^\ast, \omega) |  \lesssim \OmegaRN^2
			\end{align}
			uniformly for $r^\ast < 0  $ and $|\omega|\geq \omega_0$.
			This shows that for $|\omega|\geq \omega_0$ we have 
			\begin{align}\label{equhr1}
			&\| \uhr \|_{L^\infty(\mathbb R) } \lesssim 1,\\
			& \| \uhr ' \|_{L^\infty(\mathbb R) } \lesssim |\omega|. \label{equhr2}
			\end{align}
			
			Note also that $\tilde \uhr = e^{i \omega r^\ast} \uhr$ similarly satisfies 
			\begin{align}
			&\|\tilde  \uhr \|_{L^\infty(\mathbb R) } \lesssim 1,\\
			&\| \tilde \uhr ' \|_{L^\infty(\mathbb R) } \lesssim 1
			\end{align}
			and 
			\begin{align}
			& |\tilde \uhr (r^\ast, \omega) -1 |\lesssim_{r_0^\ast} \OmegaRN^2 \\
			&   | \tilde \uhr' (r^\ast, \omega)| \lesssim_{r_0^\ast} \OmegaRN^2
			\end{align}
			uniformly for $r^\ast \leq r_0^\ast$ and $\omega \in \mathbb R$.  The other results for $\uchl$ and $\uchr$ are shown completely analogous.

			Now, we will show the bounds on the transmission and reflection coefficients $\mathfrak T$ and $\mathfrak R$. The bound \eqref{eq:boundednessofTandR} follows from the fact that for $|\omega|$ sufficiently large, $|\mathfrak W(\uhr, \uchr)|,  | \mathfrak W(\uhr, \uchl)|\lesssim |\omega|$ in view of \eqref{equhr1}, \eqref{equhr2} and computing the Wronskian as $r^\ast \to +\infty$.  For $|\omega|$ small, the bound follows from continuity of $|\mathfrak W(\uhr, \uchr)|$ and $|\mathfrak W(\uhr, \uchl)|$.
		\end{proof}
	\end{prop}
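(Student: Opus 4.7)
My plan is to split the analysis into two frequency regimes separated by a threshold $\omega_0 = |\omer| + 1$, handled by entirely different techniques. In the low-frequency regime $|\omega| \leq \omega_0$, I would exploit the fact that the Reissner--Nordström potential $V$ and the perturbation $2\omega(\omega_r - \omega_+) - (\omega_r - \omega_+)^2$ both decay like $\Omega_{RN}^2$ as $r^\ast \to -\infty$. This makes it natural to write $\uhr$ as the unique fixed point of the Volterra integral equation
\begin{equation*}
\uhr(r^\ast,\omega) = e^{-i\omega r^\ast} + \int_{-\infty}^{r^\ast} \frac{\sin(\omega(r^\ast - y))}{\omega}\bigl(2\omega(\omega_r - \omega_+) - (\omega_r - \omega_+)^2 + V(y)\bigr) \uhr(y,\omega)\, \d y,
\end{equation*}
since $\sin(\omega(r^\ast-y))/\omega$ together with $e^{-i\omega r^\ast}$ form the free solutions. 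The kernel $K(r^\ast,y)$ is controlled by $\Omega_{RN}^2(y)$ up to a polynomial factor in $|\omega| \leq \omega_0$, so standard Volterra estimates (as in Olver, Chapter 10) give the uniform bound $\|\uhr\|_{L^\infty(-\infty,r_0^\ast]} \lesssim_{r_0^\ast} 1$ and the improved bound $|\tilde\uhr - 1| \lesssim \Omega_{RN}^2$ once the factor $e^{i\omega r^\ast}$ is absorbed.

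For the high-frequency regime $|\omega| > \omega_0$, I would use a Liouville--Green/WKB approximation based on the frequency-dependent potential $W = -(\omega - (\omega_r - \omega_+))^2$. The key is to verify that for $|\omega|$ large, $|W| \gtrsim \omega^2$ while $|W'/W|$ and $|W''/W|$ are $O(\Omega_{RN}^2/|\omega|)$, and that $|V|, |V'|, |V''| \lesssim \Omega_{RN}^2$ uniformly; these estimates ensure that the Liouville--Green error-control function $F_{\uhr}(r^\ast,\omega) = \int_{-\infty}^{r^\ast} \bigl[|W|^{-1/4}(|W|^{-1/4})'' - V/|W|^{1/2}\bigr] \d y$ has total variation $O(1/|\omega|)$. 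Olver's theorem then yields the WKB representation $\uhr = |W|^{-1/4} e^{-i\omega r^\ast + i\int(\omega_r-\omega_+)\, \d y}(1 + \eta)$ with $|\eta| \lesssim 1/|\omega|$ (and improved $\lesssim \Omega_{RN}^2/|\omega|$ for $r^\ast$ in a compact set), and differentiating gives $|\uhr'| \lesssim |\omega|$.

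The estimates for $\uhl, \uchr, \uchl$ follow verbatim from the same two schemes applied near $r^\ast = +\infty$: the kernel and potential enjoy the same $\Omega_{RN}^2$ decay (now with respect to the surface gravity $K_-$ rather than $K_+$), and the renormalization $\tilde\uchl = e^{i(\omega-\omer)r^\ast}\uchl$ absorbs the new oscillatory phase. Finally, for the bounds on $\mathfrak T$ and $\mathfrak R$ for $|\omega-\omer|\geq 1$, I would compute the Wronskians $\mathfrak{W}(\uhr,\uchr)$ and $\mathfrak{W}(\uhr,\uchl)$ at any fixed $r^\ast$ using the bounds just obtained: in the high-frequency regime the Wronskians are $O(|\omega|)$ by the $L^\infty$/derivative bounds, so dividing by $2i(\omega-\omer)$ gives $|\mathfrak T|, |\mathfrak R|\lesssim 1$. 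For the intermediate range $1 \leq |\omega - \omer| \leq \omega_0 + 1$, I would invoke continuity (real-analyticity) of the Wronskians and the fact that the only possible pole is at $\omega = \omer$ (Proposition~\ref{prop:tr1}), which is excluded. The main technical obstacle I anticipate is the high-frequency WKB step: verifying the estimates on $W$, $W'$, $W''$ and tracking that the error term retains the extra factor $\Omega_{RN}^2$ near the horizons (needed for the renormalized bounds $|\tilde\uhr - 1| \lesssim \Omega_{RN}^2$) requires careful bookkeeping of which pieces of the error function survive the integration.
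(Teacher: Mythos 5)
Your proposal follows the paper's own proof essentially verbatim: the same frequency split at $\omega_0 = |\omer|+1$, the same Volterra equation with kernel bounded by $\OmegaRN^2$ in the low-frequency regime, the same Liouville--Green/WKB argument with the error-control function $F_{\uhr}$ and the estimates \eqref{eq:estimatesonW1}--\eqref{eq:estimatesonW3} in the high-frequency regime, and the same Wronskian computation plus continuity argument for \eqref{eq:boundednessofTandR}. No substantive differences.
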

	\begin{lem}\label{lem:estimatesonderiativesonuchrl}
		The bounds
		\begin{align}
		& |\partial_\omega\tilde  \uchr (\omega, r^\ast) | \lesssim \OmegaRN^2, \\
		& |\partial_\omega\tilde  \uchl (\omega, r^\ast) | \lesssim \OmegaRN^2
		\end{align}
		and
		\begin{align}
		& |\partial_{r^\ast}\partial_\omega\tilde  \uchr (\omega, r^\ast) | \lesssim \OmegaRN^2 \langle \omega \rangle , \\
		& |\partial_{r^\ast} \partial_\omega\tilde  \uchl (\omega, r^\ast) | \lesssim \OmegaRN^2 \langle \omega \rangle
		\end{align}
		hold uniformly for $r^\ast \geq 0$ and $\omega \in \mathbb R$.  (We recall that $\langle \omega \rangle:= \sqrt{1+\omega^2}$).
		
		Moreover, 
		\begin{align}
		& |\partial_\omega\tilde  \uhr (\omega, r^\ast) | \lesssim \OmegaRN^2
		\end{align}
		and
		\begin{align}
		& |\partial_{r^\ast}\partial_\omega\tilde  \uhr  (\omega, r^\ast) | \lesssim\OmegaRN^2 \langle \omega \rangle  
		\end{align}
		hold uniformly for $r^\ast \leq  0$ and $\omega \in \mathbb R$. 
		\begin{proof} First, we consider the range   $|\omega - \omer |\leq 1$. 
			First, note that $\tilde \uchr$ solves the Volterra integral equation
			\begin{align} \nonumber 
			\tilde \uchr (r^\ast, \omega)=  1 +  \int_{r^\ast}^{+\infty} & \frac{\sin[ (\omega - \omer)(r^\ast - y) ]}{\omega -\omer} e^{- i (\omega - \omer) (r^\ast - y)} \\
			& \left[V(y) - (\omega_- - \omega_{r(y)})(2 \omega + 2\omega_+ - \omega_- - \omega_{r(y)} ) \right] \tilde \uchr (\omega, y) \d y.
			\end{align}
			Thus, 
			$\partial_\omega \tilde \uchr$ solves  
			\begin{align} \nonumber 
			\partial_\omega 	\tilde \uchr (r^\ast, \omega) = \int_{r^\ast}^{+\infty} & \frac{\sin[ (\omega - \omer  )(r^\ast - y) ]}{\omega -\omer } e^{- i (\omega - \omer) (r^\ast - y)} \\ \nonumber 
			& \left[V(y) - (\omega_- - \omega_{r(y)})(2 \omega + 2\omega_+ - \omega_- - \omega_{r(y)} ) \right] \partial_\omega \tilde \uchr (\omega, y) \d y \\ \nonumber  +  \int_{r^\ast}^{+\infty} & \frac{ \partial_\omega \left( \operatorname{sinc}[ (\omega - \omer )(r^\ast - y) ] e^{- i (\omega -\omer) (r^\ast - y)} \right)}{r^\ast - y} (r^\ast -y)^2  \\\nonumber
			& \left[V(y) - (\omega_- - \omega_{r(y)})(2 \omega + 2\omega_+ - \omega_- - \omega_{r(y)} ) \right] \tilde \uchr (\omega, y) \d y\\ \nonumber  +  \int_{r^\ast}^{+\infty} & \frac{\sin[ (\omega - \omer  )(r^\ast - y) ]}{\omega - \omer} e^{- i (\omega - \omer) (r^\ast - y)}  \\
			& 2\left[V(y) - (\omega_- - \omega_{r(y)}) ) \right] \tilde \uchr (\omega, y) \d y.
			\label{eq:formulaforpartialwuchr}
			\end{align}
			Now, we have the following bounds uniformly for $r^\ast \geq 0$
			\begin{align}
			&\left|  \frac{\sin[ (\omega - \omer )(r^\ast - y) ]}{\omega - \omer} e^{- i (\omega - \omer) (r^\ast - y)} \right|\lesssim (r^\ast -y),\\
			&\left| \frac{ \partial_\omega \left( \operatorname{sinc}[ (\omega - \omer)(r^\ast - y) ] e^{- i (\omega - \omer) (r^\ast - y)} \right)}{r^\ast - y} \right|\lesssim 1, \\
			& \left|   V(y) - (\omega_- - \omega_{r(y)})(2 \omega + 2\omega_+ - \omega_- - \omega_{r(y)} )   \right|\lesssim \OmegaRN^2,\\
			& \left| V(y) - (\omega_- - \omega_{r(y)}) ) \right|\lesssim \OmegaRN^2. 
			\end{align}
			With these bounds, standard results (e.g.\ \cite[\S10]{olver}) on estimates of solutions of Volterra integral equations, show that 
			\begin{align}
			\left| \partial_\omega \tilde \uchr (r^\ast,\omega) \right|  \lesssim\OmegaRN^2
			\end{align}
			uniformly for $r^\ast \geq 0$.
			Similarly, we have \begin{align}
			\left| \partial_\omega \tilde \uchl (r^\ast,\omega) \right|  \lesssim  \OmegaRN^2
			\end{align}
			uniformly for $r^\ast \geq 0$.
			
			Differentiation of \eqref{eq:formulaforpartialwuchr} with respect to $r^\ast$ also gives
			\begin{align}
			&|  \partial_{r^\ast} \partial_{\omega} \tilde \uchr |\lesssim\OmegaRN^2
			\end{align}
			and analogously we obtain
			\begin{align} |\partial_{r^\ast} \partial_{\omega} \tilde \uchl |\lesssim \OmegaRN^2.
			\end{align}
			
			Now, we consider the range $|\omega - \omer| \geq 1$. Then, for $r^\ast \geq 0$, we have the bounds 	\begin{align}
			&\left|  \frac{\sin[ (\omega - \omer )(r^\ast - y) ]}{\omega -\omer} e^{- i (\omega - \omer) (r^\ast - y)} \right|\lesssim \langle \omega \rangle^{-1}, \\
			&\left|  \partial_\omega \left( \operatorname{sinc}[ (\omega - \omer )(r^\ast - y) ] e^{- i (\omega - \omer) (r^\ast - y)} \right)  \right| \lesssim \langle \omega \rangle^{-1} \frac{1+|r^\ast -y|}{|r^\ast -y|}, \\
			& \left|   V(y) - (\omega_- - \omega_{r(y)})(2 \omega + 2\omega_+ - \omega_- - \omega_{r(y)} )   \right|\lesssim \OmegaRN^2\langle \omega \rangle, \\ &
			\left| V(y) - (\omega_- - \omega_{r(y)}) ) \right|\lesssim\OmegaRN^2.
			\end{align}
			Thus, analogously to the above, this gives 	uniformly for $r^\ast \geq 0$
			\begin{align}
			&	\left| \partial_\omega \tilde \uchr (r^\ast,\omega) \right|  \lesssim \OmegaRN^2, \\
			&	\left| \partial_\omega \tilde \uchl (r^\ast,\omega) \right|  \lesssim \OmegaRN^2,
			\end{align}
			as well as 
			\begin{align}
			&|  \partial_{r^\ast} \partial_{\omega} \tilde \uchr |\lesssim\OmegaRN^2 \langle \omega \rangle,
			\\ & |\partial_{r^\ast} \partial_{\omega} \tilde \uchl |\lesssim \OmegaRN^2\langle \omega \rangle.
			\end{align}
			The result on $\uhr$ follows completely analogous.
		\end{proof}
	\end{lem}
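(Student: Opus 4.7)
The plan is to derive Volterra integral equations for $\partial_\omega \tilde\uchr$ and $\partial_\omega \tilde\uchl$ by $\omega$-differentiating the Volterra equations these functions already satisfy near the Cauchy horizon, and then to close the estimates using standard Volterra theory (cf.\ \cite[\S10]{olver}) exactly as in the proof of \cref{prop:tr2}. Concretely, $\tilde\uchr$ satisfies a Volterra equation on $[r^\ast,+\infty)$ with kernel
\[
K_{\mathcal C}(r^\ast,y;\omega) = \frac{\sin[(\omega-\omer)(r^\ast-y)]}{\omega-\omer}\, e^{-i(\omega-\omer)(r^\ast-y)} \bigl[V(y) - (\omega_- -\omega_{r(y)})(2\omega+2\omega_+-\omega_--\omega_{r(y)})\bigr],
\]
and $\omega$-differentiating produces a Volterra equation for $\partial_\omega \tilde\uchr$ with the same kernel and an inhomogeneity coming from (a) $\partial_\omega K_{\mathcal C}$ applied to $\tilde\uchr$ and (b) the $\partial_\omega$ hitting the $\omega$-dependent coefficient multiplying $\tilde\uchr$. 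Since the $\tilde\uchr$ bounds of \cref{prop:tr2} give $|\tilde\uchr|\lesssim 1$ (and even $|\tilde\uchr-1|\lesssim \OmegaRN^2$ for $r^\ast\geq 0$), only the kernel and inhomogeneity estimates are genuinely new.

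I would split the analysis into $|\omega-\omer|\leq 1$ and $|\omega-\omer|\geq 1$. In the low-frequency regime the key pointwise bounds are
\[
\left|\tfrac{\sin[(\omega-\omer)(r^\ast-y)]}{\omega-\omer}e^{-i(\omega-\omer)(r^\ast-y)}\right|\lesssim |r^\ast-y|,\qquad \left|\partial_\omega\!\left(\operatorname{sinc}[(\omega-\omer)(r^\ast-y)]\,e^{-i(\omega-\omer)(r^\ast-y)}\right)\right|\lesssim |r^\ast-y|,
\]
combined with $|V(y)|, |(\omega_- -\omega_{r(y)})(2\omega+2\omega_+-\omega_--\omega_{r(y)})|\lesssim \OmegaRN^2(y)$. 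Because $\OmegaRN^2\sim e^{4K_- r^\ast}$ decays exponentially as $r^\ast\to+\infty$ and the integrands are weighted by $|r^\ast-y|^k\OmegaRN^2(y)$, the Volterra supremum estimate closes and yields $|\partial_\omega\tilde\uchr(r^\ast,\omega)|\lesssim \OmegaRN^2(r^\ast)$ uniformly for $r^\ast\geq 0$. In the high-frequency regime I would instead use $\bigl|\tfrac{\sin[(\omega-\omer)(r^\ast-y)]}{\omega-\omer}\bigr|\lesssim \langle\omega\rangle^{-1}$ together with $|\partial_\omega(\operatorname{sinc}\cdots)|\lesssim \langle\omega\rangle^{-1}(1+|r^\ast-y|)/|r^\ast-y|$; the additional factor $\langle\omega\rangle$ arising from $\partial_\omega[(\omega_- -\omega_{r(y)})(2\omega+2\omega_+-\omega_--\omega_{r(y)})]\cdot\tilde\uchr$ is absorbed by $\langle\omega\rangle^{-1}$, leaving an $r^\ast$-integrable $\OmegaRN^2(y)$-weighted kernel and producing the same conclusion $|\partial_\omega\tilde\uchr|\lesssim\OmegaRN^2(r^\ast)$.

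For the mixed derivative $\partial_{r^\ast}\partial_\omega\tilde\uchr$, I would differentiate the Volterra equation for $\partial_\omega\tilde\uchr$ in $r^\ast$, which produces (i) a boundary-like contribution from differentiating the limit (vanishing after the $\sin[(\omega-\omer)\cdot 0]$ factor) and (ii) integrands with one extra derivative on the kernel. Each such $r^\ast$-derivative costs at most a factor $\langle\omega\rangle$ (from $(\omega-\omer)\cos/\cdots$ in the high-frequency regime) while keeping the $\OmegaRN^2$ weight, giving the advertised $|\partial_{r^\ast}\partial_\omega\tilde\uchr|\lesssim\OmegaRN^2\langle\omega\rangle$. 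The estimates for $\tilde\uchl$ are obtained by the identical argument, since its defining Volterra equation differs only by the conjugate phase $e^{+i(\omega-\omer)(r^\ast-y)}$.

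Finally, for $\tilde\uhr$ I would repeat the whole argument with the Volterra equation on $(-\infty,r^\ast]$ used to define $\tilde\uhr$ near the event horizon: now $\OmegaRN^2\sim e^{4K_+r^\ast}$ decays exponentially as $r^\ast\to-\infty$, the kernel is built from $e^{i\omega(r^\ast-y)}$ phases and $V(y)+2\omega(\omega_r-\omega_+)-(\omega_r-\omega_+)^2$ which is again $O(\OmegaRN^2)+\omega\cdot O(\OmegaRN^2)$, and the identical Volterra scheme yields $|\partial_\omega\tilde\uhr|\lesssim\OmegaRN^2$ and $|\partial_{r^\ast}\partial_\omega\tilde\uhr|\lesssim\OmegaRN^2\langle\omega\rangle$ uniformly for $r^\ast\leq 0$. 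The main (mild) obstacle is bookkeeping of the three distinct $\omega$-derivative contributions at high frequencies so that the $\langle\omega\rangle$ factor from $\partial_\omega$ on the $\omega$-affine coefficient is genuinely offset by the $\langle\omega\rangle^{-1}$ gain from the oscillatory kernel; once this is organized, all estimates follow directly from the Volterra supremum lemma applied with the $\OmegaRN^2$-weighted tail norm.
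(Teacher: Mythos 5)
Your proposal follows essentially the same route as the paper: differentiate the Volterra equation for $\tilde\uchr$ (resp.\ $\tilde\uchl$, $\tilde\uhr$) in $\omega$, split into the regimes $|\omega-\omer|\leq 1$ and $|\omega-\omer|\geq 1$, bound the resulting kernels and inhomogeneities by $\OmegaRN^2$-weighted quantities (with the $\langle\omega\rangle$ loss from the $\omega$-affine coefficient offset by the $\langle\omega\rangle^{-1}$ gain from the oscillatory factor at high frequency), and close via the standard Volterra supremum estimate, then differentiate once more in $r^\ast$ for the mixed bounds. This matches the paper's proof in both structure and the key pointwise kernel estimates, so the argument is correct.
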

	
	\begin{cor}\label{cor:tarnsmissionandreflection}
		The normalized transmission and reflection coefficients   satisfy \begin{align}  |\mathfrak t(\omega) | + |\mathfrak r(\omega)| \lesssim 1+ |\omega|. \end{align}
		\begin{proof}
			This is a consequence of \cref{prop:tr1} and \cref{prop:tr2}.
		\end{proof}
	\end{cor}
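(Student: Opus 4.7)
The idea is to exploit the fact that the Wronskian of two solutions of the radial o.d.e.\ \eqref{eq:radialode} is independent of $r^\ast$, so we may evaluate $\mathfrak W(\uhr,\uchr)$ and $\mathfrak W(\uhr,\uchl)$ at any convenient base point and invoke the pointwise bounds from \cref{prop:tr2}. This reduces the proof to a direct computation, and I do not expect a serious obstacle.

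Concretely, I would fix any $r_0^\ast \in \mathbb R$ (say $r_0^\ast = 0$) which lies simultaneously in the region of validity of the $\uhr$ bounds (on $(-\infty,r_0^\ast]$) and of the $\uchr$, $\uchl$ bounds (on $[r_0^\ast,+\infty)$). By \cref{prop:tr2}, evaluated at $r^\ast = r_0^\ast$, we have for all $\omega \in \mathbb R$
\begin{align*}
|\uhr(r_0^\ast,\omega)| + |\uchr(r_0^\ast,\omega)| + |\uchl(r_0^\ast,\omega)| &\lesssim 1,\\
|\uhr'(r_0^\ast,\omega)| + |\uchr'(r_0^\ast,\omega)| + |\uchl'(r_0^\ast,\omega)| &\lesssim 1+|\omega|,
\end{align*}
where the additive $1$ in the derivative bound handles the small-frequency regime $|\omega|\leq \omega_0$ from the proof of \cref{prop:tr2}, and the $|\omega|$ growth handles the WKB regime $|\omega|\geq \omega_0$.

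Since $\mathfrak W(f,g) = fg' - f'g$ is conserved along solutions of \eqref{eq:radialode}, we compute
\begin{align*}
|\mathfrak W(\uhr,\uchr)| &= |\uhr(r_0^\ast)\uchr'(r_0^\ast) - \uhr'(r_0^\ast)\uchr(r_0^\ast)| \lesssim 1 + |\omega|,
\end{align*}
and identically $|\mathfrak W(\uhr,\uchl)| \lesssim 1 + |\omega|$. Recalling the definitions $\mathfrak t(\omega) = \mathfrak W(\uhr,\uchr)/(2i)$ and $\mathfrak r(\omega) = -\mathfrak W(\uhr,\uchl)/(2i)$ from \cref{defn:trans} immediately yields $|\mathfrak t(\omega)| + |\mathfrak r(\omega)| \lesssim 1 + |\omega|$, as claimed. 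Note that no appeal to \cref{prop:tr1} is required: the $(\omega-\omer)^{-1}$ poles of $\mathfrak T$ and $\mathfrak R$ have been absorbed into the normalization, consistent with the conclusion that $\mathfrak t$ and $\mathfrak r$ are bounded near $\omega = \omer$.
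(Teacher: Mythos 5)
Your proof is correct and is exactly the computation the paper's one-line proof is pointing to: evaluate the conserved Wronskians $\mathfrak W(\uhr,\uchr)$ and $\mathfrak W(\uhr,\uchl)$ at a common point $r_0^\ast$ and apply the pointwise bounds of \cref{prop:tr2} (your reading of the derivative bound as $1+|\omega|$ rather than the literal $|\omega|$ in the statement is the right one, as the proof of \cref{prop:tr2} establishes $\lesssim 1$ for $|\omega|\leq\omega_0$, and $\mathfrak t(0)\neq 0$ would otherwise be contradicted). Your observation that \cref{prop:tr1} is not actually needed for this upper bound is also accurate.
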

	\begin{lem}\label{lem:estimatesonderiativesofrandt}
		We have 
		\begin{align}
		&|\partial_\omega \mathfrak r (\omega) | \lesssim \langle \omega \rangle \\  
		&| \partial_\omega \mathfrak t (\omega) | \lesssim \langle \omega \rangle.
		\end{align}
		\begin{proof}
			We estimate \begin{align} & |\partial_\omega \mathfrak r |\lesssim |\partial_\omega  \mathfrak W(\uhr, \uchr)|   \lesssim | \mathfrak W(\partial_\omega \uhr, \uchr) (r^\ast =0) |  +| \mathfrak W(\uhr,\partial_\omega  \uchr) (r^\ast =0) | \lesssim \langle \omega \rangle \end{align} in view of \cref{lem:estimatesonderiativesonuchrl} and \cref{prop:tr2}. Analogously the same holds for $\mathfrak t$. 
		\end{proof}
	\end{lem}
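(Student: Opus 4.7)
The plan is to exploit the fact that the Wronskian $\mathfrak W(f,g) = fg' - f'g$ of two solutions of \eqref{eq:radialode} is independent of $r^\ast$, so I may evaluate the defining expressions
\[
\mathfrak r(\omega) = -\tfrac{1}{2i}\mathfrak W(\uhr,\uchl), \qquad \mathfrak t(\omega) = \tfrac{1}{2i}\mathfrak W(\uhr,\uchr)
\]
at the convenient point $r^\ast = 0$, then differentiate in $\omega$ and estimate each resulting term using \cref{prop:tr2} and \cref{lem:estimatesonderiativesonuchrl}. The choice $r^\ast=0$ is crucial: it eliminates the otherwise unbounded oscillatory phase when passing between the normalized solutions $\tilde u$ and the bare solutions $u$.

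Concretely, I would first translate the pointwise information from $\tilde u$ (which is what the previous lemmas control) into pointwise information for $u$ and $\partial_\omega u$ at $r^\ast=0$. From the relations $\uhr = e^{-i\omega r^\ast}\tilde\uhr$, $\uchl = e^{-i(\omega-\omer)r^\ast}\tilde\uchl$, and $\uchr = e^{i(\omega-\omer)r^\ast}\tilde\uchr$, evaluating at $r^\ast=0$ gives $u(0) = \tilde u(0)$ and $\partial_\omega u(0) = \partial_\omega \tilde u(0)$. Differentiating once more in $r^\ast$ and then in $\omega$ produces, at $r^\ast=0$, a linear combination of $\tilde u(0)$, $\omega\,\partial_\omega\tilde u(0)$, and $\partial_{r^\ast}\partial_\omega\tilde u(0)$. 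Applying \cref{prop:tr2} (which gives $|\tilde u(0)|\lesssim 1$) together with \cref{lem:estimatesonderiativesonuchrl} ($|\partial_\omega \tilde u(0)|\lesssim 1$ and $|\partial_{r^\ast}\partial_\omega \tilde u(0)|\lesssim \langle\omega\rangle$) yields
\[
|\partial_\omega u(0)| \lesssim 1, \qquad |\partial_\omega u'(0)| \lesssim \langle\omega\rangle,
\]
for each of $u \in \{\uhr, \uchl, \uchr\}$. Combining with the basic bounds $|u(0)|\lesssim 1$ and $|u'(0)|\lesssim \langle\omega\rangle$ from \cref{prop:tr2}, I then expand $\partial_\omega \mathfrak W(u,v)\big|_{r^\ast=0} = \partial_\omega u\cdot v' + u\cdot \partial_\omega v' - \partial_\omega u'\cdot v - u'\cdot \partial_\omega v$ and bound each of the four products by $\langle\omega\rangle$, which gives the claim for both $\partial_\omega \mathfrak r$ and $\partial_\omega \mathfrak t$.

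The routine part is a bookkeeping exercise; the only slightly delicate point is the $\omega\,\partial_\omega\tilde u(0)$ term arising from $\partial_{r^\ast}\partial_\omega u$ at $r^\ast = 0$. This term is harmless because it is matched exactly against the $\langle\omega\rangle$ weight we are aiming to prove, and because we have evaluated at $r^\ast = 0$ rather than at a growing value of $r^\ast$ that would have produced an extra unbounded factor. No additional input beyond \cref{prop:tr2} and \cref{lem:estimatesonderiativesonuchrl} is needed, so the bound extends uniformly across all real $\omega$.
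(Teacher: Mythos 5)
Your proposal is correct and follows essentially the same route as the paper: differentiate the $r^\ast$-independent Wronskian, evaluate at $r^\ast=0$ (where the bounds of \cref{prop:tr2} and \cref{lem:estimatesonderiativesonuchrl} for the horizon and Cauchy-horizon branches overlap and the oscillatory phases relating $u$ to $\tilde u$ are trivial), and bound the four resulting products by $\langle\omega\rangle$. The paper's proof is just a one-line version of this; your write-up usefully makes explicit the conversion $\partial_\omega u'(0) = -i\tilde u(0) - i\omega\,\partial_\omega\tilde u(0) + \partial_\omega\tilde u'(0)$ that the paper leaves implicit.
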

	Towards the $W^{1,1}$ inextendibility at the Cauchy horizon we need to analyze the zeros of the transmission coefficient $\mathfrak t$. To do so, we recall the definition of $\mathcal Z_{\mathfrak t}(M,e,q_0,m^2)$ from \eqref{eq:defnofz}.
	\begin{lem}\label{lem:zerosornot}
		\begin{enumerate}
			\item Let $q_0 e \neq 0$. Then,   $    \mathcal Z_{\mathfrak t}\subset (0,\omer)$ if $q_0e > 0$ or  $    \mathcal Z_{\mathfrak t}\subset (\omer, 0)$ if $q_0 e < 0$. 
			\item Let $0 < |q_0 e | < \epsilon(M,e,m^2) $ for some $\epsilon(M,e,m^2)$ sufficiently small, then $\mathfrak t$ does not have any zeros, i.e. $    \mathcal Z_{\mathfrak t} = \emptyset$. 
			\item Let $q_0 =0$ and let $m^2 \notin D(M,e)$, where $D(M,e)$ is the discrete set as in \cite[Theorem~7]{Kehle2018}. Then, $\mathfrak t(\omega)$ does not have any zeros, i.e.    $ \mathcal Z_{\mathfrak t}(M,e,0,m^2)=\emptyset$ if $m^2 \notin D(M,e)$. 
		\end{enumerate} 
	\end{lem}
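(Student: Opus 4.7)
The central tool will be the ``o.d.e.\ energy identity'' already extracted in the proof of Proposition~\ref{prop:tr1}: from the $r^\ast$-conservation of $\mathrm{Im}(u'\bar u)$ applied to $\uhr = \mathfrak T \uchl + \mathfrak R \uchr$ and matching the asymptotics at $r^\ast = \pm\infty$, one obtains $|\mathfrak T|^2 - |\mathfrak R|^2 = \omega/(\omega-\omer)$. My plan is to multiply through by $(\omega-\omer)^2$ and use Definition~\ref{defn:trans} to derive
\begin{align}\label{eq:energyplan}
|\mathfrak t(\omega)|^2 \;=\; |\mathfrak r(\omega)|^2 + \omega(\omega-\omer),
\end{align}
the identity on which all three assertions hinge. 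Noting that $\omer = q_0 e (r_+ - r_-)/(r_+ r_-)$ has the same sign as $q_0 e$ then fixes the sign conventions.

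For part~(1), any zero of $\mathfrak t$ with $q_0 e \neq 0$ forces $\omega(\omega-\omer) \leq 0$ via \eqref{eq:energyplan}, placing $\omega$ in the closed interval with endpoints $0$ and $\omer$. I will then exclude both endpoints. The endpoint $\omega = \omer$ is ruled out using \eqref{eq:tomegromegr} together with Proposition~\ref{prop:tr1}: the non-removable first-order pole of $\mathfrak T$ at $\omer$ is equivalent to $\mathfrak W(\uhr,\uchr)(\omer)\neq 0$, and hence $\mathfrak t(\omer)\neq 0$. At $\omega=0$ (where $\omer\neq 0$), I argue by contradiction: if $\mathfrak t(0) = 0$ then \eqref{eq:energyplan} also forces $\mathfrak r(0) = 0$; but $\mathfrak t(0) = 0$ means $\uhr(\cdot,0) = c\,\uchr(\cdot,0)$ for some $c\in\mathbb C$, and since $0 \neq \omer$ the pair $(\uchr,\uchl)$ spans the solution space at $\omega = 0$ by the remark following Definition~\ref{defn:ua}, so $\mathfrak W(\uhr,\uchl)(0) = c\,\mathfrak W(\uchr,\uchl)(0) \neq 0$, contradicting $\mathfrak r(0) = 0$.

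For part~(3), at $q_0 = 0$ the identity \eqref{eq:energyplan} collapses to $|\mathfrak t(\omega)|^2 = |\mathfrak r(\omega)|^2 + \omega^2$, so for $\omega\neq 0$ one has $|\mathfrak t(\omega)|^2 \geq \omega^2 > 0$. At $\omega = 0$ one has $\mathfrak t(0) = 0 \Leftrightarrow \mathfrak r(0) = 0$, which by the defining property of $D(M,e)$ in \cite[Theorem~7]{Kehle2018} occurs precisely when $m^2 \in D(M,e)$, settling part~(3).

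Part~(2) is the step I expect to be the main obstacle, and I would handle it by a perturbation argument off parts~(1) and~(3). The radial o.d.e.\ \eqref{eq:radialode} depends real-analytically on the joint parameters $(\omega,q_0)$, so $\mathfrak t$ inherits this joint real-analyticity. The containment from part~(1) that $\mathcal Z_{\mathfrak t}(M,e,q_0,m^2) \subset [\min(0,\omer),\max(0,\omer)]$ then forces any sequence $q_0^{(n)}e \to 0$ admitting zeros $\omega_n$ to satisfy $\omega_n \to 0$ and produce $\mathfrak t(0,M,e,0,m^2) = 0$ in the limit. When $m^2 \notin D(M,e)$, part~(3) gives $\mathfrak t(0,M,e,0,m^2)\neq 0$, and joint continuity immediately delivers the desired $\epsilon(M,e,m^2)>0$. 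In the remaining case $m^2 \in D(M,e)$, one must track how the simple zero of $\mathfrak t(\cdot,0,m^2)$ at $\omega=0$ deforms under perturbation of $q_0 e$: here I expect an implicit-function / Puiseux-branch analysis that exploits the factorization $\mathfrak t(\omega) = (\omega-\omer)\mathfrak T(\omega)$ and the resonance structure at $\omer$ to show that the root leaves the real axis into $\mathbb C$, so that $\mathfrak t|_{\mathbb R}$ remains non-vanishing for $|q_0 e|$ small enough. This last branching step is the most delicate point of the lemma.
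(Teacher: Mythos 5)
Your parts (1) and (3) follow the paper's own route almost verbatim: the o.d.e.\ energy identity $|\mathfrak t|^2=|\mathfrak r|^2+\omega(\omega-\omer)$ confines any real zero to the closed interval between $0$ and $\omer$, the endpoint $\omer$ is excluded by \cref{prop:tr1}, and the endpoint $0$ is excluded by the ``otherwise $\uhr$ would be trivial'' contradiction (the paper phrases it via $\mathfrak T(0)=\mathfrak R(0)=0$, you via $\uhr=c\,\uchr$ together with $\mathfrak W(\uchr,\uchl)(0)=2i\omer\neq0$; both are correct). For part (3) the paper simply cites \cite[Theorem~7]{Kehle2018}, whereas you rederive the $\omega\neq0$ case from $|\mathfrak t|^2=|\mathfrak r|^2+\omega^2$ and reduce $\omega=0$ to the defining property of $D(M,e)$; this is equivalent and more self-contained.

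The one substantive divergence is part (2). The paper's entire proof there is the sentence that it ``just follows from continuity of $\mathfrak t$ as a function of the parameters $q_0e$,'' which---exactly as you diagnose---works cleanly only when $m^2\notin D(M,e)$: by part (1) the zeros are trapped in a window shrinking to $\{0\}$ as $q_0e\to0$, and joint continuity plus $\mathfrak t(\omega=0;q_0e=0)\neq0$ gives the contradiction. When $m^2\in D(M,e)$ one has $\mathfrak t(\omega=0;q_0e=0)=0$ and the bare continuity argument is inconclusive; the energy identity does not rescue it either, since at a putative zero $\omega_*\in(0,\omer)$ it only forces $|\mathfrak r(\omega_*)|^2=\omega_*(\omer-\omega_*)\leq\omer^2/4$, which is perfectly consistent with $\mathfrak r\to0$ in the resonant limit. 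You honestly flag this subcase and sketch, without carrying out, a branch-tracking/Puiseux argument. So your proposal reproduces the paper's proof wherever the paper is explicit and correctly isolates the one point the paper leaves implicit; but note that the lemma is later invoked for arbitrary $m^2$ (e.g.\ in \cref{prop:representation} Part E and \cref{W11.main.thm}), so the resonant-mass subcase does need to be settled, and neither your sketch nor the paper's one-liner yet does so.
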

	\begin{proof}
		The first statement follows from the fact that $|\mathfrak t|^2 = |\mathfrak r |^2 + \omega ( \omega - \omer) \geq \omega (\omega - \omer)$, \cref{prop:tr1} and the fact that $\mathfrak t(\omega =0) \neq 0$. Indeed, if $\mathfrak t(\omega=0) = 0$, then $\mathfrak r(\omega =0)=0$ and thus $\mathfrak T(\omega =0) =\mathfrak R(\omega =0) = 0$. But this cannot be true,   since otherwise $\uhr =  \mathfrak R \uchr + \mathfrak T \uchl$ would be trivial.
		The second statement just follows from continuity of $\mathfrak t$ as a function of the parameters $q_0 e$. 
		The third statement is shown in \cite[Theorem~7]{Kehle2018}. 
	\end{proof}
	\begin{rmk}
		Note that for $q_0 =0$ and $m^2=0$, we have that  $\mathfrak t( \omega =0) = 0$. This is a crucial observation for the existence of a $T$-energy scattering theory as established in \cite{Kehle2018}.
	\end{rmk}
	\subsection{Representation formula}  \label{Representation.section}
	We recall that throughout \cref{linearsection} we consider the event horizon $\mathcal H^+$ as the set $\{ u=-\infty\}\times \{ v \in \mathbb R \}$ as in \cref{RNsolution}. 
	\begin{defn}
		For $f \in L^2(\mathcal{H}^+)$ we define the Fourier transform along the event horizon as 
		\begin{align}
		\mathcal F_{\mathcal{H}^+} [f](\omega) :=  r_+ \mathcal{F} [f] (\omega) = \frac{r_+}{\sqrt{2\pi} }  \int_{\mathbb R} f(\tilde v) e^{i \omega \tilde v} \d \tilde v 
		\end{align}
		in    mild abuse of notation. 
	\end{defn}
	\begin{lem}\label{lem:representation}
		Let $  \phiHL \in C^\infty (\mathcal{H}^+) $ be spherically symmetric smooth data on the event horizon and assume that $  \phiHL$ is supported away from the past bifurcation sphere. Assume vanishing data on the left event horizon and let $  \phil$ be the arising smooth solution to \eqref{eq:chargedKGlinear} attaining that data. Then, for any fixed $ v_1$ and any $u \in \mathbb R$, $v\leq v_1$ we have 
		\begin{align}\label{eq:representationformula}
		\phil  (u,v) = \frac{1}{\sqrt{2\pi}r} \int \mathcal{F}_{\mathcal{H}^+} [ \phiHL \chi_{\leq v_1}] (\omega) \tilde \uhr (r^\ast(u,v),\omega) e^{-i\omega v} \d \omega 
		\end{align}
		and 
		\begin{align}
		\label{eq:representationformulapartialv}
		\partial_v	( r  \phil (u,v) )= \frac{1}{\sqrt{2\pi}} \int \mathcal{F}_{\mathcal{H}^+} [  \phiHL \chi_{\leq v_1}] (\omega)  \partial_v \left( \tilde  \uhr (r^\ast(u,v),\omega) e^{-i\omega v} \right) \d \omega \\\label{eq:representationformulapartialu}
		\partial_u (r	 \phil (u,v)) = \frac{1}{\sqrt{2\pi}} \int \mathcal{F}_{\mathcal{H}^+} [ \phiHL \chi_{\leq v_1}] (\omega)  \partial_u \left( \tilde  \uhr (r^\ast(u,v),\omega) e^{-i\omega v} \right) \d \omega,
		\end{align}
		where $\chi_{\leq v_1} (v)= \chi_0 ( v - v_1)$ and $\chi_0 \colon \mathbb R \to [0,1]  $  is a smooth cut-off which  satisfies $\chi_0 (x) = 1$ for $x \leq 0 $ and $\chi_0(x) = 0$ for $x \geq 1$. 
		
		By a standard density argument,  \eqref{eq:representationformula}, \eqref{eq:representationformulapartialv} and \eqref{eq:representationformulapartialu} hold also for spherically symmetric data  $ \phiHL \in C^1(\mathcal H^+)$ with $ \phiHL $ supported away from the past bifurcation sphere.
		
		\begin{proof}
			Fix any $ v_1$ and let $(u,v)$ with $v \leq  v_1$ be arbitrary. By the domain of dependence property, we have that $ \phil$ satisfies $ \phil  =  {\phil}_{\leq v_1}  $ on $(u,v)$ with $v\leq v_1$, where $ {\phil}_{\leq v_1}  $ is the unique solution arising from data  $\phiHL  \chi_{\leq v_1} \in C_c^\infty (\mathcal H^+) $ on the right event horizon $\mathcal H^+$ together with vanishing data on the left event horizon.  
			Now, since $\mathcal{F}_{\mathcal H^+} [\phiHL \chi_{\leq v_1}]$ is Schwartz, $ \uhr$ satisfies \eqref{eq:radialode}, and $ \uhr$ obeys the bounds as in \cref{prop:tr2}, we can differentiate under the integral sign on the right hand side of \eqref{eq:representationformula} and conclude that indeed the right hand side of \eqref{eq:representationformula} solves \eqref{eq:chargedKGlinear}. Finally, to show that $ \phil  =  {\phil}_{\leq v_1}  $ it suffices to show that the right hand side assumes the data from which $ {\phil}_{\leq v_1}$ arises. But again, since $\mathcal{F}_{\mathcal H^+} [\phiHL  \chi_{\leq v_1}]$ is Schwartz, we immediately obtain that the right-hand side of \eqref{eq:representationformula} converges to $ \phiHL \chi_{\leq v_1}$ towards the right event horizon, and---after an application of the Riemann--Lebesgue lemma---to 0 towards the left event horizon. 
			Now, \eqref{eq:representationformula} follows from uniqueness of the characteristic initial value problem.
			The formulae 	\eqref{eq:representationformulapartialv} and 	\eqref{eq:representationformulapartialu} now follow  from differentiating under the integral sign which can be applied as $\mathcal{F}_\mathcal H^+ [ \phiHL \chi_{\leq v_1}]$ is a Schwartz function.
		\end{proof}
	\end{lem}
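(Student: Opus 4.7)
The plan is to use the domain of dependence to reduce to compactly supported data, verify that the claimed integral expression solves \eqref{eq:chargedKGlinear} with the correct boundary data, and conclude by uniqueness for the characteristic initial value problem.

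First I would invoke finite speed of propagation. Since $\phiHL$ is supported away from the past bifurcation sphere $\mathcal{B}_-$, the product $\phiHL \chi_{\leq v_1}$ has compact support in some interval $[v_-, v_1 +1]\subset \mathbb R$. The domain of dependence for the characteristic initial value problem (with vanishing data on $\mathcal{H}^{+,L}$) implies that on the region $\{v \leq v_1\}$ the solution $\phil$ agrees with the solution arising from $\phiHL \chi_{\leq v_1}$. Hence it suffices to prove \eqref{eq:representationformula} for data in $C_c^\infty(\mathcal{H}^+)$, in which case $\mathcal{F}_{\mathcal{H}^+}[\phiHL \chi_{\leq v_1}](\omega)$ is a Schwartz function of $\omega$.

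Next I would verify that the right-hand side of \eqref{eq:representationformula} solves \eqref{eq:chargedKGlinear}. The key point is that each $\omega$-mode $(u,v)\mapsto \tfrac{1}{r}\tilde\uhr(r^\ast(u,v),\omega) e^{-i\omega v}$ is a separated solution: writing $t = v - r^\ast$, this becomes $\tfrac{1}{r}\uhr(r^\ast)e^{-i\omega t}$, and by \cref{defn:ua} the function $\uhr$ satisfies the radial o.d.e.~\eqref{eq:radialode}, which is precisely the condition that this separated ansatz solves \eqref{eq:chargedKGlinear}. Since $\mathcal{F}_{\mathcal{H}^+}[\phiHL\chi_{\leq v_1}]$ is Schwartz and the bounds of \cref{prop:tr2} show that $\tilde\uhr$ and its $r^\ast$-derivative grow at most polynomially in $\omega$, differentiation under the integral sign is justified; the RHS of \eqref{eq:representationformula} therefore solves \eqref{eq:chargedKGlinear} on the interior.

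Now I would check the boundary conditions. As $u\to -\infty$ at fixed $v$, we have $r^\ast \to -\infty$ and $r\to r_+$; by \eqref{eq:434} in \cref{prop:tr2}, $\tilde\uhr(r^\ast,\omega) \to 1$, and dominated convergence (using the Schwartz bound on $\mathcal{F}_{\mathcal{H}^+}[\phiHL\chi_{\leq v_1}]$) gives
\[
\phil(u,v) \;\longrightarrow\; \frac{1}{\sqrt{2\pi}\, r_+}\int_{\mathbb R}\mathcal{F}_{\mathcal{H}^+}[\phiHL\chi_{\leq v_1}](\omega)\, e^{-i\omega v}\,\d\omega \;=\; \phiHL\chi_{\leq v_1}(v)
\]
by Fourier inversion and the definition $\mathcal{F}_{\mathcal H^+} = r_+ \mathcal{F}$. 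For the left event horizon one fixes $u$ and sends $v\to -\infty$: the oscillatory factor $e^{-i\omega v}$ multiplies a Schwartz function of $\omega$ (the $r^\ast$-dependence being uniformly bounded), so the Riemann--Lebesgue lemma shows the integral vanishes. Uniqueness for the characteristic IVP then identifies the RHS with $\phil$ on $\{v\leq v_1\}$.

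Finally, formulas \eqref{eq:representationformulapartialv} and \eqref{eq:representationformulapartialu} follow from differentiation under the integral sign, which is legitimate because the bounds on $\partial_{r^\ast}\tilde\uhr$ from \cref{prop:tr2} and \cref{lem:estimatesonderiativesonuchrl}, combined with the Schwartz decay of $\mathcal{F}_{\mathcal H^+}[\phiHL\chi_{\leq v_1}]$, give an integrable envelope. The density extension to $\phiHL \in C^1(\mathcal H^+)$ supported away from $\mathcal B_-$ is standard: approximate in $C^1$ by smooth data, use continuous dependence of the linear equation on the LHS, and on the RHS use that for $\phiHL\in C^1_c$ integration by parts gives $|\mathcal{F}_{\mathcal H^+}[\phiHL \chi_{\leq v_1}](\omega)|\lesssim \langle \omega\rangle^{-1}$, which together with the polynomial bounds of \cref{prop:tr2} yields dominated convergence. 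No serious obstacle is anticipated; the only technical subtlety is matching the polynomial growth in $\omega$ of $\tilde\uhr,\partial_{r^\ast}\tilde\uhr$ against the decay of the Fourier data, which is precisely what \cref{prop:tr2} is designed to provide.
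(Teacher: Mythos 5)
Your proposal is correct and follows essentially the same route as the paper's proof: reduce to compactly supported data via the domain of dependence, verify that the mode integral solves \eqref{eq:chargedKGlinear} and attains the data on $\HH$ (Fourier inversion) while vanishing on the left horizon (Riemann--Lebesgue), then conclude by uniqueness and differentiate under the integral sign for \eqref{eq:representationformulapartialv}--\eqref{eq:representationformulapartialu}. The extra details you supply (the separated-variable check via $t=v-r^\ast$ and the explicit density argument for $C^1$ data) are consistent with what the paper leaves implicit.
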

	Note that the above proposition immediately implies 
	\begin{cor}\label{cor:respres}
		Let $ \phiHL$ be as in \cref{lem:representation} and  assume vanishing data on the left horizon. Let $ \phil$ be the arising smooth solution attaining that data. Then, 
		\begin{align}
		\phil(u,v) = \frac{1}{\sqrt{2\pi} r} \int_{\mathbb R} \mathcal F_{\mathcal H^+} [ \phiHL \chi_{\leq v} ](\omega) \tilde \uhr (r(u,v), \omega) e^{-i \omega v} \d \omega
		\end{align}
		and 
		\begin{align} \label{eq:representationforpartialv}
		&   \partial_v (  r  \phil(u,v)) = \frac{1}{\sqrt{2\pi}} \int_{\mathbb R} \mathcal F_{\mathcal H^+} [ \phiHL \chi_{\leq v} ](\omega) \partial_v \left(   \tilde \uhr (r(u,v), \omega) e^{-i \omega v} \right) \d \omega \\
		&   \partial_u (r \phil(u,v)) = \frac{1}{\sqrt{2\pi}} \int_{\mathbb R} \mathcal F_{\mathcal H^+} [ \phiHL \chi_{\leq v} ](\omega) \partial_u \left(  \tilde \uhr (r(u,v), \omega) e^{-i \omega v} \right) \d \omega
		\end{align}
		for $u,v \in \mathbb R$, where $\chi_{\leq v} $ is as in \cref{lem:representation}.
		\begin{proof}
			Choosing $v = v_1$ in \cref{lem:representation} yields the result.
		\end{proof}
		
	\end{cor}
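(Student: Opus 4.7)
The plan is to obtain this corollary as a one-line specialization of Lemma~\ref{lem:representation}, evaluated at the endpoint $v_1 = v$ of its validity region. Since the lemma holds for all $v \le v_1$ with a non-strict inequality, the substitution $v_1 := v$ is admissible and converts the fixed cutoff $\chi_{\le v_1}$ into the evaluation-dependent cutoff $\chi_{\le v}$ appearing in the corollary.

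First, setting $v_1 = v$ in \eqref{eq:representationformula} yields the claimed identity for $\phil(u,v)$. Morally this reflects finite speed of propagation: $\phil(u,v)$ depends only on event horizon data restricted to $\{\tilde v \le v\}$, where $\chi_{\le v} \equiv 1$, so truncating the data outside this set by $\chi_{\le v}$ leaves the solution at $(u,v)$ unchanged. The same substitution in \eqref{eq:representationformulapartialv} and \eqref{eq:representationformulapartialu} produces the identities for $\partial_v(r\phil)(u,v)$ and $\partial_u(r\phil)(u,v)$.

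The one point that merits a brief check is that the derivative identities, obtained in Lemma~\ref{lem:representation} by differentiating under the integral sign at interior points $v < v_1$, survive the boundary limit $v \nearrow v_1$. This is straightforward from the uniform-in-$r^\ast$ bounds on $\tilde\uhr$ and its first derivatives supplied by Proposition~\ref{prop:tr2} combined with the Schwartz (or, in the $C^1$ density case, the $L^1$) decay of $\mathcal F_{\mathcal H^+}[\phiHL \chi_{\le v_1}]$; these dominate the integrand uniformly in $v$ and allow dominated convergence to transfer the interior identity to the boundary. No substantive obstacle arises — the corollary is a direct endpoint evaluation of the lemma — and the payoff is that the cutoff scale $v_1$ now tracks the evaluation point $v$, a feature that will be essential when subsequently taking $v \to +\infty$ toward the Cauchy horizon.
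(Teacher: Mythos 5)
Your proposal is correct and is essentially identical to the paper's own proof, which simply sets $v_1 = v$ in Lemma~\ref{lem:representation}; since that lemma already holds for all $v \le v_1$ (non-strict), no boundary-limit argument is even needed. The extra dominated-convergence check you include is harmless but superfluous.
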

	\subsection{Main results from the linear theory} \label{sec:main.lin}
	Before we state the main proposition about the linear theory, we define the following norms for sufficiently regular functions.
	\begin{align}
	& E_1[f]:= \left( \int_{\mathbb R} |f(v)|^2 + |\partial_v f (v) |^2 \d v \right)^{\frac 12},
	\\
	& E_1^\beta[f] := \left(  \int_{\mathbb R}  ( |f(v)|^2 + |\partial_v f(v)|^2 ) \langle v \rangle^{2\beta} \d v \right)^{\frac 12}  , \\  
	& F^\beta [f]:= \sup_{v\geq 0} \langle v\rangle^\beta \left|  \int_v^{+\infty} f(\tilde v ) e^{i \omer \tilde v} \d \tilde v  \right|  .
	\end{align}
	Further, for   Part~\textbf{E.} of the following proposition, we will use the Fourier projection operator $P_{\delta}$ defined in \cref{sec:w11blowup}.
	We will further state estimates in the so-called late blue-shift region $\mathcal{LB}$. This region is defined as   $\mathcal{LB} =  \{\Delta' + \frac{2s}{2|K_-|} \log(v) \leq  u+v \}$  for some $\Delta'\geq 0$ chosen in \cref{recall.section}. (Note that the estimate below involving $\mathcal{LB}$ actually holds true uniformly for all $\Delta'\geq 0$.) For given $u$, we also define $v_\gamma(u)$ to satisfy  $\Delta' + \frac{2s}{2|K_-|} \log(v_\gamma(u)) =  u+v_\gamma(u) $. Note that the estimate $\Omega^2_{RN}(u,v)\lesssim  v^{-2s}$ is satisfied in $\mathcal{LB}$.  We refer to \cref{Fig.regions_intro} for a visualization of the region $\mathcal{LB}$ near $i^+$. In fact, in the region $\mathcal{LB}$  all the following estimates apply and $\mathcal{LB}$ is also the region in which we will make use of the linear theory for the nonlinear theory.  
	\begin{theol}
		\label{prop:representation}
		Let $\phiHL \in C^1(\mathcal H^+)$  be spherically symmetric and assume that $\phiHL$ is   supported away from the past bifurcation sphere. Assume further that $\phiHL$ has finite energy  along the event horizon, i.e.\ that  \begin{align}
		& E_1[\phiHL] < +\infty.
		\end{align}
		Let $\phil$ be the arising solution on the black hole interior with no incoming radiation from the left event horizon.
		
		\textbf{\textup{A.}} Then, for  $v\geq 0$ and $u \in \mathbb R$ with $ r^\ast = \frac 12 (u + v)  \geq 0$, we have
		\begin{align}\label{eq:formultaatcauchy}
		e^{i  \omer r^\ast } 	\phil(u,v) =\frac{\sqrt{2\pi} i r_+}{ r}
		\mathfrak r_{\omer}(0) e^{i \omer u} \left( \int_{-u}^{v}  \phiHL(\tilde v) e^{i\omer  \tilde v}  \d \tilde v \right) + \phi_{\textup{r}}(u,v) +\phi_{\textup{err}}(u,v) ,
		\end{align}
		where $\phi_{\textup{r}}(u,v) $ and $\phi_{\textup{err}}(u,v)$  satisfy the  quantitative bounds
		\begin{align}
		&|\phi_{\textup{r}}(u,v)| \lesssim E_1[\phiHL], \\
		& |\phi_{\textup{err}}(u,v)| \lesssim_\alpha E_1[ \phiHL] \Omega_{RN}^{2-\alpha}(u,v)
		\end{align} uniformly for $v\geq 0, u \in \mathbb R$, $2 r^\ast = v+u \geq 2$ and any fixed $0<\alpha <2$. Further, $\phi_{\textup{r}}(u,v)$ and $\phi_{\textup{err}}(u,v)$
		extend continuously to the right Cauchy horizon. In particular,  $\lim_{n\to+\infty}\phi_{\textup{r}}(u_n,v_n)$ exists for any sequence $(u_n, v_n)\to (u,+\infty)$.

		\textbf{\textup{B.}} If additionally    $\phiHL$ satisfies
		\begin{align}\label{eq:assumptiononfinitenessofE}
		& E_1^\beta[\phiHL]  < +\infty \\ \label{eq:assumptiononfinitenessofFbeta}
		& F^\beta [\phiHL]  < + \infty
		\end{align}
		for some $0 < \beta  \leq 1$, then
		\begin{align}
		\langle u \rangle^\beta |\phil| (u,v) \lesssim  \langle u \rangle^\beta 
		\left|	\mathfrak r_{\omer}(0)    \int_{-u}^{v}  \phiHL(\tilde v) e^{i\omer  \tilde v}  \d \tilde v \right| + E_1^\beta[\phiHL] + F^\beta[\phiHL]
		\end{align}
		uniformly for all $v\geq 2$, $u \in \mathbb R$ such that $v \geq v_\gamma(u)$. 
		
		\textbf{\textup{C.}} Moreover,
		\begin{align} \label{eq:formulaforpartialvphi}
		\partial_v \left(  r e^{i \omer  r^\ast }	\phil(u,v) \right)  =   -i \frac{r_+  e^{i\omer u} }{ \sqrt{2\pi} } \int_{\mathbb R}  \mathcal F[\phiHL   e^{i \omer \cdot} ] (\omega)    \mathfrak t_{\omer}(\omega)     e^{-i\omega v}  \d \omega + \Phi_\textup{error},
		\end{align}
		where $\Phi_\textup{error}$ satisfy the quantitative bounds 
		\begin{align}
		|\Phi_\textup{error}|(u,v)\lesssim_{\alpha} E_1[\phiHL] \Omega_{RN}^{2-\alpha}(u,v)
		\end{align} 
		for any fixed $0<\alpha <2$ and every $(u,v)$ such that $r^\ast(u,v) \geq 1$. 
		
		\textbf{\textup{D.}} Additionally to the assumptions in \textbf{\textup{A.}} and  \textbf{\textup{B.}}, let $\sbr= \sbr(u,v) \in C^1_{u,v} $ with $|\partial_v \sbr|\lesssim \langle v \rangle^{1-2s}$ be arbitrary. Assume further that  \begin{align} & G^s [\phiHL] := \|  \langle v\rangle^s \phiHL  \|_{L^\infty}   + \| \langle v\rangle^s \partial_v \phiHL\|_{L^\infty} < +\infty. 
		\end{align} Then, for all $v\geq v_{\gamma}(u)$
		\begin{align} \nonumber 
		&	\left| \int_{v_{\gamma}(u)}^v e^{i \sbr(u,v')} \partial_{v'} (e^{i\omer r^\ast} r \phil (u,v') ) \d v'\right| \\ &  \;\;\; \;\;\;\; \lesssim \left| \int_{v_{\gamma}(u)}^v  e^{i\sbr(u,v')} e^{i \omer v'} \phiHL   (v')     \d v' \right| +  \langle u \rangle^{2-3s} ( G^s[\phiHL] + E_1[\phiHL]).
		\end{align}
		
		\textbf{\textup{E.}}    Let  $u\in \mathbb R $ be arbitrary  and assume that $\phiHL$ is such that $\left\| 	\partial_v \left(  r e^{i \omer  r^\ast }	\phil(u,v) \right) \right\|_{L^1_v} < +\infty$.
		
		\begin{itemize}
			\item Assume in addition that $P_{\delta} \phiHL \in L^1_v(\mathbb R)$ for some $\delta >0$. Then,
			$$    
			\|\phiHL\|_{L^1_v} \lesssim_\delta \left\| 	\partial_v \left(  r e^{i \omer  r^\ast }	\phil(u,v) \right) \right\|_{L^1_v}  + E_1[\phiHL]  +  \| P_{\delta}  \phiHL \|_{L^1_v}. 
			$$
			\item If $0< |q_0 e| <\epsilon(M,e,m^2)$  or $(q_0,m^2)\in   \{0\} \times \mathbb R - D(M,e)$ as in \cref{lem:zerosornot}, then 
			$$    
			\|\phiHL\|_{L^1_v} \lesssim \left\| 	\partial_v \left(  r e^{i \omer  r^\ast }	\phil(u,v) \right) \right\|_{L^1_v}  + E_1[\phiHL]. 
			$$
		\end{itemize}

		\begin{proof}[Proof of \cref{prop:representation}]
			
			\textbf{Part A.} We use the representation formula \eqref{eq:representationformula} in \cref{lem:representation}  and have
			\begin{align}\nonumber
			\phil(u,v) =  \frac{r_+}{ \sqrt{2\pi} r}  \textup{p.v.}\int_{\mathbb R} & \Big[  \mathcal F[\phiHL \chi_{\leq v}] (\omega) \\ & \frac{ \mathfrak r(\omega) \tilde \uchr(\omega, r^\ast) e^{i(\omega-\omer)r^\ast} + \mathfrak t(\omega)  \tilde \uchl(\omega,r^\ast) e^{-i(\omega - \omer)r^\ast} }{\omega - \omer}
			e^{- i \omega t} \Big] \d \omega. \end{align}
			After a change of variables $\omega \mapsto \omega+\omer$, we obtain
			\begin{align} \nonumber
			\phil(u,v) = \frac{r_+ e^{-i \omer r^\ast} e^{i\omer u} }{ \sqrt{2\pi} r}  \textup{p.v.}\int_{\mathbb R} & \Big[ \mathcal F[\phiHL \chi_{\leq v} e^{i \omer \cdot} ] (\omega)  \\ & \frac{ \mathfrak r_{\omer}(\omega) \tilde \uchr(\omega+\omer , r^\ast) e^{i \omega u} + \mathfrak t_{\omer}(\omega)  \tilde \uchl(\omega+\omer,r^\ast) e^{-i\omega v} }{\omega } \Big] 
			\d \omega, \label{eq:equationafterchangeofvar}
			\end{align}
			where $\mathfrak r_{\omer}(\omega) = \mathfrak r( \omega + \omer)$ and $\mathfrak t_{\omer} (\omega) = \mathfrak t (\omega + \omer)$.
			
			We now expand the numerator and obtain
			\begin{align}\label{eq:expansionofr}
			\mathfrak r_{\omer} (\omega) \tilde\uchr(\omega+\omer, r^\ast)  & = \mathfrak r_{\omer}(0)  + (\mathfrak r_{\omer} (\omega) - \mathfrak  r_{\omer}(0) ) + \mathfrak r_{\omer}(\omega) (\tilde \uchr(\omer + \omega, r^\ast) - 1)\\
			& = \mathfrak r_{\omer}(0) \label{eq:principal} \\ & \;\; + (\mathfrak r_{\omer} (\omega) - \mathfrak  r_{\omer}(0) )   \label{eq:error1} \\ & \;\; +\mathfrak r_{\omer}(\omega) (\tilde \uchr(\omer + \omega, r^\ast) - \tilde \uchr(\omer, r^\ast) )
			\label{eq:error2}	\\ & \;\;+ \mathfrak r_{\omer}(0) ( \tilde \uchr(\omer  , r^\ast) - 1)
			\label{eq:error3}	\\ & \;\;+ (\mathfrak r_{\omer}(\omega) - \mathfrak r_{\omer}(0) ) ( \tilde \uchr(\omer  , r^\ast) - 1)\label{eq:error4}
			\end{align}
			as well as  \begin{align}\label{eq:expansionoft}
			\mathfrak t_{\omer} (\omega) \tilde\uchl(\omega+\omer, r^\ast)  & = \mathfrak t_{\omer}(0)  + (\mathfrak t_{\omer} (\omega) - \mathfrak  t_{\omer}(0) ) + \mathfrak t_{\omer}(\omega) (\tilde \uchl(\omer + \omega, r^\ast) - 1)\\
			& = \mathfrak t_{\omer}(0) \label{eq:principalt} \\ & \;\; + (\mathfrak t_{\omer} (\omega) - \mathfrak  t_{\omer}(0) )   \label{eq:errort1} \\ & \;\; +\mathfrak t_{\omer}(\omega) (\tilde \uchl(\omer + \omega, r^\ast) - \tilde \uchl(\omer, r^\ast) )
			\label{eq:errort2}	\\ & \;\;+ \mathfrak t_{\omer}(0) ( \tilde \uchl(\omer  , r^\ast) - 1)
			\label{eq:errort3}	\\ & \;\;+ (\mathfrak t_{\omer}(\omega) - \mathfrak t_{\omer}(0) ) ( \tilde \uchl(\omer  , r^\ast) - 1).\label{eq:errort4}
			\end{align} 
			We  write 
			\begin{align}
			\frac{\mathfrak r_{\omer} (\omega)}{\omega} = \frac{\mathfrak r_{\omer}(0)}{\omega} + \mathfrak r^{\textup{re}}_{\omer} (\omega), \frac{\mathfrak t_{\omer} (\omega)}{\omega} = \frac{\mathfrak t_{\omer}(0)}{\omega} + \mathfrak t^{\textup{re}}_{\omer} (\omega)
			\end{align}
			where \begin{align}
			\mathfrak r^{\textup{re}}_{\omer}(\omega) := \frac{\mathfrak r_{\omer} (\omega) -\mathfrak r_{\omer} (0) }{\omega} \text{ and } 	\mathfrak t^{\textup{re}}_{\omer}(\omega) := \frac{\mathfrak r_{\omer} (\omega) -\mathfrak t_{\omer} (0) }{\omega} 
			\end{align}
			are real-analytic.
			
			In the following we will estimate each term from \eqref{eq:principal}--\eqref{eq:errort4} independently. We start with the main term coming from \eqref{eq:principal}.
			\begin{lem}\label{lem:principaltermr1}
				We have \begin{align}
				e^{i \omer r^\ast} \phi_{\textup{mainR}} (u,v) &:=   \frac{r_+ e^{i\omer u} }{ \sqrt{2\pi} r}  \textup{p.v.}\int_{\mathbb R} \mathcal{F}[\phiHL \chi_{\leq v} e^{i \omer \cdot} ] (\omega) \frac{  \mathfrak r_{\omer} (0)  }{\omega }
				e^{i \omega u} 	\d \omega
				\end{align} 
				satisfies
				\begin{align}
				e^{i \omer r^\ast} \phi_{\textup{mainR}}(u,v) =   i \pi  \frac{r_+ e^{i\omer u} \mathfrak r_{\omer} (0) }{ \sqrt{2\pi} r} \int_{\mathbb R} \phiHL(\tilde v) \chi_{\leq v}(\tilde v) e^{i \omer  u } \textup{sgn}(\tilde v + u) \d \tilde v 		\end{align}
				\begin{proof}
					This follows directly from the fact that $\mathcal F [ \textup{p.v.} (\frac{1}{x}) ]  = i \pi \textup{sgn} $.
				\end{proof}
			\end{lem} 
			\begin{lem}\label{lem:estphierrorr1}
				We have that \begin{align}
				e^{i \omer r^\ast} \phi_{\textup{errorR1}} (u,v) &:=   \frac{r_+ e^{i\omer u} }{ \sqrt{2\pi} r}  \textup{p.v.}\int_{\mathbb R} \mathcal{F}[\phiHL \chi_{\leq v} e^{i \omer \cdot} ] (\omega) \frac{\mathfrak r_{\omer} (\omega) - \mathfrak r_{\omer} (0)  }{\omega }
				e^{i \omega u} 	\d \omega
				\\	& = \frac{r_+   e^{i\omer u} }{ \sqrt{2\pi} r}   \int_{\mathbb R} \mathcal{F}[\phiHL \chi_{\leq v} e^{i \omer \cdot} ] (\omega) \mathfrak r^{\textup{re}}_{\omer}(\omega)
				e^{i \omega u} 		\d \omega 
				\end{align}
				extends continuously to the Cauchy horizon and satisfies \begin{align}\label{eq:decayofphierrore1R1}
				|\phi_{\textup{errorR1}}(u,v) |\lesssim E_{1}[\phiHL].
				\end{align}
				If additionally,  $E_1^\beta[\phiHL]<+\infty$ for some $0 < \beta\leq 1$, we further have 
				\begin{align}\label{eq:decayofphierrorR1}
				|  \langle u \rangle^\beta \phi_{\textup{errorR1}}(u,v) |\lesssim E_{1}^\beta[\phiHL]
				\end{align}
				for all $r^\ast\geq 0$. 
				\begin{proof}
					It suffices to show both claims  for $\int_{\mathbb R} \mathcal{F}[\phiHL \chi_{\leq v} e^{i \omer \cdot} ] (\omega) \mathfrak  r^{\textup{re}}_{\omer}(\omega) e^{i \omega u}  \d \omega $. We begin by showing  \eqref{eq:decayofphierrorR1} under the assumption $E_1^\beta[\phiHL] <\infty$.   We will  use the notation $\langle \partial_\omega\rangle^\beta$ to denote the Fourier multiplier with $(1+|u|^2)^{\frac \beta 2}$, where $u$ is the dual variable to $\omega$.   Using this, we estimate
					\begin{align}\nonumber 
					\left| \langle u \rangle^\beta \int_{\mathbb R} \mathcal{F}[\phiHL \chi_{\leq v} e^{i \omer \cdot} ] (\omega)  \mathfrak r^{\textup{re}}_{\omer}(\omega)e^{i \omega u}  \d  \omega  \right| & =  \left| \int_{\mathbb R} \langle \partial_\omega\rangle^\beta\left(  \mathcal{F}[\phiHL \chi_{\leq v} e^{i \omer \cdot} ] (\omega) \mathfrak r^{\textup{re}}_{\omer}(\omega) \right) e^{i \omega u}  \d  \omega  \right| \\ \nonumber 
					& \leq \left\|  \langle \partial_\omega \rangle^\beta \left(  \mathcal{F}[\phiHL \chi_{\leq v} e^{i \omer \cdot} ] \mathfrak r^{\textup{re}}_{\omer}  \right) \right\|_{L^1_\omega}  \\ \nonumber & \leq \left\|  \langle \partial_\omega \rangle^\beta \left(  \langle \omega \rangle  \mathcal{F}[\phiHL \chi_{\leq v} e^{i \omer \cdot} ]  \right) \right\|_{L^2_\omega}  \left\| \langle \omega \rangle^{-1} \mathfrak r^{\textup{re}}_{\omer} \right\|_{L^2_\omega} 
					\\ \nonumber & +     \left\|  \langle \omega \rangle    \mathcal{F}[\phiHL \chi_{\leq v} e^{i \omer \cdot} ]   \right\|_{L^2_\omega}  \left\| \langle \partial_\omega \rangle^\beta \left( \langle \omega \rangle^{-1} \mathfrak r^{\textup{re}}_{\omer} \right) \right\|_{L^2_\omega} 
					\\ \nonumber & \lesssim  \| \langle v \rangle^\beta   \phiHL\chi_{\leq v} e^{i \omer \cdot} \|_{L^2_v} + \| \langle v \rangle^\beta   \partial_v (\phiHL \chi_{\leq v} e^{i \omer \cdot} )  \|_{L^2_v}\\
					& \lesssim E_1^\beta[\phiHL] \label{eq:estiamtesusedkatoponce}
					\end{align}
					in view of a Kato--Ponce inequality (see e.g. \cite[Theorem~1]{MR3200091})  and 
					\begin{align}\label{eq:boundedofr}
					&   \left\|   \langle \omega \rangle^{-1} \mathfrak r^{\textup{re}}_{\omer}  \right\|_{L^2(\mathbb R_\omega)}\lesssim 1,
					\\ &
					\left\| \langle \partial_\omega \rangle^\beta \left( \langle \omega \rangle^{-1} \mathfrak r^{\textup{re}}_{\omer} \right) \right\|_{L^2(\mathbb R_\omega)} \lesssim 1,
					\end{align}
					which follow from the definition of $\mathfrak r^{\textup{re}}_{\omer},$ $\mathfrak t^{\textup{re}}_{\omer}$ as well as \cref{lem:estimatesonderiativesofrandt}.
					Now, note that the previous estimates for $\beta=0$ give \eqref{eq:decayofphierrore1R1}.  
					
					For the continuous extendibility across the Cauchy horizon we need to show that for $(u_n, v_n) \to (u_0, +\infty)$, the limit \begin{align}\lim_{n\to\infty} \int_{\mathbb R} \mathcal{F}[\phiHL \chi_{\leq v_n} e^{i \omer \cdot} ] (\omega) \mathfrak r^{\textup{re}}_{\omer}(\omega)
					e^{i \omega u_n} 		\d \omega \end{align}
					exists and that the limiting function is continuous. In view of the triangle inequality we have 
					\begin{align} \nonumber
					& \left|   \int_{\mathbb R} \mathcal{F}[\phiHL \chi_{\leq v_n} e^{i \omer \cdot} ] (\omega) \mathfrak r^{\textup{re}}_{\omer}(\omega)
					e^{i \omega u_n} 		\d \omega  -  \int_{\mathbb R} \mathcal{F}[\phiHL e^{i \omer \cdot} ] (\omega) \mathfrak r^{\textup{re}}_{\omer}(\omega)
					e^{i \omega u_0} 		\d \omega  \right| \\
					&\lesssim  \int_{\mathbb R} | \mathcal{F}[\phiHL  e^{i \omer \cdot} ] (\omega) \mathfrak r^{\textup{re}}_{\omer}(\omega)|
					|e^{i \omega u_n} - e^{i \omega u_0} |		\d \omega +  \int_{\mathbb R} |\mathcal{F}[\phiHL (1-\chi_{\leq v_n}) e^{i \omer \cdot} ] (\omega)| |\mathfrak r^{\textup{re}}_{\omer}(\omega)|\d \omega . \label{eq:twotermslefttoest}
					\end{align}
					In the first term of \eqref{eq:twotermslefttoest} we  apply dominated convergence to interchange the limit with the integral which is justified as 
					\begin{align} \label{eq:limititcontinuous}
					\int_{\mathbb R} | \mathcal{F}[\phiHL  e^{i \omer \cdot} ] (\omega) \mathfrak r^{\textup{re}}_{\omer}(\omega)|
					|e^{i \omega u_n} - e^{i \omega u_0} |		\d \omega \lesssim  \int_{\mathbb R} | \mathcal{F}[\phiHL  e^{i \omer \cdot} ] (\omega) \mathfrak r^{\textup{re}}_{\omer}(\omega)|
					\d \omega \lesssim E_1[\phiHL]		\end{align} in view of
					\eqref{eq:boundedofr}.  
					For the second term in \eqref{eq:twotermslefttoest} we have that  
					\begin{align}\label{eq:estimateforhatfwithr}
					\int_{\mathbb R} |\mathcal{F}[\phiHL (1-\chi_{\leq v_n}) e^{i \omer \cdot} ] (\omega)| |\mathfrak r^{\textup{re}}_{\omer}(\omega)|\d \omega \lesssim \left( \int_{\mathbb R} |\partial_{\tilde v}  \left( \phiHL(\tilde v ) (1-\chi_{\leq v_n } (\tilde v)) \right) |^2 \d \tilde v \right)^{\frac 12}   \to 0  
					\end{align}
					as $n \to \infty$ since $E_1[\phiHL] < +\infty$. That the limit is continuous also follows from \eqref{eq:limititcontinuous}. 
				\end{proof}
			\end{lem}
			\begin{lem}\label{lem:errorr2}
				We have that \begin{align}
				\nonumber	e^{i \omer r^\ast} \phi_{\textup{errorR2}} (u,v) :=   \frac{r_+ e^{i\omer u} }{ \sqrt{2\pi} r}  \textup{p.v.}\int_{\mathbb R} & \mathcal{F}[\phiHL \chi_{\leq v} e^{i \omer \cdot} ] (\omega)
				\\ & \cdot \frac{\mathfrak r_{\omer}(\omega) (\tilde \uchr(\omer + \omega, r^\ast) - \tilde \uchr(\omer, r^\ast) )}{\omega }
				e^{i \omega u} 	\d \omega
				\end{align}
				converges to zero towards the Cauchy horizon and satisfies the quantitative bound 			\begin{align}\label{eq:decayofphierrorr2}
				|  \phi_{\textup{errorR2}}(u,v) |\lesssim \Omega^2_{{RN}}(u,v) E_{1}[\phiHL]
				\end{align}
				for $r^\ast \geq 1$. 
			\end{lem}
			\begin{proof}
				We estimate
				\begin{align}
				&  \left|  \frac{\mathfrak r_{\omer}(\omega) (\tilde \uchr(\omer + \omega, r^\ast) - \tilde \uchr(\omer, r^\ast) )}{\omega } \right| \\  & \lesssim \sup_{|\omega|\leq 1} |\partial_\omega \tilde \uchr(\omer + \omega, r^\ast)| +  \sup_{|\omega|\geq 1 } |\tilde \uchr(\omer + \omega, r^\ast) - \tilde \uchr(\omer, r^\ast)|
				\\ & \lesssim \Omega_{RN}^2 \end{align}
				in view of \cref{lem:estimatesonderiativesonuchrl} and  \cref{prop:tr2}. Now, \eqref{eq:decayofphierrorr2} follows from a direct application of the Cauchy--Schwarz inequality. 
			\end{proof} 
			\begin{lem}\label{lem:lemmaphir3}
				We have that \begin{align}
				e^{i \omer r^\ast} \phi_{\textup{errorR3}} (u,v) &:=   \frac{r_+ e^{i\omer u} }{ \sqrt{2\pi} r}  \textup{p.v.}\int_{\mathbb R} \mathcal{F}[\phiHL \chi_{\leq v} e^{i \omer \cdot} ] (\omega) \frac{\mathfrak r_{\omer}(0) ( \tilde \uchr(\omer  , r^\ast) - 1)}{\omega }
				e^{i \omega u} 	\d \omega\\
				&	=   \mathfrak r_{\omer}(0) ( \tilde \uchr(\omer  , r^\ast) - 1) \frac{r_+ e^{i\omer u} }{ \sqrt{2\pi} r}  \textup{p.v.}\int_{\mathbb R} \mathcal{F}[\phiHL \chi_{\leq v} e^{i \omer \cdot}] (\omega) \frac{1}{\omega }
				e^{i \omega u} 	\d \omega
				\end{align}
				converges to zero towards the Cauchy horizon and satisfies the quantitative bound 			\begin{align}\label{eq:decayofphierrorr3}
				|  \phi_{\textup{errorR3}}(u,v) |\lesssim \Omega^2_{RN}(u,v) \| \phiHL \chi_{\leq v+u} \|_{L^1_v} \lesssim \Omega^2_{{RN}}(u,v) E_1[\phiHL] \langle r^\ast \rangle^{\frac 12} \lesssim_\alpha \Omega_{RN}^{2-\alpha}(u,v) E_1[\phiHL]
				\end{align}
				for $r^\ast \geq 1$ and any $\alpha >0$. 
			\end{lem}
			\begin{proof}
				It suffices to control the principal value integral. A direct computation using that $\mathcal F [ \textup{p.v.} (\frac{1}{x}) ]  = i \pi \textup{sgn} $  yields
				\begin{align}
				\left|   \textup{p.v.}\int_{\mathbb R} \mathcal{F}[\phiHL \chi_{\leq v} e^{i \omer \cdot} ] (\omega) \frac{1}{\omega }
				e^{i \omega u} 	\d \omega \right| \lesssim  \int_{\mathbb R} | \phiHL(\tilde v - u ) \chi_{\leq v} (\tilde v-u) | \d \tilde v \leq \| \phiHL \chi_{\leq v + u } \|_{L^1(\mathbb R)}.
				\end{align}
				The second inequality in \eqref{eq:decayofphierrorr3}  is now a consequence of the Cauchy--Schwarz inequality.
			\end{proof}
			Now, we are in the position to control the last term as follows.
			\begin{lem}\label{lem:lemmar4}
				We have that \begin{align}
				e^{i \omer r^\ast} \phi_{\textup{errorR4}} (u,v) &:=   \frac{r_+ e^{i\omer u} }{ \sqrt{2\pi} r}  \textup{p.v.}\int_{\mathbb R} \mathcal{F}[\phiHL \chi_{\leq v} e^{i \omer \cdot} ] (\omega) \frac{(\mathfrak r_{\omer}(\omega) - \mathfrak r_{\omer}(0) ) ( \tilde \uchr(\omer  , r^\ast) - 1)}{\omega }
				e^{i \omega u} 	\d \omega\\
				&	=   ( \tilde \uchr(\omer  , r^\ast) - 1)	e^{i \omer r^\ast} \phi_{\textup{errorR1}} 
				\end{align}
				converges to zero towards the Cauchy horizon and satisfies the quantitative bound 			\begin{align}\label{eq:decayofphierrorr4}
				|  \phi_{\textup{errorR4}}(u,v) |\lesssim \Omega^2_{{RN}}(u,v) E_1[\phiHL]
				\end{align}
				for $r^\ast \geq 0$.
			\end{lem}
			\begin{proof}
				This follows immediately from \cref{lem:estphierrorr1}. 
			\end{proof}
			
			Now, we turn to the terms arising from the transmission coefficient. Completely analogous to \cref{lem:principaltermr1} we obtain
			\begin{lem}
				We have that \begin{align}
				e^{i \omer r^\ast} \phi_{\textup{mainT}} (u,v) &:=   \frac{r_+ e^{i\omer u} }{ \sqrt{2\pi} r}  \textup{p.v.}\int_{\mathbb R} \mathcal{F}[\phiHL \chi_{\leq v} e^{i \omer \cdot} ] (\omega) \frac{  \mathfrak t_{\omer} (0)  }{\omega }
				e^{-i \omega v} 	\d \omega
				\end{align} 
				satisfies
				\begin{align}
				e^{i \omer r^\ast} \phi_{\textup{mainT}}(u,v) = i \pi  \frac{r_+ e^{i\omer u} \mathfrak t_{\omer} (0) }{ \sqrt{2\pi} r} \int_{\mathbb R} \phiHL(\tilde v) \chi_{\leq v}(\tilde v) e^{i \omer  \tilde v  } \textup{sgn}(\tilde v -v) \d \tilde v .		\end{align}
			\end{lem} 
			\begin{lem}\label{lem:estphierrort1} 
				We have that \begin{align}
				e^{i \omer r^\ast} \phi_{\textup{errorT1}} (u,v) &:=   \frac{r_+ e^{i\omer u} }{ \sqrt{2\pi} r}  \textup{p.v.}\int_{\mathbb R} \mathcal{F}[\phiHL \chi_{\leq v} e^{i \omer \cdot} ] (\omega) \frac{\mathfrak t_{\omer} (\omega) - \mathfrak t_{\omer} (0)  }{\omega }
				e^{-i \omega v} 	\d \omega
				\\	& = \frac{r_+   e^{i\omer u} }{ \sqrt{2\pi} r}   \int_{\mathbb R} \mathcal{F}[\phiHL \chi_{\leq v} e^{i \omer \cdot} ] (\omega) \mathfrak t^{\textup{re}}_{\omer}(\omega)
				e^{-i \omega v} 		\d \omega 
				\end{align}
				extends continuously to zero at the right Cauchy horizon, i.e. for $v \to +\infty$ and $u\to u_0$. If in addition $E_1^\beta[\phiHL]<\infty$, then we have the quantitative decay  
				\begin{align}\label{eq:decayofphierrorr1}
				|\phi_{\textup{errorT1}}(u,v) |\lesssim \langle v \rangle^{-\beta}  E_1^\beta[\phiHL].
				\end{align}
				\begin{proof}
					We first show the first claim without assuming that $E_1^\beta[\phiHL] < +\infty$. Doing the analogous estimate as in \eqref{eq:estimateforhatfwithr} it suffices to show that \begin{align}
					\left| 	 \int_{\mathbb R} \mathcal{F}[\phiHL e^{i \omer \tilde v} ] (\omega) \mathfrak t^{\textup{re}}_{\omer}(\omega)
					e^{-i \omega v} 		\d \omega \right|   
					\end{align}
					tends to zero as $v\to+\infty$. Thus, it suffices to show that  $ v \mapsto  \int_{\mathbb R} \mathcal{F}[\phiHL e^{i \omer \tilde v} ] (\omega) \mathfrak t^{\textup{re}}_{\omer}(\omega)
					e^{-i \omega v} \d \omega $ is an $H^1$ function. This again follows from \begin{align}
					\int_{\mathbb R} (1+\omega^2) |\mathcal{F}[\phiHL e^{i \omer \tilde v} ] (\omega)|^2 |\mathfrak t^{\textup{re}}_{\omer}(\omega)|^2 \d \omega \lesssim E_1[\phiHL] \sup_{\omega \in \mathbb R} |\mathfrak t^{\textup{re}}_{\omer}(\omega)| \lesssim E_1[\phiHL].
					\end{align} 
					
					We will now proceed to show the quantitative decay assuming that $E^\beta[\phiHL] < \infty.$ In this case we have \begingroup
					\allowdisplaybreaks
					\begin{align*}
					&	\left| 	 \int_{\mathbb R} \mathcal{F}[\phiHL \chi_{\leq v} e^{i \omer \tilde v} ] (\omega) \mathfrak t^{\textup{re}}_{\omer}(\omega)
					e^{-i \omega v} 		\d \omega \right|  = \left| \frac{1}{v}   \int_{\mathbb R} \mathcal{F}[\phiHL \chi_{\leq v} e^{i \omer \tilde v} ] (\omega) \mathfrak t^{\textup{re}}_{\omer}(\omega)
					\partial_{\omega} 	e^{-i \omega v} 		\d \omega \right| 
					\\ & \hspace{1cm}  \lesssim   \left| \frac{1}{v}   \int_{\mathbb R} 	\partial_{\omega}	 \mathcal{F}[ \phiHL\chi_{\leq v} e^{i \omer \tilde v} ] (\omega) \mathfrak t^{\textup{re}}_{\omer}(\omega)
					e^{-i \omega v} 		\d \omega \right| \\ & \hspace{1cm} +
					\left| \frac{1}{v}   \int_{\mathbb R} 		 \mathcal{F}[\phiHL \chi_{\leq v} e^{i \omer \tilde v} ] (\omega) (\partial_{\omega}\mathfrak t^{\textup{re}}_{\omer}(\omega) ) 
					e^{-i \omega v} 		\d \omega \right| \\
					& 	\hspace{1cm}  \lesssim  \left| \frac{1}{v}   \int_{|\omega|\leq 1} 		 \mathcal{F}[\tilde v \phiHL \chi_{\leq v} e^{i \omer \tilde v} ] (\omega) \mathfrak t^{\textup{re}}_{\omer}(\omega)
					e^{-i \omega v} 		\d \omega \right|  \\ &\hspace{1cm}  +
					\left| \frac{1}{v}   \int_{|\omega|\leq 1} 		 \mathcal{F}[\phiHL \chi_{\leq v} e^{i \omer \tilde v} ] (\omega) (\partial_{\omega}\mathfrak t^{\textup{re}}_{\omer}(\omega) ) 
					e^{-i \omega v} 		\d \omega \right| 
					\\	& \hspace{1cm}+ \left| \frac{1}{v}   \int_{|\omega|\geq 1} 		 \mathcal{F}[\tilde v   \phiHL \chi_{\leq v} e^{i \omer \tilde v} ] (\omega) \mathfrak t^{\textup{re}}_{\omer}(\omega)
					e^{-i \omega v} 		\d \omega \right|  \\
					& \hspace{1cm} +	\left| \frac{1}{v}   \int_{|\omega|\geq 1} 		 \mathcal{F}[\phiHL \chi_{\leq v} e^{i \omer \tilde v} ] (\omega) (\partial_{\omega}\mathfrak t^{\textup{re}}_{\omer}(\omega) ) 
					e^{-i \omega v} 		\d \omega \right| \\   & \hspace{1cm}
					\lesssim \frac{1}{v} \|\tilde v   \phiHL \chi_{\leq v} e^{i \omer \tilde v } \|_{L^2_{\tilde v} }  \|\partial_\omega \mathfrak t^{\textup{re}}_{\omer}\|_{L^2_\omega[-1,1]} 
					+ \frac{1}{v} \| \phiHL \chi_{\leq v} e^{i \omer \tilde v} \|_{L^2_{\tilde v} } \|\mathfrak t^{\textup{re}}_{\omer}\|_{L^2_\omega[-1,1]}  				\\& \hspace{1cm}  + \frac{1}{v} \|\partial_{\tilde v} (\tilde v   \phiHL \chi_{\leq v} e^{i \omer \tilde v } ) \|_{L^2_{\tilde v} }  \|\omega^{-1}\partial_\omega \mathfrak t^{\textup{re}}_{\omer} \|_{L^2_\omega(\mathbb R - [-1,1] ) } 
					\\
					&\hspace{1cm}  + \frac{1}{v} \| \partial_{\tilde v} (\phiHL \chi_{\leq v} e^{i \omer \tilde v }) \|_{L^2_{\tilde v} } \| \omega^{-1} \mathfrak t^{\textup{re}}_{\omer}\|_{L^2_\omega(\mathbb R-[-1,1])} 
					\\
					&\hspace{1cm}	\lesssim \frac{1}{v^\beta }  E_1^\beta[\phiHL]   
					\end{align*}
					\endgroup
					since $   \|\partial_\omega \mathfrak t^{\textup{re}}_{\omer}\|_{L^2_\omega[-1,1]}, \|\mathfrak t^{\textup{re}}_{\omer}\|_{L^2_\omega[-1,1]} , \|\omega^{-1}\partial_\omega \mathfrak t^{\textup{re}}_{\omer} \|_{L^2_\omega(\mathbb R - [-1,1] ) } , \| \omega^{-1} \mathfrak t^{\textup{re}}_{\omer}\|_{L^2_\omega(\mathbb R-[-1,1])} \lesssim 1$. 
				\end{proof}
			\end{lem}
			Analogously to \cref{lem:errorr2} we have
			\begin{lem}
				We have that \begin{align} \nonumber
				& e^{i \omer r^\ast} \phi_{\textup{errorT2}} (u,v) \\ \;\;\;\; &:=   \frac{r_+ e^{i\omer u} }{ \sqrt{2\pi} r}  \textup{p.v.}\int_{\mathbb R} \mathcal{F}[\phiHL \chi_{\leq v} e^{i \omer \cdot} ] (\omega) \frac{\mathfrak t_{\omer}(\omega) (\tilde \uchl(\omer + \omega, r^\ast) - \tilde \uchl(\omer, r^\ast) )}{\omega }
				e^{- i \omega v} 	\d \omega
				\end{align}
				converges to zero towards the Cauchy horizon and satisfies the quantitative bound 			\begin{align}\label{eq:decayofphierrort2}
				|  \phi_{\textup{errorT2}}(u,v) |\lesssim \Omega^2_{{RN}}(u,v) E_{1}[\phiHL]
				\end{align}
				for $r^\ast \geq 1$.
			\end{lem}
			Analogously to \cref{lem:lemmaphir3} we further obtain
			\begin{lem}
				We have that \begin{align}
				e^{i \omer r^\ast} \phi_{\textup{errorT3}} (u,v) &:=   \frac{r_+ e^{i\omer u} }{ \sqrt{2\pi} r}  \textup{p.v.}\int_{\mathbb R} \mathcal{F}[\phiHL \chi_{\leq v} e^{i \omer \cdot} ] (\omega) \frac{\mathfrak t_{\omer}(0) ( \tilde \uchl(\omer  , r^\ast) - 1)}{\omega }
				e^{- i \omega v} 	\d \omega\\
				&	=   \mathfrak t_{\omer}(0) ( \tilde \uchr(\omer  , r^\ast) - 1) \frac{r_+ e^{i\omer u} }{ \sqrt{2\pi} r}  \textup{p.v.}\int_{\mathbb R} \mathcal{F}[\phiHL \chi_{\leq v} e^{i \omer \cdot}] (\omega) \frac{1}{\omega }
				e^{- i \omega v} 	\d \omega
				\end{align}
				converges to zero towards the Cauchy horizon and satisfies the quantitative bound 			\begin{align}\label{eq:decayofphierrort3}
				|  \phi_{\textup{errorT3}}(u,v) |\lesssim_\alpha \OmegaRN^{2-\alpha}(u,v) E_1[\phiHL]
				\end{align}
				for $r^\ast \geq 1$. 
			\end{lem}
			Finally, completely analogous to \cref{lem:lemmar4} we have
			\begin{lem}
				We have that \begin{align} \nonumber
				e^{i \omer r^\ast} \phi_{\textup{errorT4}} (u,v) &:=   \frac{r_+ e^{i\omer u} }{ \sqrt{2\pi} r}  \textup{p.v.}\int_{\mathbb R} \mathcal{F}[\phiHL \chi_{\leq v} e^{i \omer \cdot} ] (\omega) \\ \nonumber &\hspace{2cm} \cdot \frac{(\mathfrak t_{\omer}(\omega) - \mathfrak t_{\omer}(0) ) ( \tilde \uchl(\omer  , r^\ast) - 1)}{\omega }
				e^{- i \omega v} 	\d \omega\\
				&	=   ( \tilde \uchr(\omer  , r^\ast) - 1)	e^{i \omer r^\ast} \phi_{\textup{errorT1}} \label{lem:estiamtesont4}
				\end{align}
				converges to zero towards the Cauchy horizon and satisfies the quantitative bound 			\begin{align}\label{eq:decayofphierrort1}
				|  \phi_{\textup{errorR4}}(u,v) |\lesssim \Omega^2_{{RN}}(u,v) E_1[\phiHL]
				\end{align}
				for $r^\ast \geq 1$.
			\end{lem}
			
			Having estimated each term independently in the integral appearing in \eqref{eq:equationafterchangeofvar} and noting that 
			\begin{align}
			e^{i\omer r^\ast} ( \phi_{\textup{mainR}} + \phi_{\textup{mainT}} )(u,v) = \frac{\sqrt{2\pi} i r_+}{ r}
			\mathfrak r_{\omer}(0) e^{i \omer u} \left( \int_{-u}^{v}  \phiHL(\tilde v) e^{i\omer  \tilde v}  \d \tilde v \right)
			\end{align} in view of $\mathfrak r_{\omer}(0) = - \mathfrak t_{\omer}(0)$, we finally obtain \eqref{eq:formultaatcauchy} with 
			\begin{align}
			\phi_\textup{r} = e^{i \omer r^\ast} \phi_{\textup{errorR1}}
			\end{align}
			and 
			\begin{align}
			\phi_{\textup{error}} = e^{i\omer r^\ast} ( \phi_{\textup{errorR2}} 
			+\phi_{\textup{errorR3}}
			+\phi_{\textup{errorR4}} 
			+\phi_{\textup{errorT1}}
			+\phi_{\textup{errorT2}}
			+\phi_{\textup{errorT3}}
			+\phi_{\textup{errorT4}} ). \label{eq:secondline}
			\end{align}
			The bounds and continuity statement for  $\phi_\textup{r}$ and $\phi_{\textup{error}}$ now follow from \cref{lem:estphierrorr1} and \eqref{lem:estiamtesont4}.
			
			\textbf{Part B.} 
			In view of Part A and the fact that $\Omega_{RN}$ decays exponentially in $r^\ast = \frac 12 (u + v)$ towards the Cauchy horizon, it suffices to show that 
			\begin{align}
			\langle u \rangle^\beta \left| \int_{-u}^{v+1} \chi_{\leq v}(\tilde v) \phiHL(\tilde v) e^{i\omer  \tilde v}  \d \tilde v \right|+ \langle u \rangle^\beta | \phi_{\textup{r}}(u,v) |\lesssim   F^\beta[\phiHL] + E_1^\beta[\phiHL] 
			\end{align}
			as we consider the region $v \geq |u| +  \frac{\log(v)}{2|K_-|}$ in which $\Omega^2_{RN}(u,v) \lesssim \langle v \rangle^{-1}$.
			Now, the claim is a direct consequence of the second parts of \cref{lem:estphierrorr1} and \cref{lem:estphierrort1} together with the assumptions \eqref{eq:assumptiononfinitenessofFbeta} and \eqref{eq:assumptiononfinitenessofE}.
			
			\textbf{Part C.}
			We will now consider $\partial_v(r e^{i \omer r^\ast} \phil)$. We use the second part of \cref{lem:representation} and end up with 
			\begin{align}\nonumber
			\partial_v(e^{i \omer r^\ast} r	\phil(u,v) )  = 	&\frac{r_+  e^{i\omer u} }{ \sqrt{2\pi} }  \textup{p.v.}\int_{\mathbb R}  \Big[ \mathcal F[\phiHL \chi_{\leq v_1} e^{i \omer \cdot} ] (\omega) \\ &   \cdot \frac{ \mathfrak r_{\omer}(\omega) \partial_{v}  \tilde \uchr(\omega + \omer , r^\ast) e^{i \omega u} + \mathfrak t_{\omer}(\omega)  \partial_{v} ( \tilde \uchl(\omega+\omer,r^\ast) e^{-i\omega v} )}{\omega } \Big] \d \omega\label{eq:termsofpartialvphi}
			\end{align}
			for   $v_1>v$. 
			Since $\partial_v \tilde \uchr$ and $\partial_v \tilde \uchl$ are bounded uniformly in absolute value by $\Omega^2_{{RN}}$ in view of \cref{prop:tr2}, the terms of \eqref{eq:termsofpartialvphi} which arise thereof are bounded by $\Omega_{{RN}}^{2-\alpha} E_1[\phiHL]$ for any $\alpha>0$ as  in Part~\textbf{A}. Similarly, $\tilde \uchl - 1$ is bounded by $\Omega^2_{RN}$ and thus, the main term arises from $\partial_{v} ( e^{-i \omega v})$ and we obtain
			\begin{align}
			\partial_v(e^{i \omer r^\ast} r	\phil(u,v) ) =&  -i \frac{r_+  e^{i\omer u} }{ \sqrt{2\pi} } \int_{\mathbb R}  \mathcal F[\phiHL \chi_{\leq v_1} e^{i \omer \cdot} ] (\omega)    \mathfrak t_{\omer}(\omega)     e^{-i\omega v}  \d \omega + \Phi^{v_1}_\textup{error},
			\end{align}
			where $|\Phi^{v_1}_\textup{error}|\lesssim_\alpha \Omega_{RN}^{2-\alpha} E_1[\phiHL]$. Note that $\Phi^{v_1}_\textup{error}$ depends on $v_1$ but the upper bound is uniform in $v_1$. 
			Since $\langle \omega \rangle \mathcal F[\phiHL e^{i \omer \cdot} ]  \in L^2_{\omega} $ and $\langle \omega \rangle^{-1} \mathfrak t_{\omer} \in L_\omega^\infty $ we can take the limit $v_1 \to \infty$ and obtain
			\begin{align}\label{eq:partialvphil}
			\partial_v(e^{i \omer r^\ast} r	\phil(u,v) ) =&  -i \frac{r_+  e^{i\omer u} }{ \sqrt{2\pi} } \int_{\mathbb R}  \mathcal F[\phiHL  e^{i \omer \cdot} ] (\omega)    \mathfrak t_{\omer}(\omega)     e^{-i\omega v}  \d \omega +  \Phi_\textup{error},
			\end{align}
			where $| \Phi_\textup{error}(u,v)|\lesssim_\alpha \OmegaRN^{2-\alpha}(u,v) E_1[\phiHL]$.

			\textbf{Part D.} Note that $\Phi_\textup{error}$ as in Part~\textbf{C} decays proportional to $\Omega_{RN}^{2-\alpha}$ for any $\alpha>0$ and thus,
			\begin{align}\int_{v_\gamma(u)}^{v} |\Phi_\textup{error}| \d v' \lesssim_{\alpha} (\Omega_{RN})^{2-\alpha}(u,v_\gamma(u)) E_1[\phiHL] \lesssim_{\alpha} \langle u \rangle^{-s (2-\alpha)}E_1[\phiHL] \lesssim \langle u \rangle^{2-3s}E_1[\phiHL]\end{align} choosing $\alpha>0$ sufficiently small (recall that $s \leq 1$ therefore $2s > 3s-2$). Thus, it suffices to show the result for the main part in  \eqref{eq:formulaforpartialvphi}. We further write \begin{align}
			\mathfrak t_{\omer} (\omega)  =	    \mathfrak t^0_{\omer}  +  \omega \mathfrak t^1_{\omer}  + \omega^2  \tilde{\mathfrak t}_{\omer} (\omega), \end{align}
			where we note that $ |\tilde{\mathfrak t}_{\omer}| = \left| \frac{	\mathfrak t_{\omer} (\omega) - \mathfrak t^0_{\omer}  -  \omega \mathfrak t^1_{\omer} }{\omega^2} \right|  \lesssim \langle \omega \rangle^{-1}$ and $|\partial_\omega \tilde{\mathfrak t}_{\omer}|\lesssim \langle \omega \rangle^{-1}   $ in view of \cref{cor:tarnsmissionandreflection} and \cref{lem:estimatesonderiativesofrandt}.  Hence,   \begin{align}  \langle v \rangle  \mathcal{F} \left( \tilde{\mathfrak t}_{\omer} \right) (v)   \in L^2(\mathbb R_v) \label{eq:lpestimateont} \end{align}
			and thus, $\ \mathcal{F} \left( \tilde{\mathfrak t}_{\omer} \right) \in L^1(\mathbb R) $ by the Cauchy--Schwarz inequality.
			
			Now, using \eqref{eq:formulaforpartialvphi} we  obtain
			\begin{align} \nonumber
			\left| \int_{v_\gamma(u)}^v e^{i \sbr(u,v')} \partial_{v'} (e^{i\omer r^\ast} r \phil) \d v' \right| \lesssim &    \left| \int_{v_\gamma(u)}^v e^{i \sbr(u,v')} \int_{\mathbb R}  \mathcal F[\phiHL   e^{i \omer \cdot} ] (\omega)    \mathfrak t_{\omer}^0     e^{-i\omega v'}  \d \omega  \d v'\right| \\\nonumber  &  +   \left| \int_{v_\gamma(u)}^v e^{i \sbr(u,v')} \int  \mathcal F[\phiHL   e^{i \omer \cdot} ] (\omega)  \omega   {\mathfrak t}_{\omer}^1 e^{-i\omega v'}  \d \omega \d v' \right|\\
			&  +   \left| \int_{v_\gamma(u)}^v e^{i \sbr(u,v')} \int  \mathcal F[\phiHL   e^{i \omer \cdot} ] (\omega)  \omega^2   \tilde{\mathfrak t}_{\omer}(\omega)     e^{-i\omega v'}  \d \omega \d v' \right|.
			\end{align}
			For the first term we directly take the inverse Fourier transform and estimate
			\begin{align}
			\left| \int_{v_\gamma(u)}^v e^{i \sbr(u,v')} \int_{\mathbb R}  \mathcal F[\phiHL   e^{i \omer \cdot} ] (\omega)    \mathfrak t_{\omer}^0    e^{-i\omega v'}  \d \omega  \d v'\right| & \lesssim   \left| \int_{v_\gamma(u)}^v  e^{i \sbr(u,v')} e^{i \omer v'} \phiHL  (v')     \d v' \right|.
			\end{align}
			Similarly, for the second term we integrate by parts and obtain 
			\begin{align}\nonumber
			&	\left| \int_{v_\gamma(u)}^v e^{i \sbr(u,v')} \int_{\mathbb R}  \mathcal F[\phiHL  e^{i \omer \cdot} ] (\omega)    \omega \mathfrak t_{\omer}^1 e^{-i\omega v'}  \d \omega  \d v'\right|  \lesssim   \left| \int_{v_\gamma(u)}^v  e^{i \sbr(u,v')}\partial_{v'} ( e^{i \omer v'} \phiHL  (v')    ) \d v' \right|\\\nonumber
			& \lesssim \langle u \rangle^{-s} \| \langle v\rangle^s \phiHL\|_{L^\infty}   + \left| \int_{v_\gamma(u)}^v  |\partial_{v'} \sbr(u,v')|  |  \phiHL  (v')   |  \d v' \right| \\ & \lesssim \langle u \rangle^{2-3s}  \| v^s \phiHL\|_{L^\infty}  .
			\end{align}
			Using the same method as above, the third term satisfies 
			\begin{align}\nonumber
			& \left| \int_{v_\gamma(u)}^v e^{i \sbr(u,v')} \int \mathcal F[\phiHL  e^{i \omer \cdot} ] (\omega)  \omega^2   \tilde{\mathfrak t}_{\omer}(\omega)     e^{-i\omega v'}  \d \omega \d v' \right|\\  &\lesssim \left| \int_{v_\gamma(u)}^v \partial_{v'} (e^{i \sbr(u,v')} )\int  \mathcal F[\partial_{\tilde v} (\phiHL   e^{i \omer \tilde v } ) ] (\omega)     \tilde{\mathfrak t}_{\omer}(\omega)     e^{-i\omega v'}  \d \omega \d v' \right|\label{eq:term1ofpartialphi}\\ &+ \left|   \int \mathcal F[\partial_{\tilde v} (\phiHL   e^{i \omer \tilde v } ) ]  (\omega)     \tilde{\mathfrak t}_{\omer}(\omega)     e^{-i\omega v}  \d \omega  \right|  \label{eq:term2ofpartialphi}\\ & + \left|  \int \mathcal F[\partial_{\tilde v} (\phiHL   e^{i \omer \tilde v } ) ]  (\omega)     \tilde{\mathfrak t}_{\omer}(\omega)     e^{ i\omega v_{\gamma}(u) }  \d \omega  \right|. \label{eq:term3ofpartialphi}
			\end{align}
			We will now estimate the three terms individually. 
			
			We start with integrand of \eqref{eq:term1ofpartialphi} and note that the other terms \eqref{eq:term2ofpartialphi} and \eqref{eq:term3ofpartialphi} are treated analogously. We write 	
			\begin{align} \nonumber
			&\left|   \int  \mathcal F[\partial_{\tilde v} (\phiHL   e^{i \omer \tilde v } ) ] (\omega)     \tilde{\mathfrak t}_{\omer}(\omega)     e^{-i\omega v'}  \d \omega \right|   \lesssim \left|  \left[ \partial_{\tilde v} (\phiHL   e^{i \omer \tilde v } )\right] \ast \mathcal{F}(\tilde{\mathfrak t}_{\omer})  \right|( v') \\ 
			& = \left| \int_{\mathbb R}  \partial_{\tilde v} (\phiHL   e^{i \omer \tilde v } ) (\tilde v) \mathcal{F}(\tilde{\mathfrak t}_{\omer})(v'-\tilde v)  \d \tilde v\right| 
			\end{align}
			To estimate the convolution, we note that for $v' \geq 2R$, either $|\tilde v| \geq R$ or $|\tilde v - v'| \geq R$. Thus,
			\begin{align} \nonumber
			& \left| 	\int_{\mathbb R}  \partial_{\tilde v} (\phiHL   e^{i \omer \tilde v } ) (\tilde v) \mathcal{F}(\tilde{\mathfrak t}_{\omer})(v'-\tilde v)  \d \tilde v \right| \\  \nonumber
			&\lesssim \left| 	\int_{|\tilde v |\geq R }  \partial_{\tilde v} (\phiHL   e^{i \omer \tilde v } ) (\tilde v) \mathcal{F}(\tilde{\mathfrak t}_{\omer})(v'-\tilde v)  \d \tilde v \right| +\left| 	\int_{|\tilde v - v'|\geq R}  \partial_{\tilde v} (\phiHL   e^{i \omer \tilde v } ) (\tilde v) \mathcal{F}(\tilde{\mathfrak t}_{\omer})(v'-\tilde v)  \d \tilde v \right| \\ \nonumber
			&\lesssim R^{-s} \left| \int_{|\tilde v| \geq R}  |\mathcal{F}(\tilde{\mathfrak t}_{\omer})(v'-\tilde v) |  \d \tilde v \right| ( \| v^s \phiHL\|_{L^\infty} + \| v^s \partial_v \phiHL\|_{L^\infty})\\   \nonumber
			&  + R^{-1} \left| 	\int_{|\tilde v - v'|\geq R}  | \partial_{\tilde v} (\phiHL   e^{i \omer \tilde v } ) (\tilde v) | |v' - \tilde v|  |\mathcal{F}(\tilde{\mathfrak t}_{\omer})(v'-\tilde v) |  \d \tilde v \right| 
			\\ \nonumber & \lesssim \langle  v' \rangle^{-s} \|  ( \| v^s \phiHL\|_{L^\infty} + \| v^s \partial_v \phiHL\|_{L^\infty}) \|  \mathcal{F}(\tilde{\mathfrak t}_{\omer})\|_{L^1}   + \langle \tilde v \rangle^{-1} E_1[\phiHL] \| \langle v \rangle \mathcal{F}(\tilde{\mathfrak t}_{\omer})\|_{L^2} \\ & \lesssim \langle v' \rangle^{-s} (  ( \| v^s \phiHL\|_{L^\infty} + \| v^s \partial_v \phiHL\|_{L^\infty}) + E_1[\phiHL]),
			\end{align}
			where we used \eqref{eq:lpestimateont}. 
			Now, plugging these estimates in \eqref{eq:term1ofpartialphi}  \eqref{eq:term2ofpartialphi} and \eqref{eq:term3ofpartialphi} and using that  $|\partial_v \sbr|\lesssim \langle v \rangle^{1-2s}$, we obtain, since $\frac{3}{4}< s \leq 1$ 
			\begin{align}  \nonumber
			\left| \int_{v_\gamma(u)}^v e^{i \sbr(u,v')} \partial_{v'} (e^{i\omer r^\ast} r \phil)\right| \d v' &   \lesssim  \langle u \rangle^{2-3s} ( \| v^s \phiHL\|_{L^\infty} + \| v^s \partial_v \phiHL\|_{L^\infty}  + E_1[\phiHL]). 
			\end{align}
			This shows  \textbf{Part D.}
			
			\textbf{Part E.} Assume that $\phiHL $ is such that the arising solution $\phil$ satisfies  $\partial_v (e^{i\omer r^\ast} r \phil )(u,\cdot) \in L^1_v $ on some constant $u$ surface.   Then, in view of \eqref{eq:partialvphil}, we have that \begin{align}\nonumber 
			\left\|  \int_{\mathbb R}  \mathcal F[\phiHL  e^{i \omer \cdot} ] (\omega)    \mathfrak t_{\omer}(\omega)     e^{-i\omega v}  \d \omega \right\|_{L^1_v} & \lesssim \left\| 	\partial_v \left(  r e^{i \omer  r^\ast }	\phil(u,v) \right) \right\|_{L^1_v}+ \left\|\Phi_\textup{error}\right\|_{L^1_v} \\ & \lesssim \left\| 	\partial_v \left(  r e^{i \omer  r^\ast }	\phil(u,v) \right) \right\|_{L^1_v} + E_1[\phiHL]. \label{eq:estimateonfouriertransformofphih}
			\end{align}
			We will first consider the cases for which 
			$\mathfrak t_{\omer}$ does not have any zeros (i.e.\ $\mathcal Z_{\mathfrak t} = \emptyset$), see \cref{lem:zerosornot}. Then  $\frac{1}{\mathfrak t_{\omer}}\lesssim \langle \omega \rangle^{-1} $ since $|\mathfrak t|^2 = |\mathfrak r|^2 + \omega ( \omega - \omer) $. For that, also recall $\mathfrak t_{\omer} (\omega) = \mathfrak t(\omega + \omer)$.  Moreover, in this case, $
			\mathcal F^{-1}  \left[\frac{1}{\mathfrak t_{\omer}} \right] \in L^1_v
			$
			since $ \frac{1}{\mathfrak t_{\omer}} \in L^2_\omega$, $\partial_\omega \frac{1}{\mathfrak t_{\omer}} \in L^2_\omega$. Thus, $\frac{1}{\mathfrak t_{\omer}}$ is a $L^1$ bounded Fourier multiplier. Hence, using that $1 =\mathfrak t_{\omer} \frac{1}{\mathfrak t_{\omer}}$ and  \eqref{eq:estimateonfouriertransformofphih}, we obtain
			\begin{align}
			\| \phiHL\|_{L^1_v} \lesssim \left\|  \int_{\mathbb R}  \mathcal F[\phiHL  e^{i \omer \cdot} ] (\omega)    \mathfrak t_{\omer}(\omega)     e^{-i\omega v}  \d \omega \right\|_{L^1_v} \lesssim \left\| 	\partial_v \left(  r e^{i \omer  r^\ast }	\phil(u,v) \right) \right\|_{L^1_v} + E_1[\phiHL]. \label{eq:fouriermulitplierestimate}
			\end{align}
			
			Now, we consider the case, where $\mathfrak t$ potentially has zeros, all of which have to lie in $\mathcal Z_{\mathfrak t}^\delta$. Then, by the inverse triangle inequality applied to \eqref{eq:estimateonfouriertransformofphih} we obtain
			\begin{align}\nonumber
			& \left\| 	\partial_v \left(  r e^{i \omer  r^\ast }	\phil(u,v) \right) \right\|_{L^1_v}  + E_1[\phiHL] \gtrsim  \left\| \int_{\mathbb R}  \mathcal F[\phiHL  e^{i \omer \cdot} ] (\omega)    \mathfrak t_{\omer}(\omega)     e^{-i\omega v}  \d \omega  \right\|_{L^1_v} 
			\\ \nonumber
			&  \geq      \left\| \int_{\mathbb R}  \mathcal F[\phiHL e^{i \omer \cdot} ] (\omega)    ( 1- \chi_{\delta}(\omega + \omer) ) \mathfrak t_{\omer}(\omega)     e^{-i\omega v}  \d \omega  \right\|_{L^1_v} \\ & - \left\| \int_{\mathbb R}  \mathcal F[ \phiHL  e^{i \omer \cdot} ] (\omega)      \chi_{\delta}(\omega + \omer)    \mathfrak t_{\omer}(\omega)     e^{-i\omega v}  \d \omega  \right\|_{L^1_v}   ,
			\end{align}
			where we recall that $\chi_{\delta}$ is supported in $\mathcal Z^\delta_{\mathfrak t} $. 
			For the first term we use $|\frac{1}{\mathfrak t} |\lesssim_\delta \langle \omega\rangle^{-1}$ on 
			$\mathbb R - \mathcal Z_{\mathfrak t}^\delta$ and obtain
			\begin{align} \nonumber
			\left\| \int_{\mathbb R}  \mathcal F[\phiHL e^{i \omer \cdot} ] (\omega)    ( 1- \chi_{\delta}(\omega + \omer) ) \mathfrak t_{\omer}(\omega)     e^{-i\omega v}  \d \omega  \right\|_{L^1_v} &  \gtrsim_\delta \left\| (1-P_{\delta} ) \phiHL\right\|_{L^1_v} \\ & \geq  \|\phiHL\|_{L^1_v} -  \| P_{\delta} \phiHL \|_{L^1_v}.
			\end{align}
			For the second term we use $ \mathfrak t \cdot   \chi_{\delta} \in C_c^\infty$ and obtain 
			\begin{align}
			\left\| \int_{\mathbb R}  \mathcal F[ \phiHL  e^{i \omer \cdot} ] (\omega)      \chi_{\delta}(\omega + \omer)    \mathfrak t_{\omer}(\omega)     e^{-i\omega v}  \d \omega  \right\|_{L^1_v} \lesssim \| P_{\delta} \phiHL \|_{L^1_v}.
			\end{align}
			Putting everything together  yields
			\begin{align}
			\|\phiHL\|_{L^1_v} \lesssim_\delta \left\| 	\partial_v \left(  r e^{i \omer  r^\ast }	\phil(u,v) \right) \right\|_{L^1_v}  + E_1[\phiHL]  +  \| P_{\delta} \phiHL \|_{L^1_v}. 
			\end{align}
			This  shows \textbf{Part~E.}\ and concludes the proof of  \cref{prop:representation}.
		\end{proof}
	\end{theol}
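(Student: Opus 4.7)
The plan is to build everything off the representation formula \eqref{eq:representationforpartialv} from \cref{cor:respres}, decomposing $\uhr = \mathfrak{R} \uchr + \mathfrak{T} \uchl$ so that $\phil$ is expressed as a frequency integral involving the $t$-Fourier transform of $\phiHL$, the transmission/reflection coefficients $\mathfrak{R},\mathfrak{T}$, and the renormalized solutions $\tilde\uchr, \tilde\uchl$. After shifting the frequency variable by $\omega \mapsto \omega+\omer$ to move the scattering resonance to the origin, the integrand decouples as $\mathfrak{r}_\omer(\omega)\,\tilde\uchr(\omega+\omer,r^\ast)\,e^{i\omega u}/\omega$ and $\mathfrak{t}_\omer(\omega)\,\tilde\uchl(\omega+\omer,r^\ast)\,e^{-i\omega v}/\omega$. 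The core idea is to split each numerator as $\mathfrak{r}_\omer(0) + [\mathfrak{r}_\omer(\omega)-\mathfrak{r}_\omer(0)] + \mathfrak{r}_\omer(\omega)[\tilde\uchr(\omega+\omer,r^\ast)-1]$ (and analogously for the transmission piece), so that the leading contribution comes from the pure pole $\mathfrak{r}_\omer(0)/\omega$ and $\mathfrak{t}_\omer(0)/\omega$ respectively.

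For \textbf{Part A}, I would apply $\mathcal F[\mathrm{p.v.}(1/x)] = i\pi\,\mathrm{sgn}$ to those pure-pole contributions, yielding terms proportional to $\int \phiHL(\tilde v)e^{i\omer \tilde v}\mathrm{sgn}(\tilde v+u)\,\d\tilde v$ and $\int \phiHL(\tilde v)e^{i\omer \tilde v}\mathrm{sgn}(\tilde v-v)\,\d\tilde v$; these combine (using the crucial algebraic identity $\mathfrak{t}_\omer(0)=-\mathfrak{r}_\omer(0)$ from \eqref{eq:tomegromegr}) to produce exactly $\mathfrak{r}_\omer(0)\int_{-u}^v \phiHL(\tilde v)e^{i\omer \tilde v}\d\tilde v$, the announced main term. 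The remainder $\phi_{\mathrm{r}}$ comes from the $\mathfrak{r}$-side renormalized multiplier $\mathfrak{r}^{\mathrm{re}}_\omer(\omega):=(\mathfrak{r}_\omer(\omega)-\mathfrak{r}_\omer(0))/\omega$, which is a real-analytic $H^1_\omega$-bounded multiplier by \cref{lem:estimatesonderiativesofrandt}, and thus acts boundedly on the Sobolev norm $E_1[\phiHL]$; its continuous extension to $\CH$ follows by dominated convergence once one uses that $\tilde \uchl-1, \tilde\uchr-1 = O(\OmegaRN^2)$ (\cref{prop:tr2}). All other error terms contain a factor $\tilde\uchl-1, \tilde\uchr-1$ or $\partial_\omega \tilde\uchl, \partial_\omega\tilde\uchr$, which by \cref{prop:tr2} and \cref{lem:estimatesonderiativesonuchrl} provide the decay $\OmegaRN^{2-\alpha}$ needed for $\phi_{\mathrm{err}}$.

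For \textbf{Part B}, I would upgrade the control on $\phi_{\mathrm{r}}$ by $|u|^\beta$ by recognizing that multiplication by $\langle u\rangle^\beta$ corresponds to applying $\langle \partial_\omega\rangle^\beta$ in frequency; a Kato--Ponce-type inequality then transfers $\beta$ derivatives onto $\mathcal F[\phiHL e^{i\omer\cdot}]$, giving the bound by $E_1^\beta[\phiHL]$, while the raw signum integral is estimated by $F^\beta[\phiHL]$. \textbf{Part C} proceeds by differentiating the representation under the integral sign and noting that the $\partial_v$ hits $e^{-i\omega v}$ to produce an extra factor of $-i\omega$, which combines with the $1/\omega$ pole to make the Fourier multiplier $\mathfrak{t}_\omer(\omega)$ directly; the terms where $\partial_v$ falls on $\tilde\uchr, \tilde\uchl$ decay as $\OmegaRN^{2-\alpha}$ by \cref{prop:tr2}. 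For \textbf{Part D} I would Taylor-expand $\mathfrak{t}_\omer(\omega) = \mathfrak{t}_\omer^0 + \omega \mathfrak{t}_\omer^1 + \omega^2\tilde{\mathfrak{t}}_\omer(\omega)$; the constant piece yields exactly the oscillation integral in the statement, the linear piece integrates to a surface term plus a manageable $\partial_v\sbr$ commutator, and the quadratic remainder---whose Fourier inverse lies in $L^1$---yields an $\langle v\rangle^{-s}$-decaying convolution to integrate against $e^{i\sbr}$.

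\textbf{Part E} is where the hardest point lies: since we want to recover $\|\phiHL\|_{L^1}$ from $\|\partial_v(re^{i\omer r^\ast}\phil)\|_{L^1_v}$, we must invert the Fourier multiplier $\mathfrak{t}_\omer$. In the regimes of \cref{lem:zerosornot} where $\mathfrak{t}$ has no zeros, the multiplier $1/\mathfrak{t}_\omer$ satisfies $|1/\mathfrak{t}_\omer|\lesssim \langle\omega\rangle^{-1}$ and an analogous bound on its derivative, so that by Cauchy--Schwarz its Fourier inverse is in $L^1$; applying Young's convolution inequality and absorbing the $C$-error from Part C gives the claim directly. When potential zeros of $\mathfrak{t}$ exist (charged case), we localize via the cutoff $\chi_\delta$ supported near $\mathcal Z_{\mathfrak t}$, invert $\mathfrak{t}_\omer$ only on the complement, and pay the $\|P_\delta \phiHL\|_{L^1}$ term as an inverse-triangle correction. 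The principal technical obstacle throughout the proof is controlling the error terms coming from the combination of $\mathfrak{r}_\omer(\omega)[\tilde\uchr(\omega+\omer,r^\ast)-1]/\omega$ (and its $\mathfrak{t}$-analogue)---for these we must simultaneously exploit the exponential decay $O(\OmegaRN^2)$ of $\tilde\uchr-1$ away from the horizon, and the fact that the apparent $1/\omega$ singularity is cancelled by the factor $\tilde\uchr(\omega+\omer,r^\ast)-\tilde\uchr(\omer,r^\ast) = O(\omega)$, combined through the $\omega$-derivative estimates of \cref{lem:estimatesonderiativesonuchrl} that require a careful Volterra-integral analysis uniform in the frequency regime.
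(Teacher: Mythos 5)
Your proposal is correct and follows essentially the same route as the paper's proof: the same frequency shift $\omega\mapsto\omega+\omer$, the same three-way splitting of $\mathfrak r_{\omer}(\omega)\tilde\uchr$ and $\mathfrak t_{\omer}(\omega)\tilde\uchl$ into pole, renormalized multiplier, and $O(\OmegaRN^2)$ pieces, the same use of $\mathcal F[\textup{p.v.}(1/x)]=i\pi\,\textup{sgn}$ together with $\mathfrak t_{\omer}(0)=-\mathfrak r_{\omer}(0)$ to assemble the main term, Kato--Ponce for the $\langle u\rangle^\beta$ weight, the Taylor expansion of $\mathfrak t_{\omer}$ for Part D, and the $1/\mathfrak t_{\omer}$ multiplier inversion with the $\chi_\delta$ localization for Part E. The only cosmetic discrepancy is that Part A rests on the undifferentiated representation formula \eqref{eq:representationformula} rather than \eqref{eq:representationforpartialv}, which is reserved for Part C.
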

	
	To connect with the nonlinear theory and the various oscillation spaces  from  \cref{sec:oscillationspaces}  we state  the following corollaries from \cref{prop:representation}.
	We will also introduce a smooth positive cut-off supported only on $v\geq v_0 +  2 $ and such that $\chi_{ \geq v_0 +  3} =1 $ for $v\geq v_0 + 3$.  We assume that $  |\partial_v \chi_{\geq v_0 +  3} |\leq 2$.
	We also recall the notation $\psil' = r \phil$. 
	
	\begin{cor} \label{linearcor}
		Let $\phiH \in \Sl$ be arbitrary and define  $ \phiHL (v):=  \chi_{\geq v_0 + 3}(v) \phiH(v)$ which we trivially extend for $v\leq v_0$.
		Let $\phil$ be the unique solution of \eqref{eq:chargedKGlinear} with data $\phiHL$ on $\mathcal{H}^+ $ and no incoming data from the left event horizon. Note that by definition of $\Sl$ (recalling $s\in (\frac{3}{4},1]$) we have that  for  all $v\geq v_0 $\begin{align}
		&	v^s (|\phiHL|(v)+ |\partial_v \phiHL|(v) )\leq 4 D_1.
		\end{align}
		\begin{enumerate}
			\item If $\phiH\in \OO$,   then \begin{align} \sup_{v \geq v_0, u_0 \leq u_s} \left| \int_{v_0}^{v} e^{iq_0\sbr(v')} e^{i q_0\int_{v_0}^v (A'_{RN})_v(u_0,v') \d v'}	D_v^{RN} \psil'(u_0,v') \d v' \right|<+\infty\end{align} for all $\sbr$ satisfying \eqref{sigma_err1}, \eqref{sigma_err2}.
			
			\item If  $\phiH \in \OOp$, then  additionally for all $u_0\leq u_s$ \begin{align} \lim_{v \rightarrow  +\infty} \left| \int_{v_0}^{v} e^{iq_0\sbr(v')} e^{i q_0\int_{v_0}^v (A'_{RN})_v(u_0,v') \d v'}	D_v^{RN} \psil'(u_0,v') \d v'\right|\end{align} exists and is finite for all $\sbr$ satisfying \eqref{sigma_err1}, \eqref{sigma_err2}.
			
			\item \label{part3} If $\phiH \in \OOpp$, then additionally, for all $D_{br}>0$ there exists $D' = D'(e,M,D_1,s,q_0,m^2,D_{br})>0$ and  $\tilde \eta_0(e,M,D_1,s,q_0,m^2,D_{br}) >0 $ such that for all $\sbr$ satisfying \eqref{sigma_err1}, \eqref{sigma_err2} and for all $(u,v) \in \LB$  \begin{align}\left| \int_{v_{\gamma}(u)}^{v} e^{iq_0\sbr(v')} e^{i q_0\int_{v_0}^v (A'_{RN})_v(u,v') \d v'}	D_v^{RN} \psil'(u,v') \d v'\right| \ls D' \cdot |u|^{s-1-\tilde \eta_0}.
			\end{align}
			\item Assume that $q_0=0$, $m^2\notin D(M,e)$ and that $\phiH  \in \NO= \Sl-\OO$. Then for all $u \in \RR$ \begin{equation} \label{blow.up.lin}
			\limsup_{v\rightarrow+\infty}|\phil|(u,v)=+\infty.\end{equation}	
		\end{enumerate}
	\end{cor}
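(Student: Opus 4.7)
My plan is to use the representation formula in \cref{prop:representation} to reduce the oscillation integrals appearing in \cref{linearcor} to the event-horizon integrals defining $\OO, \OOp, \OOpp$ from \cref{defn:qualoscillationcond}, \cref{defn:quantitativeosc}, \cref{defn:strongoscilcond}. The crucial computation in the gauge \eqref{eq:gauge_lienartheory} is the phase identity
\begin{align*}
e^{iq_0\int_{v_0}^{v'}(A'_{RN})_v(u,v'')\,\d v''} \;=\; e^{-i\omer(u+v_0)/2}\; e^{i\omer r^\ast(u,v')}\; e^{-i\Xi_r(u,v')},
\qquad \Xi_r(u,v') := \int_{v_0}^{v'}\tfrac{1}{2}(\omega_- - \omega_r)(u,v'')\,\d v'',
\end{align*}
which follows from the pointwise relation $\partial_{v'}[\omer r^\ast - q_0\int_{v_0}^{v'}(A'_{RN})_v\,\d v''] = \tfrac{1}{2}(\omega_- - \omega_r)$. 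Since $|\omega_- - \omega_r|\lesssim\OmegaRN^2$ decays exponentially in $r^\ast$, the phase $\Xi_r$ is uniformly bounded in $(u,v')$ and $\partial_{v'}\Xi_r \in L^1_{v'}$. Consequently, whenever $\sbr(v')$ satisfies \eqref{sigma_err1}, \eqref{sigma_err2} with constant $D_{br}$, the perturbed phase $\sbr_{\flat}(u,v') := \sbr(v') - \Xi_r(u,v')/q_0$ (or $\sbr_{\flat}:=-\Xi_r$ when $q_0=0$, in which case also $\omer = \Xi_r \equiv 0$) again satisfies \eqref{sigma_err1}, \eqref{sigma_err2} uniformly in $u$ with a modified constant $\tilde{D}_{br}$.

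\textbf{Proof of Parts 1--3.} Writing $F(u,v') := e^{iq_0\int_{v_0}^{v'}(A'_{RN})_v\,\d v''}$, the identity $F\cdot D_v^{RN}\psil' = \partial_{v'}(F\psil')$ combined with the phase identity above yields
\begin{align*}
e^{iq_0\sbr(v')}\,F\,D_v^{RN}\psil' \;=\; e^{-i\omer(u+v_0)/2}\, e^{iq_0\sbr(v') - i\Xi_r(u,v')}\, \partial_{v'}(e^{i\omer r^\ast}\psil') \;-\; i(\partial_{v'}\Xi_r)\, e^{iq_0\sbr(v')}\, F\psil'.
\end{align*}
The second term integrates to a uniformly bounded quantity because $|\partial_{v'}\Xi_r|\lesssim \OmegaRN^2$ is in $L^1_{v'}$ and $|F\psil'| = |\psil'| \lesssim |I(u,v')| + E_1[\phiHL]$ by Part~\textbf{A} of \cref{prop:representation}, where $I(u,v') := \int_{-u}^{v'}\phiHL(\tilde v) e^{i\omer\tilde v}\,\d\tilde v$ is controlled by $\OO$ with $\sbr\equiv 0$. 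For the main term I split $\int_{v_0}^v = \int_{v_0}^{v_\gamma(u)} + \int_{v_\gamma(u)}^v$. On the late blue-shift piece $[v_\gamma(u),v]\subset\LB$, Part~\textbf{D} of \cref{prop:representation} applied with $\sbr_D := q_0\sbr - \Xi_r$ gives the bound
\begin{align*}
\left|\int_{v_\gamma(u)}^v\phiHL(v')\, e^{i\omer v' + i\sbr_D(u,v')}\,\d v'\right| + \langle u\rangle^{2-3s}\bigl(G^s[\phiHL] + E_1[\phiHL]\bigr),
\end{align*}
whose first term is bounded (resp.\ has a limit, resp.\ decays like $|u|^{s-1-\eta_0}$) by $\OO$ (resp.\ $\OOp$, $\OOpp$) applied with $\sbr_{\flat}$, and whose error $\langle u\rangle^{2-3s}$ is uniformly bounded since $s > \tfrac23$. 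On the outer piece $[v_0, v_\gamma(u)]$, substituting Part~\textbf{A}'s representation $e^{i\omer r^\ast}\psil' = \alpha_u I(u,v') + r\phi_r + r\phi_{\textup{err}}$ with $\alpha_u := \sqrt{2\pi}i r_+ \mathfrak r_{\omer}(0) e^{i\omer u}$ and integrating by parts a second time against $\partial_{v'}(e^{iq_0\sbr - i\Xi_r})$ reduces the dominant contribution to $|\int_{v_0}^{v_\gamma(u)}\phiHL\, e^{i\omer v' + i\sbr_{\flat}}\,\d v'|$, again bounded by $\OO/\OOp/\OOpp$; the $r\phi_r$ and $r\phi_{\textup{err}}$ contributions are controlled via the finer Fourier estimates from the proof of \cref{prop:representation} together with the continuous extendibility of $\phi_r$ to $\CH$. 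Part~1 then follows from uniform bounds under $\OO$; Part~2 from limit existence under $\OOp$; Part~3 from the quantitative decay in $\OOpp$ combined with $|I(u,v)|\lesssim |u|^{s-1-\eta_0}$, giving the claimed $|u|^{s-1-\tilde\eta_0}$ with $\tilde\eta_0 < \eta_0$ to absorb logarithmic factors.

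\textbf{Proof of Part 4.} For $q_0 = 0$ we have $\omer = 0$, $F\equiv 1$, $\Xi_r\equiv 0$, and the left-hand side collapses to $\int_{v_0}^v\partial_{v'}\psil'\,\d v' = \psil'(u_0,v) = r(u_0,v)\,\phil(u_0,v)$, using $\phil(u_0,v_0) = 0$ by finite speed of propagation since $\phiHL$ vanishes on $[v_0, v_0 + 2]$. By \cref{prop:tr1} and the hypothesis $m^2\notin D(M,e)$, the pole coefficient $\mathfrak r(0) = \mathfrak r_{\omer}(0)$ is nonzero, so Part~\textbf{A} of \cref{prop:representation} reads
\begin{align*}
\phil(u_0,v) \;=\; \frac{\sqrt{2\pi}\,i\, r_+\, \mathfrak r(0)}{r(u_0,v)}\int_{-u_0}^v\phiHL(\tilde v)\,\d\tilde v \;+\; \phi_r(u_0,v) \;+\; \phi_{\textup{err}}(u_0,v),
\end{align*}
with $|\phi_r| + |\phi_{\textup{err}}|\lesssim E_1[\phiHL]$ uniformly in $v$. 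When $q_0 = 0$ the oscillation condition $\phiH\in\OO$ collapses to $\sup_{\tilde v}|\int_{v_0}^{\tilde v}\phiH\,\d v'|<\infty$, so the hypothesis $\phiH\in\NO$ gives $\limsup_{v\to\infty}|\int_{v_0}^v\phiH\,\d\tilde v| = +\infty$; subtracting the bounded quantity $\int_{v_0}^{\max(v_0,-u_0)}\phiH\,\d\tilde v$ yields $\limsup_{v\to\infty}|\int_{-u_0}^v\phiHL\,\d\tilde v| = +\infty$, whence $\limsup_{v\to\infty}|\phil|(u_0,v) = +\infty$ for every $u_0$, proving \eqref{blow.up.lin}.

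\textbf{Main obstacle.} The principal technical difficulty lies in Parts~1--3 on the outer region $[v_0, v_\gamma(u)]$, where Part~\textbf{D} does not directly apply and the naive bound $|\int\sbr'\,F\psil'\,\d v'|\lesssim \int v'^{1-2s}\,\d v'\sim |u|^{2-2s}$ diverges. The second integration by parts exploiting the oscillatory structure of $I(u,v')$ in $v'$ is what resolves this, but it requires the assumption $s > \tfrac34$ (strictly stronger than the $s > \tfrac12$ needed for \cref{CH.stab.thm}): this threshold appears both through the $\langle u\rangle^{2-3s}$-error in Part~\textbf{D} and through the convergence of the auxiliary integrals controlling the $r\phi_r, r\phi_{\textup{err}}$ remainders. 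In Part~3 a further subtlety is that the $|u|^{s-1-\tilde\eta_0}$ decay must be established with a common exponent across the two pieces of the integration domain; carrying the small losses from both integrations by parts forces $\tilde\eta_0 < \eta_0$.
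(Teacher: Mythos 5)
Your overall route is the paper's route: the phase identity for $q_0\int_{v_0}^{v'}(A'_{RN})_v$, the gauge-derivative relation \eqref{eq:gaugederivative}, reduction of the late blue-shift piece to the event-horizon oscillation integrals via Part~\textbf{D} of \cref{prop:representation}, and Part~\textbf{A} together with $\mathfrak r_{\omer}(0)\neq 0$ (valid for $q_0=0$, $m^2\notin D(M,e)$) for the blow-up statement, whose proof you give correctly. However, your central bookkeeping claim is false and it is exactly the delicate point. You assert that $|\omega_- - \omega_r|\lesssim \OmegaRN^2$ decays exponentially in $r^\ast$, hence that $\Xi_r(u,v')=\frac12\int_{v_0}^{v'}(\omega_--\omega_r)\,\d v''$ is uniformly bounded with $\partial_{v'}\Xi_r\in L^1_{v'}$. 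This holds only as $r^\ast\to+\infty$ (where $r-r_-\sim\OmegaRN^2$); as $r^\ast\to-\infty$ one has $\omega_r\to\omega_+$, so $\omega_--\omega_r\to\omer\neq 0$ whenever $q_0e\neq 0$. Consequently $\partial_{v'}\Xi_r\approx\omer/2$ throughout the outer region $v'\lesssim -u$, and $\Xi_r(u,v')$ grows linearly there, reaching size $\sim\tfrac{|\omer|}{2}|u|$. Your perturbed phase $\sbr_\flat=\sbr-\Xi_r/q_0$ therefore violates both \eqref{sigma_err1} (which allows at most $v^{2-2s}$ growth, with $2-2s<1$) and \eqref{sigma_err2} ($|\sigma'|\lesssim v^{1-2s}\to 0$, whereas $|\partial_{v'}\Xi_r|\approx|\omer|/2$), uniformly in nothing. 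Since $\OO,\OOp,\OOpp$ guarantee cancellation only against phases in that admissible class, you cannot invoke the oscillation conditions for $\sbr_\flat$, and everything in your Parts 1--3 that rests on it breaks in the charged case (for $q_0=0$ the issue is vacuous, as $\Xi_r\equiv 0$). The repair is precisely the cancellation the paper's identity encodes: split $\Xi_r(u,v')=\Xi_r(u,\infty)-\tfrac12\int_{v'}^{\infty}(\omega_--\omega_r)\,\d v''$; the $v'$-independent piece $\Xi_r(u,\infty)\approx -\tfrac{\omer}{2}(u+v_0)$ cancels against your explicit prefactor $e^{-i\omer(u+v_0)/2}$ to give the paper's \emph{bounded} $f(u)$, which exits the modulus, while the remaining tail is $O(\OmegaRN^2(u,v'))$ --- small exactly on $\LB$, which is where Part~\textbf{D} and the oscillation conditions are actually applied (note also that on the outer region the gauge phase is then exponentially close to a constant, so the extra oscillation $e^{i\omer r^\ast}$ you lean on there is simply absent).

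Two smaller points. In Part~3 the loss $\tilde\eta_0<\eta_0$ is not about ``absorbing logarithmic factors'': the binding constraint is absorbing the $\langle u\rangle^{2-3s}$ error from Part~\textbf{D} into the decay $|u|^{s-1-\tilde\eta_0}$, which forces $\tilde\eta_0\leq 4s-3$ (this is where $s>\tfrac34$ enters; the paper takes $\tilde\eta_0=\min\{\eta_0,\tfrac{4s-3}{10}\}$, modulo an evident sign typo in its proof). And your treatment of the outer piece $[v_0,v_\gamma(u)]$ via a ``second integration by parts'' against Part~\textbf{A}'s representation is only sketched and, as written, inherits the faulty $\Xi_r$ claim; the paper is admittedly equally terse on this piece, but after the phase correction the mechanism there is the slow variation of the (now essentially constant) phase together with $\psil'\approx r_+\phiHL$ near $\HH$, not additional stationary-phase structure. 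With the phase bookkeeping corrected, your argument coincides with the paper's proof, whose entire content is: cutoff compatibility ($\phiH\in\OO$ iff $\tfrac14\phiHL\in\OO$, and similarly for $\OOp,\OOpp$), the identity $q_0\int_{v_0}^{v}(A'_{RN})_{v'}\,\d v'=\omer r^\ast+f(u)+O(\OmegaRN^2)$ with $f$ bounded, Part~\textbf{D} for Parts 1--3, and Part~\textbf{A} for Part~4.
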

	\begin{rmk}\label{rmk:initialdataonlyhavetoagreeeventually}
		It should be noted that for the nonlinear problem we will impose non-zero data on $\underline{C}_{in}$. For the difference estimates it however suffices if the linear data and the nonlinear data agree eventually on $\HH$.   
	\end{rmk}
	\begin{proof}
		We begin by noting that $\phiH \in \OO, \OOp, \OOpp$, respectively, if and only if $\frac 14\phiHL (v) = \frac 14  \chi_{\geq v_0 + 3}(v) \phiH(v) \in \OO,\OOp,\OOpp$, respectively.\footnote{The factor $\frac 14$ is just to make sure that $ \frac 14   \chi_{\geq v_0 + 3}(v) \phiH(v)  \in \Sl$ if $\phiH\in \Sl$.} 
		
		Now, the first statement is a consequence of  Part~\textbf{D.} of \cref{prop:representation}, the expression for the gauge derivative in \eqref{eq:gaugederivative} and the fact that for some bounded function  $f(u)$: \begin{align} \nonumber q_0 & \int_{v_0}^v (A'_{RN})_{v'}(u,v') \d v' = - \frac 12  \int_{v_0}^{v}  (\omega_- - \omega_r) \d v' + \frac 12  \omer\cdot (v-v_0)  \\ \nonumber & = - \frac 12  \int_{v_0}^{+\infty}  (\omega_- - \omega_r) \d v' + \frac 12 \int_v^{+\infty}  (\omega_- - \omega_r) \d v'   + \frac 12  \omer \cdot (2 r^\ast - u -v_0)\\
		& =  \omer r^\ast + f(u) + O(\OmegaRN^2(r^\ast)). \end{align} 
		
		The second statement follows completely analogously. For the third statement, we use Part~\textbf{D.} of \cref{prop:representation}, and that, defining $0<\tilde{\eta}\eta_0=\min\{\eta_0,\frac{3s-4}{10}\}$  (where $\eta_0$ is as in the definition of $\mathcal O''$) we have  $\min(1-s +\tilde \eta_0, 2s-3) = 1-s+ \tilde \eta_0$ for some $\tilde \eta_0 >0$ as $s>\frac 34$. 
		
		Now, we proceed to the last statement. Indeed, under the assumption $q_0 =0$ and $m^2\notin D(M,e)$, we have that $\mathfrak r( \omega =0) \neq 0$. Thus, from \cref{prop:representation}, Part \textbf{A.}, and the assumption $\phiH \in \mathcal{NO}$, the claim follows. \end{proof}
	Moreover, we also deduce a  result of $\dot{W}^{1,1}$ blow-up along outgoing cones for the linearized solution in the following sense. To state the following corollary we recall the definition of $P_{\delta}$ as in \cref{sec:w11blowup}.
	\begin{cor} \label{linear.cor.W11}
		Let the assumptions of \cref{linearcor} hold.
		\begin{enumerate}
			\item   Assume   that  $P_{\delta} ( \phiH  ) \in L^1$  for some $\delta >0$. Then, for all $u\leq u_s$,  we have \begin{align}
			\int_{v_0}^{+\infty} |\phiH|(v') \d v' \lesssim_\delta \int_{v_0}^{+\infty} |D_v^{RN} \psil'|(u,v) \d v + \| P_{\delta} ( \phiH ) \|_{L^1_v}+    D_1 ,
			\end{align} recalling the definition $\psil'= r_{RN} \phil$. In particular,  if \begin{align}
			\label{eq:assumptiontoblowup}
			&	\phiH  \in \Sl -  L^1(\HH) \text{ with }    P_{\delta} ( \phiH   ) \in L^1(\RR) \text{ for some } \delta >0,
			\end{align}
			then for all $u\leq u_s$,  \begin{equation} \label{blowup.lin.eq}
			\int_{v_0}^{+\infty} |D_v^{RN}\psil'|(u,v') \d v'=+\infty.
			\end{equation} 
			Thus,  the set of data $\phiH \in \Sl$ leading to blow-up for each $u\leq u_s$ as in \eqref{blowup.lin.eq}  is generic in the sense that its complement $H$ is the set  $H=H_0\cap \Sl$ for some vector space $H_0\subset \Sl_0$ of infinite co-dimension in $\Sl_0$, where we recall \eqref{eq:defnsl0} for the definition of $\Sl_0$.
			
			\item Assume  $0<|q_0 e|<\epsilon(M,e,m^2)$ or $q_0  = 0$ and $m^2 \notin D(M,e)$. Then, for all $u \leq u_s$, we have  \begin{equation}
			\int_{v_0}^{+\infty} |\phiH|(v') \d v' \lesssim \int_{v_0}^{+\infty} |D_v^{RN} \psil'|(u,v') \d v' + D_0.
			\end{equation}
			In particular, if $\phiH \in \Sl- L^1(\HH)$, then  $$ \int_{v_0}^{+\infty} |D_v^{RN}\psil'|(u,v') \d v'=+\infty.$$
		\end{enumerate} 
		\begin{proof}
			The statements  follow from \cref{prop:representation}, Part \textbf{E.} The genericity of $\Sl-H$ in the  first statement is a direct consequence of  \eqref{eq:assumptiontoblowup}. We have also used that $P_{\delta} ( \phiHL  ) \in L^1$ if and only if $P_{\delta} ( \phiH  ) \in L^1$. 
		\end{proof}
	\end{cor}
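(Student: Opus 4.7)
}
The strategy is to apply Part \textbf{E} of \cref{prop:representation} directly to $\phiHL$ and then transfer the conclusions to $\phiH$. First I would record three elementary reductions between $\phiH$ and $\phiHL = \chi_{\geq v_0+3}\phiH$. Since these two functions agree for $v\geq v_0+3$ and $\phiH\in\Sl$ gives $|\phiH|\leq D_1 v^{-s}$, the triangle inequality yields
\[ \bigl|\|\phiH\|_{L^1_v}-\|\phiHL\|_{L^1_v}\bigr|\lesssim D_1,\qquad \bigl|\|P_\delta\phiH\|_{L^1_v}-\|P_\delta\phiHL\|_{L^1_v}\bigr|\lesssim_\delta D_1,\]
the latter because $\phiH-\phiHL$ is supported in a compact window of length $3$, so its convolution with $\mathcal F^{-1}(\chi_\delta)$ (a Schwartz function) is controlled pointwise by $D_1 v_0^{-s}$. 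Using $s>3/4>1/2$, one also has $E_1[\phiHL]\lesssim D_1\bigl(\int v^{-2s}\,dv\bigr)^{1/2}\lesssim D_1$. Finally, I would identify $D_v^{RN}\psil'$ with $e^{-i\omer r^\ast}\partial_v(re^{i\omer r^\ast}\phil)$ up to the lower-order term $\tfrac{ir}{2}(\omega_- -\omega_r)\phil$; since $|\omega_- -\omega_r|\lesssim \OmegaRN^2$ near the Cauchy horizon and $|\phil|\lesssim E_1[\phiHL]$ by Part~\textbf{A} of \cref{prop:representation}, this term is pointwise $O(\OmegaRN^2)E_1[\phiHL]$, hence integrable in $v$ with $L^1_v$-norm $\lesssim D_1$.

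Next, I would directly invoke Part~\textbf{E} of \cref{prop:representation} applied to $\phiHL$. Under the assumption $P_\delta\phiHL\in L^1$ this gives
\[ \|\phiHL\|_{L^1_v}\lesssim_\delta \bigl\|\partial_v(re^{i\omer r^\ast}\phil)(u,\cdot)\bigr\|_{L^1_v}+E_1[\phiHL]+\|P_\delta\phiHL\|_{L^1_v}.\]
Combining this with the three reductions above yields exactly the claimed bound on $\|\phiH\|_{L^1_v}$ (the errors produced by the cutoff and the gauge-derivative discrepancy are absorbed into the $D_1$-term). The $L^\infty$-blow-up statement \eqref{blowup.lin.eq} is then immediate by contrapositive: if $\phiH\in\Sl-L^1$ but $\int|D_v^{RN}\psil'|(u,\cdot)<+\infty$ for some $u\leq u_s$, the right-hand side of the inequality would be finite while the left-hand side is infinite. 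For Part~2 of the corollary, \cref{lem:zerosornot} supplies $\mathcal Z_{\mathfrak t}=\emptyset$, so one can pick $\chi_\delta\equiv 0$ and $P_\delta\phiH\equiv 0$; equivalently, one applies the second bullet of Part~\textbf{E} of \cref{prop:representation} directly and no Fourier-support hypothesis on $\phiH$ is required.

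For the genericity statement, the complement $H\subset\Sl$ of the blow-up set is contained in
\[ \bigl(L^1\cap\Sl\bigr)\,\cup\,\bigl\{\phiH\in\Sl:\; P_\delta\phiH\notin L^1\text{ for every }\delta>0\bigr\}.\]
Both sets are restrictions to $\Sl$ of subspaces of $\Sl_0$: the first is $L^1\cap\Sl_0$ and the second is the intersection over $\delta>0$ of the subspaces $\{f\in\Sl_0:P_\delta f\in L^1\}^c\cup\{0\}$. I would set $H_0$ to be the vector-space hull of this union inside $\Sl_0$ (equivalently the smallest subspace containing both), and verify that it has infinite codimension. The key observation is that $\Sl_0/(L^1\cap\Sl_0)$ is already infinite-dimensional, witnessed e.g.\ by the family $\{v^{-s}e^{i\omega_n v}\chi_{v\geq v_0}\}_{n}$ for a sequence $\omega_n$ bounded away from $\mathcal Z_{\mathfrak t}^\delta$, which are linearly independent modulo $L^1\cap\Sl_0$ and all lie outside the second set above.

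The main difficulty is purely bookkeeping: one must check that the $P_\delta$-condition transfers cleanly between $\phiH$ and the cutoff $\phiHL$, that the gauge-derivative identity between $D_v^{RN}\psil'$ and $\partial_v(re^{i\omer r^\ast}\phil)$ produces only controllable errors, and that the ``exceptional'' complement $H_0$ indeed sits in a linear subspace of infinite codimension. Each of these is controlled by the quantitative estimates already established, so no new analytic input beyond \cref{prop:representation} is required.
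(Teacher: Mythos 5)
Your proposal is correct and follows essentially the same route as the paper, whose proof simply invokes Part~\textbf{E} of \cref{prop:representation} together with the observations that $P_\delta\phiHL\in L^1$ iff $P_\delta\phiH\in L^1$ and that the cutoff/gauge discrepancies are absorbed into the $D_1$-term; your write-up merely makes these reductions (and the $|\omega_--\omega_r|\lesssim\OmegaRN^2$ control of the gauge term near $\CH$) explicit. Your treatment of the genericity claim via the span $H_0$ of the exceptional union is at least as detailed as the paper's one-line justification, so no further comment is needed.
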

	
	\section{Nonlinear estimates for the EMKG system and extendibility properties of the metric} \label{nonlinearsection}

	We give a brief outline of   \cref{nonlinearsection}: \begin{enumerate}
		\item 	In \cref{recall.section} we recall the time-decay estimates that where established in the nonlinear setting by the second author in \cite{Moi} (see   \cref{CH.stab.thm}). These estimates play a crucial role in the proof of the Cauchy horizon (in)-stability and will also be essential to the analysis of the present paper. 
		Recall that the various gauges were defined in   \cref{coordinatechoice.section} and \cref{gaugechoice.section}.
		\item In \cref{prelim.nonlin} and \cref{ext.section}, we provide some useful nonlinear estimates, and show how to deduce the continuous extendibility of the metric from the boundedness of the scalar field. To do so, we will in particular exploit the algebraic structure of the nonlinear terms in the Einstein equations.
		
		\item In  \cref{difference.estimates.section}, we estimate the difference of the dynamical metric $g$ with the Reissner--Nordström metric $g_{RN}$ and the difference of the scalar field $\phi$ and its linear counterpart $\phiNl$ ($\phiNl$ differs from $\phil$ of \cref{linearsection} by a gauge change, see \cref{difference.estimates.section}). If $q_0 = 0$, we show that these differences are bounded, thus showing the coupled $\phi$ is bounded if and only if  its linear counterpart $\phiNl$ is bounded. If $q_0 \neq 0$, the estimates are more involved and include a backreaction contribution from the  Maxwell field, see   \cref{diff.blue.section}.
		
		\item In \cref{combining.section}, we combine the results from the linear theory (\cref{linearsection}) with the results above to prove \cref{main.theorem} (\cref{boundedness.combining.section}), \cref{main.theorem2} (\cref{{blow.up.section}}), \cref{corollary.conj} (\cref{computation.proof.section}) and \cref{W11.main.thm} (\cref{W11.blowup.section}).
	\end{enumerate}
	\textbf{	 Throughout \cref{nonlinearsection} we will work under the assumptions of   \cref{CH.stab.thm}.}

	\subsection{The existence of a Cauchy horizon for the EMKG system and previously proven nonlinear estimates}
	\label{recall.section}

	We use five different regions which partition the domain $[-\infty,u_s]\times [v_0,+\infty]$, see   \cref{Fig.regions}. To this effect, we first introduce the function $h(v)$ as in \cite[Proposition 4.4]{Moi},  namely we define $h(v)$ by the relation \begin{equation} \label{h.def}
	\Omega^2_H(U=0,v)= e^{2K_+ \cdot(v+h(v)-v_0)}.
	\end{equation} 
	Note that $h(v_0)=0$ by gauges \eqref{eq:gaugeforU}, \eqref{gauge1}.	It is proven in \cite{Moi} that as $v \rightarrow +\infty$: 
	\begin{equation}\label{h.bound}
	h(v)= O(v^{2-2s}) 1_{s < 1} + O( \log(v)) 1_{s=1}, \hskip 5 mm h'(v)=O(v^{1-2s}), \hskip 5 mm h''(v) = O(v^{-2s}).
	\end{equation}		Now we can introduce the five regions partitioning our spacetime $\{0 \leq U \leq U_s,\ v\geq v_0\}$:
	\begin{enumerate}		\item The event horizon  $\mathcal{H}^+ =  \{ u=-\infty\}= \{ U=0\}$.

		\item The red-shift region $\mathcal{R} =  \{ u+v+h(v) \leq -\Delta\}$.

		\item The no-shift region $\mathcal{N}:= \{ -\Delta \leq u+v+h(v) \leq \Delta_N \}$.

		\item The early blue-shift region $\mathcal{EB}:=  \{\Delta_N \leq  u+v+h(v) \leq -\Delta'+
		\frac{2s}{2|K_-|} \log(v)\}$, assuming that $|u_s|$ is sufficiently large  so that $\Delta_N+\Delta' < 	\frac{2s}{2|K_-|} \log(v)$ in $\mathcal{EB}$.
		
		\item The late blue-shift\footnote{Note that the late blue-shift differs slightly from \cite{Moi} where it was defined to be  $\mathcal{LB}:=  \{-\Delta'+
			\frac{2s}{2|K_-|} \log(v) \leq  u+v+h(v) \}$.} region $\mathcal{LB}:=  \{-\Delta'+
		\frac{2s}{2|K_-|} \log(v+h(v)) \leq  u+v+h(v) \}$.
	\end{enumerate}

	In  the proof of   \cref{CH.stab.thm}, it was shown  that there exists a large constant $\Delta_0(M,e,q_0,m^2,s,D_1,D_2)>0$ such that, if $\Delta, \Delta_N, \Delta' > \Delta_0$, the following estimates (as enumerated below) are true. In the course of the proof of the new result, we will implicitly always assume that $\Delta, \Delta_N, \Delta'> \Delta_0$ and choose when necessary $\Delta, \Delta_N, \Delta'> \Delta_1$ for some   $\Delta_1(M,e,q_0,m^2,s,D_1,D_2)>\Delta_0$ that will be defined later.  
	\begin{figure}
		\centering
		\input{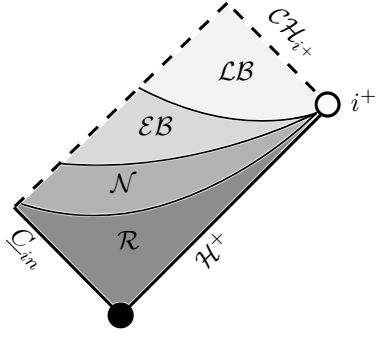}			
		\caption{Division of a rectangular neighborhood of $i^+$ into  five spacetime regions.}
		\label{Fig.regions}
	\end{figure}
	\begin{prop}[Nonlinear estimates on the event horizon $\HH$, \cite{Moi}] \label{prop.HH.estimates.Moi}
		There exists a constant $D_H=D_H(M,e,q_0,m^2,s,D_1,D_2)>0$ such that the following estimates hold true on $\HH=\{U=0,\ v\geq v_0\}$: 
		
		\begin{equation} \label{Q.HH} |Q(0,v)-e| \leq D_H \cdot  v^{1-2s} ,
		\end{equation}
		\begin{equation} \label{M.HH} |\varpi(0,v)-M| \leq D_H \cdot   v^{1-2s} ,
		\end{equation}
		\begin{equation} \label{lambda.HH}
		0 \leq \lambda(0,v) \leq D_H \cdot  v^{-2s} ,    
		\end{equation} 		
		\begin{equation} \label{r.HH}
		0 \leq r_+ -r(0,v)  \leq D_H \cdot  v^{1-2s}  ,   
		\end{equation} 
		\begin{equation}\label{Omega.HH.1}
		|	\partial_v \log(\Omega^2_{H})(0,v)- 2K(0,v) | \leq D_H \cdot  v^{-2s}
		,	\end{equation}	
		\begin{equation}\label{Omega.HH.1.2}
		| 2K_+ h'(v)+ [2K_+- 2K(0,v)]| \leq D_H \cdot  v^{-2s},	\end{equation}
		\begin{equation}\label{Omega.HH.2}
		|	\partial_U \log(\Omega^2_{H})|(0,v)  \leq D_H \cdot   \Omega^2_{H}(0,v),
		\end{equation}	\begin{equation}\label{phi.HH}
		|	\partial_U \phi|(0,v)   \leq D_H \cdot  \Omega^2_{H}(0,v)\cdot v^{-s},
		\end{equation}		
		\begin{equation}\label{A.HH}
		|	A_U|(0,v)   \leq D_H \cdot  \Omega^2_{H}(0,v).
		\end{equation}

	\end{prop}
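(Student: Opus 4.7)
The plan is as follows. \cref{prop.HH.estimates.Moi} restates, in the notation and gauges of the present paper, quantitative bounds on the event horizon established in \cite{Moi} in the course of proving \cref{CH.stab.thm}. The first step is to verify that the setup here matches the one used in \cite{Moi}: the gauge conditions $\kappa|_{\HH} \equiv 1$ from \eqref{gauge1}, $\partial_U r = -1$ on $\underline{C}_{in}$ from \eqref{eq:gaugeforU}, the global electromagnetic gauge $A_v \equiv 0$, and the normalization \eqref{gaugeAponctuelle} for $A_u|_{\underline{C}_{in}}$ together reduce the constraint system on $\HH$ to an ODE system in $v$ which matches the one analyzed in \cite{Moi} up to notational translation. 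Each of the estimates then follows by integrating in $v$ against the slow-decay data bound $|\phiH|(v) + |D_v \phiH|(v) \leq D_1 v^{-s}$.

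The bound \eqref{Q.HH} on $Q$ is the most immediate: the Maxwell constraint \eqref{ChargeVEinstein} gives $|\partial_v Q|(0,v) \lesssim v^{-2s}$, and integration from $v$ to $+\infty$ (using $Q(0,v) \to e$ from \cref{prop:constraints} together with $2s > 1$) yields the claim. The mass estimate \eqref{M.HH} follows analogously after expressing $\partial_v \varpi$ through \eqref{RaychV} and \eqref{ChargeVEinstein}. The identity $\lambda(0,v) = (1-\mu(0,v))/2$ on $\HH$, which holds because $\kappa|_{\HH} = 1$, combined with \eqref{Q.HH}, \eqref{M.HH} and a short bootstrap for $r-r_+$, already gives the naive rate $\lambda \lesssim v^{1-2s}$.

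The critical step is to improve this to the sharp $v^{-2s}$ rate asserted in \eqref{lambda.HH}, which cannot be read off algebraically. The idea is to integrate the Raychaudhuri equation \eqref{RaychV} in the form $\partial_v(\lambda/\Omega^2_H) = -r|D_v\phi|^2/\Omega^2_H$ on $\HH$: since $\Omega^2_H(0,v) \sim e^{2K_+(v + h(v) - v_0)}$ grows exponentially by \eqref{h.def}, the right-hand side is super-polynomially small. Integrating from $v$ to $+\infty$, using that $\lambda/\Omega^2_H \to 0$ (a consequence of $\lambda \to 0$ together with the exponential growth of $\Omega^2_H$), yields $\lambda/\Omega^2_H \lesssim v^{-2s} e^{-2K_+ v}$, hence $\lambda \lesssim v^{-2s}$ as claimed. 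Integrating this bound in $v$ then gives \eqref{r.HH}. The estimates \eqref{Omega.HH.1} and \eqref{Omega.HH.1.2} for $\Omega^2_H$ follow from specializing \eqref{Omega} to $\HH$, comparing to $2K$ defined via \eqref{K.def}, and performing a joint bootstrap that controls $h'$ and $2K - 2K_+$ using \eqref{h.bound}.

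The transversal-derivative estimates \eqref{Omega.HH.2}, \eqref{phi.HH}, and \eqref{A.HH} all carry the characteristic red-shift factor $\Omega^2_H(0,v)$. They follow from specializing the evolution equations \eqref{Omega}, \eqref{Field2}, and \eqref{eq:maxwellAu} to $U = 0$, together with the $v$-decay already established and the gauge identity $\partial_U r|_{\HH} = -\Omega^2_H/2$. The main technical obstacle will be the sharpening of the $\lambda$ estimate from the naive $v^{1-2s}$ to the sharp $v^{-2s}$ via the Raychaudhuri equation: this requires exploiting the exponential structure of $\Omega^2_H$ and simultaneously bootstrapping $h$ through \eqref{h.bound}. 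Once this refinement is in place, the remaining bounds follow by routine integration in $v$.
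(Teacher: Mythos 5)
The paper does not prove this proposition: it is imported verbatim from \cite{Moi}, where it is established in the course of proving \cref{CH.stab.thm}, so there is no in-paper argument to compare against. Your reconstruction nevertheless follows the standard strategy of \cite{Moi} and is essentially sound: the sharp $v^{-2s}$ rate for $\lambda$ does come from integrating the Raychaudhuri equation \eqref{RaychV} backwards from $v=+\infty$ against the exponentially growing $\Omega^2_H$, and the transversal estimates \eqref{Omega.HH.2}, \eqref{phi.HH}, \eqref{A.HH} do come from the red-shift effect (the $v$-integral of $\Omega^2_H$ is dominated by its endpoint). Two refinements are worth recording. First, your ``naive'' intermediate step for $\lambda$ via the algebraic identity $\lambda = \tfrac{\kappa}{2}(1-\tfrac{2\varpi}{r}+\tfrac{Q^2}{r^2})$ does not actually close: it requires $|r-r_+|\lesssim v^{1-2s}$, but integrating $\lambda\lesssim v^{1-2s}$ only returns $r_+-r\lesssim v^{2-2s}$, which is non-decaying for $s\leq 1$. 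Fortunately this step is superfluous, since the Raychaudhuri argument needs only $|\partial_v\phiH|\lesssim v^{-s}$ and $\lambda/\Omega^2_H\to 0$. Second, \eqref{Omega.HH.1} is not obtained by ``specializing \eqref{Omega} to $\HH$'' (that equation is a transversal wave equation and cannot be restricted to $\HH$ as an ODE); rather, the gauge \eqref{gauge1} gives $\partial_U r=-\Omega^2_H/2$ on $\HH$, which combined with \eqref{Radius} yields the pointwise identity $\partial_v\log\Omega^2_H(0,v)=2K(0,v)-\tfrac{m^2 r}{2}|\phiH|^2$, from which \eqref{Omega.HH.1} is immediate and \eqref{Omega.HH.1.2} follows from the definition \eqref{h.def} of $h$; equation \eqref{Omega} is only needed for the transversal estimate \eqref{Omega.HH.2}. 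With these corrections your outline is a faithful account of the proof in \cite{Moi}.
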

	\begin{prop}[Nonlinear estimates in the red-shift region $\Rs$, \cite{Moi}] \label{prop.RS.estimates.Moi}
		
		There exists a constant $D_R = D_R(M,e,q_0,m^2,s,D_1,D_2)>0$ such that  the following estimates hold true for all $(u,v) \in \Rs$:
		\begin{equation} \label{phivRS}
		|\phi|(u,v)+|D_v \phi|(u,v) \leq D_R \cdot v^{-s},
		\end{equation}			\begin{equation} \label{phiURS}
		|D_u \phi|(u,v) \leq D_R \cdot  e^{ 2K_+ \cdot (u+v+h(v))} \cdot v^{-s},
		\end{equation}
		\begin{equation} \label{OmegaPropRedshift}
		| \log(\Omega^2(u,v))- 2K_+ \cdot (u+v+h(v))| \leq D_R \cdot\Omega^2(u,v),
		\end{equation}
		\begin{equation} \label{kappaRedshiftprop}
		0 \leq 1-\kappa(u,v) \leq D_R \cdot  \Omega^2(u,v) \cdot v^{-2s},
		\end{equation}
		\begin{equation} \label{partialuRSOmegaprop}
		|\partial_u \log\Omega^2(u,v) |  \leq D_R \cdot    \Omega^2(u,v),
		\end{equation}
		\begin{equation} \label{partialvRSOmegaprop}
		|\partial_v \log(\Omega^2)(u,v)-2K(u,v) |  \leq D_R \cdot v^{-2s},
		\end{equation} \begin{equation} \label{rRedShiftprop}0 \leq r_+-r(u,v) \leq D_R \cdot  \Omega^2(u,v)+ v^{1-2s} ,
		\end{equation}
		\begin{equation} \label{QRedShiftprop} |Q(u,v)-e| \leq D_R \cdot  v^{1-2s} ,
		\end{equation}
		\begin{equation} \label{MRedShiftprop} |\varpi(u,v)-M| \leq D_R \cdot  v^{1-2s} ,
		\end{equation}
		\begin{equation} \label{KRedShiftprop} |2K(u,v)-2K_+| \leq D_R \cdot   \Omega^2(u,v)+   v^{1-2s}.
		\end{equation}
		
	\end{prop}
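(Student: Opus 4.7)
The plan is to establish these estimates in $\mathcal{R}$ by a bootstrap argument seeded by the event horizon data from \cref{prop.HH.estimates.Moi} and propagated inward using the red-shift effect, which in the gauge \eqref{gauge1} takes the form of exponential damping in $u+v+h(v)$ of the lapse $\Omega^{2}$ and of ingoing derivatives. I would fix bootstrap assumptions of the form
\begin{align*}
|\phi|(u,v)+|D_v\phi|(u,v) &\leq 2D_R\, v^{-s}, \qquad
|D_u\phi|(u,v) \leq 2D_R\, e^{2K_+(u+v+h(v))}\, v^{-s},\\
|\log\Omega^{2}-2K_+(u+v+h(v))| &\leq 2D_R\,\Omega^{2},\qquad 0 \leq 1-\kappa \leq 2D_R\,\Omega^{2}\, v^{-2s},
\end{align*}
on an a priori subregion of $\mathcal{R}$, and then recover each with $D_R$ in place of $2D_R$, which by continuity extends the validity to all of $\mathcal{R}$ once $\Delta_0$ is chosen large enough.

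The crucial step is the coupled propagation of $D_u\phi$ and $\kappa$. Since $\kappa_{|\HH}=1$, integrating Raychaudhuri \eqref{RaychU} in $u$ from $\HH$ gives
\[
1-\kappa(u,v) = \int_{-\infty}^{u}\frac{r(u',v)|D_u\phi|^{2}(u',v)}{-\partial_u r(u',v)}\,\d u',
\]
and the bootstrap on $D_u\phi$, together with the lower bound $-\partial_u r\gtrsim\Omega^{2}$ implied by the $\kappa$-bootstrap itself, yields an integrable right-hand side exponentially small in $u+v+h(v)$, which recovers \eqref{kappaRedshiftprop}. Simultaneously, treating \eqref{Field2} as a transport equation for $D_u\phi$ in the $u$-direction with initial data \eqref{phi.HH} and right-hand side controlled by the bootstrap (using that $\partial_v r$ is essentially bounded, the potential terms carry $\Omega^{2}$, and $\partial_v\phi$ is already bounded by $v^{-s}$), gives \eqref{phiURS}. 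The Raychaudhuri equation \eqref{RaychV} in the $v$-direction, coupled with the Klein--Gordon equation \eqref{Field} integrated in $u$ starting from $\HH$, then gives \eqref{phivRS}.

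For the metric, equation \eqref{Omega} schematically reads $\partial_u\partial_v\log\Omega^{2}=-2\Re(D_u\phi\,\partial_v\bar\phi)+O(\Omega^{2})+O(\partial_u r \,\partial_v r)$; since $D_u\phi$ carries an extra $\Omega^{2}$ factor by the bootstrap, each term on the right is integrable in $u$, and integration from $\HH$ using \eqref{Omega.HH.1}--\eqref{Omega.HH.2} and \eqref{Omega.HH.1.2} produces \eqref{OmegaPropRedshift} and \eqref{partialuRSOmegaprop}--\eqref{partialvRSOmegaprop}. The bounds \eqref{rRedShiftprop}--\eqref{KRedShiftprop} for $r$, $Q$, $\varpi$, $2K$ then follow by integrating \eqref{Radius}, \eqref{chargeUEinstein}, \eqref{ChargeVEinstein} in $u$ from the event horizon estimates \eqref{Q.HH}--\eqref{r.HH}: the $v^{1-2s}$ losses on $\HH$ come precisely from integrating $v^{-s}\cdot v^{-s}$ in $v$ via Raychaudhuri, and crossing $\mathcal{R}$ introduces no further loss because the $u$-integrands are exponentially damped.

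The main obstacle is to choose $\Delta_0$ large enough, depending on $(M,e,q_0,m^{2},s,D_1,D_2)$, so that each exponential weight $e^{2K_+(u+v+h(v))}\leq e^{-2K_+\Delta}$ is sufficiently small to absorb the worst nonlinear terms into the improved constants; the most delicate of these are the $|D_u\phi|^{2}$ source in the $\kappa$ equation (which must not degrade the $1-\kappa$ bootstrap) and the sublinear lapse correction $h(v)$, whose derivatives are only $O(v^{1-2s})$ by \eqref{h.bound} and must therefore be tracked through every transport estimate lest the shift in $u+v+h(v)$ versus $u+v$ destroy the precise form of \eqref{OmegaPropRedshift}. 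This is tight but works uniformly for $s\in(3/4,1]$ and gives $D_R=D_R(M,e,q_0,m^{2},s,D_1,D_2)$.
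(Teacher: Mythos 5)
The first thing to say is that this paper contains no proof of \cref{prop.RS.estimates.Moi}: it is recalled verbatim from \cite{Moi}, as the attribution in the proposition header and the surrounding text of \cref{recall.section} make explicit, so there is no in-paper argument to compare yours against. Your overall bootstrap strategy --- seeding from the characteristic data, propagating with the red-shift exponential damping of $\Omega^2$ and of ingoing derivatives, and closing by taking $\Delta_0$ large so that $e^{-2K_+\Delta}$ absorbs the nonlinear terms --- is the standard one and matches in spirit the way the analogous difference estimates are actually proved in this paper (\cref{RS.diff.prop}), so the plan is sound.

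That said, one concrete piece of your sketch is turned around. Equation \eqref{Field2} is a transport equation for $\partial_v\phi$ in the $u$-direction (its left-hand side is $D_u\partial_v\phi$), seeded by the event-horizon bound $|D_v\phiH|\leq D_1 v^{-s}$ from \eqref{polynomialdecay}; it is what yields the $D_v\phi$ part of \eqref{phivRS}, and it requires $D_u\phi$ as an \emph{input} on its right-hand side rather than producing it. The estimate \eqref{phiURS} for $D_u\phi$ goes the other way: one transports $D_u\psi$ in the $v$-direction via \eqref{Field4}, seeded on the ingoing cone $\underline{C}_{in}$ by \eqref{eq:estiamteonphiin} (not by \eqref{phi.HH}, which lives at $u=-\infty$ and controls the $U$-derivative), and the exponential weight $e^{2K_+(u+v+h(v))}$ in \eqref{phiURS} is precisely the statement that $\int_{v_0}^{v}\Omega^2(u,v')(v')^{-s}\,\d v'\lesssim \Omega^2(u,v)\,v^{-s}$ because the integrand grows exponentially in $v'$ --- this is where the red-shift actually enters. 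Relatedly, the bound $|\phi|\leq D_R v^{-s}$ cannot come from integrating $|D_v\phi|\lesssim v^{-s}$ in $v$ (that integral diverges for $s\leq 1$); it must come from integrating the exponentially weighted $D_u\phi$ bound in $u$ from $\HH$. These are fixable bookkeeping errors rather than a failure of the method, but as written the logical order of the coupled transport estimates would not close.
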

	\begin{prop}[Nonlinear estimates in the no-shift region $\NN$, \cite{Moi}]  \label{prop.NN.estimates.Moi}
		There exists a constant $D_N = D_N(M,e,q_0,m^2,s,D_1,D_2)>0$ such that the following estimates hold true for all $(u,v) \in \NN$:
		\begin{equation} \label{phivNSprop}
		|\phi(u,v)|+|D_v \phi(u,v)| \leq D_N \cdot  v^{-s},
		\end{equation}	\begin{equation} \label{phiUNSprop}
		|D_u \phi(u,v)|  \leq D_N \cdot v^{-s} ,
		\end{equation}
		\begin{equation} \label{OmegaPropNSprop}
		|\log\Omega^2(u,v)-\log\left(-(1-\frac{2M}{r(u,v)}+\frac{e^2}{r^2(u,v)})\right)| \leq D_N \cdot v^{1-2s} ,\end{equation}  \begin{equation} \label{kappaNStpropprop}
		0 \leq 1-\kappa(u,v)  \leq D_N \cdot v^{-2s},
		\end{equation}						\begin{equation} \label{iotaNStpropprop}
		|1-\iota(u,v)|  \leq D_N \cdot v^{1-2s},
		\end{equation}
		\begin{equation} \label{partialuNSOmegaprop2}
		|\partial_u \log(\Omega^2)(u,v)-2K(u,v) |  \leq D_N \cdot v^{1-2s},
		\end{equation}
		\begin{equation} \label{partialvNSOmegaprop2}
		|\partial_v \log(\Omega^2)(u,v)-2K(u,v) |   \leq D_N \cdot v^{-2s},
		\end{equation}
		\begin{equation} \label{QNS2} |Q(u,v)-e|   \leq D_N \cdot v^{1-2s}. 
		\end{equation}
		\begin{equation} \label{MNS2} |\varpi(u,v)-M| \leq D_N \cdot v^{1-2s} .
		\end{equation}	
		\begin{equation} \label{Omega.r.boring} |\log(\Omega^2)|(u,v)+ |\log(r)|(u,v) \leq D_N .
		\end{equation}	
		
		Moreover, denoting $\gamma_N:=\{ u+v +h(v)= \Delta_N \}$ the future boundary of $\NN$, we have on $\gamma_N$: \begin{equation} \label{Omega.gamma.N.Moi}
		\Omega^2(u_{\gamma}(u),v) \leq D_N \cdot e^{ 2K_- \cdot \Delta_N}.
		\end{equation}
	\end{prop}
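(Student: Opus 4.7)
The strategy rests on the geometric observation that the no-shift region $\NN$ has bounded \emph{width} in the auxiliary ``time'' $s:= u+v+h(v)$, namely $s\in[-\Delta,\Delta_N]$, while $v$ ranges over all of $[v_0,+\infty)$. Combined with $h(v)=O(v^{2-2s})$ from \eqref{h.bound}, this means that along any line $\{v=\text{const}\}$ the coordinate $u$ varies over an interval of length $\Delta+\Delta_N+O(v^{2-2s})$, which is sub-linear in $v$. Correspondingly, $\Omega^2$ and $\Omega^2_{RN}$ stay comparable to a fixed positive constant throughout $\NN$, so there is no large shift factor to absorb and every estimate can be obtained by a bootstrap argument integrating the EMKG equations in the $u$-direction from the past boundary $\gamma_R:=\{u+v+h(v)=-\Delta\}$, where the red-shift estimates of \cref{prop.RS.estimates.Moi} already hold.

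The plan is to propagate the bounds in the following order. First, assume as bootstrap that \eqref{phivNSprop}--\eqref{MNS2} hold with constants $2D_N$. Integrate the Raychaudhuri equations \eqref{RaychU}--\eqref{RaychV} in $u$: since $|\partial_v\phi|^2\lesssim v^{-2s}$ and the range of integration is bounded, we obtain $|1-\kappa|\lesssim v^{-2s}$ and $|1-\iota|\lesssim v^{1-2s}$ (after integrating in $v$ from the event horizon), giving \eqref{kappaNStpropprop}, \eqref{iotaNStpropprop}. Next, \eqref{chargeUEinstein}--\eqref{ChargeVEinstein} integrated in $u$ from $\gamma_R$ produce $|Q-e|\lesssim v^{1-2s}$ via an integrand of size $v^{-2s}$ over a bounded $u$-interval, and the Hawking mass estimate \eqref{MNS2} follows analogously. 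The scalar field bounds \eqref{phivNSprop}--\eqref{phiUNSprop} close by integrating \eqref{Field2} and its $v$-analog, noting that every term on the RHS carries at least a factor $v^{-s}$ and that integration in $u$ over the bounded width preserves this decay with a controlled constant $D_N$ depending on $\Delta,\Delta_N,D_R$. Finally, \eqref{OmegaPropNSprop}, \eqref{partialuNSOmegaprop2}, \eqref{partialvNSOmegaprop2} come from \eqref{Omega}, rewritten as an equation for $\log\Omega^2-\log\Omega^2_{RN}$, whose RHS differs from its RN counterpart by $-2\Re(D_u\phi\,\partial_v\bar\phi)$ plus lower-order corrections controlled by the already-established $(r,Q,\varpi)$ bounds; the source decays like $v^{-2s}$ and integration in both $u$ and $v$ yields the $v^{1-2s}$ loss with the RN reference as stated.

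The main obstacle is that a naive bootstrap would produce a loss of $v^{2-2s}$ in several estimates coming from the term $h(v)=O(v^{2-2s})$ appearing in the definition of $\NN$: one must verify that the coupling of this sub-linear shift to the bounded-width integration never produces a bound worse than the sharp $v^{1-2s}$ or $v^{-2s}$. Concretely, this requires checking that the extra $u$-length $O(v^{2-2s})$ from the definition of $s=u+v+h(v)$ appears multiplied by integrands with \emph{strictly} better decay (e.g.\ $v^{-2s}$ for $\partial_u Q$ or $v^{-3s}$ for the Klein--Gordon source), so the product retains the stated decay after absorbing a $v^{2-2s}$ factor; this is precisely what the hypothesis $s>\frac{1}{2}$ (respectively $s>\frac{3}{4}$ in later arguments) guarantees. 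The remaining estimate \eqref{Omega.gamma.N.Moi} on $\gamma_N=\{s=\Delta_N\}$ is then a consequence of \eqref{OmegaPropNSprop} combined with \eqref{h.def}: on $\gamma_N$, $\Omega^2$ is comparable to $\Omega^2_{RN}$ which, by the definition \eqref{OmegaRN} and the fact that $u+v=\Delta_N-h(v)$, is of size $e^{2K_-\Delta_N}$ up to multiplicative errors $1+O(v^{1-2s})$ that are absorbed into the constant $D_N$.
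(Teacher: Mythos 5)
This proposition is recalled verbatim from \cite{Moi} and is not reproved in the present paper; the closest argument the paper actually writes out is the proof of the analogous \emph{difference} estimates in the no-shift region (\cref{N.diff.proposition}), which makes the method explicit: one partitions $\NN$ into finitely many strips $\mathcal{N}_k$ of small width $\epsilon$ and runs a finite induction on $k$, closing a bootstrap inside each strip by choosing $\epsilon$ so small that the integration factor $E\cdot\epsilon$ is less than $1$, at the price of letting the constant double at each step ($C_k = 2^k B_N$). Your proposal omits this subdivision and instead runs a single bootstrap with constant $2D_N$ over the whole region, integrating in $u$ from the past boundary of $\NN$ over an interval of length $\Delta+\Delta_N$. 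That step does not close. The defining feature of the no-shift region is precisely that $\Omega^2$ is of order $1$ there (cf.\ \eqref{Omega.r.boring}), so---unlike in $\Rs$, where $\Omega^2\lesssim e^{-2K_+\Delta}$ supplies the smallness, or in $\EB$, where one again exploits the smallness of $\Omega^2$---there is no small prefactor available to beat the bootstrap constant, and the only remaining source of smallness is the length of the integration interval. But $\Delta$ and $\Delta_N$ must both be taken \emph{large} (larger than $\Delta_0$), so integrating source terms proportional to the bootstrapped quantities over a $u$-interval of length $\Delta+\Delta_N$ produces a Gronwall factor of the form $e^{E(\Delta+\Delta_N)}$, not a factor $\leq 2$. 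The strip decomposition is exactly the device that repairs this, and it is absent from your argument.

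Relatedly, the ``main obstacle'' you single out---an extra $u$-length of size $O(v^{2-2s})$ coming from $h(v)$---is not actually present: for fixed $v$, the condition $-\Delta\leq u+v+h(v)\leq\Delta_N$ confines $u$ to an interval of length exactly $\Delta+\Delta_N$, since $v+h(v)$ is merely a shift; after the change of variable $\tilde v=v+h(v)$ of \eqref{tv.def} the region is a genuine strip of bounded width, and $\partial_{\tilde v}=(1+O(v^{1-2s}))\partial_v$ costs only a harmless multiplicative correction. So your error analysis is devoted to a non-issue while the real obstruction (closing a bootstrap with neither smallness of $\Omega^2$ nor smallness of the domain) goes unaddressed. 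Your proposed ordering of the estimates, and the derivation of \eqref{Omega.gamma.N.Moi} from \eqref{OmegaPropNSprop} together with $\Omega^2_{RN}\sim e^{2K_-(u+\tilde v)}$ on $\gamma_N$, are otherwise consistent with what is done in \cite{Moi}.
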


	\begin{prop}[Nonlinear estimates in the early blue-shift region $\EB$, \cite{Moi}]  \label{prop.EB.estimates.Moi}
		There exists a constant $D_E = D_E(M,e,q_0,m^2,s,D_1,D_2)>0$ such that the following estimates hold true for all $(u,v) \in \EB$:
		\begin{equation} \label{phiTransition}
		|\phi(u,v)| \leq D_E \cdot  v^{-s} \log(v),
		\end{equation}	
		\begin{equation} \label{phiVTransition}
		|D_v \phi(u,v)| \leq D_E \cdot v^{-s},
		\end{equation}	\begin{equation} \label{phiUTransition}
		|D_u \phi(u,v)|\leq D_E \cdot v^{-s} ,
		\end{equation}\begin{equation} \label{Omegatransition}
		|\log\Omega^2(u,v)-2K_- \cdot (u+v+h(v))|\leq D_E \cdot \Delta \cdot e^{-2K_+ \Delta}  <1 ,
		\end{equation}  
		\begin{equation} \label{kappatransition}
		0 \leq 1-\kappa(u,v) \leq \frac{1}{3},
		\end{equation}			\begin{equation} \label{iotatransition}
		|1-\iota(u,v)|\leq \frac{1}{3},
		\end{equation}
		\begin{equation} \label{partialuOmegatransition}
		|\partial_u \log(\Omega^2)(u,v)-2K(u,v) |  \leq D_E \cdot   v^{1-2s}\log(v)^3 ,
		\end{equation}
		\begin{equation} \label{partialvOmegatransition}
		|\partial_v \log(\Omega^2)(u,v)-2K(u,v) |  \leq D_E \cdot   v^{-2s}\log(v)^3 ,
		\end{equation}
		\begin{equation} \label{Ktransition}
		|2K(u,v)- 2K_- |  \leq \frac{|K_-|}{1000} ,
		\end{equation}
		\begin{equation} \label{QTrans} |Q(u,v)-e| \leq D_E \cdot  v^{1-2s} ,
		\end{equation}
		\begin{equation} \label{MTrans} |\varpi(u,v)-M| \leq D_E \cdot v^{1-2s} .
		\end{equation}
		\begin{equation} \label{r.nodiff.EB}
		|r(u,v)-r_-(M,e)| \leq D_E \cdot( v^{1-2s}+ \Omega^2(u,v)).
		\end{equation}
		
		Moreover, denoting  $\gamma:=\{u+v+h(v)=-\Delta'+ \frac{s}{2|K_-|} \log(v)\}$ the future boundary of $\EB$, we have on $\gamma$:
		
		\begin{equation} \label{Omega.gamma.EB}
		\Omega^2(u_{\gamma}(v),v) \leq D_E \cdot v^{-2s}.
		\end{equation}
		
	\end{prop}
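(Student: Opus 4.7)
The plan is to prove Proposition \ref{prop.EB.estimates.Moi} by a bootstrap/continuity argument, starting from the estimates on the past boundary $\gamma_N := \{u+v+h(v)=\Delta_N\}$ inherited from Proposition \ref{prop.NN.estimates.Moi}, and propagating them forward into $\EB$ using the structure of the EMKG system. First I would fix $\Delta,\Delta_N,\Delta' > \Delta_1(M,e,q_0,m^2,s,D_1,D_2)$ sufficiently large, and impose bootstrap assumptions on $\EB \cap \{u+v+h(v)\leq \mathcal U\}$ for variable $\mathcal U$ which are (say) double-constant versions of \eqref{Omegatransition}, \eqref{kappatransition}, \eqref{iotatransition}, \eqref{Ktransition}. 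The goal is then to recover the precise estimates with the claimed constants, and to push $\mathcal U$ up to the future boundary $\gamma$ by continuity.

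The first block of estimates to establish concerns the scalar field \eqref{phiTransition}--\eqref{phiUTransition}. For $|D_v\phi|$, I would integrate \eqref{Field2} in the $u$-direction from $\gamma_N$, where $|D_v\phi|\lesssim v^{-s}$ by \eqref{phivNSprop}. Under the bootstrap assumptions the Wronskian-type coefficients ($\partial_u r/r$, $\partial_v r/r$, $\Omega^2/r^2$, $\Omega^2 Q/r^2$) are uniformly bounded on $\EB$, and the $u$-length of the integration is $\leq \frac{s}{2|K_-|}\log(v) + \Delta + \Delta'$, so the decay $v^{-s}$ is preserved (no $\log$ factor appears because the integrating factor involving $\iota$ is benign). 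The analogous estimate for $|D_u\phi|$ is obtained by integrating \eqref{Field2} in the $v$-direction from the past boundary $\gamma_N$, where \eqref{phiUNSprop} provides $v^{-s}$ data; here the logarithmic $u$-length in $\EB$ again produces at most the stated $v^{-s}$. For $|\phi|$, I would integrate $|D_v\phi|\leq D_E v^{-s}$ from $\gamma_N$ in the $v$-direction over a $v$-interval of length $\sim \log(v)/|K_-|$, yielding the logarithmic factor in \eqref{phiTransition}.

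The second block \eqref{QTrans}, \eqref{MTrans}, \eqref{r.nodiff.EB} for the charge, mass, and radius are obtained by integrating \eqref{chargeUEinstein2}, \eqref{ChargeVEinstein2} and \eqref{Radius2} using the scalar-field bounds just proved, combined with \eqref{mu} to control $\varpi$; crucially, the integrand $|\phi||D\phi| \lesssim v^{-2s}\log v$ remains integrable in $u$ over the $\log v$-wide strip because $2s>3/2$, yielding the desired $v^{1-2s}$. The third (and main) block is \eqref{Omegatransition}, \eqref{partialuOmegatransition}, \eqref{partialvOmegatransition}. Differentiating \eqref{Omega} and using \eqref{Radius} one obtains an equation of the form $\partial_u\partial_v[\log(\Omega^2) - 2K\cdot(u+v+h(v))] = \mathcal{E}$, where the error $\mathcal{E}$ is controlled by the scalar-field and charge estimates; integration from $\gamma_N$ (where \eqref{partialvNSOmegaprop2} and the transition from $2K_+$ to $2K_-$ have already occurred) with the aid of \eqref{Omega.HH.1.2} yields the claimed estimates. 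The boundary estimate \eqref{Omega.gamma.EB} then follows by evaluating \eqref{Omegatransition} on $\gamma$.

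The main obstacle is the sign change of $2K$ across the transition and the non-integrable scalar-field decay for $s\leq 1$. The $h(v)$ correction in the definition of the regions is precisely designed to absorb the backreaction of the slowly decaying $\phi$ on the effective surface gravity, but this forces careful bookkeeping: the bootstrap must simultaneously control $\log(\Omega^2) - 2K(u+v+h(v))$, $\partial_v\log(\Omega^2)-2K$, and $2K-2K_-$, all coupled to the Raychaudhuri equations for $\kappa,\iota$. The closing step for the bootstrap requires that constants be beaten by factors of $e^{-2K_+\Delta}$ (from the integrated red-shift) or $v_0^{1-2s}$ (from the weight on the initial data), which is why $\Delta,\Delta_N,\Delta',v_0$ must be taken large depending on $D_1,D_2,M,e,q_0,m^2,s$.
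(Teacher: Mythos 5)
This proposition is not proved in the present paper: it is one of the quantitative estimates recalled verbatim from \cite{Moi} and imported as a black box via \cref{CH.stab.thm} (see the attribution in the proposition header and the discussion opening \cref{recall.section}). There is therefore no in-paper proof to compare your argument against; what you have written is a reconstruction of the argument of the cited reference. That said, your overall strategy --- a bootstrap on $\log\Omega^2-2K_-\cdot(u+v+h(v))$, $\kappa$, $\iota$ and $2K-2K_-$ initialized on $\gamma_N$ from \cref{prop.NN.estimates.Moi}, with smallness supplied by $\int\Omega^2\,\d u\ls e^{2K_-\Delta_N}$ and losses controlled by the logarithmic width of $\EB$ --- is indeed the standard mechanism for this region and is consistent with how the region is set up in \cref{recall.section}.

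One step is asserted too quickly and, as written, would not close at the stated rates. When you integrate \eqref{Field2} in $u$ from $\gamma_N$ to bound $|\partial_v\phi|$, the zeroth-order terms contribute $\int_{u_{\gamma_N}(v)}^{u}\Omega^2|\phi|\,\d u'$; inserting the bootstrap bound $|\phi|\ls v^{-s}\log(v)$ naively yields $v^{-s}\log(v)\cdot e^{2K_-\Delta_N}$, i.e.\ a logarithm contaminating \eqref{phiVTransition}, which would then feed back into $|\phi|\ls v^{-s}\log^2(v)$ and the scheme would degenerate. Your parenthetical ``no $\log$ factor appears because the integrating factor involving $\iota$ is benign'' does not address this: the issue is the inhomogeneous term, not the Gronwall factor. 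The correct observation is that the logarithm in $|\phi|$ grows at most linearly in the depth coordinate $x=u'+v+h(v)-\Delta_N$ (it is produced by integrating $v^{-s}$ over a $v$-interval of that length), while $\Omega^2\ls e^{2K_-\Delta_N}e^{2K_-x}$ decays exponentially in $x$, so the product integrates with no logarithmic loss. The same exponential-beats-polynomial bookkeeping is what fixes the precise powers $\log(v)^3$ in \eqref{partialuOmegatransition}--\eqref{partialvOmegatransition}, which your sketch does not account for. Finally, for \eqref{r.nodiff.EB}, integrating \eqref{Radius2} alone only shows $r$ changes by $O(1)$ across $\EB$; one should instead combine the algebraic relation \eqref{mu} with $\kappa,\iota\approx 1$ and \eqref{QTrans}--\eqref{MTrans} to get $1-\tfrac{2M}{r}+\tfrac{e^2}{r^2}=O(\Omega^2+v^{1-2s})$, and then use that $r$ is already near $r_-$ (where this polynomial has a simple zero) on $\gamma_N$.
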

	\begin{prop}[Nonlinear estimates in the late blue-shift region $\LB$, \cite{Moi}]  \label{prop.LB.estimates.Moi}
		There exists a constant $D_L = D_L(M,e,q_0,m^2,s,D_1,D_2)>0$ such that  the following estimates hold true: for all $\eta>0$, there exists $C_{\eta}>0$ such that for all $(u,v) \in \LB$  
		\begin{equation} \label{phiLB}
		\Omega^{2\eta}(u,v)|\phi|(u,v)\leq   C_{\eta} \cdot v^{-s} ,
		\end{equation}		\begin{equation} \label{QLB}  \Omega^{2\eta}(u,v)|Q-e|(u,v) \leq  C_{\eta}  \cdot v ^{1-2s},	\end{equation}
		\begin{equation} \label{phiblowupVLB}
		|\phi|^2(u,v) + Q^2(u,v) \leq D_L \cdot   v^{2-2s}1_{ \{s<1\}}  + D_L \cdot  [\log(v)]^2 1_{ \{s=1\}},
		\end{equation}\begin{equation} \label{phiVLB}
		|D_v \phi|(u,v) \leq D_L \cdot v^{-s},
		\end{equation}		\begin{equation} \label{partialvOmegaLB}
		|\partial_v \log(\Omega^2_{CH}) |(u,v)  \leq D_L \cdot    v^{1-2s}1_{ \{s<1\}}  +D_L \cdot   \log(v) \cdot v^{-1} 1_{ \{s=1\}},
		\end{equation}			\begin{equation} \label{lambdaLB}
		0< \Omega^2(u,v) \leq	 -\lambda(u,v) \leq D_L \cdot  v^{-2s},
		\end{equation}
		\begin{equation} \label{nuLB}
		0< -\nu(u,v) \leq D_L \cdot |u|^{-2s}.
		\end{equation}
	\end{prop}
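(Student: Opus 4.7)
The plan is to prove the late blue-shift estimates by a bootstrap argument propagating from the past boundary $\gamma = \partial_+ \mathcal{EB} \cap \partial_- \mathcal{LB}$, where \cref{prop.EB.estimates.Moi} supplies good boundary data: \eqref{Omega.gamma.EB} gives $\Omega^2(u_\gamma(v),v) \leq D_E v^{-2s}$, and \eqref{phiTransition}--\eqref{MTrans} control $\phi$, $D_v\phi$, $D_u\phi$, $Q$, $\varpi$, $r$ on $\gamma$. The bootstrap assumptions are \eqref{lambdaLB}--\eqref{QLB} with inflated constants; the key structural feature to exploit is that $\Omega^2 \leq v^{-2s}$ is already much smaller than in $\mathcal{EB}$, so nonlinear source terms of the form $\Omega^2 |\phi|^2$ can be absorbed.

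First I would close the outgoing geometric estimates \eqref{lambdaLB} for $\lambda$ and  \eqref{partialvOmegaLB} for $\partial_v \log \Omega^2$. Raychaudhuri \eqref{RaychV} gives $\partial_v(\lambda/\Omega^2) = -r|\partial_v\phi|^2/\Omega^2 \leq 0$, so $-\lambda/\Omega^2 \geq 1$ propagates from its lower bound on $\gamma$ (obtained from \eqref{iotatransition}), yielding $\Omega^2 \leq -\lambda$. The upper bound $-\lambda \leq D_L v^{-2s}$ comes from integrating \eqref{Radius2}, whose right-hand side under the bootstrap is $O(\Omega^2) = O(v^{-2s})$; starting from the boundary value on $\gamma$ and integrating in $u$ across $\mathcal{LB}$ preserves the bound. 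The estimate  \eqref{partialvOmegaLB} is obtained by integrating \eqref{Omega}  in $u$ from $\gamma$, controlling $\Re(D_u\phi \,\partial_v\bar\phi)$ pointwise via  the bootstrap on $D_v\phi$ and the $u$-integrated control of $D_u\phi$. The symmetric ingoing argument using \eqref{RaychU} and the boundary data on $\underline{C}_{in}$ yields \eqref{nuLB}.

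Second, to close the scalar field bound \eqref{phiVLB}, I would use the renormalized wave equation \eqref{Field5} for $\psi = r\phi$, whose right-hand side is schematically $O(\Omega^2)(|\phi| + Q|\phi| + m^2 r^2 |\phi|)$.  Integrating $D_u \partial_v \psi$ in $u$ from $\gamma$, the source is bounded by $\Omega^2 (|\phi|+Q|\phi|) \lesssim v^{-2s} \cdot v^{1-s}$ using the bootstrap, which is integrable in $u$ for $s > 1/2$; this propagates $|D_v \psi| \leq D_L v^{-s}$ from its boundary value in $\mathcal{EB}$. For the weighted bound \eqref{phiLB}, I would write $\phi(u,v) - \phi(u_\gamma(v),v) = \int_{u_\gamma(v)}^u D_u\phi\, du'$ after handling the gauge phase, and use the symmetric estimate $|D_u\phi| \lesssim |u|^{-s}$; the logarithmic growth from integrating in $u$ is absorbed into the $\Omega^{2\eta}$ weight for any $\eta>0$. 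The charge estimate \eqref{QLB} follows in the same way from \eqref{ChargeVEinstein2}. The amplitude bound  \eqref{phiblowupVLB} then follows by integrating $|D_v\phi| \leq D_L v^{-s}$ in $v$ between $\gamma$ and $(u,v)$.

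The main obstacle is the tension in the bootstrap between allowing $|\phi|$ to grow like $v^{1-s}$ (which is sharp given the non-integrable rate on the horizon) while preventing the metric from degenerating. The nonlinear sources in the Einstein equations involve $\Omega^2|\phi|^2 \sim v^{-2s} \cdot v^{2-2s} = v^{2-4s}$, which is integrable in $v$ precisely when $s > 1/2$; this is what allows the metric bounds in this proposition to close. The sharper threshold $s > 3/4$ only becomes necessary elsewhere in the paper for the continuous extendibility of the metric (via the quantity $\Upsilon$ defined in \eqref{Delta.def}), but the $\mathcal{LB}$ estimates themselves go through for any $s > 1/2$, matching the hypotheses of \cref{CH.stab.thm}.
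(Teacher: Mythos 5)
First, a point of comparison: the paper does not actually prove \cref{prop.LB.estimates.Moi} — it is imported verbatim from \cite{Moi} as part of \cref{CH.stab.thm} ("it was shown that \ldots the following estimates \ldots are true"), so there is no in-paper argument to measure yours against. Judged on its own terms, your skeleton is the right one and matches what \cite{Moi} does: bootstrap forward from the curve $\gamma$ using the $\EB$ boundary data \eqref{Omega.gamma.EB}, \eqref{phiTransition}--\eqref{MTrans}, use Raychaudhuri for the sign and monotonicity of $\lambda/\Omega^2$, and use the renormalized field $\psi=r\phi$ and \eqref{Field5} for \eqref{phiVLB}. But the quantitative accounting in your sketch does not close, and the error is structural rather than cosmetic.

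Concretely: plugging the crude bootstrap bounds $|\phi|\lesssim v^{1-s}$, $Q^2\lesssim v^{2-2s}$ into the sources $\Omega^2(1+Q^2+m^2r^2|\phi|^2)$ of \eqref{Radius}--\eqref{Radius2} and integrating from $\gamma$ gives, e.g.\ for $\nu$, a contribution $\int_{v_\gamma(u)}^{\infty}\Omega^2\,(v')^{2-2s}\,\d v'\approx \Omega^2(u,v_\gamma(u))\,v_\gamma(u)^{2-2s}\approx |u|^{2-4s}$, which for $s<1$ is strictly \emph{larger} than the claimed $|u|^{-2s}$; and your closing assertion that ``$v^{2-4s}$ is integrable in $v$ precisely when $s>1/2$'' is arithmetically false ($s>3/4$ is required), while even granting integrability from $v_\gamma(u)\approx|u|$ one only gets $|u|^{3-4s}$. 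The mechanism that actually closes the scheme is that $\Omega^2$ decays \emph{exponentially} in $u+v$ inside $\LB$ while $|\phi|$ and $|Q-e|$ grow only polynomially in the distance from $\gamma$ — and this is exactly what the weighted estimates \eqref{phiLB}, \eqref{QLB} encode ($\Omega^{2\eta}|\phi|\lesssim v^{-s}$ for \emph{every} $\eta>0$). These must be carried as bootstrap assumptions so that $\Omega^2|\phi|^2\lesssim\Omega^{2-4\eta}v^{-2s}$ integrates to $\lesssim v^{-4s}$; you state them as outputs but never use them where they are indispensable, which is why your final paragraph's numerology (and its claimed $s>1/2$ threshold) does not follow from the argument given. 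Three further gaps: (i) \eqref{partialvOmegaLB} concerns the \emph{renormalized} lapse $\Omega^2_{CH}$, not $\Omega^2$ — since $\partial_v\log\Omega^2\to 2K_-\neq 0$, the stated decay is false without first subtracting $2K_\gamma(v)$ via a change of $v$-coordinate, which your sketch never introduces; (ii) the lower bound $-\lambda/\Omega^2\geq 1$ cannot be read off \eqref{iotatransition}, which only yields $-\lambda/\Omega^2=1/(2\iota)\geq 3/8$ on $\gamma$, so Raychaudhuri monotonicity alone does not produce the stated constant; (iii) the boundary data for \eqref{nuLB} should be taken on $\gamma$ (where $|\nu|\lesssim\Omega^2\lesssim|u|^{-2s}$ since $v_\gamma(u)\sim|u|$), not on $\underline{C}_{in}$, which does not bound $\LB$.
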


	\subsection{Nonlinear estimates exploiting the algebraic structure}   \label{prelim.nonlin}
	We emphasize that we \textbf{do not necessarily assume that} $\phiH \in \OO$ in this section. The specific assumptions of this type are made in   \cref{combining.section} only. In fact, we use many of these estimates in our companion paper \cite{MoiChristoph2} as well (where it is assumed that $\phiH \notin \OO$). Throughout \cref{prelim.nonlin} to  \cref{combining.section} we use the notation $|f(u,v)| \ls |g(u,v)|$ if there exists a constant $\Gamma(M,e,m^2,q_0,D_1,D_2,s)>0$ such that $|f(u,v)| \leq \Gamma \cdot |g(u,v)|$ for all $(u,v)$ in the spacetime region of interest. 
	
	\subsubsection{Boundedness and continuous extendibility of \texorpdfstring{$D_u{\psi}$}{Dupsi}} \label{Dupsi}
	
	To reach the goals of this section, we must first prove preliminary estimates on $D_u \psi$, where $\psi:=r\phi$ is (what is called in the black hole exterior) the radiation field. Since $r$ is upper and lower bounded in our region of interest, it may be very surprising to consider this quantity in the black hole \textit{interior}. However, as it turns out, $D_u \psi$ is always bounded, while $D_u \phi$ is bounded if and only if $ \phi$ is (providing $\liminf_{v \rightarrow +\infty}|\nu|(u,v)>0$, which is conjecturally a generic condition, see \cite{Moi4} for a discussion and proof of this result).

	\begin{prop}\label{Dupsi.prop}
		
		We have the following (gauge-independent) estimate for all $(u,v) \in\mathcal{LB} $:
		
		\begin{equation} \label{Dupsiestimate}
		| D_u \psi|(u,v) \lesssim  |u|^{-s} .
		\end{equation}
		Moreover,  in the gauge \eqref{GaugeAv}, both $D_u \psi$ and $A_u$ admit a bounded extension to the Cauchy horizon, denoted $(D_u \psi)_{CH}$ and $(A_u)^{CH}$,
		respectively.
	\end{prop}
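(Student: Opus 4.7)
My strategy is to derive a clean evolution equation for $\partial_v(D_u\psi)$ and integrate it from the curve $\gamma$ (the future boundary of $\EB$, which is the past boundary of $\LB$) to the Cauchy horizon $\CH$. Starting from \eqref{Field4} and using the algebraic identity $-4\partial_u r \partial_v r = \Omega^2(1 - 2\varpi/r + Q^2/r^2)$ together with the definition \eqref{K.def} of $2K$, the three terms $-\Omega^2\phi/(4r)$, $-\partial_u r \partial_v r \cdot \phi/r$ and $\Omega^2\phi Q^2/(4r^3)$ collapse algebraically into $-K\Omega^2\phi$, giving
\begin{equation*}
\partial_v(D_u\psi) \;=\; \Omega^2\,\phi\,\Big[-K + \tfrac{m^2 r}{4}(|\phi|^2-1) - \tfrac{i q_0 Q}{4r}\Big].
\end{equation*}
Since $K$, $r$ and $r^{-1}$ are bounded in $\LB$, this yields $|\partial_v(D_u\psi)|\lesssim \Omega^2|\phi|(1+|\phi|^2+|Q|)$.

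For the initial datum on $\gamma$, I use the early blue-shift estimates of \cref{prop.EB.estimates.Moi}: \eqref{phiTransition} gives $|\phi|\lesssim v^{-s}\log v$, \eqref{phiUTransition} gives $|D_u\phi|\lesssim v^{-s}$, while $|1-\kappa|\leq 1/3$ from \eqref{kappatransition} combined with $\kappa=-\Omega^2/(2\partial_u r)$ and \eqref{Omega.gamma.EB} yields $|\partial_u r|\lesssim \Omega^2\lesssim v^{-2s}$ on $\gamma$. Since $v_\gamma(u)\sim |u|$ up to sub-linear corrections, this gives
\begin{equation*}
|D_u\psi|(u,v_\gamma(u)) \;\leq\; |\partial_u r|\cdot|\phi| + r\,|D_u\phi| \;\lesssim\; |u|^{-s}.
\end{equation*}

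The crucial step is to bound $\int_{v_\gamma(u)}^{v}|\partial_{v'}(D_u\psi)|\,dv'$ by $|u|^{-s}$ uniformly for $(u,v)\in\LB$. The naive estimate using the rough bound $|\phi|\lesssim v^{1-s}$ from \eqref{phiblowupVLB} together with $\Omega^2\lesssim v^{-2s}$ from \eqref{lambdaLB} gives only $|\partial_v(D_u\psi)|\lesssim v^{3-5s}$, whose integral $\sim |u|^{4-5s}$ satisfies the desired bound only for $s\geq 1$. The sharp improvement will exploit the gauge-enhanced estimate \eqref{phiLB}, namely $|\phi|\leq C_\eta\Omega^{-2\eta}v^{-s}$ for arbitrary small $\eta>0$. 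Since $\Omega^{2-2\eta}=(\Omega^2)^{1-\eta}\lesssim v^{-2s(1-\eta)}$ in $\LB$, the dominant contribution $\Omega^2|\phi|Q^2$ is bounded by $\Omega^{2-2\eta}v^{-s}\cdot v^{2-2s}\lesssim v^{2-(5-2\eta)s}$; integration from $v_\gamma(u)\sim|u|$ yields $\lesssim |u|^{3-(5-2\eta)s}$, which is $\lesssim |u|^{-s}$ precisely when $s\geq 3/(4-2\eta)$. Choosing $\eta$ small enough, this holds for every $s>3/4$ and is the sharp structural origin of the hypothesis \eqref{eq:defns}. The remaining terms ($\Omega^2|\phi|$, $\Omega^2|\phi|^3$, $\Omega^2|\phi||Q|$) satisfy weaker constraints and are handled analogously. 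The absolute convergence of the $v'$-integral further supplies the limit $(D_u\psi)_{CH}(u):=\lim_{v\to+\infty}D_u\psi(u,v)$ at fixed $u$.

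For $A_u$ in the gauge $A_v\equiv 0$ from \eqref{GaugeAv}, the Maxwell equation \eqref{eq:maxwellAu} reads $\partial_v A_u = -Q\Omega^2/(2r^2)$. The initial datum $A_u(u,v_0)$ is bounded by the gauge prescription \eqref{gaugeAponctuelle}; on the compact range $[v_0,v_\gamma(u)]$ the integrand is controlled by the estimates of \cref{prop.RS.estimates.Moi}, \cref{prop.NN.estimates.Moi} and \cref{prop.EB.estimates.Moi}; and in $\LB$ we have $|Q\Omega^2/r^2|\lesssim v^{1-s}\cdot v^{-2s}=v^{1-3s}$, which is integrable in $v$ for $s>2/3$. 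Hence $A_u$ is uniformly bounded and extends continuously to $\CH$ as $(A_u)^{CH}$. The main obstacle of the entire argument is thus the sharp $|u|^{-s}$ estimate on the $\Omega^2|\phi|Q^2$ contribution in $\LB$, which is the essential structural reason for the threshold $s>3/4$.
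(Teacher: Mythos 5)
Your overall strategy coincides with the paper's: integrate $\partial_v(D_u\psi)$ in $v$ from the curve $\gamma$, control the boundary value there with the early blue-shift estimates of \cref{prop.EB.estimates.Moi}, and treat $A_u$ separately via \eqref{eq:maxwellAu}. The boundary-term computation and the $A_u$ argument are fine. However, there is a genuine gap at the very first step. Your algebraic collapse of the three terms $-\tfrac{\Omega^2\phi}{4r}$, $-\tfrac{\partial_u r\,\partial_v r}{r}\phi$, $+\tfrac{\Omega^2 Q^2\phi}{4r^3}$ into $-K\Omega^2\phi$ is correct (it is exactly \eqref{Radius}), but the assertion that \emph{$K$ is bounded in $\mathcal{LB}$} is unjustified and generically false: $2K=r^{-2}(\varpi-Q^2/r)$ involves the modified Hawking mass $\varpi$, for which \cref{prop.LB.estimates.Moi} provides \emph{no} bound in $\mathcal{LB}$ — indeed $\varpi$ is expected to blow up there (mass inflation), precisely because $\tfrac{\lambda\nu}{\Omega^2}$ can grow without bound when $\Omega^2$ decays exponentially while $\lambda\nu$ decays only polynomially. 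Consequently your bound $|\partial_v(D_u\psi)|\lesssim\Omega^2|\phi|(1+|\phi|^2+|Q|)$ does not follow, and the term $\tfrac{\lambda\nu}{r}\phi$ hidden inside $-K\Omega^2\phi$ — which can dominate $\Omega^2|\phi|$ by an arbitrarily large factor — is never estimated. (Your later list of ``remaining terms'' omits it, and, inconsistently, you then estimate the $\Omega^2|\phi|Q^2$ term that your own decomposition had already absorbed into $K$.)

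The repair is exactly what the paper does: do not collapse, and bound the problematic term directly by $\bigl|\tfrac{\lambda\nu}{r}\phi\bigr|\lesssim v^{-2s}\,|u|^{-2s}\,v^{1-s}=v^{1-3s}|u|^{-2s}$ using \eqref{lambdaLB}, \eqref{nuLB} and \eqref{phiblowupVLB}; this is integrable in $v$ for $s>\tfrac23$ and contributes $\lesssim|u|^{2-5s}\leq|u|^{-s}$, while all the genuinely $\Omega^2$-weighted terms are absorbed into $\Omega^{1.99}v^{-s}$ via \eqref{phiLB} and \eqref{QLB}. A side remark: your claim that the $\Omega^2|\phi|Q^2$ contribution is ``the sharp structural origin of the threshold $s>\tfrac34$'' is misplaced — this proposition only requires $s>\tfrac23$ (and, with \eqref{QLB}, that term is in fact far better than you estimate); the $s>\tfrac34$ threshold enters later, for the boundedness of $\Upsilon$ in \cref{coordinatesphiboundedprop} (cf.\ \cref{s>3/4}).
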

	\begin{proof}

		Using  \eqref{Field4}   with the estimates of   \cref{prop.LB.estimates.Moi} we have	
		
		$$ |\partial_v ( D_u \psi)| \lesssim |\lambda|  \cdot	 |\nu|  \cdot  |\phi|+ \Omega^{1.99} \cdot v^{-s}.$$ Finally with \eqref{lambdaLB} and \eqref{phiblowupVLB} we get 
		
		$$ |\partial_v ( D_u \psi)| \lesssim v^{1-3s}  	\cdot  |u|^{-2s}+ \Omega^{1.99} \cdot v^{-s}.$$ Now the left hand side is integrable in $v$  since $s>\frac{2}{3}$  so $D_u \psi$ admits a bounded extension by integrability and integrating from $\gamma$ we obtain the estimate, in view of the estimate on $\gamma$ from   \cref{prop.EB.estimates.Moi}. To conclude, the extendibility of $A_u$ follows from \eqref{eq:maxwellAu} and the estimates of   \cref{prop.LB.estimates.Moi} that show that $|\partial_v A_u|$ is integrable in $v$.
	\end{proof}	
	
	\subsubsection{Key estimates for a candidate coordinate system $(u,V)$ for a continuous extension} \label{coordinatesphibounded}
	
	In this section, we construct an adequate coordinate system $(u,V)$, in which the boundedness of the metric coefficient $\log(\Omega^2_{CH})$ related to $(u,V) $ by $\Omega^2_{CH}=-2g(\partial_u,\partial_V)$ follows from the boundedness of the scalar field $\phi$.
	
	\begin{prop} \label{coordinatesphiboundedprop} There exists a coordinate system $(u,V)$ for which $V(v) <1$, and $\lim_{v \rightarrow +\infty} V(v)=1$ and for which, defining the metric coefficient $ \Omega^2_{CH} \d u \d V = \Omega^2 \d u \d v$, we have for all $(u,v) \in \LB$: \begin{equation} \label{logomega1}
		\left|\partial_v \left( \log(\Omega^2_{CH})(u,v)+ |\phi|^2(u,v)+\int_{u}^{u_s}\frac{|\nu|}{r}|\phi|^2(u',v) \d u' \right) \right| \lesssim v^{2-4s}+v^{-2s} |\log(v)|^3,
		\end{equation} 
		\begin{align} \nonumber
		&	\left|\partial_v \partial_u \left( \log(\Omega^2_{CH})(u,v)+ |\phi|^2(u,v)+\int_{u}^{u_s}\frac{|\nu|}{r}|\phi|^2(u',v) \d u' \right) \right| \\   & \;\;\;\;  \lesssim   |u|^{-2s} \cdot (v^{2-4s}+v^{-2s} |\log(v)|^3)+ |u|^{-s} 	 \cdot v^{1-3s}. \label{logomega2}
		\end{align}
		As a consequence, the quantity $\Upsilon$ defined as 
		\begin{align}\Upsilon(u,v):=\log(\Omega^2_{CH})+ |\phi|^2+\int_{u}^{u_s}\frac{|\nu|}{r}|\phi|^2 \d u'\end{align} admits a continuous extension $\Upsilon_{CH}(u)$  across $\CH$ and \begin{align}\partial_u \Upsilon= \partial_u \left(\log(\Omega^2_{CH})+ |\phi|^2+\int_{u}^{u_s}\frac{|\nu|}{r}|\phi|^2 \d u' \right)\end{align} admits a bounded extension across $\CH$.
	\end{prop}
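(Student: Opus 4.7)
The plan is to exploit the precise algebraic structure of the Einstein equation \eqref{Omega}, which forces a dramatic cancellation in $\partial_v \Upsilon$ that is entirely invisible when one estimates the three pieces $\partial_v \log(\Omega^2_{CH})$, $\partial_v |\phi|^2$ and $\partial_v\!\int_u^{u_s}\!\tfrac{|\nu|}{r}|\phi|^2\,\d u'$ of $\Upsilon$ separately (each of these only satisfies the weak bound $\lesssim v^{1-2s}$ in $\LB$, which is non-integrable). The coordinate $V(v)$ is the one already constructed in \cite{Moi} so that the estimate \eqref{partialvOmegaLB} on $\partial_v \log(\Omega^2_{CH})$ holds; it satisfies $V<1$, $V\to 1$ as $v\to +\infty$, and to leading order $\d V/\d v \sim e^{2K_- v}$. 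Since $\log(\d v/\d V)$ depends only on $v$, the change of coordinate does not affect the mixed derivative $\partial_u \partial_v$, and the computation below reduces to an identity involving $\log(\Omega^2)$.

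\paragraph{Algebraic cancellation.} First I would substitute $2\Re(\overline{D_u \phi}\, \partial_v \phi)$ in \eqref{Omega} by a total derivative of $|\phi|^2$ plus controllable remainders. Writing $\partial_u \partial_v |\phi|^2 = 2\Re(\partial_u \bar\phi\, \partial_v \phi) + 2\Re(\bar\phi\, \partial_u \partial_v \phi)$, substituting $\partial_u \partial_v \phi$ via the wave equation \eqref{Field}, and using $A_v \equiv 0$, $\nu = -|\nu|$, and $\Im(|\phi|^2)=0$, a short computation yields
\begin{align*}
 2\Re(\overline{D_u\phi}\,\partial_v\phi) = \partial_u\partial_v|\phi|^2 + \tfrac{2\lambda}{r}\Re(\bar\phi D_u\phi) - \tfrac{|\nu|}{r}\partial_v|\phi|^2 + \tfrac{m^2\Omega^2}{2}|\phi|^2.
\end{align*}
Inserting this into \eqref{Omega} and using that $\partial_u \partial_v \int_u^{u_s}\!\tfrac{|\nu|}{r}|\phi|^2\,\d u' = -\partial_v(\tfrac{|\nu|}{r})|\phi|^2 - \tfrac{|\nu|}{r}\partial_v|\phi|^2$, the $\tfrac{|\nu|}{r}\partial_v |\phi|^2$ contributions cancel exactly, leaving
\begin{align*}
 \partial_u \partial_v \Upsilon = -\tfrac{2\lambda}{r}\Re(\bar\phi D_u\phi) - \partial_v\!\left(\tfrac{|\nu|}{r}\right)|\phi|^2 - \tfrac{m^2\Omega^2}{2}|\phi|^2 + \tfrac{\Omega^2}{2 r^2} + \tfrac{2\nu\lambda}{r^2} - \tfrac{\Omega^2 Q^2}{r^4}.
\end{align*}
This is precisely where the nonlocal correction $\int_u^{u_s}\tfrac{|\nu|}{r}|\phi|^2\d u'$ built into the definition of $\Upsilon$ is designed to cancel the last non-integrable obstruction produced by the wave equation.

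\paragraph{Estimating the remainders and integration.} Next I would bound each of the six terms on $\LB$ using  \cref{prop.LB.estimates.Moi}, together with the control $|D_u\phi|\lesssim |u|^{-s}$ obtained from \cref{Dupsi.prop} via $D_u\phi = (D_u\psi - \nu\phi)/r$, and a control on $\partial_v(|\nu|/r)$ obtained from \eqref{Radius} together with the $\LB$ bounds $\Omega^2\lesssim v^{-2s}$, $|\lambda|\lesssim v^{-2s}$, $|\nu|\lesssim |u|^{-2s}$, $|\phi|^2+Q^2\lesssim v^{2-2s}(1_{s<1}) + |\log v|^2(1_{s=1})$. The scalar product $\tfrac{\lambda}{r}\Re(\bar\phi D_u\phi)$ gives the crucial decay $v^{1-3s}|u|^{-s}$, and all remaining terms are bounded by $|u|^{-2s}(v^{2-4s}+v^{-2s}|\log v|^3)$. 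This immediately yields \eqref{logomega2}. To obtain \eqref{logomega1}, I would integrate $\partial_u \partial_v \Upsilon$ in $u$ from the transition curve $\gamma$ (boundary of $\EB$ and $\LB$) to $u$; the integral of the $|u'|^{-s}$ contribution from $u_\gamma(v)\sim -v$ gives a factor of $|u_\gamma|^{1-s}\sim v^{1-s}$ for $s<1$ (resp.\ $\log v$ for $s=1$), producing exactly the $v^{2-4s}$ (resp.\ $v^{-2s}\log v$) bound. The boundary term $\partial_v \Upsilon(u_\gamma(v),v)$ on $\gamma$ is controlled using the $\EB$ estimates \eqref{Omegatransition}--\eqref{partialvOmegatransition}, where a cancellation between the leading $2K_-$ in $\partial_v \log\Omega^2$ on $\gamma$ and the construction of $\Omega^2_{CH}$ in \cite{Moi} (built precisely so that \eqref{partialvOmegaLB} holds) removes the otherwise dominant contribution, leaving $|\partial_v \Upsilon(u_\gamma(v),v)|\lesssim v^{2-4s}+v^{-2s}|\log v|^3$.

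\paragraph{Continuous extension and main obstacle.} Since $s>3/4$, both $v^{2-4s}$ and $v^{-2s}|\log v|^3$ are integrable in $v$ near $+\infty$; \eqref{logomega1} then shows that $\Upsilon(u,\cdot)$ is Cauchy in $v$ for each $u$, hence converges pointwise to some $\Upsilon_{CH}(u)$. Integrating \eqref{logomega2} in $v$ yields a uniform bound on $|\partial_u \Upsilon|$ up to $\CH$, which simultaneously gives the claimed bounded extension of $\partial_u \Upsilon$ across $\CH$ and shows that $\Upsilon_{CH}$ is Lipschitz, hence continuous in $u$. The main technical obstacle is the careful bookkeeping at the transition curve $\gamma$: the individual terms in $\partial_v \Upsilon|_\gamma$ are only of size $v^{1-2s}$, and one must check that the specific definition of $V$ in \cite{Moi}, together with the algebraic cancellation within $\Upsilon$, exactly kills the dominant contributions at every subleading order up to the $v^{2-4s}+v^{-2s}|\log v|^3$ threshold; any mismatch at this step would destroy the integrability of $\partial_v \Upsilon$ and prevent the continuous extension.
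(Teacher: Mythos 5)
Your algebraic identity for $\partial_u\partial_v\Upsilon$ is correct and is exactly the cancellation the paper exploits, and your estimation of the remainders plus the integration in $u$ from $\gamma$ mirror the paper's argument. (One repairable imprecision: $|D_u\phi|\lesssim|u|^{-s}$ is \emph{not} available in $\LB$ — only $D_u\psi$ obeys this bound, cf.\ \cref{Dupsi.prop}, and $|D_u\phi|$ may blow up at $\CH$; the term $\frac{2\lambda}{r}\Re(\bar\phi D_u\phi)$ must instead be rewritten as $\frac{2\lambda}{r^3}\Re(\bar\psi D_u\psi)-\frac{2\lambda\nu}{r^4}|\psi|^2$, after which both pieces land inside the right-hand side of \eqref{logomega2}.)

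The genuine gap is your choice of the coordinate $V$. You take $V$ to be ``the one already constructed in \cite{Moi}'' (normalized so that \eqref{partialvOmegaLB} holds, i.e.\ so that the $2K$-part of $\partial_v\log\Omega^2$ is removed) and assert that the boundary term $\partial_v\Upsilon(u_\gamma(v),v)$ is then $\lesssim v^{2-4s}+v^{-2s}|\log v|^3$ by a cancellation against $2K_-$. This is not the case. The boundary term contains
\begin{equation*}
\Big[\partial_v\int_{u}^{u_s}\tfrac{|\nu|}{r}|\phi|^2\,\d u'\Big]\Big|_{u=u_\gamma(v)}
=\int_{u_\gamma(v)}^{u_s}\partial_v\big(\tfrac{|\nu|}{r}\big)|\phi|^2\,\d u'
+\int_{u_\gamma(v)}^{u_s}\tfrac{|\nu|}{r}\,\partial_v(|\phi|^2)\,\d u',
\end{equation*}
and while the first integral is $O(v^{2-4s})$, the second is only $O(v^{1-2s})$ (since $|\nu|\lesssim|u'|^{-2s}$ is integrable in $u'$ and $|\partial_v(|\phi|^2)|\lesssim|\phi|\,|\partial_v\phi|\lesssim v^{1-2s}$, with no improvement in general), hence non-integrable in $v$ for $s\leq1$. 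It is a function of $v$ alone, it has nothing to do with the constant $2K_-$ or with the normalization of $\Omega^2_{CH}$ underlying \eqref{partialvOmegaLB}, and no cancellation removes it. The paper's resolution is to build it into the coordinate: $V$ is defined \emph{within the proof of this proposition} by $\d V/\d v=e^{f(v)}$ with
\begin{equation*}
f'(v)=2K_\gamma(v)+\int_{u_{\gamma}(v)}^{u_s}\tfrac{|\nu|}{r}\,\partial_v(|\phi|^2)(u',v)\,\d u',
\end{equation*}
so that this $O(v^{1-2s})$ term is absorbed into $\partial_v\log(\d v/\d V)$ and cancels out of $\partial_v\log\Omega^2_{CH}$ (one checks $f'(v)\to 2K_-<0$, so $V<1$ and $V\to1$ as required). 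Without redefining $V$ in this way your \eqref{logomega1} fails — its left-hand side is genuinely of size $v^{1-2s}$ — and the subsequent Cauchy-sequence argument for $\Upsilon_{CH}$ collapses; with it, the rest of your argument (integrability for $s>\frac34$, continuous extension of $\Upsilon$, bounded extension of $\partial_u\Upsilon$) goes through as you describe.
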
			  	
	
	\begin{proof}
		We first use \eqref{Field2} to establish the following two formulae: 
		$$\frac{\partial_u \partial_v (r|\phi|^2)}{r} = \partial_u \partial_v (|\phi|^2) + \frac{\nu}{r}\partial_v (|\phi|^2)+ \frac{1}{r}\partial_u (\lambda|\phi|^2),$$
		$$ -2\Re (D_u \phi \overline{\partial_v \phi}) = \frac{-\partial_u \partial_v (r|\phi|^2)}{r}+\left(\frac{\partial_u \partial_v r}{r}-\frac{m^2 \Omega^2}{2} \right)|\phi|^2.$$
		
		Now we define $2K_{\gamma}(v):= 2K(u_{\gamma(v)},v)$ and we rewrite \eqref{Omega}  using the two last formulae 
		$$ |\partial_u \left( \partial_v \log(\Omega^2)-2K_{\gamma}(v) + \partial_v (|\phi|^2)\right)+\frac{\nu}{r}\partial_v (|\phi|^2)+ \frac{1}{r}\partial_u (\lambda|\phi|^2) | \lesssim |\lambda \nu|(1+|\phi|^2) + \Omega^2 (1+Q^2+m^2 |\phi|^2).$$
		First note that the right hand side is $O(|u|^{-2s} \cdot v^{2-4s}+ |u|^{-2s} \cdot v^{-2s})$, using the estimates of   \cref{prop.LB.estimates.Moi}. Using \eqref{Radius}, \eqref{Dupsiestimate} and the other estimates of   \cref{prop.LB.estimates.Moi} we get $$ |\partial_u (\lambda |\phi|^2)|= |\partial_u (r^{-2}\lambda |\psi|^2)| \lesssim |u|^{-2s} v^{2-4s}+|u|^{-s} \cdot v^{1-3s}.$$ This gives
		\begin{equation} \label{logomega3}
		|\partial_u \left( \partial_v \log(\Omega^2)-2K_{\gamma}(v) + \partial_v (|\phi|^2)\right)+\frac{\nu}{r}\partial_v (|\phi|^2) | \lesssim |u|^{-2s} \cdot v^{-2s}+|u|^{-2s} v^{2-4s}+   |u|^{-s}  \cdot v^{1-3s}.
		\end{equation}
		Now we want to integrate both sides on $[u_{\gamma}(v),u]$. Recall that on $\gamma$, $|\partial_v \log(\Omega^2)(u_{\gamma}(v) ,v)-2K_{\gamma}(v)|\lesssim v^{-2s} |\log(v)|^3$ and $|\partial_v (\phi^2)|\lesssim v^{-2s} |\log(v)|$, as established in   \cref{prop.EB.estimates.Moi}. Thus, we obtain
		
		\begin{equation}
		| \partial_v \log(\Omega^2)-2K_{\gamma}(v) + \partial_v (|\phi|^2)+\int_{u_{\gamma(v)}}^{u}\frac{\nu}{r}\partial_v (|\phi|^2)\d u'| \lesssim v^{2-4s}+ v^{-2s} |\log(v)|^3.
		\end{equation}
		Now we write 
		$$\int_{u_{\gamma(v)}}^{u}\frac{\nu}{r}\partial_v \left(|\phi|^2\right)\d u' = \int_{u_{\gamma(v)}}^{u_s}\frac{\nu}{r}\partial_v (|\phi|^2)\d u'- \partial_v(\int_{u}^{u_s}\frac{\nu}{r}|\phi|^2\d u')+\int_{u}^{u_s}\partial_v(\frac{\nu}{r}) |\phi|^2\d u'.$$
		Using \eqref{Radius} and the estimates of   \cref{prop.LB.estimates.Moi} again, we see that   
		$$ \left|\int_{u}^{u_s}\partial_v(\frac{\nu}{r}) |\phi|^2\d u'\right| \lesssim \int_{u}^{u_s} (|\nu||\lambda|+\Omega^2 (1+Q^2+|\phi|^2)) |\phi|^2\d u'\lesssim v^{2-4s}.$$
		Therefore we actually showed that 
		\begin{equation}
		| \partial_v \log(\Omega^2)-2K_{\gamma}(v) +\int_{u_{\gamma}(v)}^{u_s}\frac{\nu}{r}\partial_v (|\phi|^2)\d u' +\partial_v (|\phi|^2)- \partial_v(\int_{u}^{u_s}\frac{\nu}{r}|\phi|^2\d u')| \lesssim v^{2-4s}+ v^{-2s} |\log(v)|^3.
		\end{equation}
		Note that the second and the third term of the left-hand-side only depend on $v$ and not on $u$.
		
		We define a new coordinate system $(u,V)$ with the following equations:
		\begin{equation} \label{Vdef1}
		\frac{\d V}{\d v}=e^{f(v)},
		\end{equation} \begin{equation} \label{Vdef2}
		f'(v)=2K_{\gamma}(v) +\int_{u_{\gamma(v)}}^{u_s}\frac{|\nu|}{r}\partial_v (|\phi|^2) (u',v)\d u'.
		\end{equation}
		By the estimates of   \cref{prop.LB.estimates.Moi}, note that $|f'(v)-2K_-| \lesssim v^{1-2s}$ and we recall that $K_-<0$; thus $V'(v)$ is integrable as $v \rightarrow+\infty$, and $V(v)$ increases towards a limit $V_{\infty}$ which we can choose to be $1$ without loss of generality. Therefore, we also have upon integration, as $v \rightarrow+\infty$:  $$ 1-V(v) \approx e^{f(v)}  .$$
		We also denote $\Omega^2_{CH}$ the metric coefficient in this system defined by $\Omega^2_{CH}=-2g(\partial_u,\partial_V)$, i.e.\
		$$ \Omega^2_{CH} \d u \d V = \Omega^2 \d u \d v, \text{ hence } \Omega^2_{CH}(u,v)= \Omega^2(u,v) e^{-f(v)}.$$
		We then have the claimed estimate \eqref{logomega1}
		\begin{equation*}
		\left| \partial_v \left( \log(\Omega^2_{CH})+|\phi|^2+ \int_{u}^{u_s}\frac{|\nu|}{r}|\phi|^2\d u' \right) \right| \lesssim v^{2-4s}+ v^{-2s} |\log(v)|^3.
		\end{equation*}
		
		Clearly, \eqref{logomega3} is a reformulation of \eqref{logomega2}.
		Since the right hand sides of \eqref{logomega1}  and \eqref{logomega2} are integrable in $v$ for $s>\frac{3}{4}$, a standard Cauchy sequence argument shows that $\Upsilon(u,v)$ admits a continuous extension, and $\partial_u \Upsilon(u,v)$ has a (locally) bounded extension.
	\end{proof}

	\subsubsection{Metric extendibility conditional on the boundedness of the scalar field}  \label{sec:metri.ext.section}
	
	Now that we have built the quantity $\Upsilon$ and proven its extendibility, we will prove that the continuous extendibility of $|\phi|$ implies the continuous extendibility of the metric (conversely, the blow-up of $|\phi|$ implies that there exists no coordinate system $(u,v)$ in which $\log(\Omega^2)$ is even bounded, see \cite{MoiChristoph2} and \cite{MoiThesis}).
	\label{ext.section}
	
	\begin{lem} \label{lemma.ext.easy} Assume that the function $(u,v) \in \LB \rightarrow |\phi|(u,v)$ extends continuously to $\CH \cap \{u\leq u_s\}$ as a continuous function $|\phi|_{CH}(u)$.	Then $\int_{u}^{u_s} \frac{\nu}{r} |\phi|^2(u',V)\d u'$  extends continuously to $\CH \cap \{u\leq u_s\}$ as a continuous function. Moreover, $ \nu(u,v)$ extends to $\CH \cap \{u\leq u_s\}$ as a bounded function $\nu_{CH}(u)$.
	\end{lem}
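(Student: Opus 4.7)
The strategy is to establish Part~2 (extension of $\nu$) \emph{unconditionally} from the estimates of \cref{prop.LB.estimates.Moi}, and then to deduce Part~1 via dominated convergence, invoking the continuous extendibility of $r$ from \cref{CH.stab.thm}, the boundedness of $\nu_{CH}$ from Part~2, and the hypothesis on $|\phi|$. I expect the main technical point to be Part~2: controlling $\partial_v\nu$ in $L^1_v$ requires the numerology $s>\tfrac{3}{4}$ to close, exactly as in \cref{coordinatesphiboundedprop}.

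For Part~2, the plan is to commute derivatives and apply \eqref{Radius} to write
\begin{align*}
\partial_v \nu \;=\; \partial_u \lambda \;=\; -\frac{\Omega^2}{4r} \;-\; \frac{\nu \lambda}{r} \;+\; \frac{\Omega^2 Q^2}{4 r^3} \;+\; \frac{m^2 r \,\Omega^2}{4}|\phi|^2.
\end{align*}
In $\LB$, I would then insert the bounds $\Omega^2 \lesssim v^{-2s}$ from \eqref{lambdaLB}, $|\nu|\lesssim |u|^{-2s}$ from \eqref{nuLB}, and $|\phi|^2 + Q^2 \lesssim v^{2-2s}\mathbb{1}_{s<1} + (\log v)^2 \mathbb{1}_{s=1}$ from \eqref{phiblowupVLB}, together with $r\sim r_-$ on $\LB$. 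This gives
\begin{align*}
|\partial_v \nu|(u,v) \;\lesssim\; v^{-2s} + |u|^{-2s} v^{-2s} + v^{-2s} \cdot v^{2-2s} \;\lesssim\; v^{-2s} + v^{2-4s},
\end{align*}
modulo a harmless logarithmic factor when $s=1$. Both terms are integrable in $v$ precisely because $s>\tfrac{3}{4}$, so a Cauchy sequence argument defines $\nu_{CH}(u) := \lim_{v\to+\infty}\nu(u,v)$, with the uniform bound $|\nu_{CH}(u)|\lesssim |u|^{-2s}$ inherited from \eqref{nuLB}.

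For Part~1, set $I(u,v) := \int_u^{u_s}\tfrac{\nu}{r}|\phi|^2(u',v)\,\d u'$. By \cref{CH.stab.thm}, $r$ extends continuously to $r_{CH}$ which is bounded away from zero on $\{u\leq u_s\}$; by Part~2, $\nu(u',\cdot)\to \nu_{CH}(u')$ pointwise with $|\nu_{CH}(u')|\leq D_L |u_s|^{-2s}$ uniformly in $u'\leq u_s$; and by hypothesis $|\phi|^2$ extends continuously, hence is uniformly bounded on every compact $u'$-neighborhood on $\CH\cap\{u'\leq u_s\}$. Dominated convergence then gives the pointwise limit
\begin{align*}
\lim_{v\to+\infty} I(u,v) \;=\; I_{CH}(u) \;:=\; \int_u^{u_s}\frac{\nu_{CH}}{r_{CH}}|\phi|_{CH}^2(u')\,\d u'.
\end{align*}
To upgrade this to continuous extendibility, I would write, for any sequence $(u_n,v_n)\to(u_\infty,+\infty)$,
\begin{align*}
|I(u_n,v_n) - I_{CH}(u_\infty)| \;\leq\; |I(u_n,v_n)-I(u_\infty,v_n)| \;+\; |I(u_\infty,v_n)-I_{CH}(u_\infty)|,
\end{align*}
where the first term is bounded by $|u_n-u_\infty|\cdot\|\tfrac{\nu}{r}|\phi|^2\|_{L^\infty}$ (finite by the above uniform bounds), and the second vanishes by the pointwise convergence just established. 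Continuity of $I_{CH}$ in $u$ on $\CH\cap\{u\leq u_s\}$ follows immediately from the integral formula, concluding the proof.
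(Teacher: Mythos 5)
Your proposal is correct and follows essentially the same route as the paper: integrability of $\partial_v\nu$ in $v$ (via equation \eqref{Radius} and the estimates of \cref{prop.LB.estimates.Moi}) gives the bounded extension $\nu_{CH}$ unconditionally, and dominated convergence plus a triangle-inequality splitting handles the integral term. The only cosmetic difference is that you bound $\Omega^2|\phi|^2$ by $v^{2-4s}$ using \eqref{phiblowupVLB} where the paper gets $v^{-2s}$ from \eqref{phiLB}; your weaker bound is still integrable precisely because $s>\tfrac34$, so the argument closes all the same.
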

	
	\begin{rmk}
		In fact, we do not prove directly that $\nu$ extends as continuous function across the Cauchy horizon, as we do not control $\partial_u \nu$. However,   even though $\nu_{CH}$ might not be continuous in $u$, it is clearly in $L^1_{loc}$ (and even in $L^1(\CH \cap \{u\leq u_s\})$, as $|\nu_{CH}|\lesssim |u|^{-2s}$) which is sufficient for our purpose.
	\end{rmk}
	
	\begin{proof}
		Using the estimates of   \cref{prop.LB.estimates.Moi}, we see that for $(u,v) \in \LB$: $$ |\partial_v \nu|(u,v) \lesssim v^{-2s},$$
		which shows, by integrability, that for all $u \leq u_s$ there exists $\nu_{CH}(u)$ such that $\lim_{v \rightarrow+\infty} \nu(u,v)=\nu_{CH}(u)$. Now take again $u_{\infty} < u_s$ and two sequences $u_i \rightarrow u_{\infty}$, $V_i \rightarrow 1$, $V_i <1$ and write \begin{align*} &		\left|\int_{u_i}^{u_s} \frac{\nu}{r} |\phi|^2(u',V_i)\d u' - \int_{u_i}^{u_s} \frac{\nu_{CH}(u')}{r_{CH}(u')} |\phi|_{CH}^2(u')\d u'\right| \\ \leq\ & \left|\int_{u_i}^{u_{\infty}} \frac{\nu}{r} |\phi|^2(u',V_i)\d u'\right|+\left|\int_{u_i}^{u_{\infty}} \frac{\nu_{CH}(u')}{r_{CH}(u')} |\phi|_{CH}^2(u')\d u'\right|+ \left|\int_{u_{\infty}}^{u_s} \left( \frac{\nu}{r} |\phi|^2(u',V_i)- \frac{\nu_{CH}(u')}{r_{CH}(u')} |\phi|_{CH}^2(u') \right)\d u'\right|. 
		\end{align*}   
		
		Now both functions $\frac{\nu}{r} |\phi|^2(u,V)$ and $ \frac{\nu_{CH}(u')}{r_{CH}(u')} |\phi|_{CH}^2(u) $ are uniformly bounded in $u$ and $v$ on a set of form $(u,V) \in [u_{\infty}-\epsilon,u_s] \times [1-\epsilon,1]$ and $ \lim_{i \rightarrow +\infty} \frac{\nu}{r} |\phi|^2(u',V_i)=\frac{\nu_{CH}(u')}{r_{CH}(u')} |\phi|_{CH}^2(u')$ so by the dominated convergence theorem, the last term tends to $0$ as $i$ tends to $+\infty$.
		
		Moreover, the integrands of the first two terms are uniformly bounded, and thus these two terms tend to $0$ as $i$ tends to $+\infty$. This concludes the proof of the lemma.
	\end{proof}
	
	\begin{cor} \label{metric.ext.cor}
		Assume that the function $(u,v) \in \LB \rightarrow |\phi|(u,v)$ extends continuously to $\CH\cap \{u\leq u_s\}$ as a continuous function $|\phi|_{CH}(u)$. Then the metric $g$ admits a continuous extension $\tilde{g}$, which can be chosen to be $C^0$-admissible (Definition \ref{C0admissible}).
	\end{cor}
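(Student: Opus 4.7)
The plan is to assemble the continuous extension of $g$ directly from the quantity $\Upsilon$ built in \cref{coordinatesphiboundedprop} together with \cref{lemma.ext.easy}, working in the coordinate system $(u,V)$ introduced in \eqref{Vdef1}--\eqref{Vdef2}. First, I would recall that in this system the metric reads
\begin{align*}
g = -\frac{\Omega^2_{CH}}{2}(\d u \otimes \d V + \d V \otimes \d u) + r^2 \d\sigma_{\mathbb S^2},
\end{align*}
so continuously extending $g$ across $\CH$ (with $V\to 1$) amounts to continuously extending the two scalar functions $\Omega^2_{CH}$ and $r^2$ and verifying that both remain strictly positive up to and including $\CH$.

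The continuous extendibility of $r$ to a function $r_{CH}(u)$ bounded away from $0$ is already contained in \cref{CH.stab.thm}, and hence $r^2$ extends continuously with a positive lower bound. For $\Omega^2_{CH}$, the argument is algebraic: by \cref{coordinatesphiboundedprop}, the function
\begin{align*}
\Upsilon(u,v) = \log(\Omega^2_{CH})(u,v) + |\phi|^2(u,v) + \int_u^{u_s} \frac{|\nu|}{r}|\phi|^2(u',v)\,\d u'
\end{align*}
admits a continuous extension $\Upsilon_{CH}(u)$ across $\CH$. By hypothesis, $|\phi|^2$ extends continuously, and by \cref{lemma.ext.easy} the weighted integral also extends continuously (note that $\nu_{CH}(u)$ is only bounded and measurable, but the integral itself is continuous by the dominated convergence argument already carried out there). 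Subtracting these two extending terms from $\Upsilon$, we conclude that $\log(\Omega^2_{CH})(u,v)$ extends continuously across $\CH$ to a finite function, and consequently $\Omega^2_{CH}$ extends continuously to a \emph{strictly positive} function on $\CH$ (strictly positive because its logarithm is finite).

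Together with the extension of $r^2$, this yields a continuous extension $\tilde g$ of $g$ across $\CH$ in the $(u,V)$ coordinates. To verify $C^0$-admissibility in the sense of \cref{C0admissible}, I would check the three clauses: (1) the underlying manifold can be taken as $\tilde{\mathcal Q}\times \mathbb S^2$ where $\tilde{\mathcal Q}$ is an open extension of $\mathcal Q^+$ past $\{V=1\}$, which is clearly a $C^1$-manifold; (2) in the $C^1$-coordinates $(u,V)$ given by \eqref{Vdef1}, the metric has the required form with continuous $\Omega^2_{CH}$ and $r^2$ that are both strictly positive, as just established; (3) the rigidity clause on null coordinate systems holds because any other $C^1$ null coordinates compatible with \eqref{eq:defnofg} must come from strictly monotone $C^1$ reparametrizations of $u$ and $V$ (this follows because once the conformal class of the 2-dimensional quotient metric is fixed and continuous, null coordinates are unique up to such reparametrizations).

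The only mildly delicate point in the argument will be the interplay between the bounded-but-not-necessarily-continuous extension $\nu_{CH}$ of $\nu$ and the continuity of the integral term in $\Upsilon$: this is exactly the content of \cref{lemma.ext.easy}, whose proof already handles the matter via dominated convergence, so here it enters only as a black box. Everything else is algebraic manipulation of already-established continuous extensions, and no additional estimates are required.
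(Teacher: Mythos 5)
Your proposal is correct and follows essentially the same route as the paper: the paper's proof likewise extracts the continuous extension of $\Omega^2_{CH}$ by combining the continuity of $\Upsilon$ from \cref{coordinatesphiboundedprop}, the hypothesis on $|\phi|$, and \cref{lemma.ext.easy} for the integral term, and then invokes the already-known continuous extension of $r$ together with the form \eqref{metric} of the metric in the $(u,V)$ coordinates. Your additional remarks on strict positivity of $\Omega^2_{CH}$ and the explicit verification of the admissibility clauses simply spell out what the paper leaves implicit.
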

	\begin{proof}
		It follows  from   \cref{coordinatesphiboundedprop} and   \cref{lemma.ext.easy} that $\Omega^2_{CH}$ extends continuously to $\CH \cap \{u\leq u_s\}= \{u\leq u_s\} \times \{V=1\}$. We know already that $r$ extends continuously to $\CH \cap \{u\leq u_s\}= \{u\leq u_s\} \times \{V=1\}$, therefore, in view of the form of the metric \eqref{metric}, the corollary is proved.
	\end{proof}

	\subsection{Difference-type estimates on the scalar field and metric difference estimates} \label{difference.estimates.section}
	In this section, we carry out the nonlinear difference estimates. To do this, we have to introduce a new coordinate involving $h(v)$ defined in \eqref{h.def} (see already the difference estimate \eqref{HH.gauge}, to compare with \eqref{Omega.HH.1}): \begin{equation}\label{tv.def}
	\tilde{v}(v):= v+h(v).
	\end{equation}
	Recalling \eqref{h.bound}, it is clear that  $\tilde{v} = v \cdot (1+O(v^{1-2s}))$ and $\partial_{\tilde{v}} f = \partial_v f \cdot (1+O(v^{1-2s}))$ for all $f$. Note also \begin{align*}
	&\tilde{\Omega}^2(u,\tv(v))= \frac{\Omega^2(u,v)}{1+h'(v)}=(1+O(v^{1-2s}))\cdot  \Omega^2(u,v),\\ & \partial_{\tv} \log(\tilde{\Omega}^2)(u,\tv(v))= \frac{\partial_v \log(\Omega^2)(u,v)}{1+h'(v)}- \frac{h''(v)}{[1+h'(v)]^2}= (1+O(v^{1-2s}))\cdot  \partial_v \log(\Omega^2)(u,v)+ O(v^{-2s}),
	\end{align*}where  $\tilde{\Omega}^2:=-2g(\partial_u,\partial_{\tv})$. Estimates from \cref{recall.section} can be easily translated into $(u,\tv)$ coordinates: \begin{lem}\label{lem.trivial}
		Defining  $\tilde{\Omega}^2_H:=-2g(\partial_U,\partial_{\tv})$, the estimate \eqref{Omega.HH.1} on $\HH$ is replaced by:
		\begin{equation} \label{HH.gauge}
		\left|	\log( \frac{\tilde{\Omega}^2(0,\tv)}{\Omega^2_{RN}(0,\tv)})\right|=\left|	\log( \frac{\tilde{\Omega}^2_H(0,\tv)}{(\Omega^2_{RN})_H(0,\tv)})\right|\ls \tv^{1-2s},   \hskip 3 mm|\partial_{\tv} \log(\tilde{\Omega}_H^2)(0,\tv)- 2K_+| \ls \tv^{-2s}.
		\end{equation}
		Moreover, \eqref{OmegaPropRedshift}, \eqref{partialvRSOmegaprop} are replaced  by the following estimates valid in the spacetime region $\mathcal{R}$ \begin{equation}\label{Omega.new}
		e^{2K_+(u+\tv)}\ls   \tilde{\Omega}^2(U,\tv) \ls e^{2K_+(u+\tv)}, \hskip 5 mm |\partial_{\tv} \log(\tilde{\Omega}^2)(U,\tv) - 2K_+| \ls \tv^{-2s}.
		\end{equation}
		Finally, \eqref{partialvNSOmegaprop2} and  \eqref{partialvOmegatransition} are replaced by the following (weaker) estimates in the regions $\mathcal{N} \cup \mathcal{EB}$: \begin{equation} \label{Omega.ueless}
		|\partial_{\tv} \log(\tilde{\Omega}^2)(U,\tv) - 2K(U,\tv)| \ls \tv^{1-2s}.
		\end{equation}
		All the others estimates of \cref{recall.section} are still valid replacing $v$ by $\tv$, $\Omega^2$ by $\tilde{\Omega^2}$ and so on (adjusting the constants with no loss of generality, i.e.\ replacing $D_H$ by $2D_H$, $D_R$ by $2D_R$, $\frac{1}{3}$ by $\frac{2}{3}$ etc.).
	\end{lem}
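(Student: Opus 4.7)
The plan is to perform a direct change of variables $v \mapsto \tv(v) := v + h(v)$ and propagate the estimates of \cref{prop.HH.estimates.Moi}--\cref{prop.LB.estimates.Moi} through the transformation. The key inputs are the decay bounds \eqref{h.bound} on $h, h', h''$, which make the change of variables close to the identity at the relevant rate, i.e.\ $\tv = v(1 + O(v^{1-2s}))$. The first step is to write down the elementary transformation rules. Since $\d\tv = (1+h'(v))\d v$, we have
\begin{align*}
\tilde\Omega^2(u,\tv) = \frac{\Omega^2(u,v)}{1+h'(v)}, \qquad \partial_{\tv}\log\tilde\Omega^2 = \frac{\partial_v \log\Omega^2}{1+h'(v)} - \frac{h''(v)}{(1+h'(v))^2},
\end{align*}
and the analogous formulas on $\HH$ in the $(U,\tv)$ gauge. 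Combined with \eqref{h.bound}, these read schematically $\log\tilde\Omega^2 = \log\Omega^2 + O(v^{1-2s})$ and $\partial_{\tv}\log\tilde\Omega^2 = (1+O(v^{1-2s}))\partial_v \log\Omega^2 + O(v^{-2s})$, and similarly for any other scalar quantity transported through the change of variables.

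Next I would handle the three claims. For \eqref{HH.gauge} on $\HH$, I exploit the crucial fact that the definition \eqref{h.def} was chosen so that $\Omega^2_H(0,v) = e^{2K_+(\tv - v_0)}$ \emph{exactly}, hence $\tilde\Omega^2_H(0,\tv) = e^{2K_+(\tv-v_0)} / (1+h'(v))$; comparing with the explicit RN lapse in the $(U,\tv)$ gauge, which equals $e^{2K_+(\tv-v_0)}$ up to corrections that are $O(\tv^{1-2s})$ by the corresponding RN computation, yields the first estimate of \eqref{HH.gauge}. Differentiating in $\tv$ and applying the transformation rule to \eqref{Omega.HH.1} together with \eqref{Omega.HH.1.2} gives the second. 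For \eqref{Omega.new} in $\Rs$, I substitute $v + h(v) = \tv$ directly into \eqref{OmegaPropRedshift} to obtain $\log\Omega^2 = 2K_+(u+\tv) + O(\Omega^2)$, and then use $\tilde\Omega^2 = \Omega^2 \cdot (1+O(v^{1-2s}))$ to conclude the two-sided bound; the derivative estimate follows from the transformation rule applied to \eqref{partialvRSOmegaprop} together with \eqref{KRedShiftprop}. For \eqref{Omega.ueless} in $\NN \cup \EB$, the transformation rule applied to \eqref{partialvNSOmegaprop2} and \eqref{partialvOmegatransition} produces a remainder controlled by $v^{-2s}$ or $v^{-2s}\log^3 v$, both of which are absorbed by the target bound $O(\tv^{1-2s})$. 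Finally, the assertion that all other estimates of \cref{recall.section} carry over with the same form (up to trivial enlargement of the constants) is immediate since the remaining bounds involve only quantities that are either $u$-dependent, $\Omega^2$-based, or decay at a polynomial rate in $v$ indistinguishable (up to the stated constant factors) from the same rate in $\tv$.

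The main potential obstacle is purely bookkeeping: one must be careful that the error terms generated by the transformation, in particular the additive $-h''(v)/(1+h'(v))^2 = O(v^{-2s})$ arising in $\partial_{\tv}\log\tilde\Omega^2$, do not interfere with the leading behavior $2K_+$ on $\HH$ or $2K$ in the bulk. This is guaranteed because $K_\pm$ are bounded below in absolute value while $v^{-2s} \to 0$, and because the weaker target rate $v^{1-2s}$ in \eqref{Omega.ueless} swallows both $v^{-2s}$ and $v^{-2s}\log^3 v$. No additional analytic input beyond \eqref{h.bound} and the estimates of \cref{recall.section} is required.
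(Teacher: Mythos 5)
Your proposal follows essentially the same route as the paper's (very terse) proof: the change of variables $\tv=v+h(v)$, the exact identity $\tilde{\Omega}^2_H(0,\tv)=e^{2K_+(\tv-v_0)}/(1+h'(v))$ coming from \eqref{h.def}, and the bounds \eqref{h.bound} together with \eqref{Omega.HH.1.2}; the transformation formulas you derive are exactly those the paper records just before the lemma. The one place where your choice of ingredients differs is the derivative estimate in \eqref{Omega.new}: combining \eqref{partialvRSOmegaprop} with \eqref{KRedShiftprop} only yields $|\partial_{\tv}\log\tilde{\Omega}^2-2K_+|\lesssim \Omega^2+\tv^{1-2s}$ rather than the stated $\tv^{-2s}$, whereas the paper instead invokes \eqref{Omega.HH.1.2}, which cancels the $2K_+h'$ contribution against $2K(0,v)-2K_+$ to $O(v^{-2s})$ precision; you should cite that estimate (as you already do for \eqref{HH.gauge}) if you want the sharp rate, though the weaker rate is all that is actually used downstream.
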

	
	\begin{proof} This follows from  the equation $\tilde{\Omega}^2_H(0,\tv)= \frac{e^{2K_+\cdot( \tv-v_0)}}{1+h'(v)}$ (using the identity  \eqref{h.def}) and \eqref{Omega.HH.1.2}, \eqref{h.bound}.	\end{proof}
	\paragraph{Notation.}
	\emph{
		In view of \cref{lem.trivial}, from now on and until the end of the paper, we make a mild abuse of notation and redefine $v$ to be this new $\tilde{v}$ given by \eqref{tv.def} with the necessary adjustments, i.e.\ $\lambda$ becomes the notation for $\partial_{\tv}r$, $\Omega^2$ the notation for $-2g(\partial_u,\partial_{\tv})$, etc.  We will not use the old definition of $v$ any longer in what follows.}
	
	The goal of this section is to take the difference between  $\phi(u,v)$ and $\phiNl(u,v)$ and estimate the quantity: \begin{align}\dphi(u,v):=\phi(u,v)-\phiNl(u,v),\end{align}
	where $\phiNl$ solves the linear equation
	\begin{align}
	\label{eq:linearequationsection5}
	(\nabla_\mu +i q_0 (A^{RN})_\mu)(\nabla^\mu + iq_0 (A^{RN})^\mu) \phiNl  - m^2 \phiNl  =0
	\end{align}
	on the fixed Reissner--Nordström background \eqref{RN} in the gauge $A^{RN}$ as in \eqref{ARN}.  More precisely, we will define data for $\phiNl$ on $\HH$ (data on $\underline{C}_{in}$ is irrelevant) so as to match the data $\phiH \in \Sl$ for $\phi$  on $\HH$ (see already the paragraph immediately below): Our goal is then to prove that $\dphi$ is bounded and continuously extendible (for $q_0=0$), and similar estimates featuring nonlinear backreaction if $q_0 \neq 0$.

	We now define $\phiNl$ on $\mathcal Q^+$ as the unique solution of \eqref{eq:linearequationsection5} on the fixed Reissner--Nordström metric \eqref{RN} with parameters $(M,e)$ and with data \begin{align*}
	&  \phiNl(u,v_0) \equiv 0 \hskip 5 mm \forall u \in (-\infty,u_s], \\
	& (\phiNl)_{|\mathcal{H}^+} (v)\equiv \chi_{\geq v_0 + 3}(v) \phiH(v)  \hskip 5 mm \forall v \in [v_0,+\infty),
	\end{align*} 
	where $\chi_{v_0 + \geq 3}$ is the smooth cut-off supported on $v\geq v_0 +  2 $ and $\chi_{ \geq v_0 +  3} =1 $ for $v\geq v_0 + 3$ as defined in \cref{linearcor}.
	\begin{rmk} Note that the unique solution $\phil$ arising from the above data in the gauge \eqref{gauge_linear}, which is used in \cref{linearsection}, agrees with $\phiNl$ up to a gauge transformation as the gauges agree for the initial data, in particular, $A^{RN}_v = ( A_{RN}^\prime )_v =0$ on the event horizon by construction. 
	\end{rmk}
	Recall that $\phiNl$ is also a solution of \eqref{Field}, \eqref{Field2}, \eqref{Field4}, \eqref{Field5} where $(r,\Omega^2, A, D, \phi)$ are all replaced by their Reissner--Nordstr\"{o}m analogs $(r_{RN}, \Omega^2_{RN}, A^{RN}, D^{RN}, \phiNl)$. Similarly, $r_{RN}$, $\Omega^2_{RN}$, $A^{RN}$ also satisfy the equations of   \cref{eqcoord} with $\phi \equiv 0$ (i.e.\ \eqref{RN} satisfies the Einstein--Maxwell equations in spherical symmetry), a fact we will repetitively use.
	
	The estimates of \cite{Moi}, that are recalled in \cref{recall.section}  and stated in \cref{lem.trivial} in our new coordinate system, are key to our new difference estimates. We will use these estimates throughout the argument, without necessarily referring to them explicitly.
	\subsubsection{Difference estimates in the red-shift region}
	\begin{prop} \label{RS.diff.prop}
		There exists $D_H'(M,e,q_0,m^2,s,D_1,D_2)>0$ such that for all $(u,v) \in \Rs$: \begin{align} \nonumber
		|r(u,v)- & r_{RN}(u,v)| + |\lambda(u,v)-\lambda_{RN}(u,v)|\\ 
		\label{diff1.RS} & +  |Q(u,v)-e|+ |\log(\Omega^2)(u,v)-\log(\Omega^2_{RN})(u,v)| \leq D_H' \cdot v^{1-2s}, 
		\end{align}	\begin{align}\nonumber
		|\partial_u \log(\Omega^2)(u,v)- & \partial_u \log(\Omega^2_{RN})(u,v)|  +	|\nu(u,v)-\nu_{RN}(u,v)|	 \\&	+|A_u(u,v)-A_u^{RN}(u,v)| \leq D_H' \cdot e^{2K_+(u+v)} \cdot v^{1-2s}, \label{diff2.RS}
		\end{align}
		\begin{equation} \label{diff3.RS}
		|\partial_u \dphi| \leq D_H' \cdot e^{2K_+(u+v)} \cdot v^{1-3s},
		\end{equation}
		\begin{equation} \label{diff4.RS}
		|\dphi|	+	|\partial_v \dphi| \leq D_H' \cdot  v^{1-3s}.
		\end{equation}
	\end{prop}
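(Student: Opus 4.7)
The plan is to run a coupled bootstrap in the red-shift region with ansatz
\[
	|\Delta r|+|\Delta\lambda|+|\Delta Q|+|\log(\Omega^2/\Omega^2_{RN})|\le C_* v^{1-2s},\qquad |\Delta\nu|+|\Delta(\partial_u\log\Omega^2)|+|A_u-A_u^{RN}|\le C_* e^{2K_+(u+v)} v^{1-2s},
\]
with analogous $v^{1-3s}$ bounds for $\dphi$ and its derivatives, and recover the constants better than $C_*$ provided $\Delta\ge\Delta_1(M,e,q_0,m^2,s,D_1,D_2)$ is chosen large enough. The seed data on $\HH=\{u=-\infty\}$ is controlled by \eqref{Q.HH}--\eqref{A.HH} together with the refined horizon gauge bound \eqref{HH.gauge}: every difference quantity above is of the correct order on $\HH$ (and vanishes for $\partial_u$-quantities that contain a factor $\Omega^2_H\to0$), while on $\underline C_{in}$ the gauge choice \eqref{eq:gaugeforU} and the matching $\phiH$-data make the differences trivially small once we restrict to $\Rs=\{u+v\le-\Delta\}$.

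The geometric estimates in \eqref{diff1.RS}--\eqref{diff2.RS} will follow from transport equations obtained by subtracting the Reissner--Nordström version of each equation in \cref{eqcoord} from its dynamical counterpart. Concretely: Raychaudhuri \eqref{RaychV} written for $\lambda/\Omega^2-\lambda_{RN}/\Omega^2_{RN}$ together with the $\phi$ and $\phiNl$ bounds $v^{-s}$ from \eqref{phivRS} gives $\Delta\lambda$ and hence $\Delta r$ at order $v^{1-2s}$ after $v$-integration from $\HH$; the charge equation \eqref{ChargeVEinstein} gives $\Delta Q$; equation \eqref{Omega} subtracted pointwise and integrated in $u$ from $\HH$ using the red-shift factor $\int_{-\infty}^u e^{2K_+(u'+v)}du'\lesssim e^{2K_+(u+v)}$ produces $\Delta(\partial_u\log\Omega^2)$ and (after a further $u$-integration) $\Delta\log\Omega^2$; the analogous treatment of \eqref{Radius2} and \eqref{eq:maxwellAu}/\eqref{eq:backgroundA} yields $\Delta\nu$ and $\Delta A_u$. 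The key numerology is that every ``source'' in these difference equations is either quadratic in the scalar field (like $|\partial_v\phi|^2-|\partial_v\phiNl|^2\lesssim v^{-s}|\partial_v\dphi|+v^{-s}|\Delta\lambda/r|$) or a metric difference times an $O(\Omega^2)$ Christoffel-type coefficient, so each integration produces exactly one power of $v^{1-2s}$ (no logarithmic loss, since we are away from $s=1$ breakpoints) and the red-shift region's exponential weight suppresses would-be growth in $u$.

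The scalar-field estimates \eqref{diff3.RS}--\eqref{diff4.RS} will be obtained from the difference of \eqref{Field2}, which schematically reads
\[
	D_u^{RN}\partial_v\dphi=-\tfrac{\lambda_{RN}}{r_{RN}}D_u^{RN}\dphi-\tfrac{\nu_{RN}}{r_{RN}}\partial_v\dphi+\mathcal{S},
\]
where $\mathcal S$ collects background-difference times field terms such as $(\lambda/r-\lambda_{RN}/r_{RN})D_u\phi$, $\Omega^2(Q/r^2-e/r_{RN}^2)\phi$, $(m^2/4)(\Omega^2-\Omega^2_{RN})\phi$, $(A_u-A_u^{RN})\lambda\phi/r$. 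By the bootstrap ansatz and the physical-space bounds of \cref{prop.RS.estimates.Moi}, each summand of $\mathcal S$ is bounded by $e^{2K_+(u+v)}v^{1-3s}$. Integrating $D_u^{RN}\partial_v\dphi$ in $u$ from $\HH$ (where $\partial_v\dphi\equiv0$ for $v\ge v_0+3$ by construction of $\phiNl$) against the red-shift weight then closes the bound on $\partial_v\dphi$ at order $v^{1-3s}$, and a subsequent $v$-integration from $v_0$ using the $\underline C_{in}$ data gives $|\dphi|\lesssim v^{1-3s}$. The symmetric $u$-equation \eqref{Field4} similarly gives the red-shifted estimate \eqref{diff3.RS} on $\partial_u\dphi$; here the key point is that on $\HH$ we have $\partial_U\dphi=\partial_U\phi-\partial_U\phiNl$ bounded by $\Omega^2_H v^{-s}$ via \eqref{phi.HH}, which converts to the $e^{2K_+(u+v)}v^{-s}$ seed in $(u,v)$ coordinates.

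The main obstacle will be the gauge bookkeeping for $A_u-A_u^{RN}$ and the resulting charged source terms: the gauge $A_v\equiv0$ ties $A_u$ to a $v$-integral of $Q\Omega^2/r^2$, and one must show that the perturbative difference with $A_u^{RN}$ genuinely enjoys the red-shift factor $e^{2K_+(u+v)}$ rather than only polynomial smallness. This is handled by integrating the difference of \eqref{eq:maxwellAu} and \eqref{eq:backgroundA} in $v$ from $v_0$ (not in $u$), using the gauge choice \eqref{eq:gaugeforAU} as pointwise data on $\underline C_{in}$ and then using that $\Omega^2\lesssim e^{2K_+(u+v)}$ sits inside the integrand throughout $\Rs$. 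Once the geometric estimates, the scalar estimates, and the charged-gauge estimates are closed together, choosing $\Delta_1$ large enough absorbs all the coupling constants via the smallness $e^{-2K_+\Delta_1}$, improving the bootstrap and delivering \eqref{diff1.RS}--\eqref{diff4.RS}.
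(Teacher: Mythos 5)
Your overall scheme---bootstrap the differences in $\Rs$, subtract the background equations, seed from $\HH$ and $\underline C_{in}$, and close using the smallness of $e^{-2K_+\Delta}$---is the same as the paper's. But the integration scheme as you have written it is directionally confused in a way that loses the claimed $v$-decay for precisely the quantities that do not carry an exponential weight. The key point you are missing is this: the rates $v^{1-2s}$ (resp.\ $v^{1-3s}$) in \eqref{diff1.RS} and \eqref{diff4.RS} can only be inherited from the event-horizon seed (\eqref{Q.HH}, \eqref{r.HH}, \eqref{lambda.HH}, \eqref{HH.gauge}, and $\dphi|_{\HH}=0$ for $v\geq v_0+3$) by integrating a $\partial_u$- (or $\partial_U$-) transport equation \emph{in $u$ from $\HH$}, where the total $u$-length contributes only a factor $\int_{-\infty}^u e^{2K_+(u'+v)}\d u'\lesssim e^{-2K_+\Delta}$. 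Integrating a $\partial_v$-equation in $v$ from $\underline C_{in}=\{v=v_0\}$, as you propose for $\Delta\lambda$, $\Delta Q$ and $|\dphi|$, only yields the constant $\int_{v_0}^{\infty}v'^{-2s}\d v'\sim v_0^{1-2s}$ (resp.\ $v_0^{2-3s}$), which is $O(1)$ and not decaying in $v$. Concretely: (i) \eqref{RaychV} is a $\partial_v$-equation, so ``$v$-integration from $\HH$'' does not parse ($\HH$ is an outgoing cone); the paper instead integrates \eqref{Radius2} in $U$ from $\HH$ to control $r\lambda-r_{RN}\lambda_{RN}$, then gets $\partial_U r-\partial_U r_{RN}$ from a $v$-integration of \eqref{Radius} anchored at \eqref{eq:gaugeforU} (here the source \emph{does} carry the weight $e^{2K_+v}$, which is why that particular $v$-integration is harmless), and finally integrates in $U$ from $\HH$ again. (ii) Integrating \eqref{Omega} in $u$ produces $\partial_v\log\Omega^2$, not $\partial_u\log\Omega^2$; the bound \eqref{diff2.RS} on $\partial_u\log(\Omega^2/\Omega^2_{RN})$ comes from integrating $\partial_v\partial_U\log(\Omega^2/\Omega^2_{RN})\lesssim e^{2K_+v}v^{1-2s}$ in $v$ from $\underline C_{in}$ (where $\partial_U\log\Omega^2$ is bounded), and only then does a $U$-integration from $\HH$ give $\Delta\log\Omega^2\lesssim v^{1-2s}+e^{-2K_+\Delta}v^{1-2s}$, closing the bootstrap. (iii) Your final step ``a subsequent $v$-integration from $v_0$ gives $|\dphi|\lesssim v^{1-3s}$'' is false for the same reason; the paper obtains $|\dphi|\lesssim(1+B_1)e^{2K_+(u+v)}v^{1-3s}$ by integrating the bootstrapped bound \eqref{diff3.RS} on $\partial_u\dphi$ in $u$ from $\HH$, where $\dphi$ vanishes identically for $v\geq v_0+3$.

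Two smaller remarks. The bound $|Q-e|\leq D_R v^{1-2s}$ is already part of the quoted estimates \eqref{QRedShiftprop}, so it need not be re-derived at all; if you do re-derive it, you must use the $\partial_u$-equation \eqref{chargeUEinstein} from $\HH$, not \eqref{ChargeVEinstein} from $\underline C_{in}$. Your treatment of $A_u-A_u^{RN}$ (integrate the difference of \eqref{eq:maxwellAu} in $v$ from $v_0$ with the gauge \eqref{eq:gaugeforAU}, exploiting the exponential weight inside the integrand) is correct and matches the paper; the distinction between that case and the un-weighted ones is exactly the distinction your write-up needs to make systematically.
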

	\begin{proof}
		
		First, recall that $r_{RN} \equiv r_+(M,e)$, $Q_{RN}\equiv e$, $\varpi_{RN}\equiv M$, and $\lambda_{RN} \equiv 0$ on the event horizon $\HH$, by definition.
		Lastly, recall that $A_u = A_u^{RN} $ on $\underline{C}_{in}$ by the gauge choice \eqref{gaugeAponctuelle}. Recalling that $D_H>0$ is defined in   \cref{prop.HH.estimates.Moi}, we bootstrap the following estimates: \begin{equation} \label{B.diff1.RS}
		|r(u,v)-r_{RN}(u,v)|  \leq 4 D_H \cdot v^{1-2s},
		\end{equation}
		\begin{equation} \label{B.diff2.RS}		|\log(\Omega^2)(u,v)-\log(\Omega^2_{RN})(u,v)| \leq 4 B_H \cdot v^{1-2s}	\end{equation}
		for $B_H(M,e,q_0,m^2,D_1,D_2)>0$ defined as the  constant in \eqref{HH.gauge} such that  $|\log(\frac{\Omega^2(0,v)}{\Omega^2_{RN}(0,v)})| \leq B_H \cdot v^{1-2s}$ in the new coordinate $v$. Plugging these bootstraps into \eqref{Radius2} and using \eqref{phi.HH}, \eqref{Omega.new}, we find that \begin{align*} |\partial_u (r\partial_{v}r -r_{RN} \partial_{v} r_{RN})| = |\partial_u  (r\lambda-r_{RN}\lambda_{RN})| &\ls |\Omega^2-\Omega^2_{RN}|+ \Omega^2 \cdot \left(|\phi|^2+ |r-r_{RN}|+ |Q-e|\right) \\ &  \ls  e^{2K_+(u+v)} \cdot v^{1-2s},\end{align*}
		where we used $|\Omega^2- \Omega^2_{RN}| \ls \Omega^2 \cdot |\log(\frac{\Omega^2}{\Omega^2_{RN}})| \ls   \Omega^2 \cdot v^{1-2s}$. This is also equivalent (recalling \eqref{eq:relationomegas}) to  $$ |\partial_U  (r\lambda-r_{RN}\lambda_{RN})| \ls  e^{2K_+ v} \cdot v^{1-2s}.$$
		Integrating the above using \eqref{lambda.HH} we get  \begin{equation} \label{rlambda.diff}
		|r\lambda-r\lambda_{RN}|\ls v^{-2s}+   \Omega^2 \cdot v^{1-2s}.
		\end{equation} Writing now the difference for \eqref{Radius}, taking advantage of \eqref{rlambda.diff} and the bootstraps gives 
		$$|\partial_v \partial_{U} (r- r_{RN})|  \ls |\lambda_{RN}| \cdot  | \partial_U r -\partial_U r_{RN}| + e^{2K_+v } v^{-2s} +  e^{2K_+v } v^{1-2s}.$$
		
		Integrating in $v$ using a Gronwall estimate and  the boundedness of $\partial_U r$ on $\underline{C}_{in}$ we get 
		$$ |\partial_U r-\partial_U r_{RN}| \ls 1+    e^{2K_+v} \cdot v^{1-2s} ,$$ which, upon integrating in $U$ this time and using \eqref{r.HH} gives $$ | r- r_{RN}| \leq D_H \cdot v^{1-2s}+ D \cdot e^{2K_+(u+v)} e^{-2K_+ v}+  D  \cdot e^{2K_+(u+v)} \cdot v^{1-2s} ,$$ where $D(M,e,q_0,m^2,D_1,D_2)>0$. Choosing $\Delta$ sufficiently large 
		such that \begin{align*}e^{2K_+(u+v)} \leq e^{-2K_+ \Delta} <  D^{-1} \cdot D_H \end{align*} allows us to retrieve bootstrap \eqref{B.diff1.RS}.

		Similarly plugging \eqref{phivRS}, \eqref{phiURS}, \eqref{B.diff2.RS} and the previously proven estimates into \eqref{Omega} we get  \begin{align*}
		& |\partial_u \partial_v (\log(\Omega^2)-\log(\Omega^2_{RN}))|   \ls |D_u \phi| \cdot |\partial_v \phi| + |\Omega^2-\Omega^2_{RN}| + \Omega^2 \cdot v^{1-2s} \ls \Omega^2  \cdot   v^{1-2s}, \end{align*} or equivalently using \eqref{Omega.new} \begin{align*}
		& |\partial_v \partial_U (\log(\Omega^2)-\log(\Omega^2_{RN}))|  \ls e^{2K_+v} \cdot v^{1-2s}. \end{align*}
		Integrating in $v$ using the boundedness of  $\partial_U \log(\Omega^2)$ and  $\partial_U \log(\Omega^2_{RN})$ on $\underline{C}_{in}$ we get $$|\partial_U (\log(\Omega^2)-\log(\Omega^2_{RN}))| \ls 1+  e^{2K_+v} \cdot v^{1-2s} \ls e^{2K_+v} \cdot v^{1-2s} ,$$  from which we retrieve bootstrap \eqref{B.diff2.RS}, using the  smallness of $e^{-2K_+ \Delta}$ as we did above. 
		
		Now all bootstraps are closed and we continue with the proof of the claimed difference estimates. Taking the difference between \eqref{eq:maxwellAu} and its Reissner--Nordstr\"{o}m version, and integrating in $v$ using $A_u(u,v_0)-A_u^{RN}(u,v_0)=0$,  we obtain $$|A_u(u,v)-A_u^{RN}(u,v)|  \ls e^{2K_+ \cdot (u+v)} \cdot v^{1-2s}.$$

		For $\dphi$, we introduce a new bootstrap assumption (completely independently from the other bootstraps assumptions that have already been retrieved), which is true on $\underline{C}_{in}$ by assumption: \begin{equation} \label{B.R.dphi}
		|\partial_u\dphi|(u,v) \leq B_1 \cdot e^{2K_+(u+v)} \cdot v^{1-3s},
		\end{equation} for some $B_1>0$ large enough to be chosen later. Integrating in $u$ and using $|\dphi|\ls v^{1-3s}$ on the event horizon $\HH$ (since  $\dphi_{|\HH} \equiv 0$ for $v\geq 3$) gives \begin{equation} \label{B.R.dphi2}
		|\dphi|(u,v) \lesssim (1+B_1) \cdot e^{2K_+(u+v)} \cdot v^{1-3s} \lesssim  (1+B_1) \cdot e^{-2K_+\Delta} \cdot v^{1-3s}.
		\end{equation}
		
		Now we take the difference of \eqref{Field} obeyed by $\phi$ and the corresponding equation obeyed by $\phiNl$, namely
		
			\begin{align*}
	&	\partial_{u}\partial_{v} (\dphi) =-\frac{\partial_{u}\dphi\ \partial_{v}r}{r}-\frac{ \partial_{v}\dphi\ \partial_{u}r}{r} +\frac{ q_{0}i \Omega^{2}}{4r^{2}}Q\ \dphi
		-\frac{ m^{2}\Omega^{2}}{4}\dphi- i q_{0} A_{u}\frac{\dphi\  \partial_{v}r}{r}-i q_0 A_{u}\ \partial_{v}\dphi\\&  - \partial_{u}\phiNl\ [\frac{ \partial_{v}r}{r}-\frac{ \partial_{v}r_{RN}}{r_{RN}}] - \partial_{v}\phiNl\ [\frac{ \partial_{u}r}{r}-\frac{ \partial_{u}r_{RN}}{r_{RN}}] +[\frac{ q_{0}i \Omega^{2}}{4r^{2}}Q-\frac{ q_{0}i \Omega^{2}_{RN}}{4r^{2}_{RN}}e]\ \phiNl \\ &
		-\frac{ m^{2}[\Omega^{2}-\Omega^2_{RN}]}{4}\phiNl- i q_{0} [A_{u}\frac{ \partial_{v}r}{r}-A_{u}^{RN}\frac{ \partial_{v}r_{RN}}{r_{RN}}]\ \phiNl\ -i q_0 [A_{u}-A_{u}^{RN}]\ \partial_{v}\phiNl. \end{align*}

		 We get, using also \eqref{B.R.dphi}, \eqref{B.R.dphi2} and \eqref{Omega.new}   (note that one can write $\phiNl = \phi-\dphi$ and use \eqref{phivRS}, \eqref{phiURS} to bound $\phi$ and  \eqref{B.R.dphi}, \eqref{B.R.dphi2} to bound $\phiNl$) 
		\begin{align}
		&|\partial_u \partial_v \dphi| \ls  e^{2K_+(u+v)} \cdot ( 1+B_1) \cdot v^{1-3s} + v^{-s} \cdot e^{2K_+(u+v)}  \cdot|\partial_v \dphi|. \label{partialuvphi}	\end{align}  Integrating in $u$ and using Gronwall's estimate we get (recalling that  $|\partial_v\dphi|\ls v^{1-3s}$ on $\HH$) we get: $$ |\partial_v \dphi| \ls (1+ B_1\cdot  e^{-2K_+ \Delta}) \cdot v^{1-3s},$$ and using this in \eqref{partialuvphi} we get  	\begin{align*} 	&|\partial_v \partial_u \dphi| \ls e^{2K_+(u+v)}  \cdot ( 1+B_1) \cdot v^{1-3s} +  B_1 \cdot e^{-2K_+ \Delta} \cdot v^{1-4s}\cdot e^{2K_+(u+v)} . 	\end{align*}
		Integrating in $v$ this time, choosing $B_1$ appropriately and using the smallness of  $e^{-2K_+ \Delta}$, retrieves, together with another integration in $u$, bootstrap \eqref{B.R.dphi}, gives the claimed estimates on $\dphi$ and concludes the proof.
	\end{proof}
	\subsubsection{Difference estimates in the no-shift region}
	
	\begin{prop} \label{N.diff.proposition}
		There exists $C_N=C_N(M,e,q_0,m^2,s,D_1,D_2)>0$ such that the following estimates are satisfied for all $(u,v) \in \NN$: \begin{align*} 
		&	|r(u,v)-r_{RN}(u,v)| +  |Q(u,v)-e|+ |\log(\Omega^2)(u,v)-\log(\Omega^2_{RN})(u,v)| \leq C_N \cdot v^{1-2s}, \\ & 
		|\partial_v\log(\Omega^2)(u,v)-\partial_v\log(\Omega^2_{RN})(u,v)|+ |\partial_u\log(\Omega^2)(u,v)-\partial_u\log(\Omega^2_{RN})(u,v)| \\ &+  |\lambda(u,v)-\lambda_{RN}(u,v)|+		|\nu(u,v)-\nu_{RN}(u,v)|	+	|A_u(u,v)-A_u^{RN}(u,v)| \leq C_N \cdot  v^{1-2s},\\ & 
		|\dphi|	+	|\partial_u \dphi|+ 	|\partial_v \dphi| \leq C_N \cdot  v^{1-3s}.
		\end{align*}
	\end{prop}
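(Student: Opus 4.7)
The strategy is to use the boundary values on $\gamma_R := \{u+v = -\Delta\}$ (the shared boundary of $\mathcal{R}$ and $\mathcal{N}$) as ``initial data,'' supplied by Proposition 5.5, and then to propagate the claimed difference estimates across $\mathcal{N}$ by a bootstrap/Gronwall argument. The key geometric point is that in $\mathcal{N}$ the quantity $u+v$ is confined to the bounded interval $[-\Delta, \Delta_N]$, so at fixed $v$ the $u$-range of $\mathcal{N}$ has length at most $\Delta+\Delta_N = O(1)$, and at fixed $u$ the $v$-range is similarly bounded. Consequently $\Omega^2$, $r$, $|\partial_u\log\Omega^2|$ and $|\partial_v\log\Omega^2|$ are all uniformly bounded above and below in $\mathcal{N}$ by \cref{prop.NN.estimates.Moi}, and any integration in $u$ or $v$ across $\mathcal{N}$ produces only bounded (not exponentially growing) Gronwall constants. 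Moreover on $\gamma_R$ one has $e^{2K_+(u+v)} = e^{-2K_+\Delta}= O(1)$, so \emph{every} difference estimate from \cref{RS.diff.prop} restricted to $\gamma_R$ takes the uniform form $O(v^{1-2s})$ (metric/Maxwell quantities) or $O(v^{1-3s})$ ($\Delta\phi$ quantities).

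The first step is to bootstrap the claimed estimates with a large constant $C_N$ to be determined. Then one takes the differences of the evolution equations \eqref{Radius2}, \eqref{Omega}, \eqref{ChargeVEinstein2}, \eqref{eq:maxwellAu} between $(r,\Omega^2,Q,A_u)$ and their Reissner--Nordstr\"om counterparts, using the sharp decay estimates from \cref{prop.NN.estimates.Moi} for $\phi,\phiNl$ and the boundedness of all background quantities. For instance, the difference of \eqref{Radius2} gives
\begin{equation*}
|\partial_u\partial_v(r-r_{RN})| \lesssim |\Omega^2 - \Omega^2_{RN}| + \Omega^2\bigl(|r-r_{RN}| + |Q-e| + |\phi|^2\bigr) \lesssim v^{1-2s}
\end{equation*}
using the bootstraps and $|\phi|^2\lesssim v^{-2s}$. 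Integrating in $u$ from $\gamma_R$ (where $|\lambda-\lambda_{RN}|\lesssim v^{1-2s}$), then once more in $u$, controls $|\lambda-\lambda_{RN}|$ and $|r-r_{RN}|$ by $O(v^{1-2s})$ over the bounded $u$-interval in $\mathcal{N}$. Similarly, the difference of \eqref{Omega} picks up the source $2\Re(D_u\Delta\phi\,\overline{\partial_v\phi}) + 2\Re(\overline{D_u^{RN}\phiNl}\,\partial_v\Delta\phi) = O(v^{-s})\cdot O(v^{1-3s}) = O(v^{1-4s})$, together with the algebraic sources $O(v^{1-2s})$, and Gronwall in $\mathcal{N}$ yields $|\partial_u\partial_v(\log\Omega^2 - \log\Omega^2_{RN})|\lesssim v^{1-2s}$, hence by double integration $|\log\Omega^2-\log\Omega^2_{RN}|\lesssim v^{1-2s}$ and correspondingly for its first derivatives. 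The Maxwell equation \eqref{ChargeVEinstein2} and \eqref{eq:maxwellAu} are integrated analogously, using that the source $r^2 q_0 \Im(\phi\overline{\partial_v\phi})$ is $O(v^{-2s})$, so the bounded $v$-integration across $\mathcal{N}$ propagates $|Q-e|\lesssim v^{1-2s}$ from $\gamma_R$ and similarly $|A_u - A_u^{RN}|\lesssim v^{1-2s}$.

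The final step is the scalar field difference $\Delta\phi$. Subtracting the linear equation \eqref{eq:linearequationsection5} from the nonlinear \eqref{Field} produces
\begin{equation*}
\partial_u\partial_v\Delta\phi = \text{linear in } \Delta\phi \text{ (with bounded coefficients)} + \mathcal{S},
\end{equation*}
where the source $\mathcal{S}$ collects all terms involving the metric, charge, and electromagnetic gauge differences multiplied by $\phi,\phiNl$ and their derivatives. Each such term is estimated, using the metric/Maxwell difference estimates just obtained together with $|\phi|+|D_v\phi|+|D_u\phi|\lesssim v^{-s}$, to be of order $v^{1-2s}\cdot v^{-s} = v^{1-3s}$; hence $|\mathcal{S}|\lesssim v^{1-3s}$. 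A two-dimensional Gronwall argument in $\mathcal{N}$ (whose $u$- and $v$-extents are uniformly bounded) then closes the bootstrap for $|\Delta\phi|+|\partial_u\Delta\phi|+|\partial_v\Delta\phi|\lesssim v^{1-3s}$, using the boundary values on $\gamma_R$ from \cref{RS.diff.prop}. The main technical point, and the only place where care is needed, is the bookkeeping of the electromagnetic gauge terms: the factor $e^{-iq_0\int A_u}$ appearing in the $D_u$ derivatives must be compared between $\phi$ and $\phiNl$ via the integrated estimate on $|A_u - A_u^{RN}|$, but since this integral is taken over the bounded $u$-interval of $\mathcal{N}$, it produces only an $O(v^{1-2s})$ contribution to the phase, which is absorbed into the source $\mathcal{S}$ without loss.
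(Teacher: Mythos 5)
Your overall architecture coincides with the paper's: take differences of the evolution equations, feed in the boundary values on $\{u+v=-\Delta\}$ supplied by \cref{RS.diff.prop}, exploit that all background quantities are $O(1)$ in $\NN$ by \cref{prop.NN.estimates.Moi}, and your bookkeeping of the source terms (quadratic scalar-field terms of size $v^{-2s}$, sources of size $v^{1-2s}\cdot v^{-s}=v^{1-3s}$ for $\dphi$, the gauge phase contributing only $O(v^{1-2s})$) is correct and matches what the paper uses. The gap is in the closing mechanism. In $\Rs$ the bootstrap ``assume $\leq 4C$, derive $\leq 3C$'' closes because every integrated contribution carries the small factor $e^{2K_+(u+v)}\leq e^{-2K_+\Delta}$. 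In $\NN$ there is no such small parameter: if you bootstrap, say, $|\lambda-\lambda_{RN}|\leq 4C_N v^{1-2s}$ over all of $\NN$ and integrate $\partial_u\partial_v(r-r_{RN})$ in $u$ from the past boundary, you recover $|\lambda-\lambda_{RN}|\leq C_{\gamma_R}v^{1-2s}+E\cdot(\Delta+\Delta_N)\cdot C_N v^{1-2s}$ with $E\cdot(\Delta+\Delta_N)$ bounded but \emph{not} small (indeed $\Delta,\Delta_N$ are chosen large), so the bootstrap does not improve no matter how large $C_N$ is. Nor does a scalar Gronwall inequality apply directly, because the system is coupled with integrations in different null directions ($\lambda$-differences come from $u$-integration, $\nu$-, $Q$- and $A_u$-differences from $v$-integration, $\log\Omega^2$-differences from both). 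Your appeal to a ``two-dimensional Gronwall'' names the right object but is never instantiated, and is invoked only for the $\dphi$ step.

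The paper supplies exactly the missing device: it partitions $\NN$ into slabs $\mathcal{N}_k=\{-\Delta+(k-1)\epsilon\leq u+v\leq -\Delta+k\epsilon\}$ of width $\epsilon$ chosen so small that $E\cdot\epsilon<1$, and runs a finite induction in $k$ with the constant doubling at each step, $C_k=2^kB_N$: within one slab the bootstrap with constant $4C_k$ closes precisely because the integration range has length $\epsilon$, and after the finitely many slabs needed to cross $\NN$ the accumulated constant $2^N B_N$ is still controlled by the admissible parameters. This slab iteration is the rigorous implementation of the two-dimensional Gronwall for the coupled null system, and it is the one point where the no-shift argument genuinely differs from the red-shift one; your proposal would need to incorporate it (or an equivalent light-cone iteration) to be complete. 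Once the metric and Maxwell differences are closed this way, your treatment of $\dphi$ goes through as in \cref{RS.diff.prop}, exactly as the paper asserts.
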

	
	\begin{proof}
		The proof  consists of combination of the proof of   \cref{RS.diff.prop} with that of in \cite[Proposition 4.7]{Moi}: we partition $\NN$ into smaller regions $\mathcal{N}_k:=\{ -\Delta+ (k-1) \epsilon \leq u+v \leq -\Delta+ k \epsilon\}$, for $k \in [[1,N]]$ and $N \cdot \epsilon  = \Delta'$. We will prove the result by finite induction on $k$: the induction hypothesis is that the following estimates hold in  $\mathcal{N}_k$: \begin{equation} \label{Ind.diff1.N}
		|r(u,v)-r_{RN}(u,v)| + |\lambda(u,v)-\lambda_{RN}(u,v)|+ |\log(\Omega^2)(u,v)-\log(\Omega^2_{RN})(u,v)| \leq C_k \cdot v^{1-2s},
		\end{equation}
		\begin{equation} \label{Ind.diff2.N}
		|\nu(u,v)-\nu_{RN}(u,v)|	+	|A_u(u,v)-A_u^{RN}(u,v)| \leq C_k \cdot  v^{1-2s},
		\end{equation}
		where $C_k = 2^k  \cdot B_N$ for a large enough constant $B_N>0$ to be determined later. The estimates of   \cref{RS.diff.prop} render the initialization of the induction true for $B_N$ large enough. So we assume that \eqref{Ind.diff1.N}, \eqref{Ind.diff2.N} hold for $\mathcal{N}_k$ and we prove them in $\mathcal{N}_{k+1}$. As before we bootstrap \begin{equation} \label{B.Ind.diff1.N}
		|r(u,v)-r_{RN}(u,v)| + |\lambda(u,v)-\lambda_{RN}(u,v)|+ |\log(\Omega^2)(u,v)-\log(\Omega^2_{RN})(u,v)| \leq 4 C_k \cdot v^{1-2s},
		\end{equation}
		\begin{equation} \label{B.Ind.diff2.N}
		|\nu(u,v)-\nu_{RN}(u,v)|	+	|A_u(u,v)-A_u^{RN}(u,v)| \leq4 C_k \cdot  v^{1-2s},
		\end{equation}
		We treat one typical term, to show the specificity of the no-shift region $\NN$ compared to $\Rs$: under the bootstraps and \eqref{Omega.r.boring}, \eqref{QNS2} we have $$ |\partial_u \partial_v r | \ls  (1+C_k) \cdot v^{1-2s} \sim C_k \cdot |u|^{1-2s} .$$ Upon integration in the $u$ direction,  it gives, using \eqref{Ind.diff1.N} in the past: for some $E(M,e,q_0,m^2,D_1,D_2)>0$ $$ |\lambda-\lambda_{RN}| \leq C_k \cdot v^{1-2s} +E \cdot \epsilon \cdot C_k \cdot v^{1-2s} ,$$ thus for $\epsilon>0$ sufficiently small, so that $E \cdot \epsilon<1$, we close the part of bootstrap \eqref{B.Ind.diff1.N} relative to $\lambda-\lambda_{RN}$. The other terms are addressed similarly, we omit the details. Such estimates allow us to retrieve bootstraps \eqref{B.Ind.diff1.N}, \eqref{B.Ind.diff2.N} and prove the induction hypothesis. Once this is done, we can prove difference estimates for $\dphi$ exactly as in    \cref{RS.diff.prop}.
	\end{proof}
	\subsubsection{Difference estimates in the early blue-shift region}
	
	\begin{prop}  \label{EB.diff.proposition}
		There exists a constant $C_E = C_E(M,e,q_0,m^2,s,D_1,D_2)>0$	 such that the following estimates are satisfied for all $(u,v) \in \EB$: \begin{align} \label{diff1.EB}
		&	|r(u,v)-r_{RN}(u,v)| \leq  C_E \cdot v^{1-2s},\\ & \label{diff2.EB}
		|\nu(u,v)-\nu_{RN}(u,v)|	+	|A_u(u,v)-A_u^{RN}(u,v)| + |\lambda(u,v)-\lambda_{RN}(u,v)| \leq  C_E \cdot  v^{1-2s}, \\ &
		\label{diff2.5.EB}
		|\partial_u\log(\Omega^2)(u,v)-\partial_u\log(\Omega^2_{RN})(u,v)|+ |\partial_v\log(\Omega^2)(u,v)-\partial_v\log(\Omega^2_{RN})(u,v)| \leq C_E \cdot  v^{1-2s}, \\ \label{diff3.EB}
		&	|\partial_u \dphi|+ 	|\partial_v \dphi| \leq  C_E \cdot  v^{1-3s},\\
		\label{diff4.EB}
		&	| \dphi| \leq C_E \cdot  v^{1-3s} \cdot \log(v),\\
		&	 \label{diff5.EB}
		|\log(\Omega^2)(u,v)-\log(\Omega^2_{RN})(u,v)| \leq C_E \cdot  v^{1-2s} \cdot \log(v).
		\end{align}
		
	\end{prop}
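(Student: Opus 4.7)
The plan is to adapt the bootstrap-plus-induction scheme of \cref{N.diff.proposition} to the early blue-shift region $\EB$, taking into account two key complications: (i) $\EB$ has extent $\sim \log v$ in the $u+v$ direction rather than a bounded width as for $\NN$, and (ii) $\log \Omega^2$ is no longer uniformly bounded, although $\Omega^2 \lesssim e^{2K_- \Delta_N} \ll 1$ throughout $\EB$ by \eqref{Omegatransition}. I would partition $\EB$ into strips $\mathcal{EB}_k := \{\Delta_N + (k-1)\epsilon \leq u+v \leq \Delta_N + k\epsilon \}$ for a small fixed $\epsilon > 0$ with $k=1,\dots,N(v)\sim \log(v)/\epsilon$, and propagate the difference estimates inductively, initialized by the bounds inherited from $\NN$ via \cref{N.diff.proposition}.

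The essential observation, which prevents disastrous accumulation of the constants over the $\sim \log v$ strips, is that after differencing \eqref{Radius}, \eqref{Omega}, \eqref{eq:maxwellAu} with their Reissner--Nordstr\"om counterparts, the forcing terms---schematically $|D_u\phi||\partial_v\phi|$, $|\lambda||\nu|(1+|\phi|^2)$, $\Omega^2|Q-e|$, $\Omega^2|\log(\Omega^2/\Omega^2_{RN})|$ and differences of geometric quantities---decay at rate $\lesssim v^{-2s}$ uniformly throughout $\EB$ by \cref{prop.EB.estimates.Moi}, and feedback from the bootstrap enters only with additional small factors (such as $\Omega^2 \lesssim e^{2K_-\Delta_N}$ or $v^{-s}$). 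Consequently, integrating the differenced equations over a strip of width $\epsilon$ produces an additive contribution $O(\epsilon \cdot v^{1-2s})$ to the constants rather than a multiplicative factor, and summing over $N(v)\sim \log v/\epsilon$ strips still gives a bound $\lesssim v^{1-2s}$ for the derivative differences \eqref{diff1.EB}, \eqref{diff2.EB}, \eqref{diff2.5.EB} with no $\log v$ loss. The weaker estimate \eqref{Omega.ueless} is used only as a starting point for $\partial_v\log\Omega^2 - 2K$; the cancellations in the difference with the Reissner--Nordstr\"om evolution recover the sharper rate $v^{1-2s}$ claimed in \eqref{diff2.5.EB}.

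The additional $\log v$ factor in \eqref{diff5.EB} for $\log\Omega^2 - \log\Omega^2_{RN}$ then arises only upon the final integration of $|\partial_v (\log\Omega^2 - \log\Omega^2_{RN})|\lesssim v^{1-2s}$ along an outgoing null cone from the future boundary $\gamma_N$ of $\NN$ to the point of interest in $\EB$, an integration that takes place over a $v$-range of size $\sim |K_-|^{-1}\log v$. The same mechanism produces the $\log v$ factor in \eqref{diff4.EB}.

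The main obstacle will be closing the scalar field estimates \eqref{diff3.EB} and \eqref{diff4.EB}: after differencing \eqref{Field} against its linear analog on $(g_{RN},A^{RN})$, one obtains a wave-type equation for $\dphi$ whose source terms couple $\partial_{u,v}\dphi$ with the metric/Maxwell differences $A_u - A_u^{RN}$, $\nu - \nu_{RN}$, $\lambda - \lambda_{RN}$, $\log\Omega^2 - \log\Omega^2_{RN}$, and $Q-e$ just controlled. I would bootstrap $|\partial_u \dphi| + |\partial_v \dphi| \leq B \cdot v^{1-3s}$ \emph{without} a $\log v$ factor and close via Gr\"onwall-type estimates in the $u$ and $v$ directions, exploiting the uniform bounds $|\kappa-1|, |\iota-1|\leq 1/3$ from \eqref{kappatransition}, \eqref{iotatransition} to prevent geometric amplification of errors and again using the additive-rather-than-multiplicative nature of the strip estimates. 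The condition $s > \tfrac34 > \tfrac23$ is crucial here to keep $v^{1-3s}$ integrable in $v$, so that the bootstrap does not generate any logarithms on the derivatives; the $\log v$ in \eqref{diff4.EB} appears only upon the final integration $|\dphi|\lesssim |\dphi|_{\gamma_N} + \int |\partial_v \dphi|\d v$ across the $\log v$-wide extent of $\EB$.
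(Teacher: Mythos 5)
Your scheme has a genuine gap, and it is arithmetical before it is analytical: an additive contribution of size $O(\epsilon\, v^{1-2s})$ per strip, summed over $N(v)\sim \log(v)/\epsilon$ strips, gives $O(v^{1-2s}\log v)$, not $O(v^{1-2s})$, so the strip decomposition cannot by itself produce the log-free bounds \eqref{diff1.EB}, \eqref{diff2.EB}, \eqref{diff2.5.EB}. The real difficulty is recovering the \emph{zeroth-order} differences from the first-order ones: integrating $\partial_u(r-r_{RN})=\nu-\nu_{RN}$ (or $\lambda-\lambda_{RN}$ in $v$) over the $\log v$-sized extent of $\EB$ unavoidably loses a logarithm, since $|\nu-\nu_{RN}|\ls\min(\Omega^2,v^{1-2s})$ and the sub-region where $\Omega^2\gtrsim v^{1-2s}$ still has length $\sim\log v$. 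The paper's proof avoids this with two devices absent from your proposal. First, $r-r_{RN}$ is \emph{not} obtained by integration at all but algebraically: differencing the identities $\lambda\kappa^{-1}=\nu\iota^{-1}=\frac12-\frac{\varpi}{r}+\frac{Q^2}{2r^2}$ and $\lambda_{RN}=\nu_{RN}=\frac12-\frac{M}{r_{RN}}+\frac{e^2}{2r_{RN}^2}$ from \eqref{mu}, one solves for $r-r_{RN}$ because its coefficient $\frac{2Mrr_{RN}-e^2(r+r_{RN})}{2r^2r_{RN}^2}$ is bounded away from zero near $r=r_-$ by subextremality ($2Mr_-^2<2e^2r_-$); the dangerous term $\nu\cdot\log(\Omega^2/\OmegaRN^2)$ is harmless since $|\nu|\ls\Omega^2$. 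Second, every $u$-integration that would otherwise pay the interval length is performed on an $\Omega^\eta$-weighted quantity: since $\partial_u\log\Omega^2<0$ in $\EB$, one has $\partial_u(\Omega^\eta|f|)\le\Omega^\eta|\partial_u f|$ and $\int\Omega^\eta\,\d u\ls\eta^{-1}\Omega^\eta(u_{\gamma_N}(v),v)$ is controlled by a boundary value rather than by $\log v$. This is how the bootstrap $\Omega^2|\log(\Omega^2/\OmegaRN^2)|\ls v^{1-3s}$ is closed, which in turn is what makes $|\Omega^2-\OmegaRN^2|$ integrable without loss in the $\nu-\nu_{RN}$, $\lambda-\lambda_{RN}$ estimates.

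The same omission breaks your scalar-field bootstrap. Carrying only $|\partial_u\dphi|+|\partial_v\dphi|\le B v^{1-3s}$, integration yields merely $|\dphi|\ls Bv^{1-3s}\log v$, and the $\Omega^2\dphi$ source terms in the differenced equation \eqref{Field} (mass and charge terms, plus $\lambda\partial_u\dphi/r$, $\nu\partial_v\dphi/r$ with $|\lambda|,|\nu|\ls\Omega^2$) then contribute, after $v$-integration, $\ls B\,e^{2K_-\Delta_N}\log v\cdot v^{1-3s}$ to $|\partial_u\dphi|$; no choice of $\Delta_N$ makes $e^{2K_-\Delta_N}\log v$ uniformly small as $v\to+\infty$, so the log-free bootstrap does not close. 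The paper instead bootstraps the weighted quantity $\Omega|\dphi|\le B'v^{1-3s}$ alongside $|\partial_v\dphi|\le B'v^{1-3s}$, so the source is $\Omega\cdot(\Omega|\dphi|)\ls B'\Omega v^{1-3s}$, whose $v$-integral carries the genuinely small prefactor $\Omega(u,v_{\gamma_N}(u))$. Your diagnosis of where the logarithms in \eqref{diff4.EB} and \eqref{diff5.EB} come from is correct, but the log-free estimates are the hard part of the proposition and your argument does not reach them.
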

	
	\begin{proof} Note that in $\EB$, as in $\NN$, we have $v \sim |u|$ and that the size of the region is logarithmic i.e.\ $u-u_{\gamma_N}(v) \ls \log(v) \sim \log(|u|)$ and  $v-v_{\gamma_N}(u) \ls \log(|u|)\sim \log(v)$. As before, we start with bootstraps:
		
		\begin{equation} \label{B.diff2.EB}
		|\lambda(u,v)-\lambda_{RN}(u,v)|+ |\nu(u,v)-\nu_{RN}(u,v)|
		\leq 4 C_N \cdot v^{1-2s},
		\end{equation}
		\begin{equation} \label{B.diff3.EB}		\Omega^2(u,v) \cdot |\log(\Omega^2)(u,v)-\log(\Omega^2_{RN})(u,v)|\leq 4C_N \cdot  v^{1-3s},		\end{equation}
		\begin{equation} \label{B.diff4.EB} |r(u,v)-r_{RN}(u,v)|	 \leq B_N \cdot v^{1-2s} ,	\end{equation}  for some $B_N > C_N$ to be determined later. The set of $(u,v)$ for which these bootstraps are satisfied is non-empty by the estimates of   \cref{N.diff.proposition}.

		Retrieving the bootstrap on $r-r_{RN}$ is the most delicate.  We use \eqref{mu} and write the difference of the two identities below \begin{align*} 
		&	\lambda \cdot \kappa^{-1}= \nu \cdot \iota^{-1} =\frac{-2\lambda \nu}{\Omega^2}=\frac{1}{2}-\frac{\varpi}{r}+\frac{Q^2}{2r^2}, \\ &   \nu_{RN}=\lambda_{RN}=-\frac{\Omega^2_{RN}}{2}=\frac{1}{2}-\frac{M}{r_{RN}}+ \frac{e^2}{2r_{RN}^2}   	\end{align*} 	Thus, we have \begin{align*}	&(\lambda- \lambda_{RN}) \cdot \kappa^{-1}+ \lambda_{RN} \cdot (\kappa^{-1}-1)\\=\ & (\lambda- \lambda_{RN}) \cdot \kappa^{-1}+  (\nu_{RN}-\nu)+ \nu \cdot (1-e^{-\log(\Omega^2)+ \log(\Omega^2_{RN})}))\\=\ & -\frac{\varpi}{r} + \frac{Q^2}{2r^2}+  \frac{M}{r_{RN}}-\frac{e^2}{2r_{RN}^2}\frac{M-\varpi}{r}+ \frac{M}{r \cdot r_{RN}} \cdot (r-r_{RN})+\frac{Q^2-e^2}{2r^2}- \frac{e^2 \cdot (r+r_{RN})}{2r^2 \cdot r_{RN}^2} \cdot (r-r_{RN}),
		\end{align*} 
	 	hence, combined with the $(r-r_{RN})$ terms we have   \begin{align*} &
		\left(\frac{M}{r \cdot r_{RN}} - \frac{e^2 \cdot (r+r_{RN})}{2r^2 \cdot r_{RN}^2} \right)\cdot (r-r_{RN})= \frac{2M \cdot r \cdot r_{RN} - e^2 \cdot (r+r_{RN})}{2r^2 \cdot r_{RN}^2} \cdot (r-r_{RN})\\ &\;\;\;= (\lambda- \lambda_{RN}) \cdot \kappa^{-1}+  (\nu_{RN}-\nu)+ \nu \cdot (1-e^{-\log(\Omega^2)(u,v)+ \log(\Omega^2_{RN})}) +\frac{-M+\varpi}{r}+ \frac{-Q^2+e^2}{2r^2} .
		\end{align*}
		To conclude, we have to prove that the pre-factor of the left-hand-side $\frac{2M \cdot r \cdot r_{RN} - e^2 \cdot (r+r_{RN})}{r^2 \cdot r_{RN}^2}$ is bounded away from zero: for this, notice that, since $0<|e|< M$ we have $ r_-(M,e)= M-\sqrt{M^2-e^2} < \frac{e^2}{M}$,  which is equivalent  to  $$ 2M \cdot r_-^2 < 2e^2 \cdot r_-.$$ By \eqref{r.nodiff.EB} and choosing $\Delta_{N}$ sufficiently large,  there exists a small constant $\alpha(M,e)>0$ such that in $\EB$: $$ \left| \frac{2M \cdot r \cdot r_{RN} - e^2 \cdot (r+r_{RN})}{2r^2 \cdot r_{RN}^2} \right|> \alpha.$$
		Thus, as a consequence of bootstrap \eqref{B.diff2.EB} and \eqref{MTrans}, \eqref{QTrans}  and \eqref{kappatransition}, there exists $C_N'(M,e,q_0,m^2,s,D_1,D_2)>0$ such that
		$$ |r-r_{RN}| \leq C'_N \cdot  v^{1-2s} +   C'_N\cdot |\nu| \cdot  |\log(\Omega^2)- \log(\Omega^2_{RN})| \leq C'_N \cdot  v^{1-2s} +  2 C'_N \cdot 4 C_N \cdot v^{1-2s} < B_N \cdot v^{1-2s} ,$$ where we chose $B_N = 2 C'_N + 4 C'_N  \cdot 4 C_N$ for the last inequality to be true. Therefore, bootstrap \eqref{B.diff4.EB} is retrieved.
		
		Now we turn to bootstrap \eqref{B.diff3.EB}, which is equally delicate (because we want to avoid a logarithmic loss). As in   \cref{RS.diff.prop}, we write the difference between \eqref{Omega} satisfied by $\Omega^2$ and the analogous equation satisfied by $\Omega^2_{RN}$. Using also \eqref{QTrans} and bootstrap \eqref{B.diff2.EB}, \eqref{B.diff3.EB}, \eqref{B.diff4.EB} we obtain: \begin{align*}			    &|\partial_v\partial_u ( \log(\Omega^2)- \log(\Omega^2_{RN}))| \\ \ls\ & |D_u\phi|\cdot |\partial_v \phi| + \Omega^2 \cdot ( |Q-e|+ |r-r_{RN}|)+ |\Omega^2-\Omega^2_{RN}|+ |\lambda| \cdot |\nu - \nu_{RN}|   + |\lambda-\lambda_{RN}| \cdot |\nu| \\\ls\ & v^{-2s}+ \Omega^2 \cdot v^{1-2s} + \Omega^2 \cdot |\log(\frac{\Omega^2}{\Omega^2_{RN}})|+ \Omega^2 \cdot |\nu - \nu_{RN}|   +\Omega^2 \cdot |\lambda-\lambda_{RN}| \ls v^{-2s}+ \Omega^2 \cdot v^{1-2s},  			\end{align*}				where in the last line we have used \eqref{iotatransition}, \eqref{kappatransition} as $|\lambda|,\ |\nu| \ls \Omega^2$ and the usual inequality $|\Omega^2-\Omega^2_{RN}| \ls \Omega^2\cdot |\log(\frac{\Omega^2}{\Omega^2_{RN}})|$ (which is true because $\frac{\Omega^2_{RN}}{\Omega^2} \geq \frac{1}{10}$, an estimate which follows directly from \eqref{Omegatransition}). Integrating in $v$ (recall  the $v$-difference is of size $\log(v)$), we get, using   \cref{N.diff.proposition}: \begin{equation}	\label{partialu.omega.int}	    |\partial_u \log(\Omega^2)-\partial_u \log(\Omega^2_{RN})|(u,v) \ls v^{1-2s}.	\end{equation}
		
		Instead of integrating \eqref{partialu.omega.int} directly (and incur a logarithmic loss), we write an identity: for any $\eta>0$:	$$ \partial_u  \left[ \Omega^{\eta}\cdot (\log(\Omega^2)- \log(\Omega^2_{RN})) \right]= \Omega^{\eta}\cdot \frac{\eta}{2} \cdot \partial_u \log(\Omega^2) \cdot ( \log(\Omega^2)- \log(\Omega^2_{RN}))+  \Omega^\eta \cdot  \partial_u[ \log(\Omega^2)- \log(\Omega^2_{RN})], $$  from which we deduce, using also $\partial_u \log(\Omega^2)<0$ (see   \cref{prop.EB.estimates.Moi}): \begin{align*}			 &   \partial_u \left[ \Omega^{2\eta}  \cdot ( \log(\Omega^2)- \log(\Omega^2_{RN}))^2\right]   \\ =\ & 2 \eta \cdot \Omega^{2\eta} \cdot \partial_u \log(\Omega^2) \cdot ( \log(\Omega^2)- \log(\Omega^2_{RN}))^2  +  2\Omega^\eta \cdot \partial_u[ \log(\Omega^2)- \log(\Omega^2_{RN})] \cdot ( \log(\Omega^2)- \log(\Omega^2_{RN})) \\ \leq\  & 2\Omega^{2\eta} \cdot |\partial_u[ \log(\Omega^2)- \log(\Omega^2_{RN})]| \cdot | \log(\Omega^2)- \log(\Omega^2_{RN})| ,	\end{align*}  which in turn implies, using \eqref{partialu.omega.int}:  \begin{align*}			 &   \partial_u \left[ \Omega^{\eta}  \cdot | \log(\Omega^2)- \log(\Omega^2_{RN})|\right]    =  \Omega^{\eta} \cdot |\partial_u[ \log(\Omega^2)- \log(\Omega^2_{RN})]| \ls \Omega^{\eta} \cdot v^{1-2s}.	\end{align*} 
		
		Integrating  the above in $u$ using $\partial_u \log(\Omega^2) \in (3K_-,K_-)$ (see   \cref{prop.EB.estimates.Moi}) and the bounds from   \cref{N.diff.proposition}, we get for some $E'(M,e,q_0,m^2,s,D_1,D_2)>0$ \begin{equation}		\label{prelim.0.5.EB}
		\Omega^{\eta}(u,v) \cdot| \log(\Omega^2)(u,v)-\log(\Omega^2_{RN})(u,v)| \leq  C_N \cdot v^{1-2s}+  E' \cdot \eta^{-1} \cdot 	\Omega^{\eta}(u_{\gamma_{\mathcal{N}}}(v),v)\cdot  v^{1-2s}.
		\end{equation} 
		
		Applying \eqref{prelim.0.5.EB} for $\eta=2$, choosing $\Delta_N$ large enough so that
		$\Omega^{\eta}(u_{\gamma_{\mathcal{N}}}(v),v) \approx e^{2K_- \Delta_N}< \frac{C_N}{10 E'} $ retrieves bootstrap \eqref{B.diff3.EB}.
		
		Retrieving bootstrap \eqref{B.diff2.EB} is done similarly: we integrate the difference between \eqref{Radius} satisfied by $r$ and the analog satisfied by $r_{RN}$, using   \cref{N.diff.proposition}, and we prove:	\begin{equation*} 
		|\lambda(u,v)-\lambda_{RN}(u,v)|+ |\nu(u,v)-\nu_{RN}(u,v)|
		\leq 3 C_N \cdot v^{1-2s},
		\end{equation*} which closes all the  bootstrap assumptions.

		Now we turn to the rest of the differences estimates claimed in the statement of the proposition. Integrating the differences into \eqref{eq:maxwellAu}, \eqref{Omega} as we did in   \cref{RS.diff.prop} gives straightforwardly \begin{align} & |A_u(u,v)-A_u^{RN}(u,v)| \ls v^{1-2s}, \nonumber\\& |\partial_v \log(\Omega^2)(u,v)- \partial_v\log(\Omega^2_{RN})(u,v)| \ls v^{1-2s}, \nonumber
		\end{align} where we also used that the  size of the region of integration is logarithmic i.e. $\int^u_{u_{\gamma}(v)} v^{-2s} \d u \ls v^{-2s}\log(v)$. 
		
		For $\dphi$, we proceed as in   \cref{RS.diff.prop} and make the following bootstrap assumptions for some $B'>0$ \begin{equation}\label{phi.B.new}
		\Omega(u,v) \cdot |\dphi|(u,v)  \leq B' \cdot v^{1-3s},
		\end{equation}
		\begin{equation}\label{phi.B.new2}
		|\partial_v \dphi|(u,v)  \leq B' \cdot v^{1-3s}.
		\end{equation}
		Plugging differences into \eqref{Field} satisfied by $\phi$ and the analogous equation satisfied by $\dphi$, we get, using \eqref{phi.B.new}, \eqref{phi.B.new2} and the previously proven difference estimates \begin{equation} \label{partialuvphi.EB}
		|\partial_u \partial_v \dphi | \ls B' \cdot \Omega \cdot v^{1-3s} + \Omega^2 \cdot  |\partial_u \dphi|,
		\end{equation} from which we deduce, upon integrating in $v$ and using a Gronwall estimate:  \begin{equation} \label{partialuphi.EB}
		|\partial_u  \dphi |(u,v) \ls (1+B' \cdot \Omega(u,v_{\gamma_{\mathcal{N}}}(u)) )\cdot v^{1-3s},
		\end{equation} and plugging \eqref{partialuphi.EB} into \eqref{partialuvphi.EB} and integrating in $u$ this time we get   \begin{equation} \label{partialvphi.EB}
		|\partial_v  \dphi |(u,v) \ls (1+B' \cdot [\Omega(u,v_{\gamma_{\mathcal{N}}}(u))+ \Omega(v_{\gamma_N}(v),v)] )\cdot v^{1-3s},
		\end{equation} which is sufficient to retrieve bootstrap \eqref{phi.B.new2} after an appropriate choice of $B'$ and choosing also $\Delta_N$ large enough (to obtain a small constant from $\Omega(u_{\gamma_{\mathcal{N}}}(v),v)$ as we did above).
		
		To retrieve bootstrap \eqref{phi.B.new}, we proceed as with $\partial_u \log(\Omega^2)$ earlier, with the following identity $$ \partial_u ( \Omega^\eta \dphi)=  \frac{\eta}{2} \cdot\partial_u \log(\Omega^2) \cdot \Omega^\eta \cdot \dphi+ \Omega^\eta  \cdot \partial_u\dphi, $$  which also implies, using  \eqref{partialuphi.EB} and by the same reasoning as for $\partial_u \log(\Omega^2)$ above: 
		$$ \partial_u ( \Omega^{\eta} |\dphi|) \leq    \Omega^{\eta}  \cdot   |\partial_u(\dphi)| \ls  \Omega^{\eta}  \cdot  (1+B' \cdot \Omega(u,v_{\gamma_{\mathcal{N}}}(u)) )\cdot v^{1-3s}  .$$  Integrating this inequality in $u$ for $\eta=1$,  after an appropriate choice of $B'$ and choosing also $\Delta_N$ large enough as we did above allows to retrieve bootstrap \eqref{phi.B.new} and concludes the proof.
	\end{proof}
	\subsubsection{Difference estimates in the late blue-shift region} \label{diff.blue.section}
	In this section, we will not need to estimate metric differences anymore (although we will use the difference estimates from past sections): therefore, we do not require a bootstrap method and proceed directly.
	\begin{prop} \label{LB.prop}
		There exists a constant $C_L= C_L(M,e,q_0,m^2,s,D_1,D_2)>0$	 such that the following are satisfied for all $(u,v) \in \LB$: 
		\begin{align} \label{diff1.LB}	&|A_u(u,v)-A_u^{CH}(u) |+ |A_u^{RN}(u,v)-(A_u^{RN})^{CH}(u) | \leq C_L \cdot \Omega^2(u,v) \leq C_L^2 \cdot v^{-2s},\\ & \label{diff3.LB}	|A_u(u,v)-A_u^{RN}(u,v) |+ |A_u^{CH}(u)-(A_u^{RN})^{CH}(u) | \leq C_L \cdot  |u|^{1-2s},\\ &
		\label{diff8.LB}
		\bigl |\frac{\d}{\d u}( A_u^{CH}-(A_u^{RN})^{CH}) \bigr|(u) \leq C_L \cdot |u|^{1-2s},\\ \nonumber
		&\bigl|	|D_v \psi|(u,v)- 	| 	D_v \psil|(u,v) 	\bigr|\\	 	\label{diff4.LB}	 & \leq  \bigl| e^{iq_0 \int_{u_{\gamma}(v)}^u A_u^{CH}(u') \d u'} 	\partial_v \psi(u,v)- 		 e^{iq_0 \int_{u_{\gamma}(v)}^u (A_u^{RN})^{CH}(u') \d u'}  \partial_v \psil(u,v) 	\bigr|   \leq  C_L \cdot  v^{1-3s}, \\ 
		\label{diff6.LB}
		&	\bigl|  \psi(u,v)-  \int_{v_{\gamma}(u)}^{v} e^{iq_0 \int_{u_{\gamma}(v')}^u [(A_u^{RN})^{CH}-A_u^{CH}](u') \d u'}	\partial_v \psil(u,v') \d v'  	\bigr|  \leq C_L \cdot  |u|^{2-3s},\\ &		\label{diff7.LB}
		\bigl| |D_u \psi|(u,v)	- |D_u^{RN} \psi_{\mathcal{L}}|(u,v) \bigr|  \leq	\bigl| D_u \psi(u,v)	- D_u^{RN} \psi_{\mathcal{L}}(u,v) 	\bigr|   \leq  C_L \cdot  |u|^{1-3s}	\cdot \log|u|.	\end{align}	
		
		Moreover, for every fixed $u<u_s$, there exists  $f(u) \in \mathbb{C}$ such that \begin{equation} \label{ext.difference}
		\lim_{v \rightarrow +\infty} \psi(u,v)-  \int_{v_{\gamma}(u)}^{v} e^{iq_0 \int_{u_{\gamma}(v')}^u [(A_u^{RN})^{CH}-A_u^{CH}](u') \d u'}	\partial_v \psil(u,v') \d v'= f(u).
		\end{equation} 
	\end{prop}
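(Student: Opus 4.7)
The overall strategy is to first obtain the gauge-potential estimates \eqref{diff1.LB}--\eqref{diff8.LB} directly from the Maxwell equation \eqref{eq:maxwellAu} and its Reissner--Nordström analog \eqref{eq:backgroundA}, then use these to set up the correct phase factor for the scalar-field difference estimate \eqref{diff4.LB}, which is the core estimate. The remaining bounds \eqref{diff6.LB}, \eqref{diff7.LB} and the existence of the limit \eqref{ext.difference} then follow by integration in $v$ (resp.\ $u$) and a Cauchy-sequence argument, exploiting that the constraint $s>\tfrac34$ makes $v^{1-3s}$ integrable in $v$. Throughout we use the estimates of \cref{prop.LB.estimates.Moi} together with the difference estimates already established in \cref{EB.diff.proposition} on the boundary curve $\gamma$ as ``initial data'' for integrating into $\LB$.

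For the potential estimates, equation \eqref{eq:maxwellAu} gives $|\partial_v A_u|\lesssim |Q|\cdot \Omega^2/r^2$, and on the background $|\partial_v A_u^{RN}|\lesssim \Omega_{RN}^2/r_{RN}^2$; using $|Q|\lesssim v^{1-s}$, $\Omega^2\lesssim v^{-2s}$ and the comparability $\Omega^2\sim \Omega_{RN}^2$ in $\LB$ (which follows from \cref{diff5.EB} pushed into $\LB$ via the late-blue-shift estimates), both $A_u(u,\cdot)$ and $A_u^{RN}(u,\cdot)$ are $v$-Cauchy with limits $A_u^{CH}(u)$, $(A_u^{RN})^{CH}(u)$, yielding \eqref{diff1.LB}. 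Estimate \eqref{diff3.LB} then comes from integrating the $v$-difference of the two Maxwell equations from $\gamma$, where \cref{EB.diff.proposition} provides the initial $v^{1-2s}$ bound, combined with the $|Q-e|\lesssim v^{1-2s}$, $|r-r_{RN}|\lesssim v^{1-2s}$ and $|\log(\Omega^2/\Omega_{RN}^2)|\lesssim v^{1-2s}\log v$ bounds. For \eqref{diff8.LB}, one differentiates the integrated expression in $u$ and uses that on $\CH$ the $u$-derivative is controlled by the trace on $\CH$ of $\partial_u A_u - \partial_u A_u^{RN}$, which by \eqref{Radius}-type algebra reduces to quantities already estimated.

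For the crucial scalar-field estimate \eqref{diff4.LB}, I would apply equation \eqref{Field5} to the nonlinear $\psi=r\phi$ and its linear analog (with $(r_{RN},\Omega^2_{RN},A_u^{RN},e)$) to $\psil$, then take the difference. Setting $\sigma(u,v):=q_0\int_{u_{\gamma}(v)}^u A_u^{CH}(u')\d u'$ and $\sigma_{RN}(u,v):=q_0\int_{u_{\gamma}(v)}^u (A_u^{RN})^{CH}(u')\d u'$, the identity $e^{-iq_0\int A_u}\partial_u(e^{iq_0\int A_u}\partial_v\psi)=D_u(\partial_v\psi)$ (and its linear analog) shows that $e^{i\sigma}\partial_v\psi - e^{i\sigma_{RN}}\partial_v\psil$ evolves by
\[
\partial_u\bigl(e^{i\sigma}\partial_v\psi - e^{i\sigma_{RN}}\partial_v\psil\bigr) = e^{i\sigma}D_u\partial_v\psi - e^{i\sigma_{RN}}D_u^{RN}\partial_v\psil + (\text{phase error}),
\]
where the phase error involves $A_u-A_u^{CH}$ and $A_u^{RN}-(A_u^{RN})^{CH}$, controlled by \eqref{diff1.LB}. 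The right-hand side of \eqref{Field5} minus its RN analog is a sum of three types of terms: (i) terms bounded by $\Omega^2(|\phi-\phiNl|+|\phi|^2|\phi|+|r-r_{RN}|+|Q-e|)$, (ii) the geometric term $\tfrac{\nu\lambda\phi}{r}$ minus its RN analog (which vanishes), and (iii) curvature-type corrections. Using \cref{prop.LB.estimates.Moi} ($|\nu|\lesssim|u|^{-2s}$, $|\lambda|\lesssim v^{-2s}$, $|\phi|\lesssim v^{1-s}$) together with $|u|\sim v$ in $\LB$ and the already proven estimates \eqref{diff3.LB}, the integrand along $u$ from $u_\gamma(v)$ is $\lesssim v^{1-3s}$ uniformly, and the $u$-length of integration is $O(\log v)$, which a careful accounting (using a Gronwall argument as in \cref{EB.diff.proposition}) refines to $v^{1-3s}$ without the log.

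The remaining claims follow cleanly from \eqref{diff4.LB}. Integrating \eqref{diff4.LB} in $v$ from $v_\gamma(u)$, using $\int_{v_\gamma(u)}^\infty v^{1-3s}\d v \lesssim |u|^{2-3s}$ (recall $s>\tfrac34>\tfrac23$, so $v^{1-3s}$ is integrable) and the bound on $|\psi-\psi_{|\gamma}|$ at $\gamma$ from \cref{EB.diff.proposition}, yields \eqref{diff6.LB} after absorbing the fixed phase factor $e^{i\sigma_{RN}-i\sigma}$ into the integrand (valid because at $v=v_\gamma(u)$ both exponents vanish by definition of $u_\gamma$, $v_\gamma$). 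Estimate \eqref{diff7.LB} is proved symmetrically using \eqref{Field4} and \cref{Dupsi.prop}, with the $\log|u|$ loss coming from the now $v$-direction integration of length $O(\log|u|)$ through $\EB$ before entering $\LB$; the pointwise bound $|D_u\psi|\lesssim |u|^{-s}$ from \cref{Dupsi.prop} prevents the integrand from growing worse. Finally, for \eqref{ext.difference}, the integrand in \eqref{diff6.LB} is absolutely integrable in $v$ on $[v_\gamma(u),\infty)$ since $|\partial_v\psil|\lesssim v^{-s}$ from \cref{prop.LB.estimates.Moi} and $s>\tfrac12$; hence the partial integrals form a Cauchy net and converge to a function $f(u)$, and the pointwise convergence of $\psi(u,v)$ to this sum follows from \eqref{diff6.LB} together with the fact that $\psi$ itself extends to $\CH$ (by uniform boundedness of $r\phi$ in $\LB$). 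The main obstacle is the bookkeeping in the proof of \eqref{diff4.LB}: ensuring that the gauge phases conspire to absorb the worst term $|\nu\lambda|\cdot|\phi|\lesssim v^{1-3s}\cdot|u|^{-2s}$ and keeping the Gronwall argument from producing a logarithmic loss, which requires the same $\Omega^\eta$-weighted identity technique used in \cref{EB.diff.proposition}.
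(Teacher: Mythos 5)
Your overall architecture (potentials first, then the $\partial_v\psi$ difference, then integrate in $v$) matches the paper, but two steps as you have written them would fail. First, your justification of \eqref{ext.difference} is wrong: you claim the integrand $e^{i(\cdot)}\partial_v\psil$ is absolutely integrable in $v$ because $|\partial_v\psil|\lesssim v^{-s}$ --- but $v^{-s}\notin L^1_v$ for the range $s\leq 1$ under consideration (this non-integrability is the central point of the paper, cf.\ \cref{W11.main.thm}), and you further assert that $\psi$ itself extends continuously to $\CH$ ``by uniform boundedness of $r\phi$ in $\LB$'', which is false in general: $|\phi|$ is only controlled by $v^{1-s}$ there (see \eqref{phiblowupVLB}) and its possible blow-up is precisely the content of \cref{main.theorem.intro2}. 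The statement \eqref{ext.difference} asserts convergence of the \emph{difference} only; the correct argument is that $\partial_v\bigl[\psi-\int(\cdots)\bigr]$ is exactly the quantity bounded by $v^{1-3s}$ in \eqref{diff4.LB}, which \emph{is} integrable since $s>\tfrac34$, so the difference is Cauchy as $v\to+\infty$ while neither term separately need converge. Second, your sketch of \eqref{diff8.LB} via ``the trace on $\CH$ of $\partial_u A_u-\partial_u A_u^{RN}$'' does not work: $A_u^{CH}(u)-(A_u^{RN})^{CH}(u)$ is the full $v$-integral \eqref{potential.ID} of $-\frac{\Omega^2Q}{r^2}+\frac{\Omega^2_{RN}e}{r^2_{RN}}$ over $[v_0,+\infty)$, so after commuting with $\partial_u$ one must control the contribution from $\Rs\cup\NN\cup\EB$, where $\Omega^2$ is \emph{not} small and the $|u|^{1-2s}$ gain comes exclusively from the difference estimates on $\Omega^2-\Omega^2_{RN}$, $\partial_u\Omega^2-\partial_u\Omega^2_{RN}$, $Q-e$, $\nu-\nu_{RN}$, $r-r_{RN}$ of \cref{RS.diff.prop}--\ref{EB.diff.proposition}; no argument localized near $\CH$ can produce this bound.

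On \eqref{diff4.LB} your route is genuinely different from, and more complicated than, the paper's. You propose differencing the two copies of \eqref{Field5} and running a Gronwall/$\Omega^\eta$-weighted scheme to avoid a logarithmic loss; but that machinery is specific to $\EB$, whose $u$-extent is $O(\log v)$. In $\LB$ the $u$-integration runs over a range of length $\sim v$ and the integrability comes from the $|u|^{-2s}$ weight ($2s>1$), not from the shortness of the region, so no Gronwall is needed at all: the paper simply applies the triangle inequality and bounds $|\partial_u(e^{iq_0\int A_u}\partial_v\psi)|$ and $|\partial_u(e^{iq_0\int A_u^{RN}}\partial_v\psil)|$ \emph{separately} by $|u|^{-2s}v^{1-3s}+(\Omega^{1.9}+\Omega^{1.9}_{RN})v^{-s}$ using \eqref{QLB} and \eqref{phiLB}, then integrates in $u$ from $\gamma$ where \eqref{diff3.EB} supplies the $v^{1-3s}$ seed. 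Your version could likely be made to work (the source terms $\Omega^2|\phi-\phiNl|$ are indeed $\lesssim\Omega^{1.9}v^{-s}$), but it obscures the reason the estimate holds and relies in passing on $\Omega^2\sim\Omega^2_{RN}$ in $\LB$, which is never established there and is deliberately avoided by the paper.
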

	\begin{proof}
		
		We start with estimates on the potentials: by \eqref{eq:maxwellAu} and  \eqref{QLB} we have for $\eta=0.01$: $$ |\partial_v (A_u-A_u^{RN})| \leq  |\partial_v A_u|+  |\partial_v A_u^{RN}| \ls \Omega^{2-\eta}+ \Omega^{2-2\eta}_{RN} , $$ which we can integrate from the curve $\gamma$, using \eqref{Omega.gamma.EB} and \eqref{partialvOmegaLB} using   \cite[Lemma 4.1]{Moi} as before, we obtain, using also   \cref{EB.diff.proposition}, the following bound: \begin{equation} \label{Au.diff.LB}		|A_u-A_u^{RN}| \ls |u|^{1-2s}.		\end{equation}
		Moreover, recall that we proved in   \cref{Dupsi.prop} that $A_u(u,v)$ and $A_u^{RN}(u,v)$ extend to $\CH$ as bounded functions $(A_u)^{CH}(u)$ and $(A_u^{RN})^{CH}(u)$, respectively. Integrating \eqref{eq:maxwellAu} towards the past from the Cauchy horizon $\CH$ we also obtain the following estimates for all $(u,v) \in \LB$: \begin{equation} \label{Au.diff.2LB}		|A_u(u,v)-A_u^{CH}(u)|+ |A_u^{RN}(u,v)-(A_u^{RN})^{CH}(u)| \ls \Omega^2(u,v)\ls v^{-2s},	\end{equation} 
		\begin{equation} \label{Au.diff.3LB}		 \int_{u_{\gamma}(v)}^u |A_u(u',v)-A_u^{CH}(u')|\d u'+ \int_{u_{\gamma}(v)}^u |A_u^{RN}(u',v)-(A_u^{RN})^{CH}(u')|\d u' \ls v^{-2s}.	\end{equation}
		
		To obtain \eqref{diff8.LB}, note the following identity obtained using \eqref{eq:maxwellAu} with \eqref{gaugeAponctuelle} (note that $A_u(u,v_0)=A_u^{RN}(u,v_0)$): \begin{equation} \label{potential.ID}
		A_u^{CH}(u)-(A_u^{RN})^{CH}(u)= \int_{v_0}^{+\infty}- \frac{\Omega^2(u,v'
			) Q(u,v')}{r^2(u,v')}+ \frac{\Omega^2_{RN}(u,v'
			) e}{r^2_{RN}(u,v')} \d v'.
		\end{equation} We now commute \eqref{eq:maxwellAu} with $\partial_u$ to estimate $\frac{d}{du}(	A_u^{CH}(u)-(A_u^{RN})^{CH}(u))$ and we obtain an formula analogous to \eqref{potential.ID}. Using the fact that $\partial_u \log(\Omega^2) \Omega^{0.1}$ is bounded (by \cref{prop.LB.estimates.Moi}) to estimate the parts of the integral lying in $\LB$, and we obtain an estimate only involving the regions strictly to the past of $\LB$: \begin{equation} 
		|\frac{\d}{\d u}(	A_u^{CH}(u)-(A_u^{RN})^{CH}(u))| \leq \bigl| \int_{v_0}^{v_{\gamma}(u)} \partial_u\left(- \frac{\Omega^2(u,v'
			) Q(u,v')}{r^2(u,v')}+ \frac{\Omega^2_{RN}(u,v'
			) e}{r^2_{RN}(u,v')} \right)\d v' \bigr| + |u|^{-2s}.
		\end{equation} 
		Therefore, it is sufficient to control the above integral in $\Rs\cup \NN\cup \EB$. Note that the difference $\Omega^2-\Omega^2_{RN}$, $\partial_u \Omega^2-\partial_u \Omega^2_{RN}$, $Q-e$, $\nu - \nu_{RN}$ and $r-r_{RN}$ have been controlled with $|u|^{1-2s}$ weights in Propositions \ref{RS.diff.prop}, \ref{N.diff.proposition} and \ref{EB.diff.proposition}: this gives \eqref{diff8.LB}.

		Now we turn to the $\phi$ estimates. We write \eqref{Field5}  for $u_0=u_{\gamma}(v)$ and using the estimates from   \cref{prop.LB.estimates.Moi} (notably \eqref{QLB} and \eqref{phiLB} with $\eta=0.1$) we obtain:
		
		\begin{align*} 
		&	|\partial_u ( e^{iq_0 \int_{u_{\gamma}(v)}^u A_u(u',v) \d u' }\partial_v \psi- e^{iq_0 \int_{u_{\gamma}(v)}^u A_u^{RN}(u',v) \d u' } \partial_v \psil)|  \\  & \ls   |\partial_u ( e^{iq_0 \int_{u_{\gamma}(v)}^u A_u(u',v) \d u' }\partial_v \psi)| +|\partial_u ( e^{iq_0 \int_{u_{\gamma}(v)}^u A_u^{RN}(u,v) \d u' } \partial_v \psil)|  \\&  \ls\  |u|^{-2s} \cdot v^{1-3s} + (\Omega^{1.9}+ \Omega_{RN}^{1.9} )\cdot v^{-s}.
		\end{align*} Integrating in $u$ and using \eqref{Omega.gamma.EB} with the usual integration rules (i.e.\  \cite[Lemma 4.1]{Moi}) we obtain  	\begin{align} 	\label{prelim.1.LB} 
		& | e^{iq_0 \int_{u_{\gamma}(v)}^u A_u(u',v) \d u' } \partial_v \psi(u,v)- e^{iq_0 \int_{u_{\gamma}(v)}^u A_u^{RN}(u',v) \d u' } \partial_v \psil(u,v)|  \\ \ls\ & 	| \partial_v \psi(u_{\gamma}(v),v)- \partial_v \psil(u_{\gamma}(v),v)|+ |u|^{1-2s} \cdot v^{1-3s}+ v^{-2.8s} \ls v^{1-3s}, \nonumber
		\end{align}	where we also used \eqref{diff3.EB}.	Then by \eqref{prelim.1.LB}, \eqref{Au.diff.3LB}, we obtain:
		\begin{align} 
		\nonumber
		&	| \partial_v \psi(u,v)-  e^{iq_0 \int_{u_{\gamma}(v)}^u [(A_u^{RN})^{CH}(u')-A_u^{CH}(u')] \d u' } \partial_v \psil(u,v)|  \\ \leq\ &	| e^{iq_0 \int_{u_{\gamma}(v)}^u A_u(u',v) \d u' }\partial_v \psi(u,v)- e^{iq_0 \int_{u_{\gamma}(v)}^u A_u^{RN}(u',v) \d u' } \partial_v \psil(u,v)|\nonumber \\	+\ & |e^{iq_0 \int_{u_{\gamma}(v)}^u \left[A_u^{RN}(u',v)-(A_u^{RN})^{CH}(u')-A_u(u',v)+A_u^{CH}(u')\right] \d u' } -1| \cdot |\partial_v \psil|(u,v) \nonumber \\  \ls\ & v^{1-3s}+  |e^{iq_0 \int_{u_{\gamma}(v)}^u \left[A_u^{RN}(u',v)-(A_u^{RN})^{CH}(u')-A_u(u',v)+A_u^{CH}(u')\right] \d u' } -1| \cdot |\partial_v \psil|(u,v) \nonumber \\
		\ls\  & v^{1-3s}+ v^{-2s}\cdot v^{-s} \ls   v^{1-3s}, \label{prelim.12.LB}
		\end{align} 
		where in the first line we multiplied by the phase $e^{iq_0 \int_{u_{\gamma}(v)}^u A_u(u',v) \d u' }$ inside the absolute value and we used \eqref{phiVLB} (applied to $\phiNl$) in the last line. This implies \eqref{diff4.LB} (the first inequality is obtained by the reverse triangular inequality). Integrating in $v$ from $\gamma$ then gives \eqref{diff6.LB} and \eqref{ext.difference}, using also \eqref{phiVTransition} to control the boundary term $|\psi(u,v_{\gamma}(u))|\ls |u|^{-s} \ls |u|^{2-3s}$ (recall that $s\leq 1$).

		For \eqref{diff7.LB} we estimate \eqref{Field4} using the estimate of   \cref{prop.LB.estimates.Moi} (naively, without taking advantage of a difference structure) and $A_v=A_v^{RN}=0$ we get $$ | \partial_v ( D_u \psi-  D_u^{RN} \psi_{\mathcal{L}})| \ls |u|^{-2s} \cdot v^{1-3s}+ (\Omega^{1.9}+ \Omega^{1.9}_{RN} )\cdot v^{-s} .$$ Integrating in $v$, using the bounds of   \cref{EB.diff.proposition} and \eqref{Au.diff.LB} (to control the difference on $\gamma$, similarly to what was done earlier in the proof) allows us to prove \eqref{diff7.LB} thus concluding the proof.	\end{proof}
	
	\subsection{Combining the linear and the nonlinear estimates} \label{combining.section}
	
	In this section, we combine the nonlinear difference estimates of  \cref{difference.estimates.section} with the linear estimates on a fixed Reissner--Nordstr\"{o}m background obtained in  \cref{linearsection}. This allows us to conclude the proof of the boundedness of $\phi$ if $\phiH\in \OO$ and if $q_0=0$, blow up if  $\phiH\notin \OO$.

	\subsubsection{Boundedness and extendibility of the matter fields for oscillating data and proof of \cref{main.theorem}} \label{boundedness.combining.section}
	
	\begin{prop} \label{boundedness.prop}
		Assume the following gauge invariant condition: there exists $u_0 \leq u_s$ such that \begin{equation} \label{phil.estimate.bounded.end0}
		\lim_{v \rightarrow  +\infty} \int_{v_0}^{v} e^{iq_0\sbr(v')} e^{iq_0\int_{v_0}^{v'} A_v(u_0,v'') \d v''}	D_v \psil(u_0,v') \d v'
		\end{equation}exists and is finite for all $\sbr$ satisfying \eqref{sigma_err1}, \eqref{sigma_err2}, then $\phi$ in the gauge \eqref{GaugeAv}, \eqref{gaugeAponctuelle} admits a continuous extension to $\CH$. Moreover the gauge-independent quantities $|\phi|$ and the metric $g$ also admit a continuous extension to $\CH$ and the extension of $g$ can be chosen to be $C^0$-admissible as in \cref{C0admissible}.
		
		If we additionally assume the following gauge-invariant condition: for all $D_{br}>0$, there exists $\eta_0(D_{br})>0$ such that  for all $\sbr$ satisfying \eqref{sigma_err1}, \eqref{sigma_err2} and for all $(u,v)\in \LB$: \begin{equation} \label{phil.estimate.bounded.end}
		\left| \int_{v_{\gamma}(u)}^{v} e^{iq_0\sbr(v')} e^{iq_0\int_{v_0}^{v'} A_v(u_0,v'') \d v''}	D_v \psil(u_0,v') \d v'\right| \ls D' \cdot |u|^{s-1- \eta_0}.
		\end{equation}
		Then $Q$ and $\phi$ are bounded and the following estimates are true for all $(u,v) \in \LB$:  \begin{equation}\label{phi.boundedness.final}
		|\phi|(u,v) \ls |u|^{-1+s-\eta_0},
		\end{equation} 
		\begin{equation} \label{Q.boundedness.final}
		|Q-e|(u,v) \ls |u|^{-\eta_0},
		\end{equation} where the implicit constants are allowed to depend on $\eta_0>0$. Moreover, $Q$ extends to a continuous function $Q_{CH}(u)$ on $\CH$.

	\end{prop}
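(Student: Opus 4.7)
\textbf{Proof plan for \cref{boundedness.prop}.} The strategy is to combine the nonlinear difference estimate \eqref{diff6.LB} with the linear convergence hypothesis, and then feed the resulting boundedness of $\phi$ into \cref{coordinatesphiboundedprop} and \cref{metric.ext.cor} to extend the metric. First, I would rewrite the key conclusion \eqref{diff6.LB} of \cref{LB.prop} as
\begin{equation*}
\psi(u,v) = \int_{v_{\gamma}(u)}^{v} e^{iq_0\sigma(u,v')}\, \partial_{v}\psil(u,v')\, \d v' + O(|u|^{2-3s}),\qquad \sigma(u,v'):=\int_{u_{\gamma}(v')}^{u}\bigl[(A_u^{RN})^{CH}-A_u^{CH}\bigr](u')\,\d u',
\end{equation*}
and, after absorbing the gauge-invariant remainders, identify the phase $\sigma(u,\cdot)$ with a function $\sbr$ of the type appearing in the hypothesis. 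The key technical check is to verify that for fixed $u$ the phase $v'\mapsto \sigma(u,v')$ lies in the admissible class \eqref{sigma_err1}--\eqref{sigma_err2}: the bound \eqref{diff3.LB} gives $|(A_u^{RN})^{CH}-A_u^{CH}|(u_\gamma(v'))\lesssim |u_\gamma(v')|^{1-2s}\sim v'^{1-2s}$, and \eqref{diff8.LB} together with $|\d u_\gamma/\d v'|\lesssim 1$ produces the required bounds on $\partial_{v'}\sigma$ and $\partial_{v'}^{2}\sigma$, while a crude integration yields the sublinear growth of $\sigma$ itself.

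Once this verification is done, the hypothesis \eqref{phil.estimate.bounded.end0} (applied with $\sbr=\sigma$ and after possibly changing electromagnetic gauge by an additive constant, which only multiplies the integral by a bounded factor) implies that $\int_{v_{\gamma}(u)}^{v} e^{iq_0\sigma(u,v')}\partial_v\psil(u,v')\,\d v'$ remains uniformly bounded, and that its limit as $v\to+\infty$ exists. Combined with the $O(|u|^{2-3s})$ error term ($2-3s<0$ since $s>\tfrac{3}{4}$), this shows $|\psi|$, and hence $|\phi|=r^{-1}|\psi|$, is uniformly bounded in $\LB$ and admits a continuous extension $|\phi|_{CH}(u)$. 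Boundedness in the past regions $\Rs,\NN,\EB$ has already been established in \cref{recall.section}. An application of \cref{coordinatesphiboundedprop} then shows that $\Upsilon$ extends continuously across $\CH$; since $|\phi|$ extends continuously, $\log\Omega^2_{CH}$ does as well, and \cref{metric.ext.cor} yields the $C^0$-admissible extension of $g$. The continuous extendibility of $\phi$ (not just $|\phi|$) in the gauge $A_v\equiv 0$ follows from the same argument applied to $\psi$ itself rather than to its modulus.

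For the quantitative part, the strengthened hypothesis \eqref{phil.estimate.bounded.end} directly inserted into the representation above yields the pointwise bound $|\phi|(u,v)\lesssim |u|^{s-1-\eta_0}$ in $\LB$, establishing \eqref{phi.boundedness.final}. To control $Q$, I use the Maxwell equation in the form $\partial_u Q=-q_0\,\Im(\psi\,\overline{D_u\psi})$, and apply \cref{Dupsi.prop} to get $|D_u\psi|\lesssim |u|^{-s}$. Combining,
\begin{equation*}
|\partial_u Q|(u,v)\lesssim |\phi|(u,v)\cdot|D_u\psi|(u,v)\lesssim |u|^{-1-\eta_0},
\end{equation*}
which is integrable in $u$. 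Integrating from $u=u_s$ towards $u=-\infty$ and using \eqref{QTrans}, \eqref{QLB} to control boundary values, yields \eqref{Q.boundedness.final}. Integrability of $\partial_u Q$ together with continuity of $Q(\cdot,v)$ and equicontinuous approach to the limiting function give the continuous extension $Q_{CH}(u)$ on $\CH$.

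\textbf{Main obstacle.} The principal difficulty is the gauge bookkeeping between the linear problem on fixed Reissner--Nordstr\"om, where the hypothesis is stated (with potential $A^{RN}_v$ from \cref{gauge_linear} and the operator $D_v^{RN}$), and the nonlinear problem in the gauge $A_v\equiv 0$ used for the difference estimates of \cref{difference.estimates.section}. One must carefully track how the phase $\sigma(u,v')$ arising from the Maxwell backreaction in \eqref{diff6.LB} is reabsorbed as an allowed $\sbr$ in the hypothesis, verifying both \eqref{sigma_err1}--\eqref{sigma_err2} and that the ensuing gauge transformation contributes only bounded and extendible phases. This is precisely where the sublinear but unbounded backreaction oscillations, discussed at the end of \cref{nonlinear.intro}, force the definition of $\OO,\OOp,\OOpp$ to be robust under arbitrary phases in that admissible class.
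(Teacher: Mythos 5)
Your overall strategy matches the paper's: combine the difference estimate \eqref{diff6.LB}/\eqref{ext.difference} with the oscillation hypothesis, verify that the Maxwell backreaction phase $\sigma(u,\cdot)$ is an admissible $\sbr$ via \eqref{diff3.LB} and \eqref{diff8.LB}, feed the boundedness of $|\phi|$ into \cref{coordinatesphiboundedprop} and \cref{metric.ext.cor}, and control $Q$ through $\partial_u Q = -q_0\Im(\bar\psi D_u\psi)$ with $|D_u\psi|\lesssim|u|^{-s}$. However, there is a genuine gap in your first part. The hypothesis \eqref{phil.estimate.bounded.end0} is assumed at a \emph{single} $u_0$, and it only controls integrals of $D_v\psil(u_0,\cdot)$; yet you invoke it to bound $\int_{v_\gamma(u)}^v e^{iq_0\sigma(u,v')}\partial_v\psil(u,v')\,\d v'$ for \emph{every} $u$, where the integrand is the linear solution at $u\neq u_0$. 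Nothing in the stated hypothesis gives that. The paper's proof resolves this by applying \eqref{ext.difference} only at $u=u_0$ (with $\sbr(v)=\int_{u_\gamma(v)}^{u_0}[(A_u^{RN})^{CH}-A_u^{CH}]\,\d u'$) to obtain existence of $\lim_{v\to\infty}\psi(u_0,v)$, and then \emph{propagating in the ingoing direction} via the identity
\begin{equation*}
\psi(u,v)=e^{-iq_0\int_{u_0}^{u}A_u}\,\psi(u_0,v)+e^{-iq_0\int_{u_0}^{u}A_u}\int_{u_0}^{u}e^{iq_0\int_{u_0}^{u'}A_u}\,D_u\psi(u',v)\,\d u',
\end{equation*}
using \cref{Dupsi.prop} (boundedness and extendibility of $D_u\psi$ and $A_u$) and dominated convergence. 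This step is also what delivers \emph{continuity} of the extension $\psi_{CH}(u)$ in $u$; your argument, even if the hypothesis held at every $u$, would only give pointwise existence of the limits and would not yield a continuous extension without some such equicontinuity input.

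A second, smaller issue: for \eqref{Q.boundedness.final} you integrate $\partial_u Q$ from $u=u_s$ towards $i^+$, but on the cone $\{u=u_s\}$ the only a priori control is \eqref{phiblowupVLB}, i.e.\ $|Q|\lesssim v^{1-s}$, so the boundary value is not known to be bounded there (that is part of what is being proved). The integration must start from the curve $\gamma$, the past boundary of $\mathcal{LB}$, where \eqref{QTrans} gives $|Q-e|\lesssim v^{1-2s}\lesssim|u|^{1-2s}$; then $\int_{u_\gamma(v)}^{u}|u'|^{-1-\eta_0}\,\d u'\lesssim|u|^{-\eta_0}$ closes the estimate, and the continuous extension of $Q$ follows by dominated convergence applied to $Q(u,v)=Q(u_\gamma(v),v)+q_0\int_{u_\gamma(v)}^{u}\Im(\bar\psi D_u\psi)\,\d u'$.
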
 \begin{proof}
		Applying the assumption to $\sbr(v)= \int_{u_{\gamma}(v)}^{u_0} [(A_u^{RN})^{CH}-A_u^{CH}](u') \d u'$ (which satisfies \eqref{sigma_err1} and \eqref{sigma_err2} by   \cref{LB.prop}) we get by   \cref{LB.prop} that for $\psi$ in the gauge \eqref{GaugeAv} (note that $A_v \equiv 0$): $$ \lim_{v \rightarrow +\infty} \psi(u_0,v):= \psi_{CH}(u_0)$$ exists and is finite.	 Recall also from   \cref{Dupsi.prop} that $D_u \psi$ and $A_u$ admit (in the gauge \eqref{GaugeAv}, \eqref{gaugeAponctuelle}) a bounded extension to $\CH$ which we denoted respectively $(D_u \psi)_{CH}$ and $(A_{u})^{CH}$.  
		Recall also that one can write for any $u_0 \in \RR$ the following identity $$  \partial_u( e^{iq_0 \int_{u_0}^u A_u(u',v) \d u'} \psi(u,v)) =  e^{iq_0 \int_{u_0}^u A_u(u',v) \d u'} D_u \psi(u,v),$$   which upon integration gives  $$   \psi(u,v) = e^{-iq_0 \int_{u_0}^u A_u(u',v) \d u'}  \psi(u_0,v)+  e^{-iq_0 \int_{u_0}^u A_u(u',v) \d u'} \int_{u_0}^ue^{iq_0 \int_{u_0}^{u'} A_u(u'',v) \d u''} D_u \psi(u',v) \d u'.$$
		
		Now note by   \cref{LB.prop}, $A_u \in L^{\infty}_{loc}$ therefore by dominated convergence, the function $(u,v) \mapsto \int_{u_0}^u A_u(u',v) \d u'$ extends continuously to  $\int_{u_0}^u (A_u)^{CH}(u') \d u'$ at $\CH$. Since $D_u \psi \in L^{\infty}_{loc}$ as well (by \eqref{Dupsiestimate}), an other use of dominated convergence, together with the existence of the limit  $\lim_{v \rightarrow +\infty}\psi(u_0,v)$ shows that $\psi(u,v)$ admits a continuous extension to $\CH$ denoted $\psi_{\CH}(u)$. By   \cref{CH.stab.thm}, $r$ admits a continuous extension $r_{\CH}(u)$ to $\CH$ which is bounded away from zero. Therefore, $\phi(u,v)$ also admits a continuous extension to $\CH$ denoted $\phi_{\CH}(u)$. The continuous extendibility of the metric $g$ (and the $C^0$-admissible character of the extension) follows immediately as a consequence of   \cref{metric.ext.cor}

		Now we make the additional assumption \eqref{phil.estimate.bounded.end}. We define  $(\sbr)_u(v):= \int_{u_{\gamma}(v)}^{u} [(A_u^{RN})^{CH}-A_u^{CH}](u') \d u'$ for each $u \leq u_s$. It follows from \eqref{diff3.LB} and \eqref{diff8.LB} that  $(\sbr)_u$ satisfies \eqref{sigma_err1}, \eqref{sigma_err2} with a constant $D_{br}(M,e,q_0,m^2,s,D_1,D_2)>0$ \emph{that is independent of $u$}. In view of this, \eqref{phi.boundedness.final} follows from \eqref{phil.estimate.bounded.end} combined with \eqref{diff6.LB} and the fact that $s>\frac{3}{4}$. Now we plug \eqref{phi.boundedness.final}, the boundedness of $r$, and \eqref{Dupsiestimate} into \eqref{chargeUEinstein2} to obtain the estimate in $\LB$: $$ |\partial_u Q| \ls |u|^{-1-
			\eta_0}.$$ Integrating this estimate from $\gamma$ we obtain \eqref{Q.boundedness.final}, in view of the estimate on $\gamma$ from   \cref{prop.EB.estimates.Moi}.

		For the continuous extendibility of $Q$, we start integrating \eqref{chargeUEinstein2} to get for all $(u,v) \in \LB$  $$ Q(u,v)= Q(u_{\gamma}(v),v)+ q_0\int_{u_{\gamma}(v)}^u \Im (\bar{\psi} D_u \psi)(u',v) \d u'.$$ Note that the function $u\rightarrow \Im (\bar{\psi} D_u \psi)(u,v)$ is  dominated by the integrable function $|u|^{-1-\eta_0}$ therefore by the dominated convergence theorem, $\int_{u_{\gamma}(v)}^u \Im (\bar{\psi} D_u \psi)(u',v) \d u'$ extends continuously to the function $\int_{-\infty}^u \Im (\overline{\psi_{CH}} (D_u \psi)_{CH})(u') \d u'.$ Therefore,  $Q$ admits a continuous extension to $\CH$, which concludes the proof.
		
	\end{proof}
	\begin{cor} \label{boundedness.cor}
		\begin{enumerate}
			\item 	Assume that $\phiH \in \OO$. Then $\phi$ is uniformly bounded on $\LB$ and thus, \eqref{phi.bounded.main.thm}	 holds true.
			
			\item Assume additionally that $\phiH \in \OOp$, then $|\phi|$ and $g$ are continuously extendible, and the extension of $g$ can be chosen to be $C^0$-admissible.
			\item  Assume additionally that $\phiH \in \OOpp$, then \eqref{phi.boundedness.final} and \eqref{Q.boundedness.final} are true for all $(u,v) \in \LB$ and moreover $Q$  admit a continuous extension to $\CH$.
		\end{enumerate}
	\end{cor}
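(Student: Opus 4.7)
The plan is to combine the linear scattering analysis of \cref{linearcor}, which translates the oscillation hypotheses $\OO$, $\OOp$, $\OOpp$ into boundedness/limit/decay statements for the oscillatory integral of the linear radiation field $\psil' = r\phil$, with the nonlinear extendibility criterion \cref{boundedness.prop} and the $W^{1,1}$-type difference estimate \eqref{diff6.LB} of \cref{LB.prop}, which transfer those linear statements to the coupled radiation field $\psi = r\phi$. The first observation is that the integrand
\[
e^{iq_0 \sbr(v')} e^{iq_0\int_{v_0}^{v'} A_v(u_0,v'') \, \d v''} D_v\psil(u_0,v')
\]
is electromagnetic-gauge invariant, so that the integral condition extracted from the linear theory in the gauge \eqref{gauge_linear} is equivalent to the one required by \cref{boundedness.prop} in the gauge $A_v \equiv 0$ used throughout \cref{nonlinearsection}; this reconciles the different gauge conventions of \cref{linearsection} and \cref{nonlinearsection}.

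The technical heart of the argument is then to apply \cref{linearcor} with the specific choice
\[
\sbr(v) \; := \; \int_{u_{\gamma}(v)}^{u_0} \bigl[(A_u^{RN})^{CH} - A_u^{CH}\bigr](u') \, \d u',
\]
and to verify via estimates \eqref{diff3.LB} and \eqref{diff8.LB} of \cref{LB.prop} that this Maxwell-backreaction phase obeys the sublinear bounds \eqref{sigma_err1}--\eqref{sigma_err2} with a constant $D_{br}=D_{br}(M,e,q_0,m^2,s,D_1,D_2)$ that is \emph{uniform in $u_0 \leq u_s$}. This uniformity is essential since it allows the single linear statement provided by \cref{linearcor} to be invoked pointwise in $u_0$ throughout $\LB$.

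With this set-up, Part (1) follows at once: the first part of \cref{linearcor} yields a uniform bound on the oscillatory integral, which, fed into \eqref{diff6.LB}, gives uniform boundedness of $\psi = r\phi$ on $\LB$; combined with the a priori estimates of \cref{prop.RS.estimates.Moi}, \cref{prop.NN.estimates.Moi}, \cref{prop.EB.estimates.Moi} in the complementary regions and with the lower bound on $r$ from \cref{CH.stab.thm}, this delivers \eqref{phi.bounded.main.thm}. For Part (2), the second part of \cref{linearcor} strengthens boundedness to existence of the limit as $v \to +\infty$, which is precisely hypothesis \eqref{phil.estimate.bounded.end0} of \cref{boundedness.prop}; its first conclusion then yields the continuous extension of $|\phi|$ and the $C^0$-admissible extension of $g$. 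For Part (3), the third part of \cref{linearcor} supplies the quantitative $|u|^{s-1-\tilde\eta_0}$ decay that matches hypothesis \eqref{phil.estimate.bounded.end}, and the second conclusion of \cref{boundedness.prop} then produces \eqref{phi.boundedness.final}, \eqref{Q.boundedness.final}, and the continuous extension of $Q$ to $\CH$.

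The substantive difficulties of the whole scheme (the representation formula on Reissner--Nordstr\"om, its interaction with the nonlinear backreaction yielding the representation \eqref{Rep3}, and the uniform-in-$u_0$ control of $\sbr$) are already absorbed into \cref{linearsection}, \cref{LB.prop}, and \cref{boundedness.prop}; the role of the corollary is to assemble these ingredients, and no further PDE estimate is needed. The only point requiring care is to ensure that each instantiation of $\sbr$ produced above really lies in the class \eqref{sigma_err1}--\eqref{sigma_err2} with $u_0$-independent constant, a check I would carry out once and then invoke in all three parts.
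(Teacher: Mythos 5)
Your proposal is correct and follows essentially the same route as the paper: the paper's proof of this corollary is precisely the two-line assembly of \cref{linearcor} with \eqref{diff6.LB} for Part (1) and of \cref{linearcor} with \cref{boundedness.prop} for Parts (2)–(3), invoking the gauge invariance of conditions \eqref{phil.estimate.bounded.end0} and \eqref{phil.estimate.bounded.end}. The additional details you spell out — the specific choice of $\sbr$ as the Maxwell-backreaction phase and the verification via \eqref{diff3.LB}, \eqref{diff8.LB} that it satisfies \eqref{sigma_err1}--\eqref{sigma_err2} with a constant uniform in $u_0$ — are exactly the checks the paper performs inside the proof of \cref{boundedness.prop} rather than in the corollary itself, so nothing is missing or redundant.
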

	
	\begin{proof}The first statement follows from \eqref{diff6.LB} of \cref{LB.prop} and \cref{linearcor}.
		The others ar direct applications of   \cref{boundedness.prop} and   \cref{linearcor} (using that \eqref{phil.estimate.bounded.end0} and \eqref{phil.estimate.bounded.end} are gauge invariant conditions).
	\end{proof}
	
	In particular,  \cref{boundedness.cor} shows \cref{main.theorem}.

	\subsubsection{Blow-up of the scalar field for $\phiH \notin \OO$ (non-oscillating data) if $q_0=0$ and proof of \cref{main.theorem2}} \label{blow.up.section}
	
	\begin{lem} \label{stupid.lemma} Assume that there exists $ u_0\leq u_{_s}$ such that $$ \limsup_{v \rightarrow+\infty} |\phi|(u_0,v) = +\infty.$$
		
		Then for all $u  \leq u_s$ we have  $$ \limsup_{v \rightarrow+\infty} |\phi|(u,v) =  \limsup_{v \rightarrow+\infty} |\psi|(u,v)=+\infty.$$
		
		Moreover we have the following bounds: for all $u\leq u_s$, there exists $f(u)>0$ for all $v>v_{\gamma}(u)$:  \begin{equation} \label{asymptotics.phi.stupid}\begin{split}
		\bigl||\phi|(u,v)- \frac{r(u_0,v)}{r(u,v)}  |\phi|(u_0,v) \bigr| \leq f(u),\\ \liminf_{v \rightarrow +\infty} \frac{|\phi|(u,v)}{|\phi|(u_0,v)}= \frac{r_{CH}(u_0)}{r_{CH}(u)}>0.
		\end{split}
		\end{equation}

	\end{lem}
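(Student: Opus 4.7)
\medskip

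The plan is to exploit the uniform boundedness of $D_u\psi$ established in Proposition~\ref{Dupsi.prop} (equation \eqref{Dupsiestimate}), which is the crucial fact that makes blow-up of $|\phi|$ a genuinely ``transversal'' phenomenon: it must occur on every outgoing cone $\{u = \text{const}\}$ as soon as it occurs on one. The key identity is the standard fundamental theorem of calculus in the $u$-direction for the gauge-covariant derivative,
\begin{equation*}
e^{iq_0\int_{u_0}^{u} A_u(u',v)\,\d u'}\psi(u,v) - \psi(u_0,v) = \int_{u_0}^{u} e^{iq_0\int_{u_0}^{u'} A_u(u'',v)\,\d u''}\,D_u\psi(u',v)\,\d u',
\end{equation*}
valid in the gauge \eqref{GaugeAv}. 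Since $|e^{iq_0\int A_u}|=1$, taking absolute values and using \eqref{Dupsiestimate} yields, for every $u,u_0\leq u_s$ and every $v\geq \max(v_\gamma(u),v_\gamma(u_0))$,
\begin{equation*}
\bigl| |\psi|(u,v) - |\psi|(u_0,v) \bigr| \;\leq\; \left|\int_{u_0}^{u} |D_u\psi|(u',v)\,\d u'\right| \;\lesssim\; \left|\int_{u_0}^{u} |u'|^{-s}\,\d u'\right| \;=:\; \tilde f(u,u_0),
\end{equation*}
where $\tilde f(u,u_0)<+\infty$ because $u,u_0\leq u_s<0$ keep us uniformly away from the (spurious) singularity of $|u'|^{-s}$ at $u'=0$.

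From this reverse-triangle-type bound the first conclusion is immediate: $\limsup_{v\to+\infty}|\psi|(u,v)=+\infty$ if and only if $\limsup_{v\to+\infty}|\psi|(u_0,v)=+\infty$, and hence, since $r$ and $r_{CH}$ are uniformly bounded above and below by positive constants in the late blue-shift region (by Theorem~\ref{CH.stab.thm} and the bounds of Proposition~\ref{prop.LB.estimates.Moi}), the same equivalence holds for $|\phi|=|\psi|/r$. This then gives $\limsup_{v\to+\infty}|\phi|(u,v)=+\infty$ for all $u\leq u_s$.

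For the quantitative statement \eqref{asymptotics.phi.stupid}, I would divide the displayed inequality by $r(u,v)$, obtaining
\begin{equation*}
\left| |\phi|(u,v) - \frac{r(u_0,v)}{r(u,v)}\,|\phi|(u_0,v) \right| \;=\; \frac{1}{r(u,v)}\bigl| |\psi|(u,v) - |\psi|(u_0,v)\bigr| \;\leq\; \frac{\tilde f(u,u_0)}{r(u,v)} \;=:\; f(u),
\end{equation*}
where the last bound is uniform in $v$ since $r(u,v)$ is bounded below by a positive constant depending only on $u$. For the limiting ratio, we write
\begin{equation*}
\frac{|\phi|(u,v)}{|\phi|(u_0,v)} \;=\; \frac{|\psi|(u,v)}{|\psi|(u_0,v)}\cdot\frac{r(u_0,v)}{r(u,v)}.
\end{equation*}
The second factor converges to $r_{CH}(u_0)/r_{CH}(u)$ as $v\to+\infty$ by continuous extendibility of $r$ across $\mathcal{CH}_{i^+}$ (and both limits are strictly positive). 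For the first factor, the bounded difference $\bigl||\psi|(u,v)-|\psi|(u_0,v)\bigr|\leq \tilde f(u,u_0)$ combined with the hypothesis $|\psi|(u_0,v)\to+\infty$ along a subsequence forces the ratio to tend to $1$ along that subsequence, giving $\liminf\geq r_{CH}(u_0)/r_{CH}(u)$; reversing the roles of $u$ and $u_0$ gives the matching upper bound.

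There is no real obstacle here: the argument is essentially a soft consequence of Proposition~\ref{Dupsi.prop}. The only mild subtlety is to notice that the phase factor $e^{iq_0\int A_u}$ from the electromagnetic gauge has modulus one and therefore does \emph{not} affect the absolute-value estimates, so that the relation between $|\phi|(u,\cdot)$ and $|\phi|(u_0,\cdot)$ is gauge invariant as claimed.
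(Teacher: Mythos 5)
Your argument is exactly the paper's intended proof: the paper's own justification is the single sentence that the lemma ``is an immediate consequence of the integrating of \eqref{Dupsiestimate} and the continuous extendibility of $r$ to a function which is bounded away from zero,'' and your write-up is a faithful, fully detailed elaboration of precisely that — the gauge-covariant fundamental theorem of calculus in $u$, the unimodularity of the phase, and the upper and lower bounds on $r$. One small logical slip in your last step: convergence of the ratio to $r_{CH}(u_0)/r_{CH}(u)$ \emph{along the blow-up subsequence} yields $\liminf_{v\to+\infty}\frac{|\phi|(u,v)}{|\phi|(u_0,v)}\leq r_{CH}(u_0)/r_{CH}(u)$, not $\geq$; a lower bound on the $\liminf$ over all $v$ would require controlling the ratio when $|\phi|(u_0,v)$ is small, which neither your argument nor the paper's one-line proof addresses (the paper notes the estimate is only used in the companion paper, where it is presumably applied along the blow-up subsequence).
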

	\begin{proof}
		This is an immediate consequence of the integrating of \eqref{Dupsiestimate} and the continuous extendibility of $r$ to a function which is bounded away from zero (by definition of $\CH$). 
	\end{proof}
	We will not use \eqref{asymptotics.phi.stupid} in the present work, but it is an important estimate for our companion paper \cite{MoiChristoph2}.
	
	\begin{cor}\label{cor:proofofthmaii}
		Assume that $q_0=0$	and that $\phiH \in \Sl - \OO$. Then for all $u \leq u_{s}$ we have the following blow-up $$ \limsup_{v \rightarrow+\infty} |\phi|(u,v) =  \limsup_{v \rightarrow+\infty} |\psi|(u,v)=+\infty,$$ and moreover the asymptotics \eqref{asymptotics.phi.stupid} are satisfied. 
	\end{cor}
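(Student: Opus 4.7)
The plan is to combine the linear blow-up result of \cref{linearcor} with the nonlinear difference estimate of \cref{LB.prop}, and then propagate the blow-up across all outgoing cones via \cref{stupid.lemma}. The hypotheses of \cref{main.theorem2} include $q_0=0$ and $m^2 \in \mathbb R_{>0}- D(M,e)$, both of which I will use crucially below.

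First, I would apply \cref{linearcor}, Part~4, to the event-horizon data $\phiHL$ (which agrees with $\phiH$ up to the initial cut-off, see \cref{rmk:initialdataonlyhavetoagreeeventually}). Since $q_0=0$ and $m^2 \notin D(M,e)$, and since $\phiH \in \mathcal{NO}$, this directly yields that for every $u \in \mathbb R$,
\begin{equation*}
\limsup_{v\to+\infty} |\phil|(u,v) = +\infty.
\end{equation*}
Because $r_{RN}$ is bounded away from zero on the Reissner--Nordstr\"{o}m interior, this is equivalent to $\limsup_{v\to+\infty} |\psil|(u,v)=+\infty$. Here $\phil$ and $\phiNl$ (the linear scalar field used in \cref{difference.estimates.section}) differ only by an electromagnetic gauge transformation, and in particular $|\phil|=|\phiNl|$, so we may transfer the linear blow-up to $\psi_{\mathcal L}:=r_{RN}\phiNl$ used in the difference estimates.

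Next, I would invoke the nonlinear difference estimate \eqref{diff6.LB} of \cref{LB.prop}. Crucially, since $q_0=0$, all the exponential phase factors in \eqref{diff6.LB} collapse to $1$, so the estimate reduces to
\begin{equation*}
\bigl|\,\psi(u,v) - \psi_{\mathcal L}(u,v) + \psi_{\mathcal L}(u,v_{\gamma}(u))\,\bigr| \leq C_L\, |u|^{2-3s},
\end{equation*}
valid for every $(u,v)\in\mathcal{LB}$. Since $\frac34 < s \leq 1$ we have $2-3s < 0$, so the right-hand side is uniformly bounded for each fixed $u$. Combined with the previous step applied at some (indeed any) fixed $u_0 \leq u_s$, we obtain
\begin{equation*}
\limsup_{v\to+\infty} |\psi|(u_0,v) \;\geq\; \limsup_{v\to+\infty}|\psi_{\mathcal L}|(u_0,v) - |\psi_{\mathcal L}|(u_0,v_{\gamma}(u_0)) - C_L|u_0|^{2-3s} = +\infty,
\end{equation*}
and hence $\limsup_{v\to+\infty} |\phi|(u_0,v)=+\infty$ since $r$ is bounded.

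Finally, to upgrade blow-up at a single $u_0$ to blow-up at every $u\leq u_s$ and to extract the quantitative asymptotics \eqref{asymptotics.phi.stupid}, I would apply \cref{stupid.lemma} directly: it transports amplitude blow-up of $|\psi|$ from one outgoing cone to any other, using the uniform boundedness of $D_u\psi$ from \cref{Dupsi.prop} together with the continuous extension of $r$ to $\CH$ (which stays bounded away from zero). There is no essential new difficulty; the only thing to check carefully is that the gauge choice $A_v=0$ used in the nonlinear problem is consistent with the difference estimate \eqref{diff6.LB}, so that comparing $|\phi|$ and $|\phi_{\mathcal L}|$---both gauge-invariant quantities---is legitimate. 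The real substance of the argument has already been packaged into \cref{linearcor}, \cref{LB.prop} and \cref{stupid.lemma}, so this corollary is a short assembly; the potential obstacle would have been a loss of phase coherence in the difference estimate, but this is precisely what the assumption $q_0=0$ eliminates.
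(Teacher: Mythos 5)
Your proposal is correct and follows essentially the same route as the paper: \cref{linearcor} Part 4 for the linear blow-up, the difference estimate of \cref{LB.prop} (you use the uniform bound \eqref{diff6.LB} where the paper cites the limit statement \eqref{ext.difference}, but these are interchangeable here), and \cref{stupid.lemma} to propagate in $u$ and obtain \eqref{asymptotics.phi.stupid}. Your explicit invocation of $m^2\notin D(M,e)$ is needed for \cref{linearcor} Part 4 and is implicit in the paper's setting, so there is no gap.
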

	
	\begin{proof}
		The result follows from a combined application of   \cref{linearcor} (using that $\phil $ and $ \phiNl$ relate by a gauge transformation, hence $|\phil| = |\phiNl| $), \eqref{ext.difference} in \cref{LB.prop} and   \cref{stupid.lemma}.
	\end{proof}	
	In particular, \cref{cor:proofofthmaii} shows \cref{main.theorem2}. 
	
	\subsubsection{Proof of   \cref{corollary.conj}} \label{computation.proof.section}
	
	Before turning to the proof of \cref{corollary.conj}, we prove  the following.
	\begin{lem} \label{lemma.example}
		Let  $s>\frac{3}{4}$ and $\omega_1 \in \RR-\{\omega_{res}\}$ and $\phiH$ be given by 
	 \begin{align} \label{eq:formforphih}
		\phiH (v) =    e^{-i\omega_1 v+ \omega_{err}(v)}   v^{-s }  + \phi_{err}
		\end{align} 
	for any $\phi_{err}\in C^1([v_0,+\infty),\RR)$ satisfying  \eqref{decay.s} with  $s>1$ and any $ \omega_{err} \in C^2([v_0,+\infty),\RR)$ such that $\omega_{err}'(v)\to 0 $ as $v \rightarrow +\infty$ and such that  $ |\omega_{err}''|(v) \leq D \cdot v^{-2+2s-\eta_0}$ for $v\geq v_0$ and some constants $D>0$ and $\eta_0>0$.  Then $\phiH \in \OOpp$, where we assume without loss of generality   that $\phiH \in \Sl$ (by choosing $D_1>0$ possibly larger). 
	\end{lem}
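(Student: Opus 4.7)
} The strategy is integration by parts on the phase of the defining integral. First I would note that \eqref{eq:formforphih} and the boundedness of $\omega_{err}'$ give $|\phiH|(v) = v^{-s}$ and $|D_v\phiH|(v) = |\partial_v\phiH|(v) \leq (|\omega_1| + \sup|\omega_{err}'| + s v^{-1}) v^{-s}$, so $\phiH \in \Sl$ after possibly enlarging $D_1$. Next, fix $D_{br}>0$ and an arbitrary $\sbr \in C^2$ satisfying \eqref{sigma_err1},\eqref{sigma_err2}, and, in the gauge $A_v\equiv 0$, collect the full phase of the integrand in \eqref{eq:strong.oscillation.condition} as
\begin{equation*}
\Phi(v') := (\omer - \omega_1) v' + q_0\sbr(v') - \omega_{err}(v').
\end{equation*}
The application to the conjectured profiles of \cref{decay.conj} always yields $\omega_1\neq\omer$ (using the generic assumption $|q_0e|\neq r_-|m|$), so I would explicitly invoke this non-resonance condition. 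Since $|q_0\sbr'(v')| \lesssim D_{br} v'^{1-2s} \to 0$ (recall $s>\tfrac34$) and $\omega_{err}'(v')\to 0$ by hypothesis, one has $\Phi'(v') \to \omer-\omega_1 \neq 0$, hence $|\Phi'(v')|\geq \tfrac12|\omer-\omega_1|$ for $v'\geq v_0$ after possibly enlarging $v_0$.

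The core of the proof is then a single integration by parts:
\begin{equation*}
\int_v^{+\infty} e^{i\Phi(v')} v'^{-s}\, \d v'
= \left[\frac{e^{i\Phi(v')}}{i\Phi'(v')}\, v'^{-s}\right]_v^{+\infty}
- \int_v^{+\infty} e^{i\Phi(v')}\left(\frac{-s v'^{-s-1}}{i\Phi'(v')} - \frac{v'^{-s}\,\Phi''(v')}{i\Phi'(v')^2}\right)\d v'.
\end{equation*}
The boundary term at $+\infty$ vanishes; the boundary term at $v$ is $O(v^{-s})$; and the $v'^{-s-1}$ piece integrates to $O(v^{-s})$. For the remaining integrand, the derivative $\Phi''(v') = q_0\sbr''(v') - \omega_{err}''(v')$ is controlled by \eqref{sigma_err2} and the hypothesis on $\omega_{err}''$, giving
\begin{equation*}
\left|\frac{v'^{-s}\Phi''(v')}{\Phi'(v')^2}\right| \lesssim_{D_{br},D} v'^{1-3s} + v'^{-2+s-\eta_0},
\end{equation*}
which integrates over $[v,+\infty)$ to a bound of order $v^{2-3s} + v^{s-1-\eta_0}$.

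Putting these pieces together yields a bound of the form $E(D_{br})\bigl(v^{-s} + v^{2-3s} + v^{s-1-\eta_0}\bigr)$, and comparing with the target rate $v^{s-1-\eta_0'}$ one needs each exponent to be strictly smaller than $s-1$. The conditions $-s < s-1$, $s-1-\eta_0 < s-1$, and crucially $2-3s < s-1$ are all equivalent to $s>\tfrac12$, $\eta_0>0$, and $s>\tfrac34$ respectively; the third is precisely the reason our standing assumption \eqref{eq:defns} is sharp here. Choosing $\eta_0' := \min\{\eta_0, 3s-2, s\}/2$ absorbs all three contributions into the desired bound $E_{\OOpp} \cdot v^{s-1-\eta_0'}$ with $E_{\OOpp}$ depending only on $M,e,q_0,m^2,s,D,D_{br}$ (and \emph{not} on the particular $\sbr$), proving $\phiH \in \OOpp$.

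\paragraph{Main obstacle.} The only delicate point is the $v^{2-3s}$ contribution: it uses none of the fine structure of $\sbr$ beyond the bare bound \eqref{sigma_err2}, yet it barely decays faster than the threshold $v^{s-1}$, and this is the mechanism that forces the global restriction $s>\tfrac34$. All other estimates are soft, and the non-stationary phase argument requires no further refinement provided one has secured the non-resonance $\omega_1\neq\omer$ at the outset.
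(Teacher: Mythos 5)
Your proof takes essentially the same route as the paper's: a single integration by parts against the phase $\Phi$, with the boundary and $v'^{-s-1}$ terms giving $O(v^{-s})$ and the $\Phi''$ terms controlled via \eqref{sigma_err2} and the hypothesis on $\omega_{err}''$, yielding exactly the exponents $v^{2-3s}$ and $v^{s-1-\eta_0}$ and the same sharp use of $s>\tfrac34$; you are in fact slightly more careful than the paper in explicitly recording the non-resonance $\omega_1\neq\omer$ (which the lemma's statement leaves implicit but which is genuinely needed for $|\Phi'|$ to be bounded below) and in carrying out the quantitative tail estimate that $\OOpp$ requires, whereas the paper's displayed computation only verifies qualitative boundedness. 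One small correction: your final choice $\eta_0':=\min\{\eta_0,3s-2,s\}/2$ does not dominate the $v^{2-3s}$ term for all $s\in(\tfrac34,\tfrac45)$, since $2-3s\leq s-1-\eta_0'$ forces $\eta_0'\leq 4s-3$ and $(3s-2)/2>4s-3$ when $s<\tfrac45$; taking instead $\eta_0':=\tfrac12\min\{\eta_0,\,4s-3,\,2s-1\}$ repairs this and the rest of the argument goes through unchanged.
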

	\begin{proof}
 Since $\phiH \in \Sl$ (by possibly choosing $D_1>0$ larger) it suffices to check \eqref{eq:strong.oscillation.condition} independently for $ e^{-i\omega_1 v+ \omega_{err}(v)}   v^{-s } $ and $\phi_{err}$. 	  First note that $\phi_{err} $ satisfies  \eqref{eq:strong.oscillation.condition} since it satisfies \eqref{decay.s} with  $s>1$. 

	For $e^{-i\omega_1 v+ \omega_{err}(v)}   v^{-s }$ we can assume with no loss of generality that  $\frac{3}{4}<s\leq 1$ (since the case $s>1$ follows immediately from integrability).   It suffices to prove that there exists $\eta>0$, $E>0$ such that for all large enough $\tilde v,v$ with $\tilde v < v$
	 \begin{equation}\label{decay.extra}
		 \bigl| \int_{\tilde v}^{v} e^{ i\omega_{res}v'-i \omega_1 v'+ i\sbr(v') + i \omega_{err}(v')} (v')^{-s} \d v'\bigr| \leq E \cdot \tilde v^{-1+s-\eta}
		\end{equation} for all $\sbr$ satisfying \eqref{sigma_err1} and \eqref{sigma_err2}. For conciseness, we will introduce the notation $ \omega= \omega_{res}-\omega_1 \neq 0$.  	We make use of integration by parts: \begin{align*}  \int_{\tilde v}^v e^{ i  \omega v'+ i\sbr(v') + i \omega_{err}(v')} (v')^{-s} \d v'  = &   -i\int_{\tilde v}^{v} \frac{\d}{\d v'}(e^{ i  \omega v'+ i\sbr(v') + i \omega_{err}(v')}) \frac{(v')^{-s}}{ \omega  + \sbr'(v') + \omega_{err}'(v') } \d v'\\ = &  - i\frac{ v^{-s}  e^{ i  \omega v+ i\sbr(v) + i \omega_{err}(v)}}{ \omega + \sbr'(v) + \omega_{err}'(v) } + i \frac{  \tilde v^{-s}  e^{ i  \omega \tilde v + i\sbr(\tilde v) + i \omega_{err}(\tilde v)}}{ \omega + \sbr'(\tilde v) + \omega_{err}'(\tilde v) }\\
		& -i  s\int_{\tilde v}^v e^{ i  \omega v'+ i\sbr(v') + i \omega_{err}(v')} \frac{(v')^{-s-1}}{ \omega + \sbr'(v') + \omega_{err}'(v') } \d v' \\  & -i\int_{\tilde v}^v e^{ i  \omega v'+ i\sbr(v') + i \omega_{err}(v')} \frac{(v')^{-s} \cdot (\sbr''(v') + \omega_{err}''(v'))}{( \omega + \sbr'(v') + \omega_{err}'(v') )^2} \d v'.
		\end{align*}   Note that, using \eqref{sigma_err2} and decay the assumption on $\omega_{err}'$, that $ \omega + \sigma_{br}'(v')+ \omega_{err}'(v')$ is bounded away from zero for $\tilde v$ large enough (since $\omega \neq 0$).  The first two terms obviously obey \eqref{decay.extra} since  $s>\frac{1}{2}$. Similarly, the third term can be integrated to show $$ \bigl| \int_{\tilde v}^v e^{ i  \omega v'+ i\sbr(v') + i \omega_{err}(v')} \frac{(v')^{-s-1}}{ \omega + \sbr'(v') + \omega_{err}'(v') } \d v' \bigr| \lesssim \tilde v^{-s}.$$ For the last term,  we write using $|\omega_{err}''(v)| \ls v^{-2+2s-\eta_0}$ and \eqref{sigma_err2} 
		\begin{align*}
	&	\bigl| \int_{\tilde v}^v e^{ i  \omega v'+ i\sbr(v') + i \omega_{err}(v')} \frac{(v')^{-s} \cdot (\sbr''(v') + \omega_{err}''(v'))}{( \omega + \sbr'(v') + \omega_{err}'(v') )^2} \d v' \bigr| \ls  \int_{\tilde v}^v  (v')^{-s} \cdot (v^{-2+2s-\eta_0}+ v^{1-2s} )\d v' \\ \ls\ &   \tilde v^{-1+s-\eta_0}+\tilde v^{2-3s} \lesssim \tilde v^{-1+s-\eta_0},
		\end{align*} for some $\eta_0>0$, where to obtain this  estimate, we used the fact that $s<1+\eta_0$ for some $\eta_0>0$ and also $2-3s <  -1+s- \eta_0$ (since we assumed $s>\frac{3}{4}$).  
	\end{proof}
	\begin{prop} \label{prop.specific.profiles}
		Assume that the parameters $(M,e,q_0,m^2)$ are such that $$ |q_0 e| \neq r_-(M,e) |m|.$$ Let  $\phiH$ be given by either the profile of \eqref{conj.decay.1.eq} (if $m^2 > 0$, $q_0=0$) or \eqref{conj.decay.2.eq} (if $m^2 = 0$, $q_0\neq 0$) or \eqref{conj.decay.3.eq} (if $m^2 > 0$, $q_0\neq0$). Then  $\phiH \in \OO''$,  where we again assume without loss of generality   that $\phiH \in \Sl$ (by choosing $D_1>0$ possibly larger). 
	\end{prop}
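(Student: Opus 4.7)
The plan is to decompose each candidate profile from \cref{decay.conj} into a finite sum of elementary phase-modulated profiles of the form covered by \cref{lemma.example}, and then to verify, using the generic hypothesis $|q_0e|\neq r_-(M,e)|m|$, that the effective oscillation frequency of each piece, once twisted by the factor $e^{i\omer v}$ appearing in the definition of $\OOpp$, remains bounded away from zero. The fast-decaying remainder $\phi_{err}$ satisfies \eqref{decay.s} for some $s>1$, hence lies in $L^1([v_0,+\infty))$ and trivially belongs to $\OOpp$, so the whole argument reduces to handling the principal terms.

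First I observe that in the electromagnetic gauge \eqref{GaugeAv}, \eqref{gaugeAponctuelle} in which \cref{decay.conj} is formulated one has $(A_v)_{|\HH}\equiv 0$, so the quantitative oscillation condition \eqref{eq:strong.oscillation.condition} simplifies to the statement that, for every $\sbr$ obeying \eqref{sigma_err1}--\eqref{sigma_err2}, $\bigl|\int_v^{+\infty} e^{i(\omer v'+q_0\sbr(v'))}\phiH(v')\,\d v'\bigr|\leq E\cdot v^{s-1-\eta_0}$. Since Euler's formula allows me to absorb the twist $e^{i\omer v}$ into a shift of the carrier frequency of $\phiH$, each elementary piece can be presented in the form \eqref{eq:formforphih}, reducing the verification to an algebraic check that the shifted carrier frequency is nonzero together with mild bounds on the derivatives of the slowly varying phase $\omega_{err}$.

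In Case 1 ($q_0=0$) one has $\omer=0$ and $q_0\sbr\equiv 0$, and expanding the sine via Euler's formula presents $\phiH-\phi_{err}$ as two summands each of the form \eqref{eq:formforphih} with $s=\tfrac{5}{6}$, $\omega_1=\mp m\neq 0$, and $\omega_{err}(v)=-\tfrac{3m}{2}(2\pi M)^{\frac{2}{3}}v^{\frac{1}{3}}+\omega(mM)$; a direct computation yields $\omega_{err}'(v)\to 0$ and $|\omega_{err}''(v)|\lesssim v^{-5/3}\leq D v^{-2+2s-\eta_0}$ for any $\eta_0\in(0,4/3)$, so \cref{lemma.example} applies. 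In Case 2 ($q_0\neq 0$, $m^2=0$), if $|q_0 e|<\tfrac{1}{2}$ then $\delta(q_0 e)>0$ is real and $\phiH\in L^1([v_0,+\infty))$, in which case $\OOpp$ holds trivially; otherwise setting $\delta=i\beta$ with $\beta=\sqrt{4(q_0 e)^2-1}\geq 0$ rewrites the main term as $C_H v^{-1}\exp\!\bigl(i(q_0 e/r_+)v-i\beta\log v\bigr)$, and after the $e^{i\omer v}$ twist the effective frequency becomes $q_0 e/r_-\neq 0$, while $\omega_{err}(v)=-\beta\log v$ satisfies $|\omega_{err}''|\lesssim v^{-2}$, so \cref{lemma.example} again applies, now with $s=1$.

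Case 3 ($q_0\neq 0$, $m^2>0$) is the central one and combines the ingredients of the two preceding cases: after expanding the sine, $\phiH-\phi_{err}$ becomes a sum of two pieces of the form $Ce^{i(q_0e/r_+\pm m)v\pm i\omega_{err}(v)}v^{-5/6}$, and the $e^{i\omer v}$ shift converts these into oscillations with respective frequencies $\omega_1^\pm=q_0 e/r_-\pm m$, while $\omega_{err}$ satisfies the same bounds as in Case 1. The crucial algebraic point, and the only non-routine step in the argument, is that both $\omega_1^+$ and $\omega_1^-$ are simultaneously nonzero precisely when $|q_0 e|\neq r_-(M,e)|m|$, which is exactly the generic hypothesis of the proposition; \cref{lemma.example} then applies separately to each term, the $\phi_{err}$ contributions are a trivially controllable $L^1$ remainder, and summing all pieces yields $\phiH\in\OOpp$ as claimed.
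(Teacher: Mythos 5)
Your proposal is correct and follows essentially the same route as the paper's own proof: expand the conjectured profiles (via Euler's formula for the sine and $v^{-\delta}=e^{-i\beta\log v}$ in the charged massless case) into finitely many elementary phase-modulated pieces of the form \eqref{eq:formforphih}, dispose of $\phi_{err}$ and the $|q_0e|<\tfrac12$ subcase by integrability, and apply \cref{lemma.example} to each piece after checking that the carrier frequency shifted by $\omer$ is nonzero, which is exactly where $|q_0e|\neq r_-(M,e)|m|$ enters. Your explicit computation of the post-twist frequencies $q_0e/r_-\pm m$ is just the paper's check $\omega_1\neq\omer$ written out, so there is no substantive difference.
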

	\begin{proof}
		If $m^2=0$, $|q_0e|<\frac{1}{2}$, then $\phiH$  satisfies \eqref{polynomialdecay} for $s>1$ and thus $\phiH \in  \OO''$. Otherwise, we have three different cases \begin{enumerate}
			\item $q_0=0$, $m^2\neq 0$: It suffices to prove that $e^{\pm i ( m v + \omega_{err}(v))} \cdot v^{-\frac{5}{6}} \in \OOpp$, where $\omega_{err}(v)= -\frac{3m}{2} (2\pi M)^{\frac{2}{3}} \cdot v^{\frac{1}{3}} + \omega(m \cdot M) $. Note that $\omega_{err}'(v)\to 0 $ as $v \rightarrow +\infty$ and such that  $ |\omega_{err}''|(v) \ls  v^{-\frac{5}{3}} \ls v^{-2+2 \cdot \frac{5}{6}-\eta_0}$ for any $0 < \eta_0 < \frac{4}{3}$, therefore by   \cref{lemma.example}, $e^{\pm i ( m v + \omega_{err}(v))} \cdot v^{-\frac{5}{6}} \in \OOpp$.
			
			\item  $|q_0e|\geq \frac{1}{2}$, $m^2= 0$. Then $\delta= \pm i \sqrt{ 4(q_0e)^2-1}$ and $\phiH$ is of the form \eqref{eq:formforphih} with $\omega_1 =-\frac{q_0 e}{r_+} \neq \omer$, $\omega_{err}= - (\sqrt{ 4(q_0e)^2-1}) \log(v)$ and $s=1$. Indeed we have $\omega_{err}'(v)=o(1)$ and $|\omega_{err}''|(v) \lesssim v^{-2} \lesssim v^{-2+2s -\eta_0}$ for  $\eta_0>0$ since $2s-2 = 0$. Therefore, $\phiH \in \OOpp$ by   \cref{lemma.example}.
			\item  $q_0\neq 0$, $m^2\neq 0$: as in the  case $q_0=0$, $m^2\neq 0$, $\phiH$ is a linear combination of  two profiles of the form \eqref{eq:formforphih}  with $\omega_1= \pm m -\frac{q_0e}{r_+}$. Since the parameters $(M,e,q_0,m^2)$ do not satisfy $ |q_0 e| \neq r_-(M,e) |m|$, we know that $\omega_1 \neq \omer$. The rest of the argument follows as above.
		\end{enumerate}
	\end{proof}
	
	\begin{cor}\label{cor:proofofthmb}
		Assume that the parameters $(M,e,q_0,m^2)$ are such that $$ |q_0 e| \neq r_-(M,e) |m|.$$ Let  $\phiH$  by either the profile of \eqref{conj.decay.1.eq} (if $m^2 > 0$, $q_0=0$) or \eqref{conj.decay.2.eq} (if $m^2 = 0$, $q_0\neq 0$) or \eqref{conj.decay.3.eq} (if $m^2 > 0$, $q_0\neq0$). Then,   \eqref{phi.boundedness.final} and \eqref{Q.boundedness.final} are true for all $(u,v) \in \LB$. Moreover, $|\phi|$, $Q$ and the metric $g$ admit a continuous extension to $\CH$ and the extension of $g$ can be chosen to be $C^0$-admissible.
	\end{cor}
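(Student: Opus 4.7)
The plan is to reduce this corollary directly to the machinery already assembled in the previous subsection. Concretely, the claim is a combination of two ingredients: (i) membership of the conjectured profiles in the quantitative oscillation class $\OOpp$, and (ii) the fact that $\OOpp$ membership implies boundedness, continuous extendibility, and the quantitative decay rates \eqref{phi.boundedness.final}, \eqref{Q.boundedness.final} together with $C^0$-admissibility of the metric extension.

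First, I would invoke \cref{prop.specific.profiles}, which already verifies that under the genericity hypothesis $|q_0 e| \neq r_-(M,e) |m|$, each of the three profiles \eqref{conj.decay.1.eq}, \eqref{conj.decay.2.eq}, \eqref{conj.decay.3.eq} satisfies the quantitative oscillation condition, i.e.\ lies in $\OOpp$. The point where the genericity hypothesis intervenes is exactly in ensuring that the oscillation frequency $\omega_1$ appearing in the profile (after unpacking $e^{i m v}$ and $e^{i q_0 e v / r_+}$ contributions) is distinct from the resonant frequency $\omer$; equivalently, that the stationary phase condition $\omega_1 + \sbr'(v) + \omega_{\textup{err}}'(v) = 0$ cannot be achieved asymptotically, which is what makes the integration-by-parts argument of \cref{lemma.example} go through. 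I would remark that in the massless charged case with $|q_0 e| < \tfrac12$, the profile in fact decays integrably (i.e.\ satisfies \eqref{polynomialdecay} with $s>1$) and so lies trivially in $\OOpp$ (indeed in $L^1$); in all other cases we are in the regime $\tfrac34 < s \leq 1$ and \cref{lemma.example} applies.

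Second, with $\phiH \in \OOpp$ in hand, I would apply the third part of \cref{boundedness.cor}. This yields the quantitative bounds \eqref{phi.boundedness.final} on $|\phi|$ and \eqref{Q.boundedness.final} on $Q-e$ throughout the late blue-shift region $\LB$, together with the continuous extendibility of $|\phi|$ and $Q$ across $\CH$, and a $C^0$-admissible continuous extension of the metric $g$ in the sense of \cref{C0admissible}. Since $\OOpp \subset \OOp \subset \OO$, all three layers of conclusions of \cref{main.theorem} are available; the strongest layer, which includes the $|u|$-decay of $\phi$ and hence the integrability in $u$ of the source term driving $\partial_u Q$, is what promotes the mere boundedness of $Q$ to its continuous extendibility.

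The proof is therefore essentially a two-line combination: \cref{prop.specific.profiles} places the conjectured tails into $\OOpp$, and \cref{boundedness.cor}(3) converts this Fourier-analytic input into the physical-space extendibility conclusions. There is no real obstacle here; the genuine work has already been done upstream, in the linear scattering representation formula \cref{prop:representation} and in the nonlinear difference estimates of \cref{LB.prop}, which together are what make the implication ``$\OOpp$ membership $\Rightarrow$ $C^0$-admissible extendibility'' hold for the coupled system. The only verification step specific to this corollary is the check, already performed in \cref{prop.specific.profiles} via \cref{lemma.example}, that the phase derivatives $\omega_{\textup{err}}'(v)$ vanish at infinity and $\omega_{\textup{err}}''(v)$ decay sufficiently fast relative to $s$, so that integration by parts against $e^{i\omega_1 v + i \sbr(v) + i \omega_{\textup{err}}(v)}$ produces the required uniform bound.
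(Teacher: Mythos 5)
Your proposal is correct and follows essentially the same route as the paper: the paper's proof is precisely the two-step combination of \cref{prop.specific.profiles} (placing the conjectured profiles in $\OOpp$ under the hypothesis $|q_0 e|\neq r_-|m|$) and \cref{boundedness.cor}, part 3, with only the additional trivial remark that $|\phil|=|\phiNl|$ since the two differ by a gauge transformation. Your extra commentary on where the genericity condition enters and on the integrable-decay subcase $m^2=0$, $|q_0e|<\tfrac12$ is accurate but already contained in the upstream proofs of \cref{lemma.example} and \cref{prop.specific.profiles}.
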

	\begin{proof}
		This is an immediate application of   \cref{prop.specific.profiles} and   \cref{boundedness.cor} (using that $|\phil| = |\phiNl|$ since $\phil$ and $\phiNl$ only differ from  gauge transformation).
	\end{proof}
	In particular, 	\cref{cor:proofofthmb} shows \cref{corollary.conj}.
	\subsubsection{$\dot{W}^{1,1}_{loc}$ blow-up of the scalar field on outgoing cones: proof of \cref{W11.main.thm}} \label{W11.blowup.section}
	
	\begin{prop} \label{W11.OO.prop}
		
		Assume that for all $u\leq u_s$ we have the following  blow up: \begin{align} \label{Dv.lin.W11.blowup.est}
		\int_{v_0}^{+\infty} |D_v^{RN} \phiNl|(u,v') \d v'=+\infty.
		\end{align}
		
		Then  for all $u\leq u_s$: \begin{align} \label{DvW11.blowup.est}
		\int_{v_0}^{+\infty} |D_v \phi|(u,v') \d v'= 	\int_{v_0}^{+\infty} |D_v \psi|(u,v') \d v'=+\infty.
		\end{align}
		Conversely, \eqref{DvW11.blowup.est}   implies \eqref{Dv.lin.W11.blowup.est}.
	\end{prop}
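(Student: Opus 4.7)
The plan is to reduce the nonlinear $W^{1,1}$ blow-up on outgoing cones to the linear one by exploiting the sharp difference estimate \eqref{diff4.LB} of \cref{LB.prop}, which is precisely tuned so that the difference (up to integrating factors involving $A_u$) of $\partial_v\psi$ and $\partial_v\psil$ is controlled by a uniformly integrable quantity. More precisely, \eqref{diff4.LB} yields
\begin{equation*}
\bigl| |D_v\psi|(u,v) - |D_v\psil|(u,v) \bigr| \;\lesssim\; v^{1-3s},
\end{equation*}
and since $s > \tfrac{3}{4} > \tfrac{2}{3}$ we have $1-3s<-1$, so $\int_{v_\gamma(u)}^{+\infty} v^{1-3s}\, \d v <+\infty$. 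This immediately shows that $\int_{v_0}^{+\infty}|D_v\psi|(u,v)\,\d v$ and $\int_{v_0}^{+\infty}|D_v\psil|(u,v)\,\d v$ are simultaneously finite or infinite for every fixed $u\leq u_s$; the contribution from $[v_0,v_\gamma(u)]$ is finite by the red-shift, no-shift and early-blue-shift estimates of \cref{recall.section}, so the dichotomy is entirely determined by the late blue-shift region.

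Next, I would  transfer the dichotomy between $|D_v\psi|$ and $|D_v\phi|$, and between $|D_v\psil|$ and $|D^{RN}_v\phiNl|$. Writing $\psi=r\phi$ gives $D_v\psi = r\, D_v\phi + \lambda\,\phi$, and the bounds \eqref{lambdaLB}, \eqref{phiblowupVLB} of \cref{prop.LB.estimates.Moi} provide
\begin{equation*}
|\lambda\,\phi|(u,v)\;\lesssim\; v^{-2s}\cdot v^{1-s}\;=\;v^{1-3s},
\end{equation*}
which is again integrable on $[v_\gamma(u),+\infty)$. Combined with the fact that $r$ extends continuously to $\CH$ and remains bounded away from zero and infinity, this shows $\int |D_v\psi|(u,\cdot)\,\d v$ and $\int |D_v\phi|(u,\cdot)\,\d v$ have the same finiteness status. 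The analogous identity holds on the Reissner--Nordstr\"om side for $\psil=r_{RN}\phil$ with the same integrable error, using that $r_{RN}$ is bounded between $r_-$ and $r_+$ and $|\lambda_{RN}|\lesssim \Omega^2_{RN}\lesssim v^{-2s}$ from \cref{recall.section}. Finally, because $\phil$ (of \cref{linearsection}) and $\phiNl$ (of \cref{difference.estimates.section}) differ only by an electromagnetic gauge transformation, the gauge-invariant quantities satisfy $|D_v\phil|=|D^{RN}_v\phiNl|$, so the finiteness of $\int |D_v\psil|(u,\cdot)\,\d v$ is equivalent to that of $\int |D^{RN}_v\phiNl|(u,\cdot)\,\d v$.

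Chaining the three equivalences yields both directions of the proposition: blow-up of $\int |D^{RN}_v\phiNl|$ for every $u\leq u_s$ is equivalent to blow-up of $\int |D_v\phi|=\int |D_v\psi|$ for every $u\leq u_s$. The main (very mild) technical point to be careful about is the use of the reverse triangle inequality form of \eqref{diff4.LB}, which holds only between the absolute values and is insensitive to the gauge phases $e^{iq_0 \int_{u_\gamma(v)}^u A_u^{CH}}$ and $e^{iq_0 \int_{u_\gamma(v)}^u (A_u^{RN})^{CH}}$ appearing in the middle expression of \eqref{diff4.LB}; since we are only after the $L^1_v$-norm, this suffices and no further extraction of oscillations is needed at this step. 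There is no deep obstacle: once the algebra is set up, the entire argument is driven by the quantitative smallness $v^{1-3s}\in L^1$ made possible by the restriction $s>\tfrac{3}{4}$.
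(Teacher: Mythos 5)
Your proposal is correct and follows essentially the same route as the paper's proof: both reduce the statement to the chain $|D_v^{RN}\phiNl| \leftrightarrow |D_v^{RN}\psil| \leftrightarrow |D_v\psi| \leftrightarrow |D_v\phi|$ using the identity $D_v\psi = r\,D_v\phi + \lambda\phi$ (and its Reissner--Nordström analogue), the lower bound on $r$, the integrable bounds $|\lambda\phi|\lesssim v^{1-3s}$ from \eqref{lambdaLB} and \eqref{phiblowupVLB}, and the reverse-triangle-inequality form of the difference estimate \eqref{diff4.LB}. Your explicit remarks on the finiteness of the contribution from $[v_0,v_\gamma(u)]$ and on the gauge-invariance $|D_v\phil|=|D_v^{RN}\phiNl|$ are points the paper leaves implicit (or defers to \cref{W11.OO.cor}), but they do not change the argument.
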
 \begin{proof}
		
		Note that $D_v^{RN} \psil= r   D_v^{RN} \phiNl -\frac{\Omega^2_{RN}}{2} \phiNl$.  Since $r$ is lower bounded on $\LB$ and in view of \eqref{phiLB} (which also applies to $\phiNl$), for all $u \leq u_s$:  $$  \int_{v_0}^{+\infty}  |D_v^{RN} \psil|(u,v) \d v =+\infty.$$
		
		Therefore, integrating \eqref{diff4.LB} (since $s>\frac{3}{4} > \frac{2}{3}$) we also obtain for all $u \leq u_s$:  $$  \int_{v_0}^{+\infty}  |D_v\psi|(u,v) \d v =+\infty.$$
		
		Since  $D_v \psi= r   D_v \phi + \lambda \phi$ and by \eqref{phiblowupVLB}, \eqref{lambdaLB} $$ | \lambda \phi| \ls v^{1-3s}$$ is integrable, therefore for all $u \leq u_s$:  $$  \int_{v_0}^{+\infty}  |D_v\phi|(u,v)\d v =+\infty.$$
		The above also shows that \eqref{DvW11.blowup.est}   implies \eqref{Dv.lin.W11.blowup.est}.
	\end{proof}
	\begin{cor} \label{W11.OO.cor}
		Assume $\phiH \in \Sl-H$ (defined in the proof of Corollary \ref{linear.cor.W11}). Then \eqref{DvW11.blowup.est} holds true.

		In the particular case $|q_0 e| \leq \epsilon(M,e,m^2)$ (in particular if $q_0=0$),  where  $\epsilon>0$ is defined in the proof of Corollary \ref{linear.cor.W11}, then for all  $\phiH \in \Sl-L^1$, \eqref{DvW11.blowup.est} is satisfied.
	\end{cor}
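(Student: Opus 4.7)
The plan is to derive \cref{W11.OO.cor} by stitching together the linear $W^{1,1}$-blow-up already established in \cref{linear.cor.W11} with the nonlinear transfer result already established in \cref{W11.OO.prop}. Specifically, \cref{linear.cor.W11} gives exactly the two hypotheses needed: if $\phiH \in \Sl - H$ (or if $\phiH \in \Sl - L^1$ in the small charge/special uncharged case, since then $H = L^1 \cap \Sl$ by the second bullet point of \cref{linear.cor.W11}), then for every $u \leq u_s$,
\begin{equation} \label{eq:linearblowup-proof}
\int_{v_0}^{+\infty} |D_v^{RN} \psil'|(u,v) \, dv = +\infty,
\end{equation}
where $\psil' = r_{RN} \phil$ and $\phil$ is the linear solution in the gauge \eqref{gauge_linear} arising from the data $\phiHL = \chi_{\geq v_0+3} \phiH$ on $\HH$.

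The next step is to translate \eqref{eq:linearblowup-proof} into the same quantity that appears in the hypothesis of \cref{W11.OO.prop}, namely $\int_{v_0}^{+\infty} |D_v^{RN} \phiNl|(u,v) \, dv$. Since $\phil$ and $\phiNl$ solve the same linear equation on the same Reissner--Nordstr\"om background with data that only differ by an electromagnetic gauge transformation (see \cref{rmk:initialdataonlyhavetoagreeeventually}), and since $|D_v^{RN}|$ acting on a charged scalar is a gauge-invariant operation on the absolute value, one has $|D_v^{RN}\phil| = |D_v^{RN} \phiNl|$ pointwise. Expanding
\begin{equation*}
D_v^{RN} \psil' = r_{RN} D_v^{RN} \phil + \lambda_{RN} \phil,
\end{equation*}
I would observe that in $\LB$, $r_{RN}$ is uniformly bounded away from zero and above, while the linear analog of \eqref{phiLB} together with \eqref{lambdaLB} (applied on the Reissner--Nordstr\"om background) yields $|\lambda_{RN} \phil|(u,v) \lesssim v^{-2s} \cdot v^{-s+\eta}$, which is integrable in $v$ since $s > \frac 34 > \frac 23$. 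Combining these two estimates with \eqref{eq:linearblowup-proof} gives
\begin{equation*}
\int_{v_0}^{+\infty} |D_v^{RN} \phiNl|(u,v)\, dv = +\infty \qquad \text{for all } u \leq u_s.
\end{equation*}

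Finally, I would apply \cref{W11.OO.prop} directly, which transfers the linear blow-up to the nonlinear setting and concludes \eqref{DvW11.blowup.est}. The argument is assembly rather than analysis: the two heavy-lifting steps — the Fourier-space blow-up on the linear side (\cref{linear.cor.W11}, via the Fourier multiplier analysis involving $\mathfrak t(\omega)$ in Part~\textbf E.\ of \cref{prop:representation}) and the nonlinear difference estimate \eqref{diff4.LB} underlying \cref{W11.OO.prop} — have already been carried out. The only mild subtlety worth flagging is the gauge and data reconciliation between \cref{linearsection} (where \cref{linear.cor.W11} is formulated) and \cref{difference.estimates.section} (where $\phiNl$ is defined), but this is handled by the gauge invariance of $|D_v^{RN}|$ and by the fact that the two sets of data agree eventually, so that tail blow-up statements are unaffected.
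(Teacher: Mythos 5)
Your proposal is correct and follows essentially the same route as the paper: the paper's proof is precisely the combination of \cref{linear.cor.W11} and \cref{W11.OO.prop}, with the gauge identification $|D_v^{RN}\phil| = |D_v^{RN}\phiNl|$ bridging the two. The only extra content you supply is spelling out the (harmless) passage between $\psil'$ and $\phiNl$ via the integrability of $\lambda_{RN}\phil$, which the paper in effect absorbs into the first step of the proof of \cref{W11.OO.prop}.
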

	\begin{proof}
		This follows from \cref{linear.cor.W11} (using that $\phil $ are $ \phiNl$ relate by a gauge transformation, hence $|\phil| = |\phiNl| $ and $ |D_v \phil| = |D_v \phiNl| $) and  \cref{W11.OO.prop}. 
	\end{proof}
	
	\cref{W11.OO.cor} thus concludes the proof of {Part 1.} of \cref{W11.main.thm}. Now we turn to the proof of  {Part 2.} of \cref{W11.main.thm}.

	\begin{cor}\label{lem:w11blowupofconjrate}
		Let  $\phiH$ be given by either the profile of \eqref{conj.decay.1.eq} (if $m^2 > 0$, $q_0=0$) or \eqref{conj.decay.2.eq} (if $m^2 = 0$, $q_0\neq 0$) or \eqref{conj.decay.3.eq} (if $m^2 > 0$, $q_0\neq0$). Assume the   condition $\mathcal Z_{\mathfrak t} \cap \varTheta  = \emptyset$.

		Then, there exists a $\delta (M,e,q_0,m^2)>0$ sufficiently small such that $P_\delta \phiH \in L^1(\mathbb R)$.

		Moreover, the condition $\mathcal Z_{\mathfrak t}(M,e,q_0,m^2) \cap \varTheta (M,e,q_0,m^2)  = \emptyset$ is generic in the sense that for given $m^2\geq 0$, $q_0 \in \mathbb R$ with $m^2\neq q_0^2$, the set of parameters $(M,e) $ satisfying the conditions is 
 		the  zero set of a nontrivial real analytic function on $\{0<|e|<M\}$. 
		In particular, in view of Part 1.\ of \cref{W11.main.thm}, we obtain Part 2.\ of \cref{W11.main.thm}.
		\begin{proof}
			We start with the second claim. 
	 Fix $m^2\geq0$, $q_0\in \mathbb R$ with $q_0^2 \neq m^2$. We define $f_{\pm, m^2,q_0}(M,e):= \mathfrak t( \pm m- \frac{q_0 e}{r_+} , M,e,q_0,m^2)$. By analyticity of   $\mathfrak t$ (note that $\mathfrak t$ is the Wronskian of solutions to an o.d.e.\ with analytic coefficients depending analytically on $(\omega,M,e)$), we have that both $f_{\pm,m^2,q_0} \colon \{ (M,e) \in \mathbb R^2, 0<|e|<M\} \to \mathbb R$ are analytic. It suffices to show that both $f_\pm$ are non-trivial. From the  o.d.e.\ energy identity, $|\mathfrak t|^2 = |\mathfrak r|^2 + \omega (\omega - \omer) \geq \omega (\omega - \omer)$   we conclude $|f_\pm|^2 \geq (\pm m - \frac{q_0 e}{r_+}) ( \pm m - \frac{q_0 e}{r_-}) \to (\pm m - eq_0/|e|)^2 >0 $ as  $|e|\to M$. We used here that  $m^2\neq q_0^2$.

			Now, fix $0 < \delta < \operatorname{dist}(\mathcal Z_{\mathfrak t} , \varTheta ) $. 
			By Plancherel's theorem  and the Cauchy--Schwarz inequality, it suffices to show that $\chi_{\delta}(\omega) \mathcal{F}(\phiH)$ is in $H^{\frac 12 +\tau}$ for some $\tau>0$ (recalling the definition of $\chi_{\delta}(\omega)$ from \cref{sec:w11blowup}). Further, since $\chi_{\delta}$ is smooth and has compact support ($\mathcal Z_{\mathfrak t}^\delta \subset [-|\omer|-\delta, |\omer|+\delta]$), and $\mathcal{F}(\phiH) \in L^2$ , it suffices (e.g.\ by the Kato--Ponce inequality) to show that   $\chi_{\delta}(\omega) \langle\partial_\omega\rangle^{\frac 12 + \tau} \mathcal{F}(\phiH)$ is in $L^2$. Thus, we need to show that $\mathcal{F}(\langle v \rangle^{\frac 12 + \tau} \phiH) \in L^2(\mathcal Z_{\mathfrak t}^{\delta})$ for some $\tau >0$. We now fix  $0< \tau < s- \frac 12 $.   A direct adaption of the proofs of \cref{lemma.example} and \cref{prop.specific.profiles} then shows   $ \mathcal{F}(\langle v \rangle^{\frac 12 + \tau} \phiH) \in L^\infty(\mathcal Z_{\mathfrak t}^{\delta})$ from which the claim follows.
		\end{proof}
	\end{cor}
	\printbibliography[heading=bibintoc]
\end{document}